\documentclass[acmsmall]{acmart}\settopmatter{}

\acmJournal{PACMPL}
\acmVolume{1}
\acmNumber{CONF} 
\acmArticle{1}
\acmYear{2018}
\acmMonth{1}
\acmDOI{} 
\startPage{1}

\setcopyright{none}

\bibliographystyle{ACM-Reference-Format}
\citestyle{acmauthoryear}   


\usepackage{booktabs}   
\usepackage{subcaption} 

\usepackage{amsmath}
\usepackage{thm-restate}
\usepackage{paralist}
\usepackage{mathpartir}
 
\usepackage{stmaryrd}
\usepackage{listings,color}
\usepackage[disable]{todonotes}
\usepackage{tikz}

\newcommand{\grmeq}{\; ::= \;\;}
\newcommand{\grmor}{\;\mid\;}

\usepackage{MnSymbol}

\newif\ifdraft
\drafttrue 
\draftfalse

\newcommand{\vv}[1]{\ifdraft\textcolor{red}{\textit{[#1--VV]}}\fi}

\newif\iffinal
\finaltrue
\finalfalse 

\newcommand\finalreport[2]{\iffinal#1\else#2\fi}
\newcommand\finalconcrete{\finalreport{our technical report \cite{ThiemannVasconcelos2019-corr}}}

\newcommand\Tag[1]{\tag{\TirName{#1}}}

\newcommand\syntax[1]{\textup{\textbf{\textsf{#1}}}}

\newcommand\SENDK{\syntax{send}}

\newcommand\SELECTK{\syntax{select}}
\newcommand\CLOSEK{\syntax{close}}

\newcommand\RECVK{\syntax{recv}}
\newcommand\CASEK{\syntax{case}}
\newcommand\RCASEK{\syntax{rcase}}
\newcommand\WAITK{\syntax{wait}}
\newcommand\FORKK{\syntax{fork}}

\newcommand{\FIXK}{\syntax{fix}}

\newcommand\APP[1]{#1\,}

\newcommand\LAM[1]{\lambda#1.}

\newcommand\FORK{\FORKK\,}
\newcommand\NEW{\syntax{new}}
\newcommand\CASE[2]{\CASEK\,#1\,\syntax{of}\,\{#2\}}
\newcommand\RCASE[2]{\RCASEK\,#1\,\syntax{of}\,\{#2\}}
\newcommand\CASEBRANCH[3]{#1:#2.#3}

\newcommand\SELECT{\SELECTK\,}
\newcommand\SEND[1]{\SENDK\,#1}
\newcommand\RECV{\RECVK\,}
\newcommand\CLOSE{\CLOSEK\,}
\newcommand\WAIT{\WAITK\,}

\newcommand{\RECURSOR}[5]{\syntax{rec}\,{#1}\,{#2}\,{#3}.{#4}.{#5}}

\newcommand\END{\syntax{End}} 
\newcommand\ENDS{\END_!}
\newcommand\ENDR{\END_?}

\newcommand\LabelUniv{\ensuremath{\mathcal L}}
\newcommand\LabelSet{L}

\newcommand\EOFLabel{\textsc{eos}} 

\newcommand\TEnd{\END}
\newcommand\TUnit{\syntax{Unit}}

\newcommand\TInt{\syntax{Int}}

\newcommand\TNat{\syntax{Nat}}
\newcommand{\TRec}[5]{\syntax{rec}\:{#1}\,{#2}\,[{#3}]\,{#5}}

\newcommand\TEnv{\Gamma}
\newcommand\TOut{\Delta}
\newcommand\SEnv{\Delta}
\newcommand\EmptyEnv{{\cdot}}

\newcommand\GVKind[2]{\vdash #2:#1}

\newcommand\EnvForm[2]{\vdash #2 : #1}

\newcommand\DGVEnvForm[2]{\vdash_\DGV #2 : #1}
\newcommand{\AEnvForm}[2]{ #2 \Rightarrow #1}
\newcommand\Equation[3][{}]{#2 \mathbin{\syntax{=}} #3} 
\newcommand\Convertible[2]{#1 \equiv #2} 

\newcommand\ReducesTo{\boldsymbol{\longrightarrow}}
\newcommand\ReducesToSym\longleftrightarrow
\newcommand\TransTo\leadsto
\newcommand\Hole\Box
\newcommand\ECN[1][E]{\mathcal{#1}}
\newcommand\Context[2][E]{\ECN[#1]{[#2]}}
\newcommand\FV{\textit{fv}}

\newcommand\Setof[1]{\{#1\}}

\newcommand\Multl{l}
\newcommand\Multm{m}
\newcommand\Multn{n}

\newcommand\MultLT\prec
\newcommand\MultLE\preceq 
\newcommand\MultLub\curlyvee
\newcommand\MultGlb\curlywedge

\newcommand\MultLin\OccOne
\newcommand\MultUn\OccInfty

\newcommand\Subst\sigma

\newcommand\DemoteOp\downarrow
\newcommand\Demote{{\DemoteOp}\,}

\newcommand\Lin{\syntax{lin}}
\newcommand\Un{\syntax{un}}
\newcommand\Fin{\syntax{fin}}
\newcommand\UNR[1]{#1^\OccInfty} 

\newcommand\Occp{p}

\newcommand\OccInfty\Un 
\newcommand\OccOne\Lin 

\newcommand\EUnit{()}
\newcommand\EPair[3][\Multm]{\langle#2, #3\rangle}
\newcommand\DPair[4][\Multm]{\langle#2=#3, #4\rangle}
\newcommand\ELet[2]{\syntax{let}\,#1=#2\,\syntax{in}\,}

\newcommand\DecomposeSymbol{\circ} 
\newcommand{\DecomposeOp}[2]{#1 \DecomposeSymbol #2}
\newcommand\Decompose[3]{#1 = \DecomposeOp{#2}{#3}}

\newcommand\with\binampersand


\newcommand\Kind\Multm
\newcommand\Kindm\Multm
\newcommand\Kindn\Multn

\newcommand\KindLin\OccOne
\newcommand\KindUn{\OccInfty}
\newcommand\KindSession\OccOne
\newcommand\KindUnSession\OccInfty

\newcommand\NUC[2]{(\nu#1#2)}

\newcommand{\TVar}{\alpha}

\newcommand\Pol{p}
\newcommand\PolPos{\oplus}
\newcommand\PolNeg{\ominus}

\newcommand{\Zero}{Z}
\newcommand{\Succ}{S}


\newcommand\Subkind\MultLE
\newcommand\KindLub\sqcup

\newcommand\DUAL[1]{{#1}^\perp}
\newcommand\Dual[1]{\DUAL{(#1)}}
\newcommand\Proc[1]{\langle#1\rangle}

\newcommand\Chanc{c}
\newcommand\Chand{d}

\newcommand\PAR{\mathbin{\rule[-.75ex]{1pt}{2.5ex}}} 

\newcommand\Embed[1]{\llangle#1\rrangle}

\newcommand{\acronym}[1]{\textsf{#1}}

\newcommand\LDGV{\acronym{LDST}}
\newcommand\DGV{\acronym{LDST}}

\newcommand\LAST{\acronym{LAST}}
\newcommand\LSST{\acronym{LSST}}

\newcommand\meta[1]{\textup{\textit{#1}}}

\newcommand\Dom{\meta{dom}}

\newcommand\DemoteExtend[3]{#1 \lhd #2:#3}
\newcommand\PlainExtend[3]{#1, #2:#3} 

\newcommand\DYN{\mathord{\star}}
\newcommand\DC{\mathord{\text{\textcircled{$\star$}}}}

\newcommand\Locked\bullet
\newcommand\Subtype\le

\newcommand\SYNTH{\Rightarrow}
\newcommand\CHECK{\Leftarrow}
\newcommand\UNFOLD{\Downarrow}

\newcommand\Algdash\vdash
\newcommand\JAlgConv[4]{%
  #1 \Algdash #2 \Rightarrow #3 
}
\newcommand\JAlgSubtype[4]{%
  #1 \Algdash #2 \Subtype #3 : #4
}
\newcommand\JAlgSubtypeSynth[4]{%
  #1 \Algdash #2 \Subtype #3 \SYNTH #4
}
\newcommand\JAlgSubtypeCheck[4]{%
  #1 \Algdash #2 \Subtype #3 \CHECK #4
}
\newcommand\JAlgSubtypeSynthExt[5]{%
  \JAlgSubtypeSynth{#1}{#2}{#3}{#4} ; #5
}
\newcommand\JAlgKindSynth[3]{%
  #1 \Algdash #2 \SYNTH #3
}
\newcommand\JAlgKindCheck[3]{%
  #1 \Algdash #2 \CHECK #3
}
\newcommand\JAlgTypeSynth[4]{%
  #1 \Algdash #2 \SYNTH #3 ; #4
}
\newcommand\JAlgTypeCheck[4]{%
  #1 \Algdash #2 \CHECK #3 ; #4
}
\newcommand\JAlgTypeCheckExt[5]{%
  \JAlgTypeCheck{#1}{#2}{#3}{#4} ; #5
}
\newcommand\JAlgTypeUnfold[3]{%
  #1 \Algdash #2 \UNFOLD #3
}

\lstdefinelanguage{gst}{
  language=haskell,
  basicstyle=\sffamily\small,
  extendedchars=true,
  breaklines=true,
  morekeywords={new,send,receive,recv,skip,select,dualof,int,string,bool,unit,fork,close,wait,case,rcase,raise,Lab,Unit,rec,Nat,End,with},
  deletekeywords={sum,Eq},
  tabsize=8,
  literate=
    {end!}{{$\textbf{end}_{!}$}}4 
    {end?}{{$\textbf{end}_{?}$}}4 
    {(+)}{$\oplus$}1 
    {Sigma}{$\Sigma$}1
    {otimes}{$\otimes$}1 
    {forall}{$\forall$}1
    {alpha}{$\alpha$}1 
    {beta}{$\beta$}1 
    {DYN}{$\DYN$}1
    {DC}{$\DC$}1
    {=>}{$\Rightarrow$}1 
    {->}{$\rightarrow$}1 
    {lambda}{$\lambda$}1 
    {->}{$\rightarrow$}1 
    {-o}{$\multimap$}1
}

\lstset{language=gst}   



\newcommand\RuleDualRec{%
  \inferrule[]{}{
    \Dual{\TRec VS\TVar m R} =
    \TRec V{\DUAL S}\alpha  m {\DUAL R[\TVar_\PolNeg/\TVar_\PolPos, \TVar_\PolPos/\TVar_\PolNeg]}}
}

\newcommand\RuleDualPosVar{
  \inferrule{}{
    \DUAL{\TVar_\PolPos} = \TVar_\PolNeg
  }
}

\newcommand\RuleDualNegVar{
  \inferrule{}{
    \DUAL{\TVar_\PolNeg} = \TVar_\PolPos
  }
}

\newcommand\RuleEqualityF{%
  \inferrule[Equality-F]{
    \TEnv \vdash V : A  \\
    \TEnv \vdash W : A \\
    \TEnv \vdash A : \Kind
  }{
    \TEnv \vdash \Equation VW : \Kind
  }
}

\newcommand\RuleUnitF{%
  \inferrule[Unit-F]{\EnvForm \MultUn \TEnv}{\TEnv \vdash \TUnit : \KindUnSession}
}

\newcommand\RuleEndF{%
  \inferrule[End-F]{\EnvForm \MultUn \TEnv}{\TEnv \vdash \TEnd : \KindUnSession}
}

\newcommand\RuleLabF{%
  \inferrule[Lab-F]{
    \EnvForm \MultUn \TEnv
  }{\TEnv \vdash L : \KindUn}
}

\newcommand\RuleLabET{%
  \inferrule[Lab-E']{
    \UNR\TEnv \vdash V : L \\
    (\forall\ell\in L)~\DemoteExtend\TEnv x {\Equation[L] V \ell} \vdash A_\ell : \Kind
  }{\TEnv \vdash \CASE V {\overline{\ell: A_\ell}^{\ell\in L}} : \Kind
  }
}

\newcommand\RulePiF{%
    \inferrule[Pi-F]{
      \DemoteExtend\TEnv x A \vdash B : \Kindn
  }{
    \TEnv \vdash \Pi_\Multm (x:A)B : \Kindm
  }
}

\newcommand\RuleSigmaF{%
  \inferrule[Sigma-F]{
    \TEnv \vdash A : \Kind \\
    \DemoteExtend\TEnv x A \vdash B : \Kind
  }{\TEnv \vdash \Sigma (x: A)B : \Kind}
}

\newcommand\RuleSsnOutF{%
  \inferrule[Ssn-Out-F]{
    \DemoteExtend\TEnv x A \vdash S : \KindSession
  }{
    \TEnv \vdash {!(x: A)}S : \KindSession
  }
}

\newcommand\RuleSsnInF{%
    \inferrule[Ssn-In-F]{
      \DemoteExtend\TEnv x A \vdash S : \KindSession
  }{
    \TEnv \vdash {?(x: A)}S : \KindSession
  }
}

\newcommand\RuleSubKind{%
    \inferrule[Sub-Kind]
  {\TEnv \vdash A : \Kindm \\ \Kindm \Subkind \Kindn}
  {\TEnv \vdash A : \Kindn}
}


\newcommand\RuleConvRefl{%
  \inferrule[Conv-Refl]{\TEnv \vdash A : \Kind}{\TEnv \vdash \Convertible A A : \Kind}
}

\newcommand\RuleConvSym{%
  \inferrule[Conv-Sym]{\TEnv \vdash \Convertible A B : \Kind}{\TEnv \vdash \Convertible B A : \Kind}
}

\newcommand\RuleConvTrans{%
  \inferrule[Conv-Trans]{\TEnv \vdash \Convertible A B : \Kind \\
    \TEnv \vdash \Convertible B C}{\TEnv \vdash \Convertible A C : \Kind}
}

\newcommand\RuleConvSubst{%
  \inferrule[Conv-Subst]{
    \TEnv \vdash y : \Equation VW\\
    \TEnv \vdash V : L \\
    (\forall\ell\in L)~ \DemoteExtend\TEnv z {\Equation V\ell} \vdash A_\ell : \Kind
  }{
    \TEnv \vdash
    \Convertible
    {\CASE{V}{\overline{\ell:A_\ell}^{\ell\in L}}}
    {\CASE{W}{\overline{\ell:A_\ell}^{\ell\in L}}}
    : \Kind
  }
}

\newcommand\RuleConvBeta{%
  \inferrule[Conv-Beta]
  {(\forall\ell\in L)~ \TEnv \vdash A_{\ell}: \Kind \\ \ell'\in L}
  {\TEnv \vdash \Convertible
    {\CASE {\ell'}{\overline{\ell : A_\ell}^{\ell\in L}}}
    {A_{\ell'}}
    : \Kind}
}
  
\newcommand\RuleConvEta{%
  \inferrule[Conv-Eta]
  {\TEnv \vdash A : \Kind \\
    \TEnv \vdash x : L
  }
  {\TEnv \vdash
    \Convertible {A} {\CASE x {\overline{\ell : A}^{\ell\in L}}} : \Kind
  }
}


\newcommand\RuleSubConv{%
  \inferrule[Sub-Conv]{\TEnv \vdash \Convertible A B : \Kind}{
    \TEnv \vdash A \Subtype B : \Kind}
}

\newcommand\RuleSubLab{%
  \inferrule[Sub-Lab]
  { \EnvForm \MultUn \TEnv  \\ L \subseteq L'
  }
  { \TEnv \vdash L \Subtype L' : \KindUn }
}

\newcommand\RuleSubTrans{%
  \inferrule[Sub-Trans]{
    \TEnv \vdash  A \Subtype B : \Kind \\
    \TEnv \vdash  B \Subtype C : \Kind}{
    \TEnv \vdash  A \Subtype C : \Kind}
}

\newcommand\RuleSubSub{%
  \inferrule[Sub-Sub]{
    \TEnv \vdash A \Subtype B : \Kind \\
    \Kind \Subkind \Kindn
  }{
    \TEnv \vdash
    A \Subtype B
    : \Kindn
  }
}

\newcommand\RuleSubPi{%
  \inferrule[Sub-Pi]
  { \TEnv \vdash A' \Subtype A : \Kind_A
    \\ \DemoteExtend\TEnv x {A'} \vdash B \Subtype B' : \Kind_B
    \\ \Multm \MultLE \Multn }
  {\TEnv \vdash \Pi_\Multm (x:A) B \Subtype \Pi_\Multn (x:A')B' :
    \Kindn
  }
}

\newcommand\RuleSubSigma{%
  \inferrule[Sub-Sigma]
  { \TEnv \vdash A \Subtype A' : \Kind
    \\ \DemoteExtend\TEnv x A \vdash B \Subtype B' : \Kind
  }
  {\TEnv \vdash \Sigma (x:A) B \Subtype \Sigma (x:A')B' : \Kind}
}

\newcommand\RuleSubSend{%
  \inferrule[Sub-Send]
  {
    \TEnv \vdash A' \Subtype A : \Kind \\
    \DemoteExtend\TEnv x {A'} \vdash S \Subtype S' : \KindSession
  }
  {\TEnv \vdash {!(x:A)}S \Subtype {!(x:A')}S' : \KindSession}
}

\newcommand\RuleSubRecv{%
  \inferrule[Sub-Recv]
  {
    \TEnv \vdash A \Subtype A' : \Kind \\
    \DemoteExtend\TEnv x A \vdash S \Subtype S' : \KindSession
  }
  {\TEnv \vdash {?(x:A)}S \Subtype {?(x:A')}S' : \KindSession}
}

\newcommand\RuleSubTVar{%
  \inferrule[Sub-TVar]
  {
    \TEnv \vdash \TVar_\Pol : \Multm
  }
  {\TEnv \vdash \TVar_\Pol \Subtype \TVar_\Pol : \Multm}
}

\newcommand\RuleSubCase{%
  \inferrule[Sub-Case]
  {    (\forall\ell\in L\setminus L')~    
    \DemoteExtend{\DemoteExtend{\TEnv\setminus x} x {(L\setminus L')}} y
    \Equation[L\cap L'] x \ell \vdash A_\ell : \Kind
    \\
    (\forall\ell\in L'\setminus L)~    
    \DemoteExtend{\DemoteExtend{\TEnv\setminus x} x {(L'\setminus L)}} y
    \Equation[L\cap L'] x \ell \vdash A'_\ell : \Kind
    \\
     \UNR\TEnv \vdash x : L\cap L'
    \\
    (\forall\ell\in L\cap L')~
    \DemoteExtend \TEnv y \Equation[L\cap L'] x \ell \vdash A_\ell \Subtype A'_\ell : \Kind
  }
  { \TEnv \vdash
    \CASE x {\overline{\ell: {A_\ell}}^{\ell\in L}}
    \Subtype 
    \CASE x {\overline{\ell: {A'_\ell}}^{\ell\in L'}} : \Kind
  }
}


\newcommand\RuleSubType{%
  \inferrule[Sub-Type]{
    \TEnv \vdash M : A \\
    \UNR\TEnv \vdash A \Subtype B :\Kind
  }{\TEnv \vdash M : B}
}

\newcommand\RuleName{%
  \inferrule[Name]{
    \EnvForm \MultUn{\DemoteExtend{\TEnv_1} z A,\TEnv_2}
  }{
    \TEnv_1, z:A, \TEnv_2 \vdash z: A
  }
}

\newcommand\RuleUnitI{%
  \inferrule[Unit-I]{\EnvForm \MultUn \TEnv}{\TEnv \vdash \EUnit :
    \TUnit}
}

\newcommand\RuleLabI{%
  \inferrule[Lab-I]{
    \EnvForm \MultUn \TEnv
  }{\TEnv \vdash \ell : \{\ell\}}
}

\newcommand\RuleLabE{%
  \inferrule[Lab-E]{
    \UNR\TEnv \vdash V : L \\ 
    (\forall\ell\in L)~\DemoteExtend\TEnv y {\Equation[L] V \ell} \vdash N_\ell : B
  }{\TEnv \vdash \CASE V {\overline{\ell: N_\ell}^{\ell\in L}} : B 
  }
}

\newcommand\RulePiI{%
  \inferrule[Pi-I]{
    \EnvForm \Multm \TEnv \\
    \PlainExtend\TEnv x A \vdash M : B 
  }{\TEnv \vdash \lambda_\Multm (x: A).M : \Pi_\Multm (x:  A)B}
}

\newcommand\RulePiE{%
  \inferrule[Pi-E]{
    \Decompose{\TEnv}{\TEnv_1}{\TEnv_2} \\
    \TEnv_1 \vdash M : \Pi_\Multm (x: A)B \\
    \TEnv_2 \vdash N : A
  }{
    \TEnv \vdash M\,N : B[N/x]
  }
}

\newcommand\RuleASigmaI{%
  \inferrule[A-Sigma-I]{
    \JAlgTypeSynth{\TEnv_1}{  V}{ A}{ \TEnv_2} \\
    \JAlgTypeSynth{\DemoteExtend{\PlainExtend{\TEnv_2} x A} z {\Equation[A]x V}}
    {N}{ B}{ \DemoteExtend{\DemoteExtend{\TEnv_3} x A} z { \Equation[A]x V}}
  }{
    \JAlgTypeSynth{\TEnv_1}{ \DPair {x} V N}{ \Sigma (x: A)B}{ \TEnv_3}}
}

\newcommand\RuleASigmaE{%
  \inferrule[A-Sigma-E]{
    \JAlgTypeSynth{\TEnv_1} M{D}{ \TEnv_2}\\
    \JAlgTypeUnfold{\UNR{\TEnv_2}}{D}{{\Sigma(x: A)}B} \\\\
    \JAlgTypeSynth{\TEnv_2, x: A, y: B} N
    C{ \DemoteExtend{\DemoteExtend{\TEnv_3} x A} y B}\\
    x,y \notin \FV (C)
  }{
    \JAlgTypeSynth{\TEnv_1}{ \ELet{\EPair[]xy}{M}{N}} C{    \TEnv_3}
  }
}

\newcommand\RuleASigmaG{%
  \inferrule[A-Sigma-G]{
    \JAlgTypeSynth{\TEnv_1} M{D}{ \TEnv_2}\\
    \JAlgTypeUnfold{\UNR{\TEnv_2}}{D}{{\Sigma(x: L)}B} \\
    \JAlgTypeSynth{\TEnv_2, x: L, y: B}
    {\CASE x {\overline{\ell:N}^{\ell \in L}}}
    C{ \DemoteExtend{\TEnv_3, x : L} y B}\\
    x,y \notin \FV (C)
  }{
    \JAlgTypeSynth{\TEnv_1}{ \ELet{\EPair[]xy}{M}{N}} C{    \TEnv_3}
  }
}

\newcommand\RuleAFork{%
  \inferrule[A-Fork]{
    \JAlgTypeCheck{\TEnv_1} M \TUnit{ \TEnv_2}
  }{
    \JAlgTypeSynth{\TEnv_1}{ \FORK M}{ \TUnit}{ \TEnv_2}
  }
}

\newcommand\RuleASsnI{%
  \inferrule[A-Ssn-I]{
    \JAlgKindCheck{\UNR\TEnv} S \KindSession
  }{
    \JAlgTypeSynth\TEnv \NEW{ \Sigma (x:S) \DUAL S} \TEnv
  }
}

\newcommand\RuleASsnSendE{%
  \inferrule[A-Ssn-Send-E]{
    \JAlgTypeSynth{\TEnv_1}{ M}{C}{ \TEnv_2}\\
    \JAlgTypeUnfold{\UNR{\TEnv_2}}{C}{{!(x: A)}S}
  }{
    \JAlgTypeSynth{\TEnv_1}{ \SEND M}{ \Pi_\KindLin (x:A) S}{ \TEnv_2}
  }
}

\newcommand\RuleASsnRecvE{%
  \inferrule[A-Ssn-Recv-E]{
    \JAlgTypeSynth{\TEnv_1}{ M}{  C}{ \TEnv_2} \\
    \JAlgTypeUnfold{\UNR{\TEnv_2}}{C}{{?(x: A)}S}
  }{
    \JAlgTypeSynth{\TEnv_1}{ \RECV M}{ \Sigma (x: A)S}{ \TEnv_2}
  }
}

\newcommand\RuleASubType{%
  \inferrule[A-Sub-Type]{
    \JAlgTypeSynth{\TEnv} M B{\TOut}
    \\
    \JAlgSubtypeSynth{\UNR\TEnv} B A \Kind
  }{
    \JAlgTypeCheck{\TEnv}{ M}{ A}{ \TOut}
  }
}

\newcommand\RuleAUnfoldType{%
  \inferrule[A-Unfold-Refl]{
    \text{$A$ not a case}
  }{
    \JAlgTypeUnfold{\TEnv} A A
  }
}

\newcommand\RuleAUnfoldCase{%
  \inferrule[A-Unfold-Case]{
    \JAlgConv{\UNR\TEnv} V {\ell'} L \\
    \JAlgTypeUnfold{\TEnv} {B_{\ell'}}A
  }{
    \JAlgTypeUnfold\TEnv {\CASE V {\overline{\ell: B_\ell}^{\ell\in L}}} A 
  }
}

\newcommand\RuleAUnfoldCaseA{%
  \inferrule[A-Unfold-Case1]{
    (\not\exists\ell)~\JAlgConv{\UNR\TEnv} x \ell {L'}  \\
    (\forall \ell \in L)~\JAlgTypeUnfold\TEnv {A_\ell}{L'}
  }{
    \JAlgTypeUnfold\TEnv {\CASE x {\overline{\ell: A_\ell}^{\ell\in L}}}
    {L'}
  }  
}

\newcommand\RuleAUnfoldCaseB{%
  \inferrule[A-Unfold-Case2]{
    (\not\exists\ell)~\JAlgConv{\UNR\TEnv} x \ell {L'}  \\
    (\forall \ell \in L)~\JAlgTypeUnfold\TEnv {A_\ell}{\Context[P]{B_\ell}} \\
    \ECN[P] \in \{\Pi_\Multm (y:A)\Hole, \Sigma (y:A)\Hole, {!(y:A)}\Hole, {?(y:A)}\Hole \}
  }{
    \JAlgTypeUnfold\TEnv {\CASE x {\overline{\ell: A_\ell}^{\ell\in L}}}
    {\Context[P]{\CASE x {\overline{\ell: {B_\ell}}^{\ell\in L}}} } 
  }  
}

\newcommand\RuleAUnfoldRecZ{%
  \inferrule[A-Unfold-Rec-Z]{
    \JAlgConv{\UNR\TEnv} V \Zero \TNat \\
    \JAlgTypeUnfold{\TEnv} {A}C
  }{
    \JAlgTypeUnfold\TEnv {\TRec V A\alpha\Kind B} C
  }
}

\newcommand\RuleAUnfoldRecS{%
  \inferrule[A-Unfold-Rec-S]{
    \JAlgConv{\UNR\TEnv} V {\Succ (W)} \TNat \\
    \JAlgTypeUnfold{\TEnv} {B[{\TRec WA\alpha\Kind B}/\alpha]}C
  }{
    \JAlgTypeUnfold\TEnv {\TRec VA\alpha\Kind B} C
  }
}

\newcommand\RuleAUnfoldRecX{%
  \inferrule[A-Unfold-Rec-X]{
    (\not\exists V)~\JAlgConv{\UNR\TEnv} x V \TNat  \\
    \ECN[P] \in \{ L, \TNat, \Pi_\Multm (y:A)\Hole, \Sigma (y:A)\Hole, {!(y:A)}\Hole, {?(y:A)}\Hole \} \\
    \JAlgTypeUnfold\TEnv {A}{\Context[P]{A'}} \\
    \JAlgTypeUnfold{\TEnv, \alpha:\Kind} {B}{\Context[P]{B'}}
  }{
    \JAlgTypeUnfold\TEnv
    {\TRec xA\alpha\Kind B}
    {\Context[P]    {\TRec x{A'}\alpha\Kind {B'}}    } 
  }  
}

\newcommand\RuleSigmaI{%
  \inferrule[Sigma-I]{
    \Decompose{\TEnv}{\TEnv_1}{\TEnv_2} \\
    \TEnv_1 \vdash  V : A \\
    \DemoteExtend{\PlainExtend{\TEnv_2} x A} z {\Equation[A]xV}  \vdash N : B
  }{
    \TEnv \vdash \DPair {x} V N : \Sigma (x: A)B
  }
}

\newcommand\RuleSigmaE{%
  \inferrule[Sigma-E]{
    \Decompose\TEnv{\TEnv_1}{\TEnv_2} \\
    \TEnv_1 \vdash M : \Sigma (x: A)B \\
    \PlainExtend{\PlainExtend{\TEnv_2} x A} y B \vdash N : C\\
    x,y \notin \FV (C)
  }{
    \TEnv \vdash \ELet{\EPair[]xy}{M}{N} : C 
  }
}

\newcommand\RuleSigmaG{%
  \inferrule[Sigma-G]{
   \Decompose\TEnv{\TEnv_1}{\TEnv_2} \\
    \TEnv_1 \vdash M : \Sigma (x: L)B \\
    x \in \FV (B) \\
    \PlainExtend{\PlainExtend{\TEnv_2} x L} y B \vdash \CASE x
    {\overline{\ell:N}^{\ell\in L}} : C\\
    x,y \notin \FV (C)
  }{
    \TEnv \vdash \ELet{\EPair[]xy}{M}{N} : C 
  }
}

\newcommand\RuleFork{%
  \inferrule[Fork]{
    \TEnv \vdash M : \TUnit
  }{
    \TEnv \vdash \FORK M : \TUnit
  }
}

\newcommand\RuleSsnI{%
  \inferrule[Ssn-I]{
    \UNR\TEnv \vdash S : \KindSession
  }{\TEnv \vdash \NEW : \Sigma (x:S) \DUAL S}
}

\newcommand\RuleSsnSendE{%
  \inferrule[Ssn-Send-E]{
    \TEnv \vdash M : {!(x: A)}S
  }{
    \TEnv \vdash \SEND M : \Pi_\KindLin (x:A) S
  }
}

\newcommand\RuleSsnRecvE{%
  \inferrule[Ssn-Recv-E]{
    \UNR\TEnv \vdash A : m
    \\
    \TEnv \vdash M :  {?(x: A)}S
  }{
    \TEnv \vdash \RECV M : \Sigma (x: A)S
  }
}

\newcommand\RuleAEqualityF{%
  \inferrule[A-Equality-F]{
    \JAlgTypeSynth\TEnv V L {\_}  \\
    \JAlgTypeCheck\TEnv \ell L {\_}
  }{
    \JAlgKindSynth\TEnv{ \Equation[L] V \ell} \KindUn
  }
}

\newcommand\RuleAUnitF{%
  \inferrule[A-Unit-F]{\AEnvForm \MultUn \TEnv}{
    \JAlgKindSynth\TEnv \TUnit \KindUnSession}
}

\newcommand\RuleAEndF{%
  \inferrule[A-End-F]{\AEnvForm \MultUn \TEnv}{
    \JAlgKindSynth\TEnv \TEnd \KindUnSession}
}

\newcommand\RuleALabF{%
  \inferrule[A-Lab-F]{
    \AEnvForm \MultUn \TEnv
  }{\JAlgKindSynth\TEnv L\KindUn}
}

\newcommand\RuleALabET{%
  \inferrule[A-Lab-E']{
    \JAlgTypeCheck\TEnv V L {\_} \\
    (\forall\ell\in L)~
    \JAlgKindSynth{\DemoteExtend\TEnv x {\Equation[L] V \ell}}{ A_\ell}{ \Kind_\ell}
  }{\JAlgKindSynth\TEnv{ \CASE V {\overline{\ell: A_\ell}^{\ell\in L}}}
    {\bigsqcup \Kind_\ell}
  }
}

\newcommand\RuleARecET{%
  \inferrule[A-Rec-E']{
    \JAlgTypeCheck\TEnv V \TNat {\_} \\
    \JAlgKindCheck{\TEnv, x : \Equation[\TNat] V \Zero}{ A}{\Kind} \\
    \JAlgKindCheck{\TEnv, \alpha:\Kind, x : \Equation[\TNat] V {\Succ (W)}}{ B}{ \Kind}
  }{\JAlgKindSynth\TEnv{
      \TRec V A \alpha \Kind B}
    {\Kind}
  }
}

\newcommand\RuleARecAlpha{%
  \inferrule[A-Rec-Var]{}{
    \JAlgKindSynth{\TEnv, \alpha:\Kind, \TEnv'} \alpha \Kind
  }
}

\newcommand\RuleAPiF{%
  \inferrule[A-Pi-F]{
    \JAlgKindSynth\TEnv A{ \Kind_A} \\
    \JAlgKindSynth{\DemoteExtend\TEnv x A}{ B}{ \Kind_B}
  }{
    \JAlgKindSynth\TEnv{ \Pi_\Multm (x:A)B}{ \Kindm}
  }
}

\newcommand\RuleASigmaF{%
  \inferrule[A-Sigma-F]{
    \JAlgKindSynth\TEnv A{ \Kind_A } \\
    \JAlgKindSynth{\DemoteExtend\TEnv x A}{ B}{
      {\Kind_B}}
  }{
    \JAlgKindSynth\TEnv{ \Sigma (x: A)B}{
      \Kind_A\KindLub\Kind_B}
  }
}

\newcommand\RuleASsnOutF{%
  \inferrule[A-Ssn-Out-F]{
    \JAlgKindSynth\TEnv A{ \Kind} \\
    \JAlgKindCheck{\DemoteExtend\TEnv x A}{ S}{ \KindSession}
  }{
    \JAlgKindSynth\TEnv{ {!(x: A)}S} \KindSession
  }
}

\newcommand\RuleASsnInF{%
  \inferrule[A-Ssn-In-F]{
    \JAlgKindSynth\TEnv A{ \Kind} \\
    \JAlgKindCheck{\DemoteExtend\TEnv x A} S \KindSession
  }{
    \JAlgKindSynth\TEnv{ {?(x: A)}S} \KindSession 
  }
}

\newcommand\RuleASubKind{%
  \inferrule[A-Sub-Kind]
  {\JAlgKindSynth\TEnv A \Kindm \\ \Kindm \Subkind \Kindn}
  {\JAlgKindCheck\TEnv A{  \Kindn}}
}


\newcommand\RuleASUnit{%
    \inferrule[AS-Unit]{}
    {\JAlgSubtypeSynth\TEnv \TUnit \TUnit \KindUnSession}
}

\newcommand\RuleASLabel{%
  \inferrule[AS-Label]{L \subseteq L'}
  {\JAlgSubtypeSynth\TEnv L{ L'} \KindUn}
}

\newcommand\RuleASPi{%
  \inferrule[AS-Pi]
  { \JAlgSubtypeSynth \TEnv {A'}{ A} {\Kind_A}
    \\ \JAlgSubtypeSynth{\DemoteExtend\TEnv x{A'}} B{ B'}{ \Kind_B}
    \\ \Multm \MultLE \Multn
  }
  {\JAlgSubtypeSynth{\TEnv}{ \Pi_\Multm (x:A) B}{ \Pi_\Multn (x:A')B'}{ \Kindn}}
}

\newcommand\RuleASSigma{%
  \inferrule[AS-Sigma]
  { \JAlgSubtypeSynth{ \TEnv} A{ A'}{ \Kind_A}
    \\ \JAlgSubtypeSynth{\DemoteExtend\TEnv xA}{ B}{ B'}{ \Kind_B}
  }
  {\JAlgSubtypeSynth{\TEnv}{ \Sigma (x:A) B}{ \Sigma (x:A')B'}{
      \Kind_A \sqcup \Kind_B}}
}

\newcommand\RuleASSend{%
  \inferrule[AS-Send]
  {
    \JAlgSubtypeSynth{\TEnv}{ A'}{ A}{ \Kind} \\\\
    \JAlgSubtypeCheck{\DemoteExtend\TEnv x{A'}}{ S}{ S'} \KindSession
  }
  {\JAlgSubtypeSynth{\TEnv}{ {!(x:A)}S}{ {!(x:A')}S'} \KindSession}
}

\newcommand\RuleASRecv{%
  \inferrule[AS-Recv]
  {
    \JAlgSubtypeSynth{\TEnv}{ A}{ A'}{ \Kind} \\\\
    \JAlgSubtypeCheck{\DemoteExtend\TEnv xA}{ S}{ S'} \KindSession
  }
  {\JAlgSubtypeSynth{\TEnv}{ {?(x:A)}S}{ {?(x:A')}S'} \KindSession}
}

\newcommand\RuleASCaseLeftA{%
  \inferrule[AS-Case-Left1]
  { \JAlgConv\TEnv V {\ell'} {L'} \\
    \ell' \in L \\
    \JAlgSubtypeSynth{\TEnv}{ A_{\ell'}}{ B} \Kind
  }
  { \JAlgSubtypeSynth{\TEnv}
    {\CASE V {\overline{\ell: {A_\ell}}^{\ell\in L}}}
    {B} \Kind
  }
}

\newcommand\RuleASCaseLeftB{%
  \inferrule[AS-Case-Left2]
  { \JAlgTypeUnfold\TEnv x L \\
    L \subseteq L' \\
    (\not\exists\ell')~\JAlgConv\TEnv x {\ell'}{ L'} \\
    (\forall\ell\in L)~
    \JAlgSubtypeSynth{\TEnv, y: \Equation[L] x \ell}{ A_\ell}{ B}{ \Kind_\ell}
  }
  { \JAlgSubtypeSynth{\TEnv}
    {\CASE x {\overline{\ell: {A_\ell}}^{\ell\in L'}}}
    {B}{ \bigsqcup_{\ell\in L}\Kind_\ell}
  }
}

\newcommand\RuleASTVarLeft{%
  \inferrule[AS-TVar-Left]
  {}
  { \JAlgSubtypeSynthExt{\TEnv}
    {\TVar}
    {A} \Kind {[\TVar \mapsto A]}
  }
}

\newcommand\RuleASRecLeftZ{%
  \inferrule[AS-Rec-Left-Z]
  { \JAlgConv\TEnv V \Zero \TNat \\
    \JAlgSubtypeSynthExt{\TEnv}{ A}{ C} \Kind \Subst
  }
  { \JAlgSubtypeSynthExt{\TEnv}
    {\TRec V A \alpha\Kind B}
    {C} \Kind \Subst
  }
}

\newcommand\RuleASRecLeftS{%
  \inferrule[AS-Rec-Left-S]
  { \JAlgConv\TEnv V {\Succ (W)} \TNat \\
    \JAlgSubtypeSynthExt{\TEnv}{ B[{\TRec W A \alpha\Kind B}/\alpha]}{
      C} \Kind \Subst
  }
  { \JAlgSubtypeSynthExt{\TEnv}
    {\TRec V A \alpha\Kind B}
    {C} \Kind \Subst
  }
}

\newcommand\RuleASRecLeftX{%
  \inferrule[AS-Rec-Left-X]
  { \TEnv \vdash x : \TNat \\
    (\not\exists V)~\JAlgConv\TEnv x V{\TNat} \\
    \JAlgSubtypeSynthExt{\TEnv, y: \Equation[\TNat] x \Zero}{ A}{ C}{
      \Kind}{\Subst_Z} \\
    \JAlgSubtypeSynthExt{\TEnv, z:\TNat, y: \Equation[\TNat] x {\Succ
        (z)}}{ B[\TRec z A \alpha\Kind B/\alpha]}{ C}{ \Kind}{\Subst_S}
  }
  { \JAlgSubtypeSynthExt{\TEnv}
    {\TRec x A \alpha\Kind B}
    {C}{\Kind}{\Subst_Z \circ \Subst_S}
  }
}

\newcommand\RuleASCaseRightA{%
  \inferrule[AS-Case-Right1]
  { \JAlgConv\TEnv V {\ell'} L \\
    \ell' \in L\\
    \JAlgSubtypeSynth\TEnv B{ A_{\ell'}} \Kind
  }
  { \JAlgSubtypeSynth\TEnv
    B 
    {\CASE V {\overline{\ell: {A_\ell}}^{\ell\in L}}}
    \Kind
  }
}

\newcommand\RuleASCaseRightB{%
  \inferrule[AS-Case-Right2]
  { \JAlgTypeUnfold\TEnv x L \\
    L \subseteq L' \\
    (\not\exists\ell')~\JAlgConv\TEnv x {\ell'}{ L'} \\
    (\forall\ell\in L)~
    \JAlgSubtypeSynth{\TEnv, y: \Equation[L] x \ell} B{ A_\ell}{ \Kind_\ell}
  }
  { \JAlgSubtypeSynth \TEnv
    B
    {\CASE x {\overline{\ell: {A_\ell}}^{\ell\in L'}}}
    { \bigsqcup_{\ell\in L}\Kind_\ell}
  }
}

\newcommand\RuleASCheck{%
  \inferrule[AS-Check]
  { \JAlgSubtypeSynth\TEnv A B \Kindm \\
    \Kindm \Subkind \Kindn
  }{
    \JAlgSubtypeCheck\TEnv A B {\Kindn}
  }
}

\newcommand\RuleACRefl{%
  \inferrule[AC-Refl]{}{
    \JAlgConv\TEnv\ell\ell{\Setof{\ell}}
  }
}

\newcommand\RuleACAss{%
  \inferrule[AC-Assoc]{}{
    \JAlgConv{\TEnv, y : \Equation[L]x\ell , \SEnv} x\ell L
  }
}

\newcommand\RuleACReflNat{%
  \inferrule[AC-Refl-A]{
    \JAlgTypeUnfold\TEnv V A
  }{
    \JAlgConv\TEnv V V{A}
  }
}

\newcommand\RuleACAssNat{%
  \inferrule[AC-Assoc-A]{}{
    \JAlgConv{\TEnv, y : \Equation[A]xV , \TEnv'} x V A
  }
}


\newcommand\RuleAName{%
  \inferrule[A-Name]{}{
    \JAlgTypeSynth{\PlainExtend{\TEnv_1} zA, \TEnv_2}{ z}{ A}
    {\DemoteExtend{\TEnv_1} zA, \TEnv_2}
  }
}

\newcommand\RuleAUnitI{%
  \inferrule[A-Unit-I]{}{\JAlgTypeSynth\TEnv \EUnit  \TUnit\TEnv}
}

\newcommand\RuleALabI{%
  \inferrule[A-Lab-I]{}{\JAlgTypeSynth\TEnv \ell{ \Setof{\ell}} \TEnv}
}

\newcommand\RuleALabEA{%
  \inferrule[A-Lab-E1]{
    \JAlgConv{\UNR\TEnv} V {\ell'} { L'}\\
    \ell' \in L \\
    \JAlgTypeSynth{\TEnv}{ M_{\ell'}}{ A}{ \TOut}\\
  }{\JAlgTypeSynth{\TEnv}{ \CASE V {\overline{\ell: M_\ell}^{\ell\in L}}}{ A}{ \TOut}
  }
}

\newcommand\RuleALabEB{%
  \inferrule[A-Lab-E2]{
    (\not\exists\ell')~\JAlgConv{\UNR\TEnv} x {\ell'} {L'} \\
    \JAlgTypeUnfold{\UNR\TEnv} x L \\
    L \subseteq L' \\
    (\forall\ell\in L)~
    \JAlgTypeSynth{\TEnv, y : \Equation[L] x \ell}
    { M_\ell}{ A_\ell}{ \TOut, y : \Equation[L] x \ell}
  }{
    \JAlgTypeSynth{\TEnv}{ \CASE x {\overline{\ell: M_\ell}^{\ell\in L'}}} { \CASE x {\overline{\ell:A_\ell}^{\ell\in L}} }{ \TOut}
  }
}

\newcommand\RuleAPiI{%
  \inferrule[A-Pi-I]{
    \JAlgKindSynth{\UNR\TEnv} A{ \Kindn} \\
    \JAlgTypeSynth{\PlainExtend{\TEnv} x A} M
    B {\DemoteExtend{\TOut} x A}
   \\
   \Multm = \MultUn \text{ implies } \TEnv = \TOut 
  }{\JAlgTypeSynth{\TEnv}{ \lambda_\Multm (x: A).M}{ \Pi_\Multm
    (x:  A)B}{ \TOut}
  }
}

\newcommand\RuleAPiE{%
  \inferrule[A-Pi-E]{
    \JAlgTypeSynth{\TEnv_1}{ M}{ C}{ \TEnv_2}\\
    \JAlgTypeUnfold{\UNR{\TEnv_2}}{C}{{\Pi_\Multm(x: A)}B} \\
    \JAlgTypeCheck{\TEnv_2}{ N}{ A}{ \TEnv_3}\\
    \JAlgKindSynth{\UNR\TEnv_1}{ B[N/x]} \Kindn
  }{
    \JAlgTypeSynth{\TEnv_1}{ M\,N}{ B[N/x]}{ \TEnv_3}
  }
}


\newcommand\RuleProcExpr{%
  \inferrule[Proc-Expr]{\TEnv \vdash M : \TUnit}{\TEnv \vdash \Proc{M}}
}

\newcommand\RuleProcChannel{%
  \inferrule[Proc-Channel]{
    \UNR\TEnv \vdash S : \KindLin\\
    \TEnv, \Chanc: S, \Chand: \DUAL S
    \vdash P
  }{
    \TEnv \vdash \NUC \Chanc\Chand~P
  }
}

\newcommand\RuleProcPar{%
  \inferrule[Proc-Par]{
    \TEnv_1 \vdash P \\
    \TEnv_2 \vdash Q
  }{
    \DecomposeOp{\TEnv_1}{\TEnv_2} \vdash P \PAR Q
  }
}


\newcommand\RuleNatF{%
  \inferrule[Nat-F]{
    \EnvForm \MultUn \TEnv
  }{
    \TEnv \vdash \TNat:\KindUn
  }
}

\newcommand\RuleTVarF{%
  \inferrule[TVar-F]{
    \EnvForm \MultUn {\TEnv,\TVar, \TEnv'}
  }{
    \TEnv, \TVar, \TEnv' \vdash \TVar_p : \Un 
  }
}

\newcommand\RuleRecF{%
  \inferrule[Rec-F]{
    \TEnv \vdash V : \TNat\\
    \TEnv \vdash A : \Kind\\
    \TEnv,\TVar \vdash B : \Kind\\
  }{
    \TEnv \vdash \TRec{V}{A}{\TVar}{\Kind}{B}: \Kind
  }
}

\newcommand\RuleConvZ{%
  \inferrule[Conv-Z]{
    \TEnv \vdash A : \Kind\\
    \TEnv,\TVar \vdash B : \Kind\\
  }{
    \TEnv \vdash \Convertible {\TRec{\Zero}{A}{\TVar}{\Kind}{B}} A : \Kind
  }
}

\newcommand\RuleConvS{%
  \inferrule[Conv-S]{
    \TEnv \vdash V : \TNat\\
    \TEnv \vdash A : \Kind\\
    \TEnv,\TVar \vdash B : \Kind 
  }{
    \TEnv \vdash \Convertible
    {\TRec{\Succ(\!V\!)}{A}{\TVar}{\Kind}{B}}
    {B[\TRec{V}{A}{\TVar}{\Kind}{B}/\TVar]} : \Kind
  }
}

\newcommand\RuleZI{%
  \inferrule[Nat-I-Z]{
    \EnvForm \MultUn \TEnv
  }{
    \TEnv \vdash \Zero : \TNat
  }
}

\newcommand\RuleSI{%
  \inferrule[Nat-I-S]{
    \TEnv \vdash M : \TNat
  }{
    \TEnv \vdash \Succ(M) : \TNat
  }
}

\newcommand\RuleANatE{%
  \inferrule[A-Nat-E]{
    \JAlgTypeCheck{\UNR\TEnv} V \TNat{\_} \\
    \JAlgTypeCheckExt{\TEnv,\_: \Equation[\TNat]V\Zero} M
    {C(\TVar_Z)} {\TOut_Z,\_: \Equation[\TNat]V\Zero} {\Subst_Z}\\
    \JAlgTypeCheckExt{\UNR\TEnv, x:\TNat,\TVar\colon\Kind, y\colon C(\TVar),\_\colon \Equation[\TNat] V {\Succ(x)}}
    N {C(\TVar_S)} {\TOut_S, x:\TNat,\TVar\colon\Kind, y\colon
      C(\TVar),\_\colon \Equation[\TNat] V {\Succ(x)}} {\Subst_S} \\
    A = \Subst_Z (\TVar_Z) \\
    B = \Subst_S (\TVar_S)
  }{
    \JAlgTypeSynth\TEnv{ \RECURSOR VMx{y:C (\TVar)}N}
    {C(\TRec V A\TVar\Kind B)} \TOut
  }
}

\newcommand\RuleRecI{%
  \inferrule[Nat-E]{
    \UNR\TEnv \vdash V : \TNat \\
    \DemoteExtend\TEnv z{ \Equation V \Zero} \vdash M : A[V/x] \\
    \DemoteExtend{\UNR\TEnv, x:\TNat,y\colon A} z {\Equation V {\Succ(x)}} \vdash
    N : A[V/x]
  }{
    \TEnv \vdash \RECURSOR VMxyN : A[V/x]
  }
}

\newcommand\RuleSubRec{%
  \inferrule[Sub-Rec]{
    \TEnv \vdash V : \TNat \\
    \TEnv \vdash A \Subtype A' : \Kind \\
    \TEnv,\TVar \vdash B \Subtype B' : \Kind 
  }{
    \TEnv \vdash \TRec VA\TVar\Kind B \Subtype \TRec V{A'}\TVar\Kind {B'}  : \Kind 
  }
}


\begin{document}

\title[Label-Dependent Session Types]{Label-Dependent Session Types}


\author{Peter Thiemann}
\orcid{0000-0002-9000-1239}             
\affiliation{
  \institution{Faculty of Engineering, University of Freiburg}            
  \country{Germany}                    
}
\email{thiemann@acm.org}          

\author{Vasco T. Vasconcelos}
\affiliation{
  \institution{LASIGE, Department of Informatics, Faculty of Sciences, University of Lisbon}            
  \country{Portugal}                    
}
\email{vv@di.fc.ul.pt}          


\begin{abstract}
Session types have emerged as a typing discipline for communication
protocols.  Existing calculi with session types come equipped with
many different primitives that combine communication with the
introduction or elimination of the transmitted value.

We present a foundational session type calculus with a lightweight
operational semantics.  It fully decouples communication from the
introduction and elimination of data and thus features a single
communication reduction, which acts as a rendezvous between senders
and receivers.  We achieve this decoupling by introducing
label-dependent session types, a minimalist value-dependent session type
system with subtyping. The system is sufficiently powerful to simulate existing
functional session type systems. Compared to such systems,
label-dependent session types place fewer restrictions on the code.
We further introduce primitive recursion over natural numbers at the type
level, thus allowing to describe protocols whose behaviour depends on
numbers exchanged in messages.
An algorithmic type checking system is introduced and proved
equivalent to its declarative counterpart.
The new calculus showcases a novel lightweight integration of dependent types and
linear typing, with has uses beyond session type systems. 


\end{abstract}

\begin{CCSXML}
<ccs2012>
<concept>
<concept_id>10011007.10011006.10011008</concept_id>
<concept_desc>Software and its engineering~General programming languages</concept_desc>
<concept_significance>500</concept_significance>
</concept>
<concept>
<concept_id>10003456.10003457.10003521.10003525</concept_id>
<concept_desc>Social and professional topics~History of programming languages</concept_desc>
<concept_significance>300</concept_significance>
</concept>
</ccs2012>
\end{CCSXML}

\ccsdesc[500]{Software and its engineering~General programming languages}
\ccsdesc[300]{Social and professional topics~History of programming languages}

\keywords{session types, dependent types, linear types}

\maketitle

\section{Introduction}
\label{sec:introduction}

Session types enable fine-grained static control over communication
protocols. They evolved from a structuring device for two-party
communication in $\pi$-calculus \cite{Honda1993,TakeuchiHondaKubo1994,HondaVasconcelosKubo1998} over calculi embedded
in functional languages
\cite{VasconcelosRavaraGay2006,GayVasconcelos2010-jfp} to a powerful
means of describing multi-party orchestration of communication
\cite{HondaYoshidaCarbone2008,DBLP:journals/jacm/HondaYC16}. There are embeddings in
object-oriented languages
\cite{GayVasconcelosRavaraGesbertCaldeira2010,DezaniCiancagliniDrossopoulouMostrousYoshida2009}
and uses in the context of scripting languages
\cite{hondamukhamedovbrownchenyoshida2011}, just to mention a few.
Their logical foundations
have been investigated with interpretations in intuitionistic and
classical linear logic \cite{CairesPfenning2010,DBLP:journals/mscs/CairesPT16,Wadler2012}.

There is a range of designs for foundational calculi for session types
\cite{CastagnaDezaniCiancagliniGiachinoPadovani2009,DBLP:journals/iandc/Vasconcelos12,CairesPfenning2010,DBLP:journals/mscs/CairesPT16}.
They all use a session type to describe a sequence of messages. Its
primitive constituents are sending and receiving a typed message
($!A.S$ and $?A.S$), signaling an internal choice ($R \oplus S$),
reacting to an external choice ($R \with S$), and marking the end of a
conversation (\textbf{end}), which is sometimes decomposed into an
active and a passive end marker (\lstinline{end!}  and
\lstinline{end?}). Types in dependent session calculi
\cite{ToninhoCairesPfenning2011,DBLP:conf/fossacs/ToninhoY18}
furthermore contain quantifiers $\forall x:A.S$ and $\exists x:A.S$.
This distinction is well-motivated by logical concerns and results in
different proof term constructions for each of the session operators.
At the operational level, however, the types $!A.S$, $R \oplus S$,
\lstinline{end!}, and $\forall x:A.S$ are implemented by sending a
message and then acting on it in some manner.  It seems wasteful to
have many different syntactic forms that fundamentally perform the
same operation. Moreover, it would be closer to an actual
implementation to have primitive operations for message passing and
have the subsequent actions performed using standard types.

Other researchers also strived to reduce the number of primitive
communication operations in session calculi. For instance,
\citet{DBLP:conf/icfp/LindleyM16} (following
\citet{DBLP:conf/ppdp/DardhaGS12,DBLP:conf/unu/Kobayashi02}) elide
special expressions for internal and external choice by expressing
choice using a standard sum type. In their encoding $\overline{R \oplus
  S}$ is $!(\overline{R} + \overline{S}).$\lstinline{end!}, which has the
same high-level behavior, but the actual messages that are exchanged
are quite different.
The standard implementation of  $\overline{R \oplus S}$ on the wire
sends a single bit to indicate the choice to the
receiver and then continues on the same conversation,
whereas the implementation of $!(\overline{R} +
\overline{S}).$\lstinline{end!} sends a representation of the sum
value, one bit and then a serialization of a channel for $R$ or a
channel $S$, closes the conversion, and continues the protocol on the
other end of the $R$ or $S$ channel.
Clearly, the two implementations are not wire compatible with one
another. Moreover, the encoding using sum types is more
expensive to implement as it involves higher-order channel passing: a
new channel must be created, serialized, sent over the existing channel, and
deserialized at the other end \cite{HuYoshidaHonda2008}.

\citet{DBLP:conf/esop/Padovani17} proposes a different
encoding that does not require the creation of new channels or channel
passing. The encoding of internal
choice sends one of the constructor functions of the sum type and clever typing
guarantees that the type of the channel changes appropriately. 

Our calculus \LDGV{} of Label-Dependent Session Types
is yet more economic in that it requires just a single pair of
communication operations, send and receive, to implement a binary
session-type calculus. Moreover, this implementation does not require
higher-order communication, nor transmission of functions, nor clever
retyping to achieve type soundness, session fidelity, and
communication safety. In particular, the encoding of binary choice
only needs to transmit one bit.

Label dependency is a very limited form of dependent types where
values can depend on labels drawn from a finite set. The labels play
the role of labels in internal and external choices, similar to
variant labels in polymorphic variant types
\cite{DBLP:conf/icfp/CastagnaP016,Garrigue1998} or first-class record
labels \cite{Nishimura1998}. Hence, session types in \LDGV{} are
dependent as in $!(x:A)B$ or $?(x:A)B$, which means to send $A$
(or receive $A$) and continue as $B$, which may depend on $x$.

Labels can further serve as end markers
in protocols and thus the label-dependent calculus is ``wire
compatible'' to the standard encoding of functional session types like
the \LAST{} calculus~\cite{GayVasconcelos2010-jfp} of Linear
Asynchronous Session Types. In fact, a
synchronous version of \LAST{} can be fully emulated in \LDGV.

Label dependency does not require a full-blown lambda calculus in the
types: a large elimination construct for a finite set of labels---a
case expression on labels---suffices. As a more general example, we
outline an extension with natural numbers and large 
elimination with a recursor.

\LDGV{} reinforces the connection between session types and linear
logic~\cite{CairesPfenning2010,ToninhoCairesPfenning2011,DBLP:journals/mscs/CairesPT16,Wadler2012}. The send
operation maps a channel of dependent type $!(x:A)B$ into a single-use
function $\Pi_\MultLin (x:A)B$ whereas the receive operation takes a
channel of dependent type $?(x:A)B$ to a single-use dependent sum
$\Sigma (x:A)B$.


Last, but not least, \LDGV{} proposes a novel, lightweight approach to integrate
linear types with dependent types. The key is an operator
$\DemoteExtend\TEnv xA$ that conditionally extends a type
environment. Roughly speaking, if $A$ is a linear type, then it
returns $\TEnv$ unchanged; if $A$ is unrestricted, then it returns
$\TEnv, x:A$. This operator enables a uniform treatment of dependent
and non-dependent Pi, Sigma, and other types.
For example, type formation for a Pi type like
$\Pi_\Multm (x:A)B$ checks the type $B$ by using $\DemoteExtend\TEnv x
A$. Conditional extension 
automatically degrades the type $\Pi_\Multm (x:A)B$ to a non-dependent
function type if $A$ is linear. On the other hand, $B$ can depend on
$x$, if $A$ is unrestricted. As we will see, \LDGV{} can only have
meaningful dependencies on label types (and natural numbers in the
extended version).

After providing some motivation in Section~\ref{sec:motivation} and
reminding the reader of binary session types in
Section~\ref{sec:binary-session-types}, we claim the following
contributions for this work, starting in
Section~\ref{sec:proposed-calculus}.
\begin{itemize}
\item A foundational functional session type calculus \LDGV{} with a minimal
  set of communication primitives.
\item A type system with label-dependent types,
  linear session types, and subtyping. Besides the usual $\Pi$- and
  $\Sigma$-types, there are label-dependent types for sending and
  receiving.
\item Support for natural numbers and primitive recursion at the type level
  (Section~\ref{sec:naturals}).
\item A novel approach to integrating linear types and dependent types
  using conditional extension.
\item Standard metatheoretical results (Section~\ref{sec:metatheory-light}).
\item Decidable subtyping and type checking that is sound and complete
  (Section~\ref{sec:algorithmic-typing}). 
\item A typing- and semantics-preserving embedding of synchronous \LAST{} in
  \LDGV{} (Section~\ref{sec:embedding-gv-into}).
\item Implementation of a type checker (Section~\ref{sec:implementation}).
\end{itemize}

Full sets of typing rules, proofs, auxiliary lemmas, and an example
type derivation are available in
\finalconcrete{appendices~\ref{sec:compl-typing-rules}-\ref{sec:proofs-embedding-gv}}.


\section{Motivation}
\label{sec:motivation}

Functional session types extend functional programming languages like
Haskell and ML with precise typings for structured communication on
bidirectional heterogeneously typed channels. The typing guarantees
that communication actions never mismatch (session fidelity) and that
only values of the expected type arrive at the receiving end
(communication safety).\footnote{There are session type systems that
  guarantee deadlock freedom, but the systems we consider in this
  paper do not.
}

\subsection{Binary Session Types}
\label{sec:an-example}

\begin{figure*}[tp]
  \begin{minipage}[t]{0.53\linewidth}
\begin{lstlisting}[caption={Compute server},label={lst:motivation:compute-server},captionpos=b]
cServer :
  & { Neg: ?Int. !Int. end!
    , Add: ?Int. ?Int. !Int. end! }
  -> Unit
cServer c =
  rcase c of {
    Neg: c. let (x, c) = recv c 
                c = send c (-x)
            in  close c,
    Add: c. let (x, c) = recv c
                (y, c) = recv c
                c = send c (x+y)
            in  close c
  }
\end{lstlisting}
 \end{minipage}
 \begin{minipage}[t]{0.46\linewidth}
\begin{lstlisting}[caption={Dual of \lstinline{cServer}'s session type},label={lst:motivation:dual-cserver-type},captionpos=b]
(+) { Neg: !Int. ?Int. end?
  , Add: !Int, !Int. ?Int. end?}
\end{lstlisting}
\medskip
\begin{lstlisting}[caption={Compute client},label={lst:motivation:compute-client},captionpos=b]
negClient :
  (+) { Neg: !Int. ?Int. end? }
  -> Int -> Int
negClient d x =
  let d = select Neg d
      d = send d x
      (r, d) = recv d
      wait d
  in  r
\end{lstlisting}
  \end{minipage}
\end{figure*}

As an example for a typical system with binary session types
\cite{GayVasconcelos2010-jfp,DBLP:journals/jfp/Padovani17}, let's consider the code in
Listing~\ref{lst:motivation:compute-server}. It describes a compute server,
\texttt{cServer}, that accepts two commands, \texttt{Neg} and \texttt{Add}, on a channel,
then receives one or two integer arguments depending on the command, performs the
respective operation, sends the result, and closes the channel.

The channel type in the argument of \lstinline{cServer} starts with an
external choice \lstinline{&} between the two commands. After
receiving command \lstinline{Neg}, the channel has type
\lstinline{?Int. !Int. end!}, that is, receive an integer, send an
integer, and then close the channel. The case for command
\lstinline{Add} is analogous.

The structure of the code follows the structure of the type. Variable
\lstinline{c} is processed linearly as it changes type according to
the state of the channel bound to it. The \lstinline{rcase} (receiving
case) receives a label on the channel and branches accordingly. The
\lstinline{c} following labels \lstinline|Neg| and \lstinline|Add| is
a binder for the updated channel endpoint in the two branches. The
\lstinline{recv} operation returns a pair of the updated channel and
the received value, \lstinline{send} returns the updated channel, and
\lstinline{close} consumes the channel end and returns unit.

Listing~\ref{lst:motivation:compute-client} shows a potential client for this server.
The type of the channel end \lstinline{d} is an internal choice
$\oplus$ with a single \lstinline{Neg}-labeled branch. The code 
again follows the type structure.
It performs a \lstinline{select}
operation, which sends the \lstinline{Neg} label (an internal choice), then sends an
integer operand, receives an integer result, and acknowledges channel
closure with the \lstinline{wait}
operation. This single-branch channel type is a supertype of the two-branch type
in Listing~\ref{lst:motivation:dual-cserver-type},
which is the \emph{dual} of the server's channel type. The dual type has the
same structure as the original type, but with sending and receiving
types exchanged. All session type systems require that the types of
the two endpoints of a channel are duals of one another to guarantee
session fidelity. Of course, channel ends can be used at any suitable supertype.

\subsection{The Case for Economy}
\label{sec:an-issue}

The example demonstrates an issue that makes programming with
session types more arcane than necessary. There are three different
send operations, \lstinline{send}, \lstinline{select}, and
\lstinline{close}, and three matching receive operations,
\lstinline{recv}, \lstinline{rcase}, and \lstinline{wait}. They are reasonably
easy to use, but they lead to bloated APIs for  session types. The multitude of operations
also bloats the syntax and semantics of foundational calculi for session types. 

Wouldn't it be enticing if there was a session calculus with just a
single pair of primitives for sending and receiving messages? The
resulting functional session type calculus would be close to an
implementation as it would have just one reduction for communication
alongside the standard expression reductions.

The problem with the existing calculi is that they entangle the
sending/receiving of data with another unrelated operation, which
introduces the sent data or eliminates the received data.  The
\textbf{calculus of label-dependent session types} disentangles
communication from introducing and eliminating the data values. It
comes with a type of first-class labels that plays the role of labels
in choice and branch types of traditional session type systems. The
calculus features dependent product and sum types where the dependency
is limited to labels. The types of the sending and receiving
operations can be dependent on the transmitted values if they are
labels.

\begin{figure*}[tp]
\begin{minipage}[t]{0.49\linewidth}
\begin{lstlisting}[caption={Label-dependent server type},label={lst:motivation:type-server},captionpos=b]
type TServer =
  ?(l:{Neg, Add}).
  case l of
  { Neg: ?Int. !Int. !{EOS}. End
  , Add: ?Int. ?Int. !Int. !{EOS}. End }
\end{lstlisting}
\begin{lstlisting}[caption={Label-dependent compute server},label={lst:motivation:ldgv-compute-server},captionpos=b]
lServer : TServer -> End
lServer c =
  let (l, c) = recv c
      (x, c) = recv c 
  in case l of 
  { Neg: let c = send c (-x)
         in  send c EOS,
  , Add: let (y, c) = recv c
             c = send c (x+y)
         in  send c EOS
  }
\end{lstlisting}
\end{minipage}
 \begin{minipage}[t]{0.49\linewidth}
\begin{lstlisting}[caption={Dual type of \lstinline{TServer}},label={lst:motivation:dual-tserver},captionpos=b]
type TClient =
  !(l:{Neg, Add}).
  case l of
  { Neg: !Int. ?Int. ?{EOS}. End
  , Add: !Int. !Int. ?Int. ?{EOS}. End }
\end{lstlisting}
\begin{lstlisting}[caption={Compute client},label={lst:motivation:ldgv-compute-client},captionpos=b]
lClient : !{Neg}. !Int. ?Int. ?{EOS}. End -> Int -> Int
lClient d x =
  let d = send d 'Neg
      d = send d x
      (r, d) = recv d
      (_, _) = recv d // ((), EOS)
  in  r
\end{lstlisting}
\end{minipage}
\end{figure*}


For illustration, we translate the server session type from
\S~\ref{sec:an-example} to a label-dependent session type
\lstinline{TServer} in Listing~\ref{lst:motivation:type-server}. One new ingredient is the type
\lstinline/{l1,...,ln}/, which denotes the non-empty set of labels
\lstinline{l1} through \lstinline{ln}. The other new ingredient is the
\lstinline{case} expression in the type, which dispatches on a label
to determine the type of the subsequent communication. This
type introduces label dependency.

Like the \lstinline{rcase} operation, a channel of this type first receives
a label, on which the rest of the type depends. Type-level reduction of the \lstinline{case} on the label reveals the type of the rest of the channel. 
If the label is \lstinline{Neg},
then the channel can receive an integer (nothing depends on it), send an
integer, and finally send a special end-of-session label (\lstinline|EOS|).
The case for label \lstinline{Add} is analogous.

Listing~\ref{lst:motivation:ldgv-compute-server} contains the code for
a server of this type. It relies entirely on primitive send and
receive operations. The \lstinline{rcase} operation, which is typical
of previous work, decomposes into a standard \lstinline{recv}
operation followed by an ordinary \lstinline{case} on
labels. Moreover, the structure of the code is liberated from the
session type. While the standard session type dictates the placement
of the \lstinline{rcase}, the \LDGV{} version can examine the tag any
time after receiving it. The server code takes advantage of this
liberty by pulling the common receive operation for the first argument
out of the two branches.

A compatible client (Listing~\ref{lst:motivation:ldgv-compute-client}) does not have to
know about the choice. Its channel argument type is a supertype of the dual of
\lstinline{TServer} in our calculus (cf.\ type \lstinline{TClient} in
Listing~\ref{lst:motivation:dual-tserver}).

Section~\ref{sec:embedding-gv-into} shows that any program using binary session types can
be expressed with label-dependent session types in a semantics-preserving way. The
examples shown in this section give a preview of this embedding.

\subsection{Tagged Data and Algebraic Datatypes}
\label{sec:furth-opport}

\begin{figure*}[tp]
\begin{minipage}[t]{0.49\linewidth}
\begin{lstlisting}[caption={Sending and receiving nodes},label={lst:motivation:sending-receiving-nodes},captionpos=b]
type NodeC =
  !(tag : {Empty, Node}).
  case tag of { Empty: !Unit. End
              , Node: !Int. End }

sendNode : Node -> NodeC -> Unit
sendNode n c =
  let (tag, v) = n
      c = send c tag
  in  send c v

recvNode : dualof NodeC -> Node
recvNode c =
  let (tag, c) = recv c
      (val, c) = recv c
  in  (tag, val)
\end{lstlisting}
\end{minipage}
\begin{minipage}[t]{0.49\linewidth}
\begin{lstlisting}[caption={Summing a given number of integers},label={lst:motivation:summing},captionpos=b]
type SumServer =
  ?(n:Nat).rec n (!Int.End) [alpha]?Int.alpha

sum : SumServer -> End
sum c =
  let (n,c) = recv c in
  rec n {
    Z: lambda(m:Int).lambda(c:!Int.End).
       send c m,
    S(_) with [alpha](y:Int -> alpha -> End):
      lambda(m:Int).lambda(c:?Int.alpha).
        let (k,c) = recv c in
        y (k+m) c
  } 0 c                   
\end{lstlisting}
\end{minipage}
\end{figure*}


Some session calculi support the transmission of tagged data as a
primitive
\cite{DBLP:conf/ecoop/ScalasY16,DBLP:journals/lmcs/ChenDSY17,DBLP:conf/isotas/VasconcelosT93}.
In these works, operations of the form $c!\textit{Node}(42)$ are
used to send a \textit{Node}-tagged message with payload $42$ on
channel $c$, effectively combining a \lstinline{select} operation with
the sending some extra data. The corresponding receiving construct
dispatches on the tag, as in \RCASEK, and also extracts the payload into
variables. This construction is akin to packaging tags with data and
pattern matching as known from algebraic datatypes in functional
programming languages.

Indeed, such algebraic datatypes can be modeled in the functional
sublanguage of \LDGV{} using a label-dependent $\Sigma$-type. As an example,
consider the datatype
\begin{lstlisting}
data Node where { Empty: Node
                , Node : Int -> Node }
\end{lstlisting}
and its \LDGV{} representation
\begin{lstlisting}
type Node = 
     Sigma(tag: {Empty, Node})
     case tag of { Empty: Unit, Node: Int }
\end{lstlisting}
Sending (receiving) a single value of type \lstinline{Node} can be
performed on a channel of type \lstinline{NodeC} (or its dual) as
illustrated with \lstinline{sendNode} and \lstinline{recvNode} in
Listing~\ref{lst:motivation:sending-receiving-nodes}.

Recursive datatypes and session type protocols can be supported by
extending \LDGV{} with recursive types, which we leave to future work.

\subsection{Number-indexed Protocols}
\label{sec:numb-index-prot}

One shortcoming of programming-oriented systems for session types is
that they do not support families of indexed protocols, a quite common
situation in practice. These protocols have variable-length messages
where the first item transmitted gives the number of the subsequent
items.

To demonstrate how \LDGV{} can deal with such protocols, consider the code in
Listing~\ref{lst:motivation:summing} which implements a server that
first receives a number $n$ and then expects to receive $n$ further
numbers, sums them all up and sends them back to the client. The type
of this channel is given by
\begin{lstlisting}
type SumServer = ?(n:Nat). rec n (!Int.End) [alpha]?Int.alpha
\end{lstlisting}
The interesting part is the type of the form \lstinline/rec n S [alpha]R/, which denotes a
type-level recursor on natural numbers. Its first argument is a
number \lstinline/n/, the second argument \lstinline/S/ is used if
\lstinline/n/ is zero, and the third argument \lstinline/[alpha]R/ is
used when \lstinline/n/ is non-zero and \lstinline/alpha/ abstracts
over the recursive use. In this case, the type is
equivalent to \lstinline/R/ where \lstinline/alpha/ is replaced by the
unwinding of the recursor.

In the example, the types evaluate as follows
\begin{itemize}\item 
\lstinline/rec 0 !Int.End [alpha]?Int.alpha/ $\equiv$
  \lstinline/!Int.End/, 
\item \lstinline/rec 1 !Int.End [alpha]?Int.alpha/
  $\equiv$ \lstinline|?Int.rec 0 !Int.End [alpha]?Int.alpha|
  $\equiv$ \lstinline|?Int.!Int.End|,
  and so on.
\end{itemize}

The implementation of the server has to use the corresponding recursor
at the value level. If we write \lstinline/T n/ for
\lstinline/rec n (!Int.End) [alpha]?Int.alpha/, then the recursor
returns a function of type 
\lstinline/Int -> T n -> End/ and is given an expression of type
\lstinline/Int -> !Int.End -> End/ $\equiv$
\lstinline/Int -> T 0 -> End/ for the case zero.
In the successor case, the expression has type
\lstinline/Int -> ?Int.T n -> End/ $\equiv$
\lstinline/Int -> T (S n) -> End/ and
the variable \lstinline/y/ is bound to the function returned by the
unwinding of the recursor. This code does not use the predecessor as
indicated by the underline in \lstinline/S(_)/. The type annotations
for \lstinline/m/, \lstinline/c/, and \lstinline/y/ have to be given to enable type checking. 

\subsection{Assessment}
\label{sec:assessment}

Moving to the dependent calculus \LDGV{} has a number of
advantages over a traditional calculus like \LAST. It liberates the
program structure somewhat from the session type structure and
increases expressivity as shown in the preceding subsections. 
\begin{itemize}
\item In \LDGV, a label-dependent choice in the type can be deferred
  in the program. Listing~\ref{lst:motivation:ldgv-compute-server}
  does not type check in \LAST{} because it defers the
  choice compared to the session type. 
\item \LDGV{} supports first class labels. Listing~\ref{lst:motivation:sending-receiving-nodes} cannot be
  written in this generic way in \LAST{} because the functions \lstinline{sendNode} and
  \lstinline{recvNode} transfer labels without inspecting them. In
  \LAST{} the receive operation \lstinline{rcase} also
  inspects the label: the code would have to be eta-expanded depending on the
  label set used in the \lstinline{Node} type. The \LDGV{} code is
  resilient against such changes. New variants in \lstinline{Node} and
  \lstinline{NodeC} can be processed without rewriting the code. 
\item \LDGV{} supports types and protocols defined by recursion on
  natural numbers. Listing~\ref{lst:motivation:summing} cannot be
  written in \LAST{} without major changes in the protocol that make
  it very inefficient. One would have to change the data stream into a
  list with intervening labels. 
\end{itemize}


\section{Binary Session Types}
\label{sec:binary-session-types}


\begin{figure*}[tp!]
  \begin{align*}
    \text{Multiplicities} &&
   \Multm, \Multn  \grmeq& \MultLin \grmor \MultUn
    \\
    \text{Types} &&
    A, B \grmeq& S \grmor \TUnit \grmor A \to_\Multm B \grmor A \times
           B 
    \\
    \text{Session types} &&
    S, R \grmeq& {!A}.S \grmor {?A}.S \grmor  \oplus \{\overline{\ell: S_\ell}^{\ell\in L}\} \grmor
      \&\{\overline{\ell: S_\ell}^{\ell\in L}\} \grmor \ENDS \grmor \ENDR 
  \end{align*}
  Subkinding\hfill\fbox{$m \MultLE n$}
  \begin{mathpar}
    \inferrule{}{\MultUn \MultLE \MultLin}

    \inferrule{}{\MultLin \MultLE \MultLin}

    \inferrule{}{\MultUn \MultLE \MultUn}
  \end{mathpar}
  Kinding\hfill\fbox{$\GVKind m A$}
  \begin{mathpar}
    \inferrule{}{\GVKind \MultUn \TUnit}
    
    \inferrule{}{\GVKind \Multm {A \to_\Multm B}}
    
    \inferrule{\GVKind \Multm{A} \\ \GVKind \Multm{B}
    }{\GVKind \Multm {A \times B}}
    
    \inferrule{}{\GVKind \MultLin S}

    \inferrule{\GVKind \Multm A \\ \Multm \MultLE \Multn}
    {\GVKind \Multn A}
  \end{mathpar}
  Subtyping\hfill\fbox{$A \Subtype B$}
  \begin{mathpar}
  \inferrule{}{\TUnit \Subtype \TUnit}

  \inferrule{}{ \ENDS \Subtype \ENDS }

  \inferrule{}{ \ENDR \Subtype \ENDR }
  \\
  \inferrule{
    A' \Subtype A \\ B \Subtype B' \\ \Multm \MultLE \Multn
  }{
    A \to_\Multm B \Subtype A' \to_\Multn B'
  }

  \inferrule{
    A \Subtype A' \\ B \Subtype B'
  }{
    A \times B \Subtype A' \times B'
  }

  \inferrule{
    A' \Subtype A \\
    S \Subtype S'
  }{
    {!A}.S \Subtype {!A'}.S'
  }

  \inferrule{
    A \Subtype A' \\
    S \Subtype S'
  }{
    {?A}.S \Subtype {?A'}.S'
  }

  \inferrule{
    L' \subseteq L \\
    (\forall\ell\in L')~ S_\ell \Subtype S'_\ell
  }{
    \oplus \{\overline{\ell: S_\ell}^{\ell\in L}\} \Subtype
    \oplus \{\overline{\ell: S'_\ell}^{\ell\in L'}\}
  }

  \inferrule{
    L \subseteq L' \\
    (\forall\ell\in L)~S_\ell \Subtype S'_\ell
  }{
    \& \{\overline{\ell: S_\ell}^{\ell\in L}\} \Subtype
    \& \{\overline{\ell: S'_\ell}^{\ell\in L'}\}
  }
  \end{mathpar}
  \caption{Types and subtyping in \LSST{}
  }
  \label{fig:gv-types}
\end{figure*}


The type structure for a functional calculus with binary session
types, known as \LAST, adds session types $S$ to the types of an
underlying lambda calculus with functions and products (cf.\
\citet{GayVasconcelos2010-jfp}). The examples in
Section~\ref{sec:an-example} are written in \LAST{} with some syntactic
sugar.

The calculus \LSST{} (for Linear Synchronous Session Types) introduced
in this section is a slight variation of Gay and Vasconcelos \LAST{} 
calculus. First, we choose a synchronous
semantics for the communication primitives. This choice has no impact
on the typing, but greatly simplifies the semantics and
proofs. Second, we adopt the linear-logic inspired end markers $\ENDS$
and $\ENDR$ from Wadler's GV calculus \cite{Wadler2012}. Unlike GV,
the \LSST{} calculus is not free of deadlock.

Figure~\ref{fig:gv-types} defines the syntax of types and the notion
of subtyping.  Function types are annotated with multiplicities (also
called \emph{kinds}), $\Multm, \Multn\in \{\MultLin, \MultUn\}$, that
restrict the number of eliminations that may be applied to a value of
that type: $\MultLin$ denotes a linear value that must be eliminated
exactly once, $\MultUn$ denotes an unrestricted value that may be
eliminated as many times as needed. Session types are always
linear.
The subkinding relation $\Multm\MultLE\Multn$  relates multiplicities:
if a value offers elimination according to $\Multm$ it may also be
eliminated according to $\Multn$. In particular, an unrestricted value
may also serve as a linear value.
The predicate $\EnvForm \Multm A$ determines the multiplicity of a
type.

In session types, the branch labels $\ell$ are drawn from a
denumerable set $ \LabelUniv$ of labels.  Overlining indexed by some
$\ell$ indicates an iteration over a finite non-empty set of labels
$\LabelSet \subseteq \LabelUniv$. The type ${!A}.S$ indicates sending
a value of type~$A$ and continuing according to $S$; ${?A}.S$
indicates receiving a value of type~$A$ and continuing according to
$S$; the type $\oplus \{\overline{\ell: S_\ell}\}$ stands for sending
a label $\ell\in \LabelSet $ and then continuing according to
$S_\ell$; and $\& \{\overline{\ell: S_\ell}\}$ stands for receiving a
label $\ell\in \LabelSet$ and continuing with the chosen $S_\ell$. The
session types $\ENDS$ and $\ENDR$ indicate closing the communication
and waiting for the other end to close.

\LSST{}'s subtyping is driven by multiplicities (linear values may
subsume unrestricted ones) and by varying the number of alternatives
in branch and choice types (corresponding to width subtyping of
records and variants) \cite{GayHole2005,GayVasconcelos2010-jfp}.


\begin{figure*}[tp]
  \begin{align*}
    \text{Names} &&
    z \grmeq& x \grmor c
    \\
    \text{Expressions} &&
    M,N \grmeq&
    z
    \grmor \EUnit
    \grmor \lambda_\Multm x.M
    \grmor M\,N
    \grmor \EPair M N
    \grmor \ELet{\EPair[] xy} M N
    \\ && 
    \grmor& \FORK M
    \grmor \NEW
    \grmor \SEND M
    \grmor \RECV M
    \\ && 
    \grmor& \SELECT \ell
    \grmor \RCASE M {\overline{\CASEBRANCH \ell x N_\ell}^{\ell\in L}}
    \grmor \CLOSE M
    \grmor \WAIT M
    \\
    \text{Processes} &&
    O, P, Q \grmeq& \Proc M \grmor P \PAR Q \grmor \NUC\Chanc\Chand P
    \\
    \text{Typing environments} &&
    \TEnv,\SEnv \grmeq& \EmptyEnv \grmor \TEnv, z:A
  \end{align*}
  Environment formation, environment split
  \hfill\fbox{$\EnvForm \Multm \TEnv$}\quad\fbox{$\Decompose\TEnv{\TEnv_1}{\TEnv_2}$}
  \begin{mathpar}
    \inferrule{}{\EnvForm \Multm \EmptyEnv}

    \inferrule{
      \EnvForm \Multm \TEnv
      \\
      \GVKind \Multm A
    }{
      \EnvForm \Multm {(\TEnv,x:A)}
    }
    \\
  \inferrule{}{
    \Decompose\EmptyEnv\EmptyEnv\EmptyEnv
  }

  \inferrule{
    \Decompose\TEnv{\TEnv_1}{\TEnv_2} \\
    \EnvForm \MultUn A 
  }{\Decompose{(\TEnv, z:A)}{(\TEnv_1, z:A)}{(\TEnv_2, z:A)}}
  %

  \inferrule{
    \Decompose\TEnv{\TEnv_1}{\TEnv_2}
    \\ \EnvForm \MultLin A 
  }{
    \Decompose{(\TEnv, z:A)}{(\TEnv_1, z:A)}{\TEnv_2}
  }

  \inferrule{
    \Decompose\TEnv{\TEnv_1}{\TEnv_2}
    \\ \EnvForm \MultLin A 
  }{
    \Decompose{(\TEnv, z:A)}{\TEnv_1}{(\TEnv_2, z:A)}
  }
\end{mathpar}
  Typing expressions (excerpt)
  \hfill\fbox{$\TEnv \vdash M:A$}
  \begin{mathpar}
  \inferrule[gv-send]
  {
    \TEnv \vdash M : {!A}.S 
  }
  {\TEnv \vdash \SEND M : A \to_\KindLin S }

  \inferrule[gv-recv]
  {\TEnv \vdash M : {?A}.S}
  {\TEnv \vdash \RECV M : A \times S}

  %
  \inferrule[gv-select]
  {\EnvForm \MultUn \TEnv\\
    \ell'\in L
  }
  {\TEnv \vdash \SELECT \ell' : \oplus \{\overline{\ell:
      S_\ell}^{\ell\in L}\} \to_\KindLin S_{\ell'}}

  \inferrule[gv-rcase]
  {
    \TEnv_1 \vdash M : \&\{\overline{\ell: S_\ell}\}^{\ell\in L} \\
    (\forall\ell\in L) \TEnv_2, x:S_\ell \vdash M_\ell : A
  }
  {\DecomposeOp{\TEnv_1}{\TEnv_2}  \vdash \RCASE M
    {\overline{\CASEBRANCH \ell x M_\ell}^{\ell\in L}} : A}

  \inferrule[gv-close]
  {
    \TEnv \vdash M : \ENDS
  }
  {\TEnv \vdash \CLOSE M : \TUnit}

  \inferrule[gv-wait]
  {
    \TEnv \vdash M : \ENDR
  }
  {\TEnv \vdash \WAIT M : \TUnit}

  \inferrule[gv-new]
  {
    \EnvForm \MultUn \TEnv
  }
  {\TEnv \vdash \NEW : S \times \Dual S}
  \end{mathpar}

  \caption{Expressions, processes, and typing in \LSST{}}
  \label{fig:gv-typing}
\end{figure*}


Figure~\ref{fig:gv-typing} describes the syntax of names, expressions,
and processes in \LSST. Names include variables, $x,y$, and channel
endpoints, $c,d$. Expressions comprise names, the unit value, pair and
function introduction and elimination (in $\KindLin$ and $\KindUn$
versions), and the standard primitives of \LSST{}. Process expressions are either expression
processes, parallel processes, or a channel restriction
$\NUC\Chanc\Chand P$ that binds the two channel endpoints $\Chanc$ and
$\Chand$ in the scope provided by process $P$.

We refrain from giving the full set of \LSST{} typing rules here.
As an example, we give the standard typing rules for sending and
receiving data
and go over rule \TirName{GV-SEND} for illustration. The $\SENDK$
operation takes a channel endpoint $M$ of type $!A.S$, which is good
to write a value of type $A$. Then $\SEND M$ is a function from $A$
to $S$, which must be used once (because it is closed over the channel
endpoint).
We flip the arguments for $\SENDK$ with respect to other presentations
in the
literature~\cite{GayVasconcelos2010-jfp,igarashithiemannvasconceloswadler2017,DBLP:conf/icfp/LindleyM16,Wadler2012},
while aligning with those of \citet{DBLP:journals/jfp/Padovani17}.
The rule \TirName{GV-NEW} for creating channels prescribes that
$\NEW$ returns a pair of channel endpoints with dual session
types. Alternatively, to obtain a deadlock-free calculus, we could
couple channel creation with thread creation as in the cut rule of
\citet{Wadler2012} or the fork primitive of
\citet{DBLP:journals/corr/LindleyM14}. 

Like other type systems with a mix of linear and unrestricted
resources
\cite{Walker2005-attapl,DBLP:conf/lics/CervesatoP96,KobayashiPierceTurner1996},
\LSST{} relies on an environment splitting relation
$\Decompose\TEnv{\TEnv_1}{\TEnv_2}$. As a slight abuse of notation, we
sometimes write $\DecomposeOp{\TEnv_1}{\TEnv_2}$ for some $\TEnv$ such that
$\Decompose\TEnv{\TEnv_1}{\TEnv_2}$.


\begin{figure*}[tp]
  \begin{align*}
    \text{Values} &&
    V,W \grmeq& c \grmor \EUnit \grmor \lambda_\Multm x.M \grmor \EPair V W \grmor
    \SEND V \grmor \SELECT\ell
    \\
    \text{Evaluation contexts} &&
    \ECN, \ECN[F] \grmeq& \Hole
    \grmor \ECN\,M \grmor V\,\ECN
    \grmor \EPair\ECN f \grmor \EPair V \ECN \grmor \ELet{\EPair[] xy} \ECN M
    \\ &&
    \grmor& \SEND\ECN \grmor \RECV\ECN
    \grmor \RCASE \ECN {\overline{\CASEBRANCH \ell x M}}
    \grmor \CLOSE \ECN
    \grmor \WAIT \ECN
  \end{align*}
  Structural congruence
  \hfill\fbox{$P\equiv Q$}
  \begin{mathpar}
    \inferrule[sc-comm]{}{P\PAR Q \equiv Q \PAR P}

    \inferrule[sc-assoc]{}{O\PAR (P\PAR Q) \equiv (O \PAR P)\PAR Q}

    \inferrule[sc-gc]{}{P \PAR  \Proc{\EUnit} \equiv P}
\\
    \inferrule[sc-extrusion]{}{(\NUC\Chanc\Chand P) \PAR Q \equiv \NUC\Chanc\Chand
      (P \PAR Q)}

    \inferrule[sc-swap-c]{}{\NUC\Chanc\Chand P \equiv \NUC\Chand\Chanc P}

    \inferrule[sc-swap-r]{}{\NUC\Chanc\Chand \NUC{\Chanc'}{\Chand'}P \equiv
      \NUC{\Chanc'}{\Chand'} \NUC\Chanc\Chand P}
  \end{mathpar}
  \renewcommand\columnwidth\textwidth 
  Process reduction
  \hfill\fbox{$P \ReducesTo Q$}
\begin{align}
  \Proc{\Context{\FORK M}}
  & \ReducesTo
    \Proc{\Context{\EUnit}} \PAR \Proc{M}
    \Tag{Rl-Fork}\label{eq:rl-fork}
  \\
  \Proc{\Context{\NEW}}
  & \ReducesTo
    \NUC\Chanc\Chand~\Proc{\Context{\EPair[\Lin]\Chanc\Chand}}
    \Tag{Rl-New}\label{eq:rl-new}
  \\
  \NUC\Chanc\Chand~\Proc{\Context{\SEND\Chanc\,V}} \PAR
  \Proc{\Context[F]{\RECV\Chand}}
  & \ReducesTo
  \NUC\Chanc\Chand\Proc{\Context{\Chanc}} \PAR
    \Proc{\Context[F]{\EPair[\Lin]V\Chand}}
    \Tag{Rl-Com}\label{eq:rl-send-recv}
    \\
    \NUC\Chanc\Chand\Proc{\Context{\SELECT\ell'\Chanc}} \PAR
    \Proc{\Context[F]{\RCASE \Chand {\overline{\CASEBRANCH \ell {x}
            M_\ell}}}}
    &\ReducesTo
    \NUC\Chanc\Chand\Proc{\Context{\Chanc}} \PAR
    \Proc{\Context[F]{M_{\ell'}[\Chand/x]}}
    \Tag{Rl-Branch}\label{eq:rl-branch}
    \\
    \NUC\Chanc\Chand~\Proc{\Context{\CLOSE\Chanc}} \PAR
    \Proc{\Context[F]{\WAIT\Chand}}
    & \ReducesTo
    \Proc{\Context{\EUnit}} \PAR
    \Proc{\Context[F]{\EUnit}}
    \Tag{Rl-Close}\label{eq:rl-close}
  \end{align}
  \begin{mathpar}
    \inferrule[Rl-Ctx-Exp]{M \ReducesTo N}{\Proc M \ReducesTo \Proc N}

    \inferrule[Rl-Ctx-Par]{P \ReducesTo Q}{O\PAR P \ReducesTo O\PAR Q}

    \inferrule[Rl-Ctx-Res]{P \ReducesTo Q}{\NUC\Chanc\Chand P \ReducesTo
      \NUC\Chanc\Chand Q}

    \inferrule[Rl-Cong]{P \equiv P' \\ P' \ReducesTo Q' \\ Q' \equiv Q}{P
      \ReducesTo Q}

  \end{mathpar}
%

  \caption{Reduction in \LSST{}}
  \label{fig:gv-reduction}
\end{figure*}


The reduction relation for the \LSST{} language is in
Figure~\ref{fig:gv-reduction}. It introduces values $V, W$, which
comprise the usual lambda calculus variety, a communication channel
endpoint~$c$, a partially applied send operation $\SEND v$, and a
select operation with a label $\SELECT\ell$. Evaluation contexts
$\ECN, \ECN[F]$ formalize a left-to-right call-by-value evaluation
order. Unlike in \LAST, communication in \LSST{} is synchronous and we
add the \TirName{Rl-Close} reduction.

We refrain from defining expression reduction; instead we refer the
reader to
\citet{GayVasconcelos2010-jfp}.
But we fully define process reduction. It relies on \emph{structural
  congruence}, $P\equiv Q$, a relation that specifies that parallel
execution is commutative, associative, and compatible with channel
restriction and commutation of channel restriction. We assume the
variable convention: for example, in the rule \TirName{sc-swap-r} for
commuting restrictions it must be that
$\{c,d\} \cap \{c', d'\} = \emptyset$. The rule \TirName{sc-swap-c}
that swaps the endpoints simplifies the statement of the reduction
relation \cite{igarashithiemannvasconceloswadler2017}. Examining the
reduction rules, we observe that each communication reduction first
performs a rendezvous to transmit information, but the
rules~\eqref{eq:rl-branch} and~\eqref{eq:rl-close} do some extra
work. Part of the motivation for this work comes from trying to
disentangle the extra work from the pure communication.


\section{The Label-dependent Session Calculus}
\label{sec:proposed-calculus}

We propose \LDGV{}, a new calculus for functional sessions.
Compared to \LSST{}, \LDGV{} introduces types that depend on labels and
restricts the communication instructions to the fundamental send and
receive operations.  The example in Section~\ref{sec:an-issue}
hints that every \LSST{} program can be expressed in \LDGV{}, a claim
formally stated and proved in Section~\ref{sec:embedding-gv-into}.

The dynamics of \LDGV{} are simpler than \LSST{}'s, but its statics
are more involved. They build on a range of earlier work, most notably
trellys \cite{DBLP:conf/popl/CasinghinoSW14,DBLP:journals/corr/abs-1202-2923} and F$^*$
\cite{DBLP:journals/jfp/SwamyCFSBY13}, to formalize a flexible
dependently-typed system based on call-by-value execution augmented
with linear types.



\begin{figure}[tp]
\begin{align*}
  \text{Values} &\!\!\!\!&
      V ,W \grmeq&
      z
      \grmor \ell
      \grmor \EUnit
      \grmor \lambda_\Multm (x: A).M
      \grmor \DPair {x:A} V W
      \grmor \SEND V
    \\
    \text{Types} &\!\!\!\!&
    A,B \grmeq&
    S 
    \grmor \TUnit
    \grmor L
    \grmor \Equation[L] VW 
    \grmor \CASE V {\overline{\ell : A_\ell}}
    \grmor \Pi_\Multm (x: A)B
    \grmor \Sigma (x: A)B
    \\
    \text{Session Types} &\!\!\!\!&
    S,R \grmeq&
    \TEnd
    \grmor \CASE V {\overline{\ell : S_\ell}}
    \grmor{!(x:A)S}
    \grmor {?(x:A)S}
  \end{align*}
Environment split \hfill\fbox{$\Decompose\TEnv{\TEnv_1}{\TEnv_2}$}
  \begin{mathpar}
    \inferrule{}{\Decompose\EmptyEnv\EmptyEnv\EmptyEnv}

  \inferrule{
    \Decompose\TEnv{\TEnv_1}{\TEnv_2} \\
    \TEnv_1 \EnvForm \MultUn A \\
    \TEnv_2 \EnvForm \MultUn A 
  }{\Decompose{(\TEnv, z:A)}{(\TEnv_1, z:A)}{(\TEnv_2, z:A)}}
  %

  \inferrule{
    \Decompose\TEnv{\TEnv_1}{\TEnv_2}
    \\ \TEnv_1 \EnvForm \MultLin A 
  }{
    \Decompose{(\TEnv, z:A)}{(\TEnv_1, z:A)}{\TEnv_2}
  }

  \inferrule{
    \Decompose\TEnv{\TEnv_1}{\TEnv_2}
    \\ \TEnv_2 \EnvForm \MultLin A 
  }{
    \Decompose{(\TEnv, z:A)}{\TEnv_1}{(\TEnv_2, z:A)}
  }
  \end{mathpar}
  Conditional extension, the unrestricted part of an env.  \hfill
  \fbox{$\DemoteExtend\TEnv x A = \SEnv$}
  \quad
  \fbox{$\UNR\TEnv = \SEnv$}
  \begin{mathpar}
  \inferrule{\UNR\TEnv \vdash A : \KindLin}{
    \DemoteExtend\TEnv x A =  \TEnv
  }

  \inferrule{\UNR\TEnv \vdash A : \KindUn}{
    \DemoteExtend\TEnv x A =  \TEnv, x:A
  }

  \inferrule{}{\UNR\EmptyEnv = \EmptyEnv}

  \inferrule{}{
    \UNR{(\TEnv, z : A)} = \DemoteExtend{(\UNR\TEnv)} z A}
\end{mathpar}
  Session type duality
  \hfill\fbox{$\DUAL S = R$}
\begin{mathpar}
  \inferrule{}{\Dual{{!(x:A)S}} = ?(x:A)\DUAL S}

  \inferrule{}{\Dual{{?(x:A)S}} = !(x:A)\DUAL S}

  \inferrule{}{\Dual{\CASE V {\overline{\ell : S_\ell}}} =
  \CASE V {\overline{\ell : \DUAL {S_\ell}}}}

  \inferrule{}{\DUAL\END = \END}
\end{mathpar}
  \caption{Values and types in \LDGV{}}
  \label{fig:ldgv-kinds-types-occurrences}
  \label{fig:ldgv-duality}
\end{figure}


%
Figure~\ref{fig:ldgv-kinds-types-occurrences} describes \LDGV{}'s values, types,
and some auxiliary operations on type environments.
Kinds are as in \LSST: $\KindLin$ for linear (single use) types and $\KindUn$ for
unrestricted types;
%
%
and unrestricted values can also be used linearly.

Values comprise the usual lambda calculus values and $\SEND V$ as in
Section~\ref{sec:binary-session-types}.  Recall from
Figure~\ref{fig:gv-typing} that $z$ stands for a variable $x$ or a
channel end~$c$. Variables are included in the set of values as they
can only be bound to values as customary when reasoning with open expressions. 

Types of the calculus comprise session types; the unit type;  the label type $\Setof{\ell_1, \dots, \ell_n}$,
for $n>0$, inhabited by the labels $\ell_1, \dots, \ell_n$---for
brevity, we let $L$ range over finite non-empty sets of labels; the
equality type $\Equation[L] V W$
inhabited by evidence that the value $V$ is equal to value $W$;
the dependent
function and product types $\Pi_\Multm (x:A) B$ and
$\Sigma (x:A)B$ of multiplicity $\Multm$.
Session types comprise $\TEnd$ to signify the end of a session; the dependent
session types $!(x:A)S$ and $?(x:A)S$ for endpoints that send or
receive a value of type $A$ and continue as session type $S$, which may depend
on $x$. The type $\CASE V {\overline{\ell : A_\ell} }$ indicates large
elimination for labels and it may occur in both types and session types.

The basic operations on type environments are inherited from \LSST:
environment formation $\EnvForm \Multn \TEnv$ and splitting
$\Decompose\TEnv{\TEnv_1}{\TEnv_2}$. Both rely on kinding (they are
mutually recursive as expected in a dependently typed calculus) and we present a revised definition of kinding (type
formation) shortly.
Environment formation and environment split for \LDGV{} are both
adapted from \LAST{} (Figure~\ref{fig:gv-typing}). In the case of
formation, premise $\GVKind \Multm A$ becomes
$\TEnv \vdash A : \Multm$ to reflect the new type formation rules
(in Figure~\ref{fig:ldgv-types}).
For environment split we require type~$A$ to be well formed in
the relevant contexts, so that $\TEnv_1,z : A$ and $\TEnv_2,z : A$
both become well formed contexts.

The new operations are conditional extension
$\DemoteExtend\TEnv x A = \SEnv$ and projecting the unrestricted part
of an environment $\UNR\TEnv = \SEnv$.
The conditional extension $\DemoteExtend\TEnv x A$ only includes the
binding $x:A$ in the resulting environment if $A$ is unrestricted. In
the upcoming type formation rules, this mechanism is used to keep
linear values out of the environment so that any dependency on linear
objects is ruled out.

Conditional extension is used in the formation rule for all dependent
types. As an example, take the function type
$\TEnv \vdash \Pi_\Multm(x:A)B : \Multm$. Here, we do not wish to
force type $A$ to be unrestricted. Rather, we wish to express that $B$
can depend on $x:A$ iff $A$ is unrestricted. To this end, the premise
uses the conditional extension to check $B$ as in
$\DemoteExtend\TEnv x A \vdash B : \Multn$. Right now, this setup is
more general than strictly needed because we can only compute with
labels in types, that is, we need $A=L$ and $B$ can at most contain a
\CASEK{} on $x$.

The unrestricted part of an environment is used when switching from
expression formation to type formation or subtyping.
As $\Decompose\TEnv\TEnv{\UNR\TEnv}$\finalreport{}{ (Lemma~\ref{lemma:properties-un})},
it is ok to use an 
environment and its unrestricted part side by side.

\begin{figure}[tp]
  Type formation
  \hfill\fbox{$\TEnv \vdash A : \Kind$}
  \begin{mathpar}
    \RuleEqualityF
    
    \RuleUnitF
    
    \RuleEndF
    
    \RuleLabF
    
    \RuleLabET
    
    \RulePiF
    
    \RuleSigmaF
    
    \RuleSsnOutF
    
    \RuleSsnInF
    
    \RuleSubKind
  \end{mathpar}

  \caption{Type formation in \LDGV{}}
  \label{fig:ldgv-types}
\end{figure}


\begin{figure*}[tp]
  Type 
  conversion
  \hfill
  \fbox{$\TEnv \vdash \Convertible A B : \Kind$}
  \begin{mathpar}
  %
    %
    \RuleConvSubst
    \\
  %
    \RuleConvBeta

    \RuleConvEta

    \RuleConvRefl

    \RuleConvSym

    \RuleConvTrans
    
  \end{mathpar}
  Subtyping
  \hfill\fbox{$\TEnv \vdash A \Subtype B : \Kind$}
  \begin{mathpar}
    \RuleSubConv
    
    \RuleSubLab
    
    \RuleSubTrans
    
    \RuleSubSub

    %
    \RuleSubPi
    
    \RuleSubSigma
    
    \RuleSubSend
    
    \RuleSubRecv
    
    \RuleSubCase
  \end{mathpar}
  
  \caption{Type conversion and subtyping in \LDGV{}}
  \label{fig:ldgv-subtyping}
\end{figure*}



The \emph{dual of a session type}, $\DUAL S$, is also defined in
Figure~\ref{fig:ldgv-duality} and has the same structure as the
original type $S$, but swaps the direction of communication.
Duality is an inductive metafunction on session types and is involutory: $\Dual{\DUAL{S}}= S$.

Type formation is defined in Figure~\ref{fig:ldgv-types}. As types do
not depend on linear values, all type environments involved in the
type formation judgment $\TEnv \vdash A : \Kind$ are unrestricted
(i.e., $\EnvForm\MultUn\TEnv$).
Equality types are unrestricted types constructed from a value of
label type and a concrete label
(\TirName{Equality-F}). This rule refers to the
upcoming typing judgment.
The $\TUnit$ type and the $\TEnd$ type both have kind $\KindUnSession$
(\TirName{Unit-F}, \TirName{End-F}).  The label type is an index type
for any non-empty, finite set of labels (\TirName{Lab-F}).
Label elimination \TirName{Lab-E'} for value $V$ constructs a witness
for the equality type $\Equation[\LabelSet] V \ell$ in the branch for
label $\ell$ to model dependent matching
\cite{DBLP:conf/popl/CasinghinoSW14}.
Formation of the type $\Pi_\Multm (x:A)B$ showcases conditional extension
(\TirName{Pi-F}). The type $A$ may be linear or unrestricted. In the
former case, the binding for $x$ must not be used in $B$, in the latter
case, it may. The conditional extension expresses
this desire precisely. The kind of the $\Pi$-type is determined by its annotation~$\Multm$.
The same rationale applies to the formation
rule \TirName{Sigma-F} of the type $\Sigma (x:A)B$, but we need to
check the kind of $A$ explicitly to make sure it matches the kind of
$B$.
Kind subsumption \TirName{Sub-Kind} enables products with components that
have different kinds.
%
%
Rules \TirName{Ssn-Out-F} and \TirName{Ssn-In-F}  manage dependency 
just like functions and products.

Figure~\ref{fig:ldgv-subtyping} describes type conversion and subtyping.
Type conversion $\TEnv \vdash \Convertible A B : \Kind$ specifies that
types $A$ and $B$ of kind $\Kind$ are equal up to substitutions that
can be justified by equations in~$\TEnv$, beta and eta conversion of
cases. Eta conversion enables commuting conversions that move common
(session) type prefixes in and out of $\CASEK$ types.  Conversion is closed under
reflexivity, symmetry, and transitivity.

As an example for type conversion in action, consider typing a function that
returns values of different primitive types \lstinline{Int} and \lstinline{String} depending on its input.
\begin{lstlisting}
lambda (b : {True, False}) case b of { True: 0, False: "foo" }
\end{lstlisting}
The \lstinline{True} branch typechecks with \lstinline{0 : Int}
whereas the \lstinline{False} branch typechecks with
\lstinline{"foo" : String}. Thanks to the \TirName{Conv-Beta}  rule,
we can expand the type of the \lstinline{True} branch to
\lstinline/0 : case True of { True: Int, False: String}/
and in the \lstinline{False} branch to
\lstinline/"foo" : case False of { True: Int, False: String}/.
According to the upcoming \CASEK{} elimination rule \TirName{Lab-E},
each branch for the \lstinline/case b/ adopts the equation of the
respective branch as in \lstinline{b = True} or \lstinline{b = False}.
Hence, the substitution rule \TirName{Conv-Subst} applies to obtain
the type \lstinline/case b of { True: Int, False: String}/ for both
branches and thus for the entire case expression. 

The rule \TirName{Conv-Eta} is needed for typechecking examples like
the code in Listing~\ref{lst:motivation:ldgv-compute-server}. After
the first \lstinline{recv} operation in line 3, the type of
\lstinline{c} is \lstinline/case l of { Neg: ?Int.NegType, Add: ?Int.AddType }/,
but the next operation is \lstinline{recv c}. The trick is to first
beta-expand the continuation types \lstinline/NegType/ and \lstinline/AddType/ to
\lstinline/CType = case l of { Neg: NegType, Add: Addtype }/ in both
branches using \TirName{Conv-Beta} as in the preceding example. The
resulting converted type of \lstinline{c} now reads
\lstinline/case l of { Neg: ?Int.CType, Add: ?Int.CType}/, which is
clearly convertible to
\lstinline/?Int.CType/
using \TirName{Conv-Eta}. Hence, \lstinline/recv c/ typechecks and
returns a channel end of type \lstinline/CType/!

Subtyping, also in Figure~\ref{fig:ldgv-subtyping}, is generated by
conversion (rule \TirName{Sub-Conv}), subsetting of label types (rule
\TirName{Sub-Lab}), and closed under transitivity
(\TirName{Sub-Trans}), subkinding (\TirName{Sub-Sub}), function and
product types (\TirName{Sub-Pi}, \TirName{Sub-Sigma}), as well as
session send and receive types (\TirName{Sub-Send},
\TirName{Sub-Recv}).

Subtyping of $\Pi$- and $\Sigma$-types extends the definitions of
\citet{AspinallCompagnoni2001}. The novel parts are the conditional
binding for the $(x:A)$ part as discussed for the formation rules and
the additional constraints on the multiplicities. \TirName{Sub-Send}
(\TirName{Sub-Recv}) is a simplified variant of \TirName{Sub-Pi}
(\TirName{Sub-Sigma}, respectively).

The rule \TirName{Sub-Case} deserves special attention.  Intended to
derive the premises for $B \Subtype B'$ in the rules \TirName{Sub-Pi},
\TirName{Sub-Sigma}, \TirName{Sub-Send}, and \TirName{Sub-Recv}, it
deals with the typical case that a function has type
$\Pi (x: L)\CASE x {\overline{\ell: {B_\ell}}^{\ell\in L}}$
and we need to determine whether this type is a subtype of
$\Pi (x:L')\CASE x {\overline{\ell: {B'_\ell}}^{\ell\in
    L'}}$. In this case, $L' \subseteq L$ is required and we adopt the
assumption $x : L'$ to prove the subtyping
judgment on the case types. For the corresponding product types,
however, $L \subseteq L'$ is required and the assumption for the case
expression reads $x : L$. Both cases are covered by
the assumption $x : L \cap L'$ which is the premise in the
\TirName{Sub-Case} rule.

In principle, subtyping is not required for \LDGV{} to work. However, it is included
for two reasons. First, it enables us to establish a tight correspondence
with the \LSST{} calculus which features subtyping (cf.\
Section~\ref{sec:embedding-gv-into}). Second, if we elided subtyping it
would be necessary to define a type equivalence relation, say, $\approx$ by a ruleset analogous
to the one in Figure~\ref{fig:ldgv-subtyping}, where all occurrences of
$\le$ would be replaced by $\approx$ and the comparisons between label
sets would change from $\subseteq$ to $=$ (and the same holds for
algorithmic subtyping vs.\ algorithmic type equivalence in
Section~\ref{sec:algorithmic-typing}). Hence, the system without
subtyping would not be simpler than the one presented.


\begin{figure*}[tp]
  \begin{flushleft}
    Expressions
  \end{flushleft}
  \begin{align*}
    M,N \grmeq&
    V
    \grmor \CASE V {\overline{\ell : N_\ell}}
    \grmor \APP M N
    \grmor \DPair {x:A} V N
    \grmor \ELet{\EPair[]xy}{M}{N}
    \\
    \grmor& \NEW
    \grmor \FORK M
    \grmor \SEND M
    \grmor \RECV M
  \end{align*}
  Expression formation
  \hfill\fbox{$\TEnv \vdash M : A$}
  \begin{mathpar}
    \RuleSubType
    
    \RuleName

    \RuleUnitI

    \RuleLabI

    \RuleLabE

    \RulePiI

    \RulePiE

    \RuleSigmaI

    \RuleSigmaE

    \RuleSigmaG

    \RuleFork
 
    \RuleSsnI

    \RuleSsnSendE

    \RuleSsnRecvE
  \end{mathpar}

  \caption{Expression formation in \LDGV{}}
  \label{fig:ldgv-typing}
  \label{fig:ldgv-exps-envs}
\end{figure*}


Figure~\ref{fig:ldgv-exps-envs} defines the expressions of \LDGV,
most of which are taken from \LSST.  In a dependent pair
$\DPair {x:A} M N$, the first component $M$ is bound to a variable $x$ which
may be used in the second component.  The expression $\NEW$ creates a new
channel of type $S$ and returns a pair of channel endpoints, one of
type $S$ and the other of type $\DUAL S$.  The expressions $\SEND M$
and $\RECV M$ have the same operational behavior as in \LSST.

Figure~\ref{fig:ldgv-typing} also contains the inference rules for
expression typing.
Most rules are standard, so we only highlight a few specific rules. 
Rule \TirName{Lab-E} is the expression-level
counterpart of the same-named rule at the typing level
(Figure~\ref{fig:ldgv-types}).
%
%
It characterizes a dependent case
elimination on a label type. In each branch it pushes an
equation, $\Equation V \ell$, on the typing environment, which can be
exploited in the type derivation for the branch.

Manipulation of $\Pi$-types is largely standard
(\TirName{Pi-I}). Well-formedness of the $\Pi$-type follows from the
agreement lemma\finalreport{}{~\ref{lem:agreement}}, as for all other type
constructors.  Elimination for $\Pi$-types is
limited to well-formed return types: if the function type 
depends on $x$, then the argument must be a value.

Manipulation of $\Sigma$-types is similarly restricted to dependency
on unrestricted values. \TirName{Sigma-I} introduces a pair, which
binds the first component to a variable that can be used in the second
component. It behaves like a dependent record. If the first component
$V$ is linear, then $x$ can be used in $N$, but it cannot influence
its type due to the well-formedness assumption of the $\Sigma$ type.

The rule \TirName{Sigma-G} is a refined elimination rule that enables
checking the second component of a product repeatedly with all possible
assumptions about the label in the first component.  It performs a
local eta expansion to increase the precision of typing.

As an example for a use of \TirName{Sigma-G} consider the code in
Listing~\ref{lst:motivation:sending-receiving-nodes}. If we naively
typecheck the product elimination \lstinline/let (tag, v) = n/ in the
definition of \lstinline/sendNode/, then
\lstinline/tag : {Empty, Node}/ and
\lstinline/v : case tag of {Empty: Unit, Node: Int}/.
Sending the \lstinline/tag/ in the next line updates the type of the
channel end to
\lstinline/c : case tag of {Empty: !Unit, Node: !Int}/. But now the
typecheck for the final \lstinline/send/ operation fails because the
value of the \lstinline/tag/ is unknown.

The \TirName{Sigma-G} rule prevents this issue. When eliminating a
product on a label type as in \lstinline/let (tag, v) = n/, the rule
checks the body of the \lstinline/let/ for each possible value of
\lstinline/tag/. The rule expresses this repeated check by a premise
that checks a case expression on the first component, \lstinline/tag/,
which replicates the body of the \lstinline/let/ in all branches
(i.e., the body is eta-expanded). As the
\TirName{Lab-E} rule for the case adopts a different equation
\lstinline/tag = .../ for each branch, all ramifications are
typechecked exhaustively. In the above example, the types for
\lstinline{v} and \lstinline{c} could both beta-reduce on the known
\lstinline/tag/ and thus unblock the typechecking for the
\lstinline/send/ operation.

The last block of rules in Figure~\ref{fig:ldgv-typing} governs the
typing of the session operations. The $\NEW$ expression returns a
linear pair of session endpoints where the types are duals of one
another.
The send operation turns a channel which is ready to
send into a single-use dependent function that returns the depleted
channel (\TirName{Ssn-Send-E}). The receive operation turns a channel which is ready to
receive into a linear dependent pair of the received value and the depleted channel (\TirName{Ssn-Recv-E}).

Process typing is standard
(cf.~\cite{GayVasconcelos2010-jfp,DBLP:journals/iandc/Vasconcelos12}%
\finalreport{}{ or Appendix~\ref{sec:compl-typing-rules}}).


\label{sec:dynamics}

\begin{figure*}[tp]
  \begin{align*}
    \text{Evaluation contexts} &&
    \ECN{}, \ECN[F]{} \grmeq&
    \Box
    \grmor \CASE {\ECN{}} {\overline{\ell : N_\ell}}
    \grmor \ECN{}\,N
    \grmor V\,\ECN{}
    \\ &&
    \grmor& \DPair {x:A} V {\ECN{}}
    \grmor \ELet{\EPair[]xy}{\ECN{}}{N}
    \grmor \SEND {\ECN{}}
    \grmor \RECV {\ECN{}}
\end{align*}
Expression reduction
  \hfill\fbox{$M \ReducesTo N$}
  \begin{mathpar}
    \mprset{flushleft}
    \inferrule[Rl-Case]{
      \ell'\in L
    }{
      \CASE {\ell'} {\overline{\ell: M_\ell}^{\ell\in L}} \ReducesTo M_{\ell'}
    }

    \inferrule[Rl-Betav]{
    }{
      (\lambda_\Multm (x:A).M)\, V  \ReducesTo M[V/x]
    }

    %
    \inferrule[Rl-Prod-Elim]{
    }{
      \ELet{\EPair[]xy}{\DPair {x:A} V W}{M} \ReducesTo M[W/y][V/x]
    }

    \inferrule[Rl-ctx-exp]{M \ReducesTo N}{\Context M \ReducesTo \Context N}
  \end{mathpar}
  \renewcommand\columnwidth\textwidth 
  Process reduction
  \hfill\fbox{$P \ReducesTo Q$}
  \begin{align}
  \Proc{\Context{\NEW}}
  & \ReducesTo
    \NUC\Chanc\Chand~\Proc{\Context{\DPair[\MultLin]{x}\Chanc\Chand}}
    \Tag{Rl-New}\label{eq:rl-new-ldgv}
    \\
  \NUC\Chanc\Chand~\Proc{\Context{\SEND\Chanc\,V}} \PAR
  \Proc{\Context[F]{\RECV\Chand}}
  & \ReducesTo
  \NUC\Chanc\Chand\Proc{\Context{\Chanc}} \PAR
    \Proc{\Context[F]{\DPair[\MultLin]{x}V\Chand}}
    \Tag{Rl-Com}
    \label{eq:rl-send-recv-ldgv}
  \end{align}
  \begin{center}
    (Plus rule \eqref{eq:rl-fork}, the context and the structural
    congruence rules from Figure~\ref{fig:gv-reduction})
  \end{center}
  \caption{Reduction in \LDGV{}}
  \label{fig:ldgv-dynamics}
\end{figure*}


Figure~\ref{fig:ldgv-dynamics} defines call-by-value reduction in
\LDGV. Evaluation contexts are standard. Given all that, the dynamics
of \LDGV{} is pleasingly simple: it is roughly the dynamics of \LSST{}
with a few rules removed.
Expression reduction comprises a case rule for labels, beta-value
reduction, decomposition of products, and lifting over evaluation contexts.

Process reduction gets simplified to a subset of three base cases from
five in related work
\cite{igarashithiemannvasconceloswadler2017,GayVasconcelos2010-jfp}. From
Figure~\ref{fig:gv-reduction}, only one (out of three) communication
rule remains (rules {\sc Rl-Branch} and {\sc Rl-Close} are not part of
\LDGV). Rules \eqref{eq:rl-new-ldgv} and
\eqref{eq:rl-send-recv-ldgv} behave as before, but on
dependent pairs.



\section{Natural numbers and the recursor}
\label{sec:naturals}

\begin{figure*}[p]
  \begin{align*}
    \text{Polarities} &&
                         \Pol \grmeq&\dots \grmor \PolPos \grmor \PolNeg
    \\
    \text{Values} &&
    V ,W \grmeq& \dots \grmor \Zero  \grmor \Succ(V)
    \\
    \text{Types} &&
    A,B \grmeq& \dots \grmor \TVar_\Pol \grmor \TRec{V}{A}{\TVar}{\Kind}{B} \grmor \TNat
    \\
    \text{Session Types} &&
    S,R \grmeq&  \dots \grmor \TVar_\Pol \grmor \TRec{V}{S}{\TVar}{\Kind}{R}
    \\
    \text{Expressions} &&
    M,N \grmeq& \dots \grmor \Zero  \grmor \Succ(M) \grmor \RECURSOR VMxyN
    \\
    \text{Typing environment} &&
    \TEnv, \SEnv \grmeq& \dots \grmor \TEnv, \TVar 
    \\
    \text{Evaluation contexts} &&
    \ECN{}, \ECN[F]{} \grmeq& \dots \grmor \RECURSOR \ECN MxyN
  \end{align*}
  Environment formation
  \hfill\fbox{$\EnvForm \Kind \TEnv$}
  \begin{mathpar}
    \inferrule{
      \EnvForm \Kind \TEnv
    }{
      \EnvForm \Kind {(\TEnv,\TVar)}
    }
    %
\end{mathpar}
  Type formation
  \hfill\fbox{$\TEnv \vdash A : \Kind$}
  \begin{mathpar}
    \RuleNatF

    \RuleTVarF 

    \RuleRecF
  \end{mathpar}
  Type conversion
  \hfill \fbox{$\TEnv \vdash \Convertible A B : \Kind$}
  \begin{mathpar}
    \RuleConvZ

    \RuleConvS
  \end{mathpar}
  Subtyping
  \hfill\fbox{$\TEnv \vdash A \Subtype B : \Kind$}
  \begin{mathpar}
    \RuleSubRec

    \RuleSubTVar
    %
  \end{mathpar}
  Session type duality
  \hfill\fbox{$\DUAL S = R$}
  \begin{mathpar}
    \RuleDualRec

    \RuleDualPosVar

    \RuleDualNegVar
  \end{mathpar}
  Expression formation
  \hfill\fbox{$\TEnv \vdash M : A$}
  \begin{mathpar}
    \RuleZI 

    \RuleSI  

    \RuleRecI 
  \end{mathpar}
  Expression reduction
  \hfill\fbox{$M \ReducesTo N$}
  \begin{mathpar}
    \inferrule[RL-Z]{}{\RECURSOR \Zero MxyN \ReducesTo M}

    \inferrule[RL-S]{}{\RECURSOR {\Succ(\!V\!)} MxyN \ReducesTo
    N[V/x][\RECURSOR VMxyN/y]}
  \end{mathpar}
  \caption{Extensions for natural numbers and recursor}
  \label{fig:nats}
\end{figure*}


The infrastructure developed in the previous sections is easily
amenable to extensions. In this section we report on the support for
natural numbers and a type recursor inspired by G\"odel's system~\textbf
T (cf.~\citet{Harper2016-book}). The required extensions are in
Figure~\ref{fig:nats}.

Newly introduced expressions comprise the natural number constructors
($\Zero$ and $\Succ(V)$) and a \emph{recursor}. A natural number~$n$
is encoded as $\overline n = \Succ(\dots\Succ(\Zero))$, where the
successor constructor is applied $n\ge0$ times to the zero
constructor. An expression of the form $\RECURSOR VMxyN$ represents
the $V$-iteration of the transformation $\lambda x.\lambda y.N$
starting from~$M$. The bound variable~$x$ represents the predecessor
and the bound variable~$y$ the result of $x$-iteration.
Its behaviour is clearly captured by the expression reduction rules in
the figure: the recursor evaluates to~$M$ when~$V$ is zero, and to~$N$
with the appropriate substitutions for~$x$ and~$y$, otherwise.


Types now incorporate type variables $\alpha,\beta$ of kind $\Un$, the type $\TNat$ of natural
numbers, and a \emph{type recursor}.
The type formation rules for type variables and natural numbers should
be self-explanatory. The rule for the type recursor,
$\TRec VA\TVar\Kind B$, requires~$V$ to be a natural number and 
types~$A$ and~$B$ to be of the same kind $\Kind$. The recursor
variable $\TVar$ may appear free in~$B$, thus
accounting for the recursive behaviour of the recursor.
For example, if~$n$ is a natural number, then type
$\TRec{\overline n}{(!\TInt)\END}{\alpha}{\MultLin}{(?\TInt)\alpha}$ intuitively
represents the type $(?\TInt)\dots(?\TInt) (!\TInt)\END$ composed
of~$n$ copies of $?\TInt$ and terminated by $(!\TInt)\END$.
As before, we introduce a type recursor for types and for session
types.
For
natural numbers, we need two new instances of the equality type, 
$\Equation[\TNat]{V}{\Zero}$ and 
$\Equation[\TNat]{V}{\Succ(W)}$, which fit in with the previously
defined rule \TirName{Equality-F} in Figure~\ref{fig:ldgv-types}.


The rules for type conversion should be easy to understand based on
those for expressions: a type $\TRec{V}{A}{\TVar}{\Kind}{B}$ may be
converted to~$A$ when~$V$ is zero and to~$B$ (with the appropriate
substitution), otherwise. A third rule (not shown) allows replacing an
expression-variable~$x$ by a natural number $V$ when an entry
$x = V$ can be found in the context (analogous to rule
\TirName{Conv-Subst} in Figure~\ref{fig:ldgv-subtyping}).


Now for duality and subtyping.
Defining the dual of the recursor is subtle and we adopt an approach inspired by
\citet{DBLP:conf/icfp/LindleyM16}'s treatment of general recursive
types. Type variables are adorned with a 
polarity $\Pol \in \{\PolPos, \PolNeg\}$. The polarity ``remembers''
whether the variable $\TVar_\Pol$ stands for the unrolled recursion
($\Pol=\PolPos$) or for its dual ($\Pol=\PolNeg$). 
To dualize the recursor $\TRec{V}{S}{\TVar}{\Kind}{R}$  we first apply
the usual dual to $S$ and $R$. When the transformation reaches a
variable $\TVar_p$ in $R$, it flips its polarity. Next, we
swap the polarities of all occurrences of the recursion variable in
$\DUAL R$. With this definition, duality is an involution on session
types. One caveat is that unrolling the recursion into a negative
variable (cf. \TirName{Conv-S}) will substitute the \emph{dual} of the
type for the variable.

The definition of subtyping for the recursor is fairly 
restrictive to avoid a coinductive definition. Rule \TirName{Sub-Rec}
essentially forces recursive types to synchronize and rule
\TirName{Sub-TVar} enforces an invariant treatment of the recursion
variables. This choice avoids additional complication with the
interplay of variance and the polarity of type variables,  while ensuring
the basic relation between subtyping and duality
($\TEnv \vdash \DUAL S \Subtype \DUAL R : \Kind$ when
$\TEnv \vdash R \Subtype S : \Kind$).
A more flexible approach would proceed coinductively;
we expect that the solution of \citet{GayHole2005} adaptable to our setting.


Finally, a word on the formation rules for the new expressions. Those
for natural numbers $\Zero$ and $\Succ(V)$ are standard. That for the
recursor $\RECURSOR VMxyN$ requires~$V$ to be a natural number and
expressions~$M$ and $N$ to have the same type~$A[V/x]$. The type for
$M$ is extracted from a context containing an extra entry stating that
$V$ is zero ($z: V= \Zero$). For~$N$ we add bindings for
the bound variables~$x$ and~$y$, as well as an extra entry stating
that $V$ is the successor of~$x$ ($z:V=\Succ(x)$).
Moreover, whereas $M$ is certainly used once, $N$ may be used
arbitrarily often. Hence, we must typecheck $N$ in an unrestricted
environment $\UNR\TEnv$!


\section{Metatheory}
\label{sec:metatheory-light}

The main metatheoretical results for {\LDGV} are subject reduction for
expressions, typing preservation for processes, and absence of
run-time errors. All proofs and auxiliary results may be found in
\finalconcrete{Appendix~\ref{sec:metatheory}}.

\begin{restatable}[Typing preservation for expressions]{theorem}{LDGVSubjectReductionExpressions}
  \label{thm:ldgv-subject-reduction-expressions}
  If $\TEnv \vdash {M} : A$ and $M \ReducesTo N$, then
  $\TEnv \vdash {N} : A$.
\end{restatable}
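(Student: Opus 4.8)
The plan is to argue by induction on the derivation of $M \ReducesTo N$, splitting on the last reduction rule. The two technical workhorses are a \emph{substitution lemma} and a family of subtyping-aware \emph{generation (inversion) lemmas}. For substitution I will use: if $\Decompose\TEnv{\TEnv_1}{\TEnv_2}$, $\TEnv_1 \vdash V : B$ and $\PlainExtend{\TEnv_2} x B \vdash N : C$, then $\TEnv \vdash N[V/x] : C[V/x]$, where in this dependent setting substituting the value $V$ may genuinely change the type. Because \TirName{Sub-Type} can be the last rule of any derivation, each inversion lemma must see through subtyping: from $\TEnv \vdash \lambda_\Multm(x:A).M : C$ I conclude that $C$ is a supertype (under the rules of Figure~\ref{fig:ldgv-subtyping}) of some $\Pi_\Multm(x:A)B$ for which $\PlainExtend\TEnv x A \vdash M : B$, and analogously for pairs and for channels carrying session types. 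Agreement (Lemma~\ref{lem:agreement}), the properties of $\UNR\TEnv$, and $\Decompose\TEnv\TEnv{\UNR\TEnv}$ (Lemma~\ref{lemma:properties-un}) keep all intermediate environments well formed.

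For the $\beta$-style base cases the reasoning is routine modulo these lemmas. For \TirName{Rl-Betav} I invert \TirName{Pi-E} to get $\TEnv_1 \vdash \lambda_\Multm(x:A).M : \Pi_\Multm(x:A')B$ and $\TEnv_2 \vdash V : A'$ with result type $B[V/x]$; inverting the abstraction through subtyping on the $\Pi$-type yields $\PlainExtend{\TEnv_1} x {A''} \vdash M : B''$ with $A' \Subtype A''$ and $B'' \Subtype B$ under the extension, so $\TEnv_2 \vdash V : A''$ and the substitution lemma gives $\TEnv \vdash M[V/x] : B''[V/x]$, which I coerce to $B[V/x]$ using substitutivity of subtyping and \TirName{Sub-Type}. \TirName{Rl-Prod-Elim} is analogous, except that the $\ELet{\EPair[]xy}{\cdot}{\cdot}$ may have been typed by either \TirName{Sigma-E} or \TirName{Sigma-G}, so I handle both; in each I apply substitution twice and exploit $x,y \notin \FV(C)$ so that the result type is unchanged. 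The recursor cases \TirName{RL-Z} and \TirName{RL-S} follow the same pattern after inverting \TirName{Nat-E}; the successor case substitutes the predecessor $V$ for $x$ and the recursive call $\RECURSOR VMxyN$ for $y$, and here it is essential that \TirName{Nat-E} types the step term in $\UNR\TEnv$ so that $y$ is unrestricted (the call it stands for may be used repeatedly).

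The genuinely calculus-specific base case is \TirName{Rl-Case}, where $\CASE {\ell'}{\overline{\ell:M_\ell}^{\ell\in L}} \ReducesTo M_{\ell'}$. Inverting \TirName{Lab-E} (through subtyping) gives $\ell' \in L$ together with $\DemoteExtend\TEnv y {\Equation[L]{\ell'}\ell} \vdash M_\ell : B$ for every branch, in particular the reflexive instance $\DemoteExtend\TEnv y {\Equation[L]{\ell'}{\ell'}} \vdash M_{\ell'} : B$. I then invoke a \emph{reflexive-equation strengthening} lemma to drop the binding $y : \Equation[L]{\ell'}{\ell'}$: because this equation is trivially true, every \TirName{Conv-Subst} step it licenses converts $\CASE {\ell'}{\cdots}$ to itself and is hence redundant with \TirName{Conv-Refl}, so the derivation can be rewritten without $y$, giving $\TEnv \vdash M_{\ell'} : B$, which is the ascribed type of the case (modulo the outer subtyping). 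The same lemma discharges the reflexive $\Equation[\TNat]{V}{\cdots}$ entries produced by \TirName{Nat-E} in the recursor cases.

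Finally, the inductive case \TirName{Rl-ctx-exp} rests on a \emph{context-replacement lemma}: if $\TEnv \vdash \Context M : A$, then $M$ is typable, $\TEnv' \vdash M : B$ in the relevant sub-environment, and for any $N$ with $\TEnv' \vdash N : B$ we obtain $\TEnv \vdash \Context N : A$; combined with the induction hypothesis this closes the case. A point worth checking here is that $A$ does not secretly depend on the value filling the hole: in a context such as $V\,\ECN$ the hole carries a non-value redex, while \TirName{Pi-E} produces a dependent result type $B[N/x]$ only when the argument is a value, so $A$ is stable as the redex reduces. I expect the real difficulty to lie not in the case analysis but in establishing these supporting lemmas coherently in the combined dependent, linear, and subtyping setting: the substitution lemma must thread the value through an environment split while tracking the induced change of types up to conversion, and subtyping inversion for $\Pi$-, $\Sigma$-, and session types must be reconciled with \TirName{Sub-Case}, \TirName{Conv-Beta}, and \TirName{Conv-Eta}.
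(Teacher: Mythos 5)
Your proposal is correct and matches the paper's own proof essentially step for step: the paper likewise argues by cases on the reduction, inverts derivations through a canonical-derivation lemma (exactly one \TirName{Sub-Type} on top of each structural rule), closes \TirName{Rl-Betav} and \TirName{Rl-Prod-Elim} via substitution lemmas for typing and for subtyping followed by \TirName{Sub-Type}, and in \TirName{Rl-Case} discharges the branch hypothesis by dropping the reflexive assumption $\Equation[L]{\ell'}{\ell'}$, exactly as your strengthening lemma does. If anything you are more thorough, since the paper's written proof inverts only \TirName{Sigma-E} for products (not \TirName{Sigma-G}), leaves the \TirName{Rl-ctx-exp} case implicit (its subderivation introduction/elimination lemmas, stated for process preservation, play the role of your context-replacement lemma, including your observation about non-dependence of the result type on a non-value argument), and does not spell out the recursor cases.
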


Its proof requires the usual substitution and weakening lemmas along
with lemmas about environment splitting.

\begin{restatable}[Typing preservation for processes]{theorem}{TypingPreservationProcesses}
  \label{thm:ldgv-typing-preservation-processes}
  If $\TEnv \vdash P$ and $P \ReducesTo Q$, then $\TEnv \vdash Q$.
\end{restatable}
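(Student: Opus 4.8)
The plan is to proceed by induction on the derivation of $P \ReducesTo Q$, with a case analysis on the final reduction rule. The congruence rule \TirName{Rl-Cong} is handled by a separate lemma stating that structural congruence preserves typing: if $\TEnv \vdash P$ and $P \equiv Q$ then $\TEnv \vdash Q$. I would prove this by checking each axiom of $\equiv$; the interesting cases are \TirName{sc-comm} and \TirName{sc-assoc}, which reduce to commutativity and associativity of environment splitting $\Decompose\TEnv{\TEnv_1}{\TEnv_2}$, \TirName{sc-gc}, which needs that $\Proc{\EUnit}$ is typable only in an unrestricted environment, and \TirName{sc-extrusion}, where a channel restriction crosses a parallel composition and one must show that the split of the ambient environment distributes over the extruded $\NUC\Chanc\Chand{}$ binding (using that the bound endpoints do not occur in the other component). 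The congruence cases thus rest entirely on structural properties of splitting, which I would factor out as auxiliary lemmas.

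The two context rules \TirName{Rl-Ctx-Par} and \TirName{Rl-Ctx-Res} follow immediately from the induction hypothesis after inverting \TirName{Proc-Par} and \TirName{Proc-Channel}, respectively. The rule \TirName{Rl-Ctx-Exp} lifts expression reduction to processes and is discharged directly by typing preservation for expressions (Theorem~\ref{thm:ldgv-subject-reduction-expressions}) together with inversion on \TirName{Proc-Expr}. For the base rule \eqref{eq:rl-fork}, I would decompose the typing of $\Context{\FORK M}$ using a replacement lemma for evaluation contexts: from $\TEnv \vdash \Context{\FORK M} : \TUnit$ obtain a split $\Decompose\TEnv{\TEnv_1}{\TEnv_2}$ with $\TEnv_1 \vdash \FORK M : \TUnit$ and hence $\TEnv_1 \vdash M : \TUnit$ (by inversion on \TirName{Fork}), re-plug $\EUnit : \TUnit$ into the context on the $\TEnv_2$ side, type the spawned thread $\Proc{M}$ from $\TEnv_1$, and recombine with \TirName{Proc-Par}. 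The rule \eqref{eq:rl-new-ldgv} is similar: the redex $\NEW$ has type $\Sigma(x:S)\DUAL S$ by \TirName{Ssn-I}, and after reduction the pair $\langle x{=}\Chanc,\,\Chand\rangle$ is typed under the environment extended by $\Chanc:S,\Chand:\DUAL S$, which is exactly the premise needed to introduce the $\NUC\Chanc\Chand{}$ binding via \TirName{Proc-Channel}.

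The heart of the proof is the communication rule \eqref{eq:rl-send-recv-ldgv}. Here I would first invert \TirName{Proc-Channel} on the restriction to learn that $\Chanc : S'$ and $\Chand : \DUAL{S'}$ for some session type $S'$, and then apply the evaluation-context replacement lemma to both subprocesses. On the sending side, inversion through \TirName{Ssn-Send-E} forces $S' = {!(x:A)}S$ and gives $\SEND\Chanc\,V$ of type $S[V/x]$ with $V : A$; on the receiving side $\Chand : \DUAL{S'} = {?(x:A)}\DUAL S$, so $\RECV\Chand$ has type $\Sigma(x:A)\DUAL S$. After the step, $\Chanc$ must be retyped at $S[V/x]$ while the pair $\langle x{=}V,\,\Chand\rangle$ must receive type $\Sigma(x:A)\DUAL S$, i.e.\ $\Chand$ must have type $\DUAL S[V/x]$. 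The new restriction therefore needs endpoint types $S[V/x]$ and $\DUAL S[V/x]$ to be dual, which is exactly the content of a substitution lemma for duality, $\DUAL{(S[V/x])} = (\DUAL S)[V/x]$, that I would prove by induction on $S$, taking care of the polarity flip in the recursor case from Section~\ref{sec:naturals}.

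The main obstacle I anticipate is precisely this communication case: combining the label/value dependency (the transmitted $V$ is substituted into the tail session types on both endpoints) with linearity (the endpoints must be threaded through the evaluation contexts and re-split once their types change). The delicate part is formulating the replacement lemma so that it permits the type of the plugged endpoint to change while keeping the surrounding context well-typed; this hinges on linearity guaranteeing that $\Chanc$ (resp.\ $\Chand$) occurs only in the redex, so the context portion of the environment split is untouched, and then on the duality/substitution lemma to certify that the updated endpoint types remain dual under \TirName{Proc-Channel}. Inversion must also account for possible uses of \TirName{Sub-Type}, which I would absorb using the subtyping inversion lemmas; everything else reduces to the standard substitution, weakening, and splitting lemmas already required for Theorem~\ref{thm:ldgv-subject-reduction-expressions}.
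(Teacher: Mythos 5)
Your overall architecture coincides with the paper's: your ``replacement lemma for evaluation contexts'' is exactly the pair of subderivation introduction/elimination lemmas that the paper adapts from Gay and Vasconcelos, the rule \TirName{Rl-Ctx-Exp} is discharged by Theorem~\ref{thm:ldgv-subject-reduction-expressions} plus inversion on \TirName{Proc-Expr}, and your treatments of \eqref{eq:rl-fork} and \eqref{eq:rl-new-ldgv} match the paper's cases. The gap is in the communication case \eqref{eq:rl-send-recv-ldgv}, at the step where you claim that ``inversion through \TirName{Ssn-Send-E} forces $S' = {!(x:A)}S$.'' Because \TirName{Sub-Type} can occur anywhere inside the derivation, inversion on the sending side only yields a subtyping constraint $S' \Subtype {!(x:A_1)}S_1$ for \emph{some} send type, and, independently, inversion on the receiving side yields $\DUAL{S'} \Subtype {?(x:A_2)}S_2$ for \emph{some} receive type. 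Nothing in your sketch forces these two constraints to agree: a priori $A_1$ and $A_2$ differ, and the continuations $S_1$ and $S_2$ need not be duals. Your plan to ``absorb'' \TirName{Sub-Type} with subtyping inversion lemmas handles each endpoint in isolation, but the crux of the case is reconciling the two endpoints with each other.

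The missing idea is the paper's subtyping duality lemma (Lemma~\ref{lemma:subtyping-duality}): if $\TEnv \vdash R \Subtype S : \Kind$ then $\TEnv \vdash \DUAL{S} \Subtype \DUAL{R} : \Kind$. Applying it, together with involution of duality, to the receiver's constraint turns $\DUAL{S'} \Subtype {?(x:A_2)}S_2$ into ${!(x:A_2)}\DUAL{S_2} \Subtype S'$, which composes by \TirName{Sub-Trans} with the sender's constraint to give ${!(x:A_2)}\DUAL{S_2} \Subtype {!(x:A_1)}S_1$; inverting \TirName{Sub-Send} then relates the payloads ($A_1 \Subtype A_2$, so the transmitted $V : A_1$ can be retyped at $A_2$ by subsumption) and the continuations ($\DUAL{S_1} \Subtype S_2$ under the binder). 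Only after this reconciliation do your remaining ingredients --- the substitution lemmas, including $\DUAL{(S[V/x])} = (\DUAL{S})[V/x]$, which is indeed needed and which the paper uses silently --- close the case, re-establishing duality of the updated endpoint types for \TirName{Proc-Channel}. The paper packages this whole argument as the conclusion $S = {!(x:A)}S'$ obtained from the two opposing subtyping judgments via the duality lemma; without some form of this anti-monotonicity argument, your communication case does not go through.
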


Its proof relies on
Theorem~\ref{thm:ldgv-subject-reduction-expressions} and the
adaptation of two results about the manipulation of subderivations by
\citet{GayVasconcelos2010-jfp}.

An absence of runtime errors result for \LDGV{} is based on
\citet{GayVasconcelos2010-jfp,HondaVasconcelosKubo1998,DBLP:journals/iandc/Vasconcelos12,igarashithiemannvasconceloswadler2017}. We
start by defining what it means for a process to be an \emph{error}:
a) an attempt to match against a non-value label or a label that is
not in the expected set (rule \TirName{Rl-Rec},
Figure~\ref{fig:ldgv-dynamics}), eliminate a function, a \FIXK, a pair
or a natural number against the wrong value (rules \TirName{Rl-Betav},
\TirName{Rl-RecBetav}, and \TirName{Rl-Prod-Elim} in
Figure~\ref{fig:ldgv-dynamics}; rules \TirName{Rl-Z} and \TirName{RL-S}
in Figure~\ref{fig:nats}), and b) two processes trying to access the
same channel endpoint, or accessing the different endpoints both for
reading or for writing (rule \TirName{Rl-Com},
Figure~\ref{fig:ldgv-dynamics}).

\begin{restatable}[Absence of run-time errors]{theorem}{AbsenceRuntimeErrors}
  \label{thm:ldgv-absence-runtime-errors}
  If $\vdash P$, then $P$ is not an error.
\end{restatable}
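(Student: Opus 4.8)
The plan is to argue directly from the derivation of $\vdash P$, separating the two families of errors: the thread-local eliminations of part~(a) and the channel races of part~(b). Part~(a) is handled by a canonical forms lemma, part~(b) by the linear discipline on endpoints together with the duality invariant baked into \TirName{Proc-Channel}. Since the statement concerns the given $P$ only, I do not need subject reduction (Theorem~\ref{thm:ldgv-typing-preservation-processes}); combining this result with that theorem would then yield the usual ``well-typed processes never go wrong'' corollary, but that is not required here.

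First I would prove a \emph{canonical forms} lemma describing the shape of a value from its type: if $V$ is a value with $\TEnv \vdash V : A$ and $\TEnv$ binds only channel endpoints (the situation inside a closed process), then $A = L$ forces $V = \ell$ with $\ell \in L$; $A = \Pi_\Multm (x:A')B$ forces $V$ to be a $\lambda$-abstraction or a partially applied $\SEND W$; $A = \Sigma (x:A')B$ forces a dependent pair; $A = \TNat$ forces $\Zero$ or $\Succ(W)$; and a session type forces a channel endpoint. Each case is inversion on the value-typing rules of Figure~\ref{fig:ldgv-typing}, with uses of \TirName{Sub-Type} absorbed through the subtyping rules (the only head shapes below a given head are the compatible ones, and \TirName{Sub-Lab} supplies $\ell \in L$). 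With this lemma, part~(a) is uniform: such an error means some thread decomposes (via a replacement lemma for evaluation contexts) as $\Context{R}$ where $R$ is an elimination---a $\CASE V {\overline{\ell: M_\ell}}$, an application $V\,N$, a $\syntax{let}$-elimination of a pair, or a recursor $\RECURSOR V M x y N$---whose principal argument $V$ is a value of the \emph{wrong} shape. Typing gives $V$ a type that, by canonical forms, forces the \emph{right} shape, a contradiction. It is worth noting that a redex $\SEND c\,V$ or $\RECV c$ is a communication action, not a part~(a) error, so these threads are correctly left untouched by this step.

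The part~(b) errors are the crux. By inversion on \TirName{Proc-Par} and the splitting relation $\Decompose\TEnv{\TEnv_1}{\TEnv_2}$, an endpoint---being of linear session type---is assigned to exactly one side of every parallel composition; threading this through the nested process structure yields an ownership invariant: each endpoint occurs free in at most one thread. This immediately excludes two processes touching the \emph{same} endpoint. For the other case, two threads acting on the dual ends $\Chanc,\Chand$ of a restriction $\NUC\Chanc\Chand$ both for reading or both for writing, I use the duality recorded in \TirName{Proc-Channel}, which types the body with $\Chanc : S$ and $\Chand : \DUAL S$. If the owner of $\Chanc$ sits at $\RECV\Chanc$, then \TirName{Ssn-Recv-E} forces $S$ to convert to a receive type ${?(x:A)S'}$, so by the definition of duality $\DUAL S$ converts to the send type ${!(x:A)\DUAL{S'}}$ and the owner of $\Chand$ cannot also be at $\RECV\Chand$; the symmetric argument rules out two sends. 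Hence no part~(b) error arises either.

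I expect part~(b) to be the main obstacle, for two reasons. First, the ownership invariant must be formulated and preserved across arbitrary nestings of \TirName{Proc-Par} and channel restriction, so the environment-splitting bookkeeping---and in particular an auxiliary lemma relating the free endpoint names of a thread to the linear entries of its typing environment---is the delicate part. Second, the communication polarity of an endpoint is read off the \emph{head} of its session type, but that type may be a $\CASE V {\overline{\ell: S_\ell}}$ (or, under the extension of Section~\ref{sec:naturals}, a $\TRec V A \alpha \Kind B$) rather than a literal $!$/$?$; one must therefore show that type conversion exposes matching dual heads on the two ends \emph{simultaneously}. This is exactly where the compatibility of duality with \TirName{Conv-Beta} on $\CASE$ types, and the involutivity $\Dual{\DUAL S} = S$, do the work.
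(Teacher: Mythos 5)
Your proposal is correct and takes essentially the same route as the paper: the paper's entire proof is the single sentence ``A simple case analysis on the processes that constitute errors,'' which is exactly the case analysis you carry out over the part~(a) and part~(b) error forms. Your canonical-forms lemma, the linear-splitting ownership argument, and the duality/conversion compatibility for dual endpoints are precisely the details that the paper's one-line proof leaves implicit (and, as you note, subject reduction is not needed for the statement as given).
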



\section{Algorithmic Type Checking}
\label{sec:algorithmic-typing}

Section~\ref{sec:proposed-calculus} presents a declarative type system
for \LDGV{}. In this section, we prove that type checking is
decidable. Our algorithm for type checking is based on bidirectional
typing
\cite{PierceTurner2000-toplas,DBLP:conf/icfp/DunfieldK13,DBLP:conf/ppdp/FerreiraP14}
and comprises several syntax-directed judgments collected in the table
below.
\begin{center}
  \begin{tabular}{ll}
    {$\JAlgConv\TEnv V W L$}
    & Given $\TEnv$ and $V$, compute a convertible value $W$
    \\
    {$\JAlgTypeUnfold\TEnv A B$}
    & Given $\TEnv$ and $A$, compute a type $B$ convertible to $A$
      which is not a case
    \\
    {$\JAlgSubtypeSynth\TEnv A B \Kind$}
    & Given $\TEnv$, $A$, and $B$, check that $A$ is a subtype of $B$
      and synthesize its kind $\Kind$ 
    \\
    {$\JAlgSubtypeCheck\TEnv A B \Kind$}
    & Given $\TEnv$, $A$, $B$, and $\Kind$, check that $A$ is a subtype
      of $B$ at kind $\Kind$
    \\
    {$\JAlgKindSynth\TEnv A \Kind$}
    & Given $\TEnv$ and type $A$, synthesize its kind $\Kind$
    \\
    {$\JAlgKindCheck\TEnv A \Kind$}
    & Given $\TEnv$, $A$, and $\Kind$, check that $A$ has kind $\Kind$
    \\
    {$\JAlgTypeSynth\TEnv M A \TOut$}
    & Given $\TEnv$ and expression $M$, synthesize its type $A$ and the
      environment after $\TOut$
    \\
    {$\JAlgTypeCheck\TEnv M A \TOut$}
    & Given $\TEnv$, $M$, and type $A$, check that $M$ has type $A$
      and synthesize $\TOut$
  \end{tabular}
\end{center}


\begin{figure*}[tp]
  Algorithmic value conversion
  \hfill
  \fbox{$\JAlgConv\TEnv V \ell L$}
  \begin{mathpar}
    \RuleACRefl

    \RuleACAss
  \end{mathpar}
  Algorithmic value unfolding
  \hfill
  \fbox{$\JAlgTypeUnfold\TEnv A B$}
  \begin{mathpar}
    \RuleAUnfoldType

    \RuleAUnfoldCase

    \RuleAUnfoldCaseA

    \RuleAUnfoldCaseB
  \end{mathpar}
  \caption{Algorithmic value conversion and unfolding}
  \label{fig:ldgv-algorithmic-unfolding}
\end{figure*}


The first building block is value conversion and unfolding, two
partial functions presented in
Figure~\ref{fig:ldgv-algorithmic-unfolding}.
Value conversion $\JAlgConv\TEnv V W L$ outputs $W$ if $V$ can be
converted to some $W$ given the assumptions $\TEnv$.  There are two
rules. \TirName{AC-Refl} applies if $V$ is already a
label. \TirName{AC-Assoc} locates an assumption $\Equation[L]xW$ in
$\TEnv$ and returns $W$. In our system, all equations have the form
$\Equation[L]xW$ so that no further rules are needed.

The unfolding judgment $\JAlgTypeUnfold\TEnv AB$ is needed in the
elimination rules for expression typing. Unfolding exposes the
top-level type constructor by commuting case types. The exposed type
$B$ is convertible to $A$. If $A$ is not a case type, then no
unfolding happens (\TirName{A-Unfold-Refl}). If the left type is a
case on a known value $V$, then recurse on the selected branch
(\TirName{A-Unfold-Case}). Otherwise, we try to expose the same
top-level type constructor in all branches of the case and commute it
on top of the case (\TirName{A-Unfold-Case2}). Rule
\TirName{A-Unfold-Case1} deals with the special case where the
branches have label type $L'$. Unfolding of a case
fails if no common top-level constructor exists.


\begin{figure*}[tp]
  Algorithmic subtyping (synthesis)
  \hfill
  \fbox{$\JAlgSubtypeSynth\TEnv A B \Kind$}
  \begin{mathpar}
    \RuleASCaseLeftA
    
    \RuleASCaseLeftB
  \end{mathpar}
  \caption{Algorithmic subtyping in \LDGV{} (excerpt)}
  \label{fig:ldgv-algorithmic-subtyping-excerpt}
\end{figure*}


The rules for the algorithmic subtyping judgment
$\JAlgSubtypeSynth\TEnv A B \Kind$ mostly follow the declarative
subtyping rules in Figure~\ref{fig:ldgv-subtyping}. If $A$ is a
subtype of $B$ given the assumptions $\TEnv$, then the judgment
produces the minimal kind $\Kind$ for $B$. The full set of rules is
shown in \finalconcrete{the appendix
  (Figure~\ref{fig:ldgv-algorithmic-subtyping})}. Here, we only
discuss the rules \TirName{AS-Case-Left1} and \TirName{AS-Case-Left2}
(in Figure~\ref{fig:ldgv-algorithmic-subtyping-excerpt}) that deal
with case types when they occur on the left (the rules for case on the
right mirror the left rules).
%
%
%
Rule \TirName{AS-Case-Left1} invokes algorithmic conversion to find
out if $V$ is convertible to a label $\ell$ under $\TEnv$. In that
case, the left hand side (case-) beta reduces to $A_\ell$ so that we
synthesize $A_\ell \Subtype B$ recursively.  If the attempt to convert
the case header to a label fails, then the header must be a variable
and its type must unfold to a label type
(\TirName{AS-Case-Left2}). Hence, we recursively check that each case
branch $A_\ell$ is a subtype of the right hand type $B$ under the
assumption that $x=\ell$.


The algorithmic kinding rules for judgment
$\JAlgKindSynth\TEnv{ A} \Kind$ are straightforward as
the type language is a simply-kinded first-order language with
subkinding. They may be found in \finalconcrete{Figure~\ref{fig:ldgv-algorithmic-kind-inference}}.


\begin{figure*}
  Algorithmic type checking for expressions (synthesize)
  \hfill\fbox{$\JAlgTypeSynth\TEnv M A \TOut$}
  \begin{mathpar}
    \RuleAName

    \RuleAUnitI

    \RuleALabI

    \RuleALabEA

    \RuleALabEB

    \RuleAPiI

    \RuleAPiE

    \RuleASigmaI

    \RuleASigmaE

    \RuleASigmaG

    \RuleAFork
 
    \RuleASsnI

    \RuleASsnSendE

    \RuleASsnRecvE
  \end{mathpar}

  Algorithmic type checking for expressions (check against)
  \hfill\fbox{$\JAlgTypeCheck\TEnv M A \TOut$}
  \begin{mathpar}
    \RuleASubType
  \end{mathpar}

  \caption{Algorithmic typing for \LDGV{}}
  \label{fig:ldgv-algorithmic-typing}
\end{figure*}


The rules for synthesizing a type
(Figure~\ref{fig:ldgv-algorithmic-typing}) define the judgment
{$\JAlgTypeSynth\TEnv M A \TOut$}. From environment $\TEnv$ and
expression $M$, the judgment computes $M$'s least type and the
remaining type environment $\TOut$. The difference between $\TEnv$ and
$\TOut$ indicates which linear resources are used by $M$: if the
binding $x: A \in \TEnv$ with $A:\KindLin$ is used in $M$, then
$\TOut$ does not contain a binding for $x$. No other changes are
possible.
Most of the rules are  adaptations of the declarative
typing rules from Figure~\ref{fig:ldgv-typing} to the bidirectional
setting. We explain the most relevant.

In rule \TirName{A-Pi-I } we synthesize the kind $\Multn$ of the
argument type $A$. After synthesizing the type of the body with the
environment $\TEnv_1, x: A$, the returned environment
must have the form $\DemoteExtend{\TEnv_2} x A$. Thus, we expect
that $x$ is used in the body if $A$ is linear. Moreover, if the
function's multiplicity $\Multm$ is unrestricted, then no resources in
$\TEnv_1$ must be used. This constraint is imposed by checking
$\TEnv_1=\TEnv_2$.

Typing an application $M\,N$ (rule \TirName{A-Pi-E}) first synthesizes
the type of $M$. We cannot expect the resulting type $C$ to be a $\Pi$
type; it may just as well be a case type! Unfolding exposes the
top-level non-case type constructor, which we can check to be a $\Pi$
type and then extract domain and range types $A$ and $B$. Next, we
check that $N$'s type is a subtype of $A$, and finally that $B[N/x]$
is well-formed.

Rule \TirName{A-Lab-E1 }applies if the conversion judgment $
\JAlgConv{\UNR\TEnv_1} V \ell{ L'}$ figures out that $V$ is
convertible to label $\ell$. In this case, we only synthesize the type
for the branch $N_\ell$ and return that type.

In rule \TirName{A-Lab-E2}, if the variable $x$ is not convertible to
a label, then we must synthesize the types for all branches. For each
branch, we adopt the equation $\Equation[L]x\ell$ and remove it from
the returned environment. As all branches must use resources in the
same way, the rule checks that the outgoing environments $\TOut_\ell$
are equal for all branches.

Rule \TirName{A-Sigma-G}, together with its counterpart
\TirName{Sigma-G} in Figure~\ref{fig:ldgv-typing}, is a significant
innovation of our system. It governs the elimination of a sigma type
where the first component of the pair is a variable of label type
$L$. Instead of type checking the body of the eliminating let once,
the rule checks it multiple times, once for each $\ell\in L$. This
eta-expansion of the label type enables us to accurately check this
construct and enable examples such as those in
Section~\ref{sec:furth-opport}.


We now address the metatheory for algorithmic type checking.
As usual, soundness results rely on strengthening and completeness on
weakening, two results that we study below.
Below we write $\DecomposeOp{\TEnv_1}{\TEnv_2}$ to denote the type
environment~$\TEnv$ such that $\Decompose\TEnv{\TEnv_1}{\TEnv_2} $,
when the environment splitting operation is defined. 

\begin{restatable}[Algorithmic Weakening]{lemma}{AlgorithmicWeakening}
  \label{lem:AlgorithmicWeakening}
  \label{lemma:weakening}
  \
  \begin{enumerate}
  \item\label{item:JAlgConv} If $\JAlgConv{\TEnv_1} V W L$, then
    $\JAlgConv{(\DecomposeOp{\TEnv_1}{\TEnv_2})} V W L$.
  \item\label{item: JAlgTypeUnfold} If $\JAlgTypeUnfold{\TEnv_1} A B$, then
    $\JAlgTypeUnfold{(\DecomposeOp{\TEnv_1}{\TEnv_2})} A B$.
  \item\label{JAlgSubtypeSynth} If
    $\JAlgSubtypeSynth{\TEnv_1} A B \Kind$, then
    $\JAlgSubtypeSynth {(\DecomposeOp{\TEnv_1}{\TEnv_2})} A B
    {\Kind}$. 
  \item\label{item: JAlgSubtypeCheck} If
    $\JAlgSubtypeCheck{\TEnv_1} A B \Kind$, then
    $\JAlgSubtypeCheck{(\DecomposeOp{\TEnv_1}{\TEnv_2})} A B {\Kind}$.
  \item\label{item:JAlgKindSynth} If $\JAlgKindSynth{\TEnv_1} A \Kind$, then
    $\JAlgKindSynth{(\DecomposeOp{\TEnv_1}{\TEnv_2})} A \Kind$.
  \item\label{item:JAlgKindCheck} If $\JAlgKindCheck{\TEnv_1} A \Kind$, then
    $\JAlgKindCheck{(\DecomposeOp{\TEnv_1}{\TEnv_2})} A \Kind$.
  \item\label{item:JAlgTypeSynth} If
    $\JAlgTypeSynth{\TEnv_1} M A {\TEnv_2}$, then
    $\JAlgTypeSynth{(\DecomposeOp{\TEnv_1}{\TEnv_3})} M A
    {(\DecomposeOp{\TEnv_2}{\TEnv_3})}$.
  \item\label{item:JAlgTypeCheck} If
    $\JAlgTypeCheck{\TEnv_1} M A {\TEnv_2}$, then
    $\JAlgTypeCheck{(\DecomposeOp{\TEnv_2}{\TEnv_3})} M A
    {(\DecomposeOp{\TEnv_2}{\TEnv_3})}$.
  \end{enumerate}
\end{restatable}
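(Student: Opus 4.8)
The plan is to prove all eight statements \emph{simultaneously} by induction on the height of the derivation of the hypothesis, since the algorithmic judgments are mutually recursive (unfolding invokes conversion, subtyping invokes kinding and unfolding, and typing invokes all of them). In each case I analyze the last rule applied over $\TEnv_1$ and rebuild the very same rule over the merged environment $\DecomposeOp{\TEnv_1}{\TEnv_2}$ (resp.\ $\DecomposeOp{\TEnv_1}{\TEnv_3}$ for the typing judgments), appealing to the induction hypothesis on each premise. The guiding observation is that, by the environment-split rules of Figure~\ref{fig:ldgv-kinds-types-occurrences}, the only bindings $\DecomposeOp{\TEnv_1}{\TEnv_2}$ adds to $\TEnv_1$ are \emph{linear} bindings contributed by $\TEnv_2$ (unrestricted bindings are forced to be shared), so weakening here means adjoining \emph{unused linear resources}.

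Before the induction I would record three commutation facts used throughout. First, the unrestricted projection is stable under this weakening, $\UNR{(\DecomposeOp{\TEnv_1}{\TEnv_2})} = \UNR{\TEnv_1}$, because $\UNR{-}$ discards exactly the linear bindings that distinguish the two environments; consequently every premise phrased over $\UNR\TEnv$—the kind checks in \TirName{A-Pi-E} and \TirName{A-Ssn-I}, the conversions in \TirName{A-Lab-E1} and \TirName{A-Lab-E2}, and the subtyping call inside \TirName{A-Sub-Type}—is \emph{literally unchanged}, so no weakening is needed there. Second, conditional extension commutes with this weakening, since whether the binding $x{:}A$ is adjoined is decided by the kind of $A$, which is computed in the (stable) unrestricted part and hence unaffected; this handles the $\DemoteExtend\TEnv x A$ premises of \TirName{AS-Send}, \TirName{AS-Recv}, and the kinding rules. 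Third, algorithmic environment formation is preserved under adjoining well-formed linear bindings, which together with the first fact dispatches the leaf rules \TirName{A-Unit-F}, \TirName{A-End-F}, and \TirName{A-Lab-F}.

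With these in hand, items~\ref{item:JAlgConv}--\ref{item:JAlgKindCheck} are routine: conversion is immediate (\TirName{AC-Refl} ignores the context, and \TirName{AC-Assoc} still locates the same unrestricted equality assumption after weakening); unfolding, subtyping, and kinding recurse through the induction hypothesis, their $\UNR$- and $\DemoteExtend$-premises handled by the commutation facts. The real work lies in items~\ref{item:JAlgTypeSynth} and~\ref{item:JAlgTypeCheck}, where the environment is \emph{threaded} input-to-output and linear resources are genuinely consumed. The key point is that the weakening context $\TEnv_3$ is never touched by the algorithm and must therefore survive to the output: for a threaded rule such as \TirName{A-Pi-E} or \TirName{A-Sigma-I}, the induction hypothesis turns a chain $\TEnv_1 \to \TEnv_a \to \TEnv_b$ into $\DecomposeOp{\TEnv_1}{\TEnv_3} \to \DecomposeOp{\TEnv_a}{\TEnv_3} \to \DecomposeOp{\TEnv_b}{\TEnv_3}$, carrying the \emph{same} $\TEnv_3$ at every step and yielding exactly the required output. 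The base rule \TirName{A-Name} confirms the pattern: the looked-up variable is still found, and the unused linear additions from $\TEnv_3$ are copied verbatim into the output. Item~\ref{item:JAlgTypeCheck} then follows from item~\ref{item:JAlgTypeSynth} and the subtyping case via the single checking rule \TirName{A-Sub-Type}.

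The main obstacle I anticipate is the coherent threading of $\TEnv_3$ through the rules that impose equalities between intermediate environments—the side condition ``$\Multm = \MultUn$ implies $\TEnv = \TOut$'' in \TirName{A-Pi-I} and the requirement that all branch outputs agree in \TirName{A-Lab-E2}—where I must check that weakening both sides by the same $\TEnv_3$ preserves the equality, and that each intermediate merge $\DecomposeOp{-}{\TEnv_3}$ remains defined (the lemma is stated only when splitting is defined, so the proof must propagate this definedness assumption down the derivation). I expect this bookkeeping, rather than any single rule, to be the delicate part; once the $\UNR$-stability and the two commutation lemmas are established, every case reduces to a mechanical reassembly.
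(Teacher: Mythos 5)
Your proposal is correct and follows essentially the same route as the paper's proof: mutual rule induction over all eight judgments, reassembling each rule over the merged environment, with the work concentrated in the threaded typing judgments (\TirName{A-Name}, \TirName{A-Pi-I} with its $\Multm = \MultUn$ side condition, \TirName{A-Lab-E2}'s branch agreement, \TirName{A-Sigma-I}/\TirName{A-Sigma-G}), exactly the cases the paper details. Your three ``commutation facts'' are precisely the paper's properties of context split (in particular $\UNR\TEnv = \UNR{\TEnv_1} = \UNR{\TEnv_2}$ and definedness of $\DecomposeOp{\UNR{\TEnv_1}}{\TEnv_2}$), which the paper's proof invokes at the same points, so the two arguments coincide in substance.
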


\begin{restatable}[Algorithmic Linear Strengthening]{lemma}{AlgorithmicLinearStrengthening}
  \label{lem:AlgorithmicLinearStrengthening}
  Suppose that $\UNR{\TEnv_1} \vdash A:\KindLin$.
  \begin{enumerate}
  \item If
    $\JAlgTypeSynth{\TEnv_1, x : A} M {\TEnv_2, x :
      A}$, then $\JAlgTypeSynth{\TEnv_1} M B {\TEnv_2}$.
  \item If
    $\JAlgTypeCheck{\TEnv_1, x : A} M {\TEnv_2, x :
      A}$, then $\JAlgTypeCheck{\TEnv_1} M B {\TEnv_2}$.
  \end{enumerate}
\end{restatable}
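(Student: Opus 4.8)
The plan is to prove part (1) by induction on the derivation of the synthesis judgment and to obtain part (2) as an immediate corollary. Write $x:A$ for the linear binding to be removed, so that $\UNR{\TEnv_1} \vdash A : \KindLin$ by hypothesis. The guiding intuition is that the algorithmic judgments thread linear resources through the input and output environments, consuming a linear binding exactly when the corresponding variable is used; since $A$ is linear and the binding $x:A$ still appears in the output $\TEnv_2, x:A$, the variable $x$ was never used by $M$, so it can be dropped from the start without changing the synthesized type $B$ or the residual environment $\TEnv_2$.

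First I would record a structural invariant used implicitly throughout: in every algorithmic derivation the output environment is obtained from the input by deleting some linear bindings while preserving the relative order of the survivors. In particular, a linear binding present in the output is present in every intermediate environment and in the input. This monotonicity lets me conclude, for each rule that threads the environment through several expression subderivations (\TirName{A-Pi-E}, \TirName{A-Sigma-E}, \TirName{A-Sigma-G}, \TirName{A-Lab-E2}, and the session eliminations), that $x:A$ survives every intermediate environment, so the induction hypothesis applies to each subderivation in turn and the strengthened derivations recompose into a derivation of the conclusion.

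A second observation dispatches most cases: every type-level side condition in the algorithmic rules---value conversion, unfolding, kind synthesis, and subtyping---is evaluated in the unrestricted part $\UNR{\TEnv}$ of the ambient environment. Because $A$ is linear, $\UNR{(\TEnv_1, x:A)} = \UNR{\TEnv_1}$, so none of these premises even mentions $x:A$ and each transfers verbatim to the strengthened derivation. Hence the only premises that interact with the linear binding are the expression subderivations, which is precisely where the induction hypothesis is invoked. The base case \TirName{A-Name} is then settled by noting that the looked-up name cannot be $x$ itself: looking up a linear binding demotes it away in the output, contradicting its survival; thus the name differs from $x$, and \TirName{A-Name} reapplies after discarding the trailing $x:A$. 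Part (2) falls out because the sole rule producing a checking judgment is \TirName{A-Sub-Type}, whose subtyping premise lives in $\UNR{\TEnv_1}$ and whose synthesis premise is handled by part (1); I therefore run a simultaneous (mutual) induction over the synthesis and checking judgments, since synthesis rules such as \TirName{A-Pi-E} and \TirName{A-Fork} contain checking premises.

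The step I expect to be the main obstacle concerns the rules that extend the environment inside a subderivation---\TirName{A-Pi-I}, \TirName{A-Sigma-E}, \TirName{A-Sigma-G}, and \TirName{A-Lab-E2}---where a fresh binder is appended after $x:A$, so that $x:A$ is no longer the final entry and the induction hypothesis, as stated, does not apply directly. My plan is to discharge this with an exchange property for linear bindings: because no well-formed type may depend on a linear variable (type-formation inserts bindings only through the conditional extension $\DemoteExtend{\TEnv}{x}{A}$, which drops linear ones), the binding $x:A$ commutes with any binding to its right, and by the order-preservation invariant this commutation applies consistently to the input and output of each subderivation. Normalizing $x:A$ to the tail, applying the induction hypothesis, and commuting back then closes these cases; equivalently, I would generalize the induction hypothesis to permit an arbitrary well-formed suffix after $x:A$. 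Establishing this exchange lemma, and verifying that it applies uniformly to the input/output pairs produced by the binder-extending rules, is the only genuinely delicate part of the argument.
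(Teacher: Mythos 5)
Your proposal is correct and takes essentially the same approach as the paper: the paper's entire proof of this lemma is the single sentence ``By mutual rule induction on the hypotheses,'' and your mutual induction over the synthesis and checking judgments is precisely the elaboration that one-liner presupposes. The details you supply---the observation that $\UNR{(\TEnv_1, x:A)} = \UNR{\TEnv_1}$ so all type-level premises transfer verbatim, the \TirName{A-Name} base case, and the generalization of the induction hypothesis to permit a suffix after $x:A$ for the binder-extending rules---are exactly the bookkeeping the paper leaves implicit, and they are sound.
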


%

The rest of this section is dedicated to the soundness and
completeness results for the various relations in algorithmic type
checking. Proofs are by mutual rule induction, even if we present the
results separately, for ease of understanding. Proofs can be found in
\finalconcrete{Appendix~\ref{sec:metatheory}}.

\begin{restatable}[Soundness of Unfolding]{lemma}{AlgorithmicUnfoldingSoundness}
  \label{lem: AlgorithmicUnfoldingSoundness}
  Suppose that
  $\TEnv \vdash A : \Kind$ and $\JAlgTypeUnfold\TEnv A B$.
  Then $B$ is not a case and $\TEnv \vdash \Convertible A B : \Kind$.
\end{restatable}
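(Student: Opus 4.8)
The plan is to induct on the derivation of $\JAlgTypeUnfold\TEnv A B$, carrying the hypothesis $\TEnv \vdash A : \Kind$ along; this is one component of the mutual induction that also covers soundness of value conversion, subtyping, and kinding. In every rule the claim that $B$ is not a case is immediate: \TirName{A-Unfold-Refl} has it as a premise; for \TirName{A-Unfold-Case} and the recursor-reduction rules it comes from the induction hypothesis; and for \TirName{A-Unfold-Case1}, \TirName{A-Unfold-Case2}, and \TirName{A-Unfold-Rec-X} the output has a non-case head (the label type $L'$, or a former drawn from $\{\Pi, \Sigma, {!}, {?}, \TNat, L\}$) by construction. So the real work is building $\TEnv \vdash \Convertible A B : \Kind$, and the base case \TirName{A-Unfold-Refl} is discharged directly by \TirName{Conv-Refl}.

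For the rules where the index is \emph{known} --- \TirName{A-Unfold-Case} (scrutinee converts to $\ell'$), \TirName{A-Unfold-Rec-Z} ($V$ converts to $\Zero$), \TirName{A-Unfold-Rec-S} ($V$ converts to $\Succ(W)$) --- I invert the relevant formation rule (\TirName{Lab-E'} or \TirName{Rec-F}, up to \TirName{Sub-Kind}) to obtain well-formedness of the selected branch, the base, or the instantiated step. For \TirName{A-Unfold-Case} this is available only in the extended context $\DemoteExtend\TEnv x {\Equation[L] V {\ell'}}$, so before invoking the induction hypothesis I must discharge the equality hypothesis to recover well-formedness over plain $\TEnv$. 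The value-conversion premise $\JAlgConv{\UNR\TEnv} V {\ell'} L$ supplies exactly the evidence needed: inspecting \TirName{AC-Refl} ($V=\ell'$, reflexivity) and \TirName{AC-Assoc} (the evidence variable already sits in $\UNR\TEnv$), the proof of $\Equation[L] V {\ell'}$ is immediate, and a substitution lemma substitutes it for the bound equality variable (which does not occur in the branch type), yielding the branch well-formed over $\TEnv$. The induction hypothesis then gives the branch convertible to $B$, and I prepend the one-step conversion of the whole construct to that branch --- \TirName{Conv-Subst} followed by \TirName{Conv-Beta} for the case, and \TirName{Conv-Z}/\TirName{Conv-S} for the recursor --- closing the chain with \TirName{Conv-Trans}.

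For the \emph{commuting} rules --- \TirName{A-Unfold-Case1}, \TirName{A-Unfold-Case2}, and \TirName{A-Unfold-Rec-X} --- the scrutinee or index is an opaque variable, and inversion gives each branch (and, for the recursor, the step in the context extended by $\alpha$) well-formed under its branch equation. Since each branch-unfolding premise is stated over $\TEnv$ (resp. $\TEnv,\alpha{:}\Kind$), I first move it into the branch-extended context using the unfolding case of Algorithmic Weakening (Lemma~\ref{lemma:weakening}), the equality assumption being unrestricted. The induction hypothesis then yields a branch-wise conversion under each branch equation; by congruence of conversion (comparing branches under their respective equations, as in the shape of \TirName{Sub-Case}) these assemble into a conversion of the whole case or recursor to one whose branches share the common head, and \TirName{Conv-Eta} --- the rule the paper designates for commuting conversions --- either collapses a constant-branch case to that constant (in \TirName{A-Unfold-Case1}, with $L'$ the constant) or lifts the shared former $\ECN[P]$ out of the case/recursor (in \TirName{A-Unfold-Case2}, \TirName{A-Unfold-Rec-X}). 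Chaining with \TirName{Conv-Trans}, and \TirName{Conv-Sym} where eta is used in reverse, gives the required conversion.

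I expect the two delicate points to be the discharge of equality hypotheses and the commuting-conversion assembly, with the latter the genuine obstacle. Getting the induction hypothesis to apply under \emph{plain} $\TEnv$ forces a careful interplay between conditional extension, the fact that equality evidence does not appear in types, and the bookkeeping that non-selected branches are well-formed only under inconsistent equality assumptions. Turning the branch-wise conversions into a single conversion of the enclosing construct then leans on congruence together with the precise commuting behaviour of \TirName{Conv-Eta}; the recursor rule \TirName{A-Unfold-Rec-X}, where a former must be lifted simultaneously out of both base and step, is where this is most subtle.
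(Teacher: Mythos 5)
Your proposal is correct and follows essentially the same route as the paper's proof: rule induction on the unfolding judgment, \TirName{Conv-Refl} for \TirName{A-Unfold-Refl}, the induction hypothesis plus \TirName{Conv-Beta}/\TirName{Conv-Subst} and transitivity for the known-scrutinee case, and branch-wise congruence (invoked, as in the paper, via the shape of \TirName{Sub-Case}) followed by \TirName{Conv-Eta} with symmetry and transitivity for the commuting cases. The only differences are that you are more explicit about discharging the equality hypotheses introduced by inverting \TirName{Lab-E'} (a bookkeeping step the paper's terse proof glosses over) and that you additionally cover the recursor rules, which the paper's algorithmic development explicitly leaves out of this section.
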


\begin{restatable}[Completeness of Unfolding]{lemma}{AlgorithmicUnfoldingCompleteness}
  \label{lem: AlgorithmicUnfoldingCompleteness}
  Suppose that   $\TEnv \vdash A : \Kind$  and there exists some $
  \ECN[P] \in \{ L, \Pi_\Multm 
  (y:A)\Hole, \Sigma_\Multm (y:A)\Hole, {!(y:A)}\Hole, {?(y:A)}\Hole
  \} $ such that  $\TEnv \vdash \Convertible A {\Context[P]B} : \Kind$.
  Then $\JAlgTypeUnfold\TEnv A {\Context[P]{B'}}$ where $\TEnv \vdash
    \Convertible B {B'} : \Kind$.
\end{restatable}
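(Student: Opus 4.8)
The plan is to argue by induction on the type $A$, using a well-founded measure that orders types lexicographically by (i) the sum of the numeral values of the indices of the outermost recursors and (ii) the syntactic size of $A$; this makes every recursive call of the unfolding rules a smaller instance—the selected branch in \TirName{A-Unfold-Case}, each branch in \TirName{A-Unfold-Case1}/\TirName{A-Unfold-Case2}, and the strictly smaller index produced by \TirName{A-Unfold-Rec-S} via \TirName{Conv-S}. Using well-formedness of $A$ I classify its head into three situations: a genuine type constructor, a case $\CASE V{\ldots}$, or a recursor.

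When $A$ is constructor-headed (neither a case nor a reducible recursor), rule \TirName{A-Unfold-Refl} yields $\JAlgTypeUnfold\TEnv A A$, so it remains to show that $A$ is itself of the form $\Context[P]{B'}$ for the same context $P$ with $\TEnv\vdash\Convertible{B'}{B}:\Kind$. This is exactly an \emph{inversion} (constructor-injectivity) property of conversion, and it is the technical heart of the lemma: a constructor-headed type convertible to a $\Context[P]B$ must share the head $P$ and have a convertible tail. I would obtain it from a Church--Rosser property of the type-level rewrite system underlying conversion—case reduction \TirName{Conv-Beta}, recursor reduction \TirName{Conv-Z}/\TirName{Conv-S}, and the equality-directed scrutinee rewriting licensed by \TirName{Conv-Subst}—by joining $A$ and $\Context[P]B$ at a common reduct whose head is forced to be $P$, reading off convertibility of the components and using that conversion is a congruence.

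For the case $A=\CASE V{\overline{\ell:A_\ell}}$ I split on algorithmic value conversion of $V$. If $\JAlgConv{\UNR\TEnv}V{\ell'}{L}$ succeeds (rules \TirName{AC-Refl}/\TirName{AC-Assoc}), then \TirName{Conv-Subst} followed by \TirName{Conv-Beta} gives $\TEnv\vdash\Convertible A{A_{\ell'}}:\Kind$, hence $A_{\ell'}\equiv\Context[P]B$; the induction hypothesis supplies $\JAlgTypeUnfold\TEnv{A_{\ell'}}{\Context[P]{B'}}$ with $B'\equiv B$, and \TirName{A-Unfold-Case} concludes. If instead $V$ is a stuck variable $x$ not convertible to any label, I treat each branch under its local equation: working in $\DemoteExtend\TEnv y{\Equation[L]x\ell}$, rules \TirName{Conv-Subst} and \TirName{Conv-Beta} give $A\equiv A_\ell$, so $A_\ell\equiv\Context[P]B$ with the \emph{same} $P$, and the induction hypothesis returns $\JAlgTypeUnfold\TEnv{A_\ell}{\Context[P]{B_\ell}}$ with $B_\ell\equiv B$ under that equation. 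When $P$ is the label context I finish with \TirName{A-Unfold-Case1}; otherwise $P$ is a $\Pi$-, $\Sigma$-, $!$- or $?$-context and \TirName{A-Unfold-Case2} gives the result $\Context[P]{\CASE x{\overline{\ell:B_\ell}}}$. Finally \TirName{Conv-Eta}, together with the per-branch equalities $B_\ell\equiv B$ and \TirName{Conv-Subst}, shows $\CASE x{\overline{\ell:B_\ell}}\equiv B$, establishing the required $B'\equiv B$. The recursor case, where $A$ is a type recursor $\TRec V S\alpha\Kind R$, is analogous, dispatching on whether $V$ converts to $\Zero$, to $\Succ(W)$, or is a stuck variable—matching \TirName{A-Unfold-Rec-Z}, \TirName{A-Unfold-Rec-S}, \TirName{A-Unfold-Rec-X} and justified by \TirName{Conv-Z}/\TirName{Conv-S}.

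I expect the Church--Rosser/inversion argument of the second paragraph to be the main obstacle, the delicate point being the interaction of the \emph{expansive} rule \TirName{Conv-Eta} with the reductions; its uniform shape (all branches identical, scrutinee a variable of label type) keeps the critical pairs with \TirName{Conv-Beta} and \TirName{Conv-Subst} tractable, so I would orient eta as an expansion and verify commutation head-by-head rather than attempt a naive terminating rewrite. A secondary, more routine obstacle is the discrepancy in the stuck-variable case between unfolding each branch in the plain environment $\TEnv$ (as \TirName{A-Unfold-Case2} demands) and the derivation obtained under the local equation $x=\ell$; this is bridged by a small stability lemma for the unfolding judgment—an analogue of Lemma~\ref{lem:AlgorithmicWeakening} for the unfolding judgment—observing that since $x$ is stuck in $\TEnv$ the extra equation can only enable the conversions already used, and any inner case on $x$ is itself stuck and is commuted by a further appeal to \TirName{A-Unfold-Case2}.
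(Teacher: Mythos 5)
Your proposal takes a genuinely different route from the paper, and it has a gap at exactly the step you yourself flag as the technical heart: constructor inversion for conversion. You propose to obtain it from a Church--Rosser theorem for the rewrite system underlying $\equiv$, but no such theorem exists in the paper, and establishing one here is at least as hard as the lemma being proved. Three concrete obstructions: \TirName{Conv-Eta} is an \emph{expansion} applicable to any type over any label-typed variable, so the relation is non-terminating and its critical pairs with \TirName{Conv-Beta} and \TirName{Conv-Subst} demand a substantial commutation analysis that the proposal only gestures at; \TirName{Conv-Subst} makes the relation depend on the equations in $\TEnv$, so ``reduction'' is environment-indexed; and, most importantly, conversion in this calculus has \emph{no congruence rules} --- compatibility with $\Pi$, $\Sigma$, $!$, $?$ and with case branches lives only at the level of subtyping (\TirName{Sub-Pi}, \TirName{Sub-Case}, \dots) --- so your closing step, ``reading off convertibility of the components and using that conversion is a congruence,'' is not available as stated. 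The paper sidesteps all of this: from $\TEnv \vdash \Convertible A {\Context[P]B} : \Kind$ it obtains declarative subtyping in both directions via \TirName{Sub-Conv}, applies completeness of algorithmic subtyping (Lemma~\ref{lem:AlgorithmicSubtypingCompleteness}) to get \emph{syntax-directed} derivations of $\JAlgSubtypeSynth\TEnv A{\Context[P]B}{\Kind'}$ and $\JAlgSubtypeSynth\TEnv {\Context[P]B} A{\Kind''}$, and then performs rule induction on this pair: the final rules can only be the matching structural rule (e.g.\ \TirName{AS-Recv}), \TirName{AS-Case-Left1} paired with \TirName{AS-Case-Right1}, or \TirName{AS-Case-Left2} paired with \TirName{AS-Case-Right2}, which line up exactly with \TirName{A-Unfold-Refl}, \TirName{A-Unfold-Case}, and \TirName{A-Unfold-Case2}. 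The transitivity-elimination inside the subtyping-completeness proof is precisely where the work your Church--Rosser lemma would do has already been paid for once; your plan would re-prove it in a harder setting.

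There is a second, smaller gap in your stuck-variable case: your induction hypothesis yields unfoldings under $\DemoteExtend\TEnv y{\Equation[L]x\ell}$, but the premises of \TirName{A-Unfold-Case1}/\TirName{A-Unfold-Case2} are stated over $\TEnv$ itself. Your bridging claim --- that the extra equation ``can only enable the conversions already used'' --- is false in general: under $x=\ell$, an inner case on $x$ occurring in the branch $A_\ell$ beta-selects its $\ell$-branch (\TirName{A-Unfold-Case}), whereas over $\TEnv$ it is stuck and must be commuted outward (\TirName{A-Unfold-Case2}); these produce syntactically different unfolded types, and it is the $\TEnv$-unfolding that the rule's premise requires. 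So the ``stability lemma'' you appeal to is not a weakening-style fact, and repairing it essentially amounts to redoing the inversion argument you have not yet established.
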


\begin{restatable}[Algorithmic Subtyping Soundness]{lemma}{AlgorithmicSubtypingSoundness}
  \label{lem:AlgorithmicSubtypingSoundness}~
  \begin{enumerate}\item
    If $\JAlgSubtypeSynth\TEnv B A \Kind$, then
    $\TEnv \vdash B \Subtype A : \Kind$.
  \item
    If $\JAlgSubtypeCheck\TEnv B A \Kind$, then
    $\TEnv \vdash B \Subtype A : \Kind$.
  \end{enumerate}
\end{restatable}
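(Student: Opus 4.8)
The plan is to proceed by mutual rule induction on the derivations of the two algorithmic judgments $\JAlgSubtypeSynth\TEnv B A \Kind$ and $\JAlgSubtypeCheck\TEnv B A \Kind$, producing in each case a declarative derivation of $\TEnv \vdash B \Subtype A : \Kind$. The check judgment is immediate: rule \TirName{AS-Check} synthesizes $\JAlgSubtypeSynth\TEnv B A \Kindm$ with $\Kindm \Subkind \Kindn$, so the induction hypothesis yields $\TEnv \vdash B \Subtype A : \Kindm$ and \TirName{Sub-Sub} lifts it to $\Kindn$. Thus the real work is in the synthesis judgment. Two auxiliary facts are needed throughout: soundness of unfolding (Lemma~\ref{lem: AlgorithmicUnfoldingSoundness}), which turns $\JAlgTypeUnfold\TEnv A B$ into a conversion $\TEnv \vdash \Convertible A B : \Kind$, and a straightforward soundness property of value conversion, namely that $\JAlgConv\TEnv x \ell L$ is justified by an equation $\Equation[L] x \ell$ present in $\TEnv$, so that \TirName{Conv-Subst} may rewrite a case header from $x$ to $\ell$.

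The structural rules are routine. \TirName{AS-Unit} maps to \TirName{Conv-Refl} followed by \TirName{Sub-Conv}, and \TirName{AS-Label} maps directly to \TirName{Sub-Lab}. For \TirName{AS-Pi}, \TirName{AS-Sigma}, \TirName{AS-Send}, and \TirName{AS-Recv}, the induction hypotheses on the component subderivations supply exactly the premises of the matching declarative rules \TirName{Sub-Pi}, \TirName{Sub-Sigma}, \TirName{Sub-Send}, and \TirName{Sub-Recv}; for the two session rules the continuation premise is a check judgment, discharged through the mutual induction. In each case the conditional extension $\DemoteExtend\TEnv x A$ in the algorithmic premise lines up with that of the declarative rule, so no reshaping of the environment is required.

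The genuinely interesting cases concern case types and the recursor. In \TirName{AS-Case-Left1} (and its mirror \TirName{AS-Case-Right1}), value conversion produces a label $\ell'$ to which $V$ is convertible; I first rewrite the header using \TirName{Conv-Subst} and then contract the case with \TirName{Conv-Beta} to obtain $\TEnv \vdash \Convertible {\CASE V {\overline{\ell: A_\ell}}} {A_{\ell'}} : \Kind$, feed this to \TirName{Sub-Conv}, and compose with the induction hypothesis $\TEnv \vdash A_{\ell'} \Subtype B : \Kind$ via \TirName{Sub-Trans}. In \TirName{AS-Case-Left2} (and \TirName{AS-Case-Right2}) the header is a variable whose type unfolds to a label type; here I eta-expand the opposite type $B$ with \TirName{Conv-Eta} into a case $\CASE x {\overline{\ell: B}}$ on the same variable, so that both sides are cases and \TirName{Sub-Case} applies, its per-branch premises being exactly the induction hypotheses obtained under the adopted equations $\Equation[L] x \ell$. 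The side condition $L \subseteq L'$ in the algorithmic rule matches the computation of the intersection label set in \TirName{Sub-Case}. For the recursor rules with a known index, \TirName{AS-Rec-Left-Z} and \TirName{AS-Rec-Left-S}, I use \TirName{Conv-Z} and \TirName{Conv-S} respectively (together with the natural-number analogue of \TirName{Conv-Subst}) to unfold the recursor and again close with \TirName{Sub-Conv} and \TirName{Sub-Trans}.

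The main obstacle is the family of substitution-producing extended judgments $\JAlgSubtypeSynthExt\TEnv A C \Kind \Subst$ used by \TirName{AS-TVar-Left} and the variable-index recursor rule \TirName{AS-Rec-Left-X}. These do not have a literal declarative counterpart, so the induction hypothesis must be strengthened to record the synthesized substitution, establishing that the declarative subtyping holds once $\Subst$ is applied. The delicate point is \TirName{AS-Rec-Left-X}, which compares a recursor against an arbitrary $C$ by splitting on whether the index variable $x$ is $\Zero$ or a successor and returns the composite $\Subst_Z \circ \Subst_S$: I must show that the two conditional subderivations, valid under $\Equation[\TNat] x \Zero$ and $\Equation[\TNat] x {\Succ(z)}$, combine into a single unconditional judgment reconstructable from \TirName{Sub-Rec} and \TirName{Sub-TVar}. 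This requires checking that the substitution composition is well defined and agrees with the invariant, polarity-annotated treatment of the recursion variable, keeping in mind that unrolling into a negative variable substitutes the \emph{dual} type. Showing that this reconstruction is always possible, and that $\Subst_Z$ and $\Subst_S$ never conflict, is the crux of the argument; the remaining recursor and type-variable cases follow the same pattern once the strengthened hypothesis is in place.
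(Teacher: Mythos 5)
Your proposal is correct and, for every case the paper actually proves, follows the same route as the paper's own proof: mutual rule induction, \TirName{Conv-Refl}/\TirName{Sub-Conv} for \TirName{AS-Unit}, \TirName{Sub-Lab} for \TirName{AS-Label}, direct use of the induction hypotheses for \TirName{AS-Pi}, \TirName{AS-Sigma}, \TirName{AS-Send}, \TirName{AS-Recv}, conversion (\TirName{Conv-Subst}/\TirName{Conv-Beta}) plus \TirName{Sub-Trans} for \TirName{AS-Case-Left1}, \TirName{Conv-Eta} plus \TirName{Sub-Case} for \TirName{AS-Case-Left2}, and \TirName{Sub-Sub} for \TirName{AS-Check}. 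One clarification: what you single out as the ``main obstacle''---the substitution-producing judgments $\JAlgSubtypeSynthExt\TEnv A C \Kind \Subst$ used by \TirName{AS-TVar-Left} and \TirName{AS-Rec-Left-X}---is not part of this lemma at all. Section~\ref{sec:algorithmic-typing} states explicitly that the algorithmic development does not cover the natural-number extension of Section~\ref{sec:naturals}; those rules appear only in the appendix, and the extension of the metatheory is left as a conjecture. So your proof is already complete once the core cases are handled, and the strengthened induction hypothesis you sketch for the recursor, while a reasonable direction, is extra work the statement does not require.
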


\begin{restatable}[Algorithmic Subtyping Completeness]{lemma}{AlgorithmicSubtypingCompleteness}
  \label{lem:AlgorithmicSubtypingCompleteness}
   Let $\TEnv \vdash B \Subtype A : \Kind$. Then,
  \begin{enumerate}
  \item $\JAlgSubtypeSynth\TEnv A B \Kind'$ with $\Kind\Subkind\Kind'$.
  \item $\JAlgSubtypeCheck\TEnv A B \Kind$.
  \end{enumerate}
\end{restatable}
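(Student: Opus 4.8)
The plan is to prove part~1 by induction on the derivation of the declarative judgment $\TEnv \vdash B \Subtype A : \Kind$, synthesizing the least kind $\Kind'$, which is related to the declarative kind by subkinding; part~2 then follows immediately, since rule \TirName{AS-Check} turns a synthesis derivation into the checking judgment $\JAlgSubtypeCheck\TEnv A B \Kind$ as soon as the synthesized kind and $\Kind$ are comparable, exactly the slack recorded in part~1. The whole argument runs as a mutual induction together with Completeness of Unfolding (Lemma~\ref{lem: AlgorithmicUnfoldingCompleteness}) and the analogous completeness statement for conversion, because the declarative relation is defined by interleaving structural rules with conversion, subsumption, and transitivity.

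The difficulty is that the declarative system contains three rules with no syntax-directed counterpart --- \TirName{Sub-Conv}, \TirName{Sub-Sub}, and \TirName{Sub-Trans} --- whereas every algorithmic rule is driven by the head constructor of its left operand. I would therefore first establish three admissibility properties of the algorithmic relation, each proved by its own induction and reused in the main one. Reflexivity: $\TEnv \vdash A : \Kind$ implies $\JAlgSubtypeSynth\TEnv A A {\Kind''}$ for some $\Kind'' \Subkind \Kind$. Conversion absorption: if $\TEnv \vdash \Convertible A {A'} : \Kind$, then $\JAlgSubtypeSynth\TEnv A B \Kind'$ and $\JAlgSubtypeSynth\TEnv {A'} B \Kind'$ are interderivable, and symmetrically on the right operand; this is where Completeness of Unfolding does the real work, since the declarative $\beta$/$\eta$ laws on case types are absorbed on the algorithmic side by the unfolding rules together with \TirName{AS-Case-Left1}, \TirName{AS-Case-Left2} and their right-hand mirrors. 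Transitivity: $\JAlgSubtypeSynth\TEnv A B {\_}$ and $\JAlgSubtypeSynth\TEnv B C {\_}$ yield $\JAlgSubtypeSynth\TEnv A C {\_}$. With these in hand the main induction is routine: \TirName{Sub-Conv} is discharged by reflexivity plus conversion absorption, \TirName{Sub-Sub} by the subkinding slack in the conclusion, \TirName{Sub-Trans} by transitivity, and each structural rule maps to its namesake --- \TirName{Sub-Pi} to \TirName{AS-Pi}, \TirName{Sub-Sigma} to \TirName{AS-Sigma}, \TirName{Sub-Send}/\TirName{Sub-Recv} to \TirName{AS-Send}/\TirName{AS-Recv}, \TirName{Sub-Lab} to \TirName{AS-Label}, and \TirName{Sub-TVar} to \TirName{AS-TVar-Left}. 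In each case the conditional extension $\DemoteExtend\TEnv x A$ appearing in the premises matches on both sides, so the kind bookkeeping closes.

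The main obstacle is the transitivity admissibility, and it is sharpened by the recursor. For the label-only fragment transitivity is a standard syntax-directed argument, but once \TirName{Sub-Rec} and the numeral-driven rules \TirName{AS-Rec-Left-Z}, \TirName{AS-Rec-Left-S}, \TirName{AS-Rec-Left-X} enter, a single recursor $\TRec V A \TVar \Kind B$ unfolds into a number of layers controlled by the numeral $V$, so a naive structural induction does not terminate. I would run transitivity (and, in tandem, conversion absorption) on a lexicographic measure combining the value of the governing numerals with the structural size of the types, exploiting that \TirName{Conv-Z}/\TirName{Conv-S} strictly decrease the numeral component at each unrolling while the syntax-directed rules decrease the structural one. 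An additional subtlety is the polarity discipline on recursion variables: unrolling into a negative variable substitutes the \emph{dual} of the recursor body, so the transitivity and conversion lemmas must be stated uniformly over polarities to preserve the invariant matched by the declarative \TirName{Sub-Rec}/\TirName{Sub-TVar} pair. Once the measure is fixed so that reflexivity, conversion absorption, and transitivity all go through simultaneously, the outer induction follows without further complication.
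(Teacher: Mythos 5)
Your proposal is correct and follows essentially the same route as the paper: the paper's proof is likewise a mutual rule induction whose parts correspond to your reflexivity lemma (its ``Kinding'' part), your conversion absorption (its ``Conversion'' part, proved in both directions precisely so that \TirName{Conv-Sym} is absorbed), and a main induction on the declarative derivation in which the structural rules map to their algorithmic namesakes and \TirName{Sub-Trans} is discharged by an inner lexicographic induction on the pair of algorithmic derivations obtained from the induction hypothesis---exactly your transitivity admissibility, including the delicate interference case between \TirName{AS-Case-Left2} and \TirName{AS-Case-Right2} on the same variable. The only divergence is your final paragraph: the paper's algorithmic development explicitly excludes the natural-number/recursor extension (it only asserts the results ``extend straightforwardly''), so the numeral-indexed termination measure and polarity bookkeeping you identify as the main obstacle lie outside the statement actually being proved, though they would indeed be the issues to settle in that extension.
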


\begin{restatable}[Algorithmic Kinding Soundness]{lemma}{AlgorithmicKindingSoundness}
  \label{lem:AlgorithmicKindingSoundness}
  \
  \begin{enumerate}
  \item If $\JAlgKindSynth\TEnv M \Kind$, then
    $\TEnv \vdash M : \Kind$.
  \item If $\JAlgKindCheck \TEnv M \Kind$, then
    $\TEnv \vdash M : \Kind$.
  \end{enumerate}
\end{restatable}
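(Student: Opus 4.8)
The plan is to prove both parts by rule induction on the algorithmic kinding derivation, carried out as part of the larger mutual induction that simultaneously yields soundness of algorithmic value conversion, unfolding (Lemma~\ref{lem: AlgorithmicUnfoldingSoundness}), subtyping (Lemma~\ref{lem:AlgorithmicSubtypingSoundness}), and expression type checking. Part~(2) follows from Part~(1) with essentially no work: the only rule deriving $\JAlgKindCheck\TEnv A \Kindn$ is \TirName{A-Sub-Kind}, whose premises are $\JAlgKindSynth\TEnv A \Kindm$ and $\Kindm \Subkind \Kindn$; Part~(1) turns the former into $\TEnv \vdash A : \Kindm$, and the declarative \TirName{Sub-Kind} rule lifts this to $\TEnv \vdash A : \Kindn$.

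For Part~(1) I would do a case analysis on the final algorithmic rule, each time applying the matching declarative formation rule and discharging its premises from the induction hypotheses. The formation rules for $\TUnit$, $\TEnd$, label types, $\TNat$, and recursion variables (\TirName{A-Unit-F}, \TirName{A-End-F}, \TirName{A-Lab-F}, and \TirName{A-Rec-Var}) map onto \TirName{Unit-F}, \TirName{End-F}, \TirName{Lab-F}, \TirName{Nat-F}, and \TirName{TVar-F}, with the algorithmic environment-formation side conditions supplying the declarative ones. The structural cases \TirName{A-Ssn-Out-F}, \TirName{A-Ssn-In-F}, and \TirName{A-Pi-F} translate directly into \TirName{Ssn-Out-F}, \TirName{Ssn-In-F}, and \TirName{Pi-F}: the induction hypothesis supplies well-formedness of the body under the conditional extension $\DemoteExtend\TEnv x A$, and---crucially for \TirName{Pi-F}---the result kind is dictated by the annotation $\Multm$ in both systems, so no reconciliation is needed.

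The interesting cases are those where the algorithm synthesizes a \emph{least} kind while the declarative rule demands a single \emph{uniform} kind across its subderivations; these are bridged by subkinding. In \TirName{A-Sigma-F} the algorithm returns $\Kind_A \KindLub \Kind_B$ from $\JAlgKindSynth\TEnv A{\Kind_A}$ and $\JAlgKindSynth{\DemoteExtend\TEnv x A} B{\Kind_B}$; the induction hypotheses give $\TEnv \vdash A : \Kind_A$ and $\DemoteExtend\TEnv x A \vdash B : \Kind_B$, and since each summand lies below the join, \TirName{Sub-Kind} raises both judgments to $\Kind_A \KindLub \Kind_B$ before \TirName{Sigma-F} applies. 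The same manoeuvre handles \TirName{A-Lab-E'}, whose synthesized kind $\bigsqcup_\ell \Kind_\ell$ is reached by subsuming each branch to the join before \TirName{Lab-E'}, and \TirName{A-Rec-E'}, where both recursor branches are subsumed to the annotated kind $\Kind$ before \TirName{Rec-F}.

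The main obstacle, and the reason kinding soundness must be proved mutually with the other soundness lemmas, is the treatment of value-level premises: \TirName{A-Equality-F}, \TirName{A-Lab-E'}, and \TirName{A-Rec-E'} carry algorithmic checking judgments such as $\JAlgTypeCheck\TEnv V L{\_}$, whereas the declarative rules \TirName{Equality-F}, \TirName{Lab-E'}, and the equality-type formations require $\UNR\TEnv \vdash V : L$. Converting one to the other needs soundness of algorithmic type checking (available in the mutual induction), followed by the observation that a value checked at an unrestricted type such as $L$ or $\TNat$ consumes no linear resources, so the discarded output environment is inessential and the judgment may be restricted to the unrestricted part $\UNR\TEnv$ via Lemma~\ref{lem:AlgorithmicLinearStrengthening} together with the fact that $\Decompose\TEnv\TEnv{\UNR\TEnv}$. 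A secondary wrinkle is that \TirName{A-Rec-E'} checks $A$ and $B$ under contexts carrying equality witnesses ($x : \Equation[\TNat]V\Zero$ and $x : \Equation[\TNat]V{\Succ(W)}$) that are absent from \TirName{Rec-F}; because these unrestricted witnesses cannot be consumed by a type-formation derivation, they are discharged by strengthening, leaving exactly the declarative premises $\TEnv \vdash A : \Kind$ and $\TEnv, \alpha \vdash B : \Kind$.
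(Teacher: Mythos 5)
Your proposal is correct and follows essentially the same route as the paper's proof: mutual rule induction with the other algorithmic soundness results, mapping each algorithmic rule onto its declarative counterpart, bridging the synthesized joins ($\Kind_A \KindLub \Kind_B$ for \TirName{A-Sigma-F}, $\bigsqcup_\ell \Kind_\ell$ for \TirName{A-Lab-E'}) via \TirName{Sub-Kind}, invoking algorithmic typing soundness from the mutual induction for the value-level premises, and obtaining the checking half from the synthesis half through \TirName{A-Sub-Kind} and declarative \TirName{Sub-Kind}. The only differences are presentational: you additionally treat the natural-number rules, which the paper explicitly excludes from its algorithmic metatheory, and you spell out the environment bookkeeping (restriction to $\UNR\TEnv$, linear strengthening) that the paper leaves implicit.
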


\begin{restatable}[Algorithmic Kinding Completeness]{lemma}{AlgorithmicKindingCompleteness}
  \label{lem:AlgorithmicKindingCompleteness}
  If $\TEnv \vdash A : \Kind$, then
  \begin{enumerate}
  \item 
    $\JAlgKindSynth\TEnv A{ \Kind'}$ with $\Kind' \Subkind \Kind$ and
  \item 
    $\JAlgKindCheck\TEnv A \Kind$.
  \end{enumerate}
\end{restatable}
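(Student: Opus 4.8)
The plan is to prove both parts simultaneously by induction on the derivation of $\TEnv \vdash A : \Kind$, establishing part~(1) directly and obtaining part~(2) as an immediate corollary: from $\JAlgKindSynth\TEnv A {\Kind'}$ with $\Kind' \Subkind \Kind$, rule \TirName{A-Sub-Kind} yields $\JAlgKindCheck\TEnv A \Kind$. So the real content is the synthesis statement, and the guiding intuition is that algorithmic synthesis always returns the \emph{minimal} kind of $A$, whereas the declarative judgment may have inflated it through \TirName{Sub-Kind}; hence the synthesized $\Kind'$ can only sit below the declared $\Kind$ in the subkinding order.

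Most cases are structural: each declarative formation rule has a matching algorithmic rule, so I would invoke the induction hypothesis on the premises and reassemble. For the base rules \TirName{Unit-F}, \TirName{End-F}, \TirName{Lab-F}, and \TirName{Nat-F} the synthesized kind coincides with the declared one; for \TirName{Pi-F}, \TirName{Ssn-Out-F}, \TirName{Ssn-In-F}, and \TirName{Rec-F} the kind is fixed by the annotation and again matches exactly, so $\Kind' = \Kind$. The only places where synthesis can undershoot are \TirName{Sigma-F} and the case rule \TirName{Lab-E'}, where \TirName{A-Sigma-F} and \TirName{A-Lab-E'} return a least upper bound $\Kind_A \KindLub \Kind_B$, respectively $\bigsqcup_\ell \Kind_\ell$, of the component kinds. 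Since the declarative rules give every component the single kind $\Kind$, the induction hypothesis makes each component kind a subkind of $\Kind$; because subkinding is the two-element chain with $\KindUn$ as least element, $\Kind$ is an upper bound of the component kinds and therefore dominates their join, giving $\Kind_A \KindLub \Kind_B \Subkind \Kind$ (and likewise for the indexed join). The \TirName{Sub-Kind} case is where the declared kind is inflated: the induction hypothesis supplies $\JAlgKindSynth\TEnv A {\Kind'}$ with $\Kind' \Subkind \Kindm$, and transitivity of $\Subkind$ with $\Kindm \Subkind \Kindn$ closes the case.

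Two bookkeeping issues require the surrounding lemmas. First, the algorithmic rules \TirName{A-Pi-F}, \TirName{A-Sigma-F}, and their siblings carry an explicit premise synthesizing a kind for the domain type $A$, whereas in the declarative rules the well-kindedness of $A$ is only implicit in the conditional extension $\DemoteExtend\TEnv x A$. I would recover $\UNR\TEnv \vdash A : \Kind_0$ from the agreement (regularity) lemma, or directly from the side condition of the conditional-extension operation, apply the induction hypothesis, and transport the resulting synthesis from $\UNR\TEnv$ to $\TEnv$ using algorithmic weakening (Lemma~\ref{lemma:weakening}) together with $\Decompose\TEnv\TEnv{\UNR\TEnv}$. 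Similarly, \TirName{A-Rec-E'} checks $A$ and $B$ under extra equality hypotheses $\Equation[\TNat]V\Zero$ and $\Equation[\TNat]V{\Succ(W)}$ that are absent from \TirName{Rec-F}; as these are unrestricted bindings, a declarative weakening step adds them before appealing to the induction hypothesis in its part~(2) (kind-check) form. Second, rule \TirName{Equality-F} and the value-carrying premises of the case and recursor rules refer to the expression-typing judgment $\TEnv \vdash V : A$ and to value conversion, which must be discharged by the corresponding completeness results for algorithmic expression type checking and for value conversion.

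The main obstacle I anticipate is precisely this entanglement: kinding completeness cannot be carried out in isolation, because dependent kinding premises mention expressions and conversions. I would therefore set up a single mutual induction over all the algorithmic relations with a shared metric (the height of the declarative derivation), chosen so that every appeal to expression-typing completeness, to value conversion, and to the inner kind checks in the case and recursor rules is on a strictly smaller derivation. A secondary subtlety is reconciling the kind recorded for the recursion variable $\alpha$ between the two systems, where \TirName{TVar-F} and \TirName{A-Rec-Var} must be made to agree on what $\alpha$ carries; this is the delicate point in the recursor cases and needs care to preserve the minimal-kind invariant of synthesis uniformly, whereas the check direction is unproblematic because $\KindUn$ is a subkind of every annotation.
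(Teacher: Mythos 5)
Your overall strategy coincides with the paper's proof: rule induction on the declarative derivation, establishing synthesis and obtaining the checking part by \TirName{A-Sub-Kind}; the base cases, the join argument for \TirName{A-Sigma-F} and \TirName{A-Lab-E'}, and the transitivity step for \TirName{Sub-Kind} are exactly as in the paper. The gap is in the dependent type formers \TirName{Pi-F}, \TirName{Sigma-F}, \TirName{Ssn-Out-F}, \TirName{Ssn-In-F}, which you dismiss with ``the kind is fixed by the annotation and again matches exactly.'' That remark is true of the \emph{conclusion} kind, but it ignores a mismatch in the \emph{premise}: the conditional extension $\DemoteExtend\TEnv x A$ is not determined by $A$ alone, because an unrestricted $A$ also has kind $\KindLin$ by subsumption. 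So a declarative derivation of $\TEnv \vdash \Pi_\Multm(x:A)B : \Multm$ may have used the linear reading, checking $B$ under $\TEnv$ with the binding $x:A$ dropped, while algorithmic synthesis returns the minimal kind $\KindUn$ for $A$, so \TirName{A-Pi-F} demands $\JAlgKindSynth{\TEnv, x:A}{B}{\Kind_B}$. Your induction hypothesis then yields only $\JAlgKindSynth{\TEnv}{B}{\Kind_B}$ --- a judgment in the wrong environment --- and your ``reassemble'' step fails.

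The paper devotes the bulk of these cases to precisely this situation (the subcase $\Multn_A \MultLT \Multn$ in its notation): since the binding was dropped, $x\notin\FV(B)$, so one weakens the \emph{declarative} derivation to $\TEnv, x:A \vdash B : \Kindn$, applies the induction hypothesis to the weakened derivation, and then concludes with \TirName{A-Pi-F} and reflexivity of subkinding. Note that this appeal to the induction hypothesis is not on a subderivation of the original one, so plain rule induction does not license it; it needs a size or height metric that weakening preserves. Ironically, your setup already supplies such a metric --- you introduce the height of the declarative derivation to organize the mutual dependence on expression typing and conversion --- so the repair fits your framework; you simply never identify the case that needs it. (By contrast, the transport you do emphasize, from $\UNR\TEnv$ to $\TEnv$, is a non-issue: by the agreement lemma, type-formation environments are unrestricted, so $\UNR\TEnv$ and $\TEnv$ coincide there.)
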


\begin{restatable}[Algorithmic soundness]{theorem}{AlgorithmicSoundness}
  \label{thm:algorithmic-soundness}
  Suppose that $\AEnvForm \MultUn {\TOut}$.
  \begin{enumerate}
  \item If $\JAlgTypeSynth {\TEnv} M A {\TOut}$, then $\TEnv \vdash M : A$.
  \item If $\JAlgTypeCheck \TEnv M A \TOut$, then $\TEnv \vdash M : A$.
  \end{enumerate}
\end{restatable}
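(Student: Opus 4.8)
The plan is to prove a statement strictly stronger than the theorem, because the hypothesis $\AEnvForm \MultUn \TOut$ fails at the intermediate premises of the multi-premise elimination rules, where linear resources are \emph{threaded} from one subexpression to the next rather than being fully consumed. The strengthening replaces the bare declarative judgment by one in which an environment split records exactly the resources used by $M$:
\begin{enumerate}
\item if $\JAlgTypeSynth{\TEnv}{M}{A}{\TOut}$, then $\Decompose{\TEnv}{\Gamma}{\TOut}$ for some $\Gamma$ with $\Gamma \vdash M : A$;
\item if $\JAlgTypeCheck{\TEnv}{M}{A}{\TOut}$, then $\Decompose{\TEnv}{\Gamma}{\TOut}$ for some $\Gamma$ with $\Gamma \vdash M : A$.
\end{enumerate}
Intuitively $\Gamma$ collects the unrestricted part of $\TEnv$ together with the linear bindings consumed by $M$, i.e.\ those present in $\TEnv$ but absent from $\TOut$. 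I would prove this by mutual rule induction on the two algorithmic derivations, discharging the side premises with the already-established soundness of unfolding (Lemma~\ref{lem: AlgorithmicUnfoldingSoundness}), subtyping (Lemma~\ref{lem:AlgorithmicSubtypingSoundness}), and kinding (Lemma~\ref{lem:AlgorithmicKindingSoundness}), so that the induction proper runs only over expression typing.

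The theorem itself is then a corollary. When $\AEnvForm \MultUn \TOut$ the output has no linear bindings; since the algorithmic rules only ever delete linear bindings from the threaded environment, $\UNR\TEnv = \TOut$. In $\Decompose{\TEnv}{\Gamma}{\TOut}$ every linear binding of $\TEnv$ must then be sent to $\Gamma$ while every unrestricted binding is shared, forcing $\Gamma = \TEnv$ (this is exactly $\Decompose{\TEnv}{\TEnv}{\UNR\TEnv}$). Hence $\Gamma \vdash M : A$ is literally $\TEnv \vdash M : A$.

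For the inductive cases I would argue as follows. In the base case \TirName{A-Name} the output is $\DemoteExtend{\TEnv_1}{z}{A}, \TEnv_2$; I take $\Gamma$ to be $z:A$ together with the unrestricted bindings of $\TEnv_1,\TEnv_2$, which realises the split and discharges the side condition of \TirName{Name}. In an introduction rule such as \TirName{A-Pi-I}, the body is typed with output $\DemoteExtend{\TOut}{x}{A}$; the inductive split yields $\Gamma = \Gamma', x{:}A$ — the binding $x{:}A$ is consumed when $A$ is linear and shared when $A$ is unrestricted, landing in $\Gamma$ in both cases — with $\Decompose{\TEnv}{\Gamma'}{\TOut}$, and \TirName{Pi-I} applies, the premise $\Multm = \MultUn$ implies $\TEnv = \TOut$ ensuring $\Gamma'$ is unrestricted exactly when required. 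The crux is an elimination like \TirName{A-Pi-E}: the two induction hypotheses give $\Decompose{\TEnv_1}{\Gamma_M}{\TEnv_2}$ with $\Gamma_M \vdash M : C$ and $\Decompose{\TEnv_2}{\Gamma_N}{\TEnv_3}$ with $\Gamma_N \vdash N : A$; unfolding soundness gives $\UNR{\TEnv_2} \vdash \Convertible C {\Pi_\Multm(x:A)B} : \Kind$, and since splitting preserves the unrestricted part, $\UNR{\Gamma_M} = \UNR{\TEnv_2}$, so \TirName{Sub-Conv} and \TirName{Sub-Type} upgrade this to $\Gamma_M \vdash M : \Pi_\Multm(x:A)B$. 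Composing the two splits via the associativity lemma for environment splitting produces $\Gamma$ with $\Decompose{\TEnv_1}{\Gamma}{\TEnv_3}$ and $\Decompose{\Gamma}{\Gamma_M}{\Gamma_N}$, whence \TirName{Pi-E} delivers $\Gamma \vdash M\,N : B[N/x]$.

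The main obstacle is exactly this reconciliation of the two resource disciplines: the algorithmic one threads a single context through the premises, whereas the declarative one splits the context explicitly at each binary elimination. The split-based strengthening is what bridges them, and its technical heart is the composition/associativity property of $\circ$ that reassembles the per-premise splits into one split of the incoming environment. The remaining binary rules (\TirName{A-Sigma-E}, \TirName{A-Sigma-G}, \TirName{A-Lab-E2}) and the checking rule \TirName{A-Sub-Type} follow the same recipe; \TirName{A-Lab-E2} additionally relies on the fact that all branch outputs coincide, so a single split serves every branch simultaneously.
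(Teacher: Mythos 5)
Your proposal is correct and is essentially the paper's own proof: the paper generalizes the statement to ``for all $\TOut$, if $\JAlgTypeSynth{\DecomposeOp\TEnv\TOut}{M}{A}{\TOut}$ then $\TEnv \vdash M : A$'', which is exactly your split-based strengthening with the complement $\Gamma$ written as the left component of the split, and it likewise proceeds by mutual rule induction, discharging side premises with the unfolding/subtyping/kinding soundness lemmas, reassociating the per-premise splits in the binary elimination cases (the paper details \TirName{A-Sigma-E} just as you detail \TirName{A-Pi-E}, including the step $\UNR{\Gamma_M}=\UNR{\TEnv_2}$ that transfers the conversion), and recovering the theorem by specializing to unrestricted $\TOut$ via $\Decompose{\TEnv}{\TEnv}{\UNR\TEnv}$. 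The only divergence is bookkeeping: the paper additionally cites Agreement and Algorithmic Linear Strengthening in the \TirName{A-Lab-E}, \TirName{A-Pi-E} and \TirName{A-Sigma-I} cases, obligations that your existential phrasing of the induction hypothesis largely absorbs.
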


\begin{restatable}[Algorithmic Completeness]{theorem}{AlgorithmicCompleteness}
  \label{thm:AlgorithmicCompleteness}
  If $\TEnv \vdash M : A$, then
  \begin{enumerate}
  \item 
    $\JAlgTypeSynth{\TEnv} M {B}{ \UNR\TEnv}$ with $\UNR\TEnv \vdash B \Subtype A : \Kind$.
  \item
    $\JAlgTypeCheck{\TEnv} M A{ \UNR\TEnv}$.
  \end{enumerate}
\end{restatable}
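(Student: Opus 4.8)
The plan is to prove both parts simultaneously by induction on the derivation of $\TEnv \vdash M : A$, treating part~(2) as essentially a corollary of part~(1). Indeed, once part~(1) gives $\JAlgTypeSynth{\TEnv} M B {\UNR\TEnv}$ with $\UNR\TEnv \vdash B \Subtype A : \Kind$, Algorithmic Subtyping Completeness (Lemma~\ref{lem:AlgorithmicSubtypingCompleteness}) turns the declarative subtyping into the algorithmic subtyping premise of rule \TirName{A-Sub-Type}, which then yields $\JAlgTypeCheck{\TEnv} M A {\UNR\TEnv}$ directly. The real work is therefore in part~(1), and the invariant to maintain is that synthesis always produces the \emph{least} type: whatever the declarative derivation concludes, the algorithm yields a subtype of it while consuming exactly the linear resources, leaving output environment $\UNR\TEnv$. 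This invariant makes the subsumption rule \TirName{Sub-Type} immediate: the induction hypothesis supplies a synthesized $C \Subtype A$, and transitivity \TirName{Sub-Trans} composes it with $A \Subtype B$ to give $C \Subtype B$ with the output environment unchanged.

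For the introduction rules (\TirName{Name}, \TirName{Unit-I}, \TirName{Lab-I}, \TirName{Pi-I}, \TirName{Sigma-I}) the synthesized type coincides with the declarative one up to the covariant subtyping supplied by the induction hypotheses on subterms (e.g.\ via \TirName{Sub-Pi} for \TirName{Pi-I}), and one checks that the conditional-extension and multiplicity side-conditions of the algorithmic rules---such as ``$\Multm = \MultUn$ implies $\TEnv = \TOut$'' in \TirName{A-Pi-I}---follow from the declarative well-formedness premises. The elimination rules are where the subtype invariant must be actively preserved by an unfold-then-subtype pattern. In \TirName{Pi-E}, the induction hypothesis gives a synthesized $C \Subtype \Pi_\Multm(x:A)B$; since subtyping into a $\Pi$-type forces $C$ to be convertible to a $\Pi$-type up to commuting cases, Completeness of Unfolding (Lemma~\ref{lem: AlgorithmicUnfoldingCompleteness}) exposes $\JAlgTypeUnfold{\UNR\TEnv} C {\Pi_\Multm(x:A')B'}$ with $A \Subtype A'$ and $B' \Subtype B$. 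Subsumption on the argument then lets the checking induction hypothesis discharge $N$ against $A'$, and a substitution lemma for subtyping gives $B'[N/x] \Subtype B[N/x]$, as required. The session rules \TirName{Ssn-Send-E} and \TirName{Ssn-Recv-E} follow the same pattern.

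The genuinely delicate elimination is the label case \TirName{Lab-E}, whose single declarative result type $B$ must be reconciled with the two syntax-directed rules \TirName{A-Lab-E1} and \TirName{A-Lab-E2}. The proof case-splits on whether the scrutinee is algorithmically convertible to a label. When it is, \TirName{A-Lab-E1} synthesizes only the selected branch and the induction hypothesis on that branch suffices. When it is a variable not convertible to a label, the induction hypothesis yields per-branch subtypes $A_\ell \Subtype B$ under the equation $\Equation[L] x \ell$, and these must be reassembled into $\CASE x {\overline{\ell:A_\ell}} \Subtype B$ using the case-on-the-left subtyping rule \TirName{AS-Case-Left2}; verifying its unfolding and non-convertibility side-conditions is exactly what makes this case go through. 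The pair \TirName{Sigma-G}/\TirName{A-Sigma-G} is handled analogously, since it too eta-expands a label-typed component across all branches.

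Reconciling the declarative environment-splitting discipline with the algorithmic left-to-right threading will be the main obstacle, and the place where the output claim $\TOut = \UNR\TEnv$ is earned. Whenever a declarative rule splits $\Decompose\TEnv{\TEnv_1}{\TEnv_2}$ and recurses on the separate pieces, the corresponding algorithmic rule (e.g.\ \TirName{A-Pi-E}, \TirName{A-Sigma-E}) threads the whole of $\TEnv$ through the first subterm and passes the residual to the second. The bridge is Algorithmic Weakening (Lemma~\ref{lem:AlgorithmicWeakening}), which re-runs the synthesis of the first subterm on $\DecomposeOp{\TEnv_1}{\TEnv_2}$ so that its output becomes precisely the input $\TEnv_2$ needed for the second; Algorithmic Linear Strengthening (Lemma~\ref{lem:AlgorithmicLinearStrengthening}) handles the dual situation of a linear binding that the declarative derivation leaves unused. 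Confirming that these rearrangements always terminate in exactly $\UNR\TEnv$, together with establishing the subtyping-under-substitution lemma invoked in the elimination cases, is the portion of the argument that demands the most care.
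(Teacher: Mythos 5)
Your proposal matches the paper's proof in all essentials: mutual rule induction on the declarative derivation, with part~(2) discharged from part~(1) by \TirName{A-Sub-Type} via subtyping completeness, the \TirName{Sub-Type} case closed by transitivity, the \TirName{Lab-E}/\TirName{Sigma-G} cases split on algorithmic convertibility of the scrutinee (\TirName{A-Lab-E1} vs.\ \TirName{A-Lab-E2}) with the branchwise subtypes reassembled into a case subtyping, eliminations handled through unfolding and substitution for subtyping, and Algorithmic Weakening used to bridge the declarative environment split to the algorithm's left-to-right threading so that the output is exactly $\UNR\TEnv$. The only quibble is that Algorithmic Linear Strengthening plays no role in completeness (the paper reserves it for soundness), and the paper reassembles the \TirName{A-Lab-E2} result type into a declarative subtype of $B$ directly via \TirName{Conv-Eta} and \TirName{Sub-Case} rather than detouring through the algorithmic rule \TirName{AS-Case-Left2}, but neither difference affects correctness.
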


The development in this section does not cover the extension to
natural numbers from Section~\ref{sec:naturals}. However, we present
the necessary rules in \finalconcrete{the Appendix~\ref{sec:compl-typing-rules}} and
we believe the technical results extend straightforwardly.



\section{Embedding {\LSST} into {\LDGV}}
\label{sec:embedding-gv-into}

\begin{figure}[tp]
  Type translation
  \hfill\fbox{$\Embed{A} = B$}
\begin{align*}
  \Embed{{!A}.S} =&\;{!(x : \Embed{A})} \Embed{S}
  &&&
  \Embed{{?A}.S} =&\;{?(x : \Embed{A})} \Embed{S}
  \\
  \Embed{\oplus \{ \overline{\ell: S_\ell}\} } =&\;
    {!(x : L)} \CASE x {\overline{\ell: \Embed{S_\ell}}}
  &&&
  \Embed{\&\{\overline{\ell: S_\ell}\}} =&\;
    {?(x : L)} \CASE x {\overline{\ell: \Embed{S_\ell}}}
  \\
  \Embed{\ENDS} =&\;{!(x: \Setof\EOFLabel)}\TEnd
  &&&
  \Embed{\ENDR} =&\; {?(x: \Setof\EOFLabel)}\TEnd
\end{align*}
  Environment translation
  \hfill\fbox{$\Embed{\TEnv} = \SEnv$}
  \begin{mathpar}
    \inferrule{}{\Embed\EmptyEnv = \EmptyEnv}

    \inferrule{}{
      \Embed{\TEnv,x: A} = \Embed{\TEnv}, x: \Embed{A}
    }
  \end{mathpar}
  Expression translation
  \hfill\fbox{$\Embed{\TEnv \vdash_\LSST M : A} = N$}
\begin{align*}
  \Embed{\TEnv \vdash_\LSST \SEND M : A \to_\KindLin S}
  &=
  \SEND{\Embed{\TEnv \vdash_\LSST M : {!A.S}}}
  \\
  \Embed{\TEnv \vdash_\LSST \RECV M : A \times S}
  &=
  \RECV{\Embed{\TEnv \vdash_\LSST M : {?A.S}}}
  \\
  \Embed{\TEnv \vdash_\LSST \SELECT\ell' : \oplus \{\overline{\ell:
  S_\ell}^{\ell\in L}\}}
  &=
  \lambda_\KindLin(x: {!(y: L)} \CASE y {\overline{\ell : \Embed{S_\ell}}}). \APP{\SEND x} \ell'
  \\
  \Embed{\DecomposeOp{\TEnv_1}{\TEnv_2} \vdash_\LSST \RCASE M {\overline{\ell : y.N_\ell}} : A}
  &=
  \ELet{\EPair[]{x}{y}} {\RECV{\Embed{\TEnv_1\vdash_\LSST
    M:\&\{\overline{\ell: S_\ell}\}}}}{}
  \\
  \CASEK\,x\,\syntax{of}\,&\{\overline{\ell: \Embed{\TEnv_2,y:S_\ell\vdash_\LSST N_\ell : A}}\}
    \quad \text{where }\TEnv_1 \vdash M : \&\{\overline{\ell: S_\ell}\}
  \\
  \Embed{\TEnv \vdash_\LSST \CLOSE M : \TUnit}
  &=
  \SEND \Embed{\TEnv\vdash_\LSST M : \ENDS}\, \EOFLabel
  \\
  \Embed{\TEnv \vdash_\LSST \WAIT M : \TUnit}
  &=
  \ELet{\EPair[]{x}{y}}{\RECV{\Embed{\TEnv\vdash_\LSST M : \ENDR}}} \EUnit
\end{align*}
 
\caption{Translation from {\LSST} to {\LDGV}}
\label{fig:gv-ldgv-translation}
\end{figure}


The translation in Figure~\ref{fig:gv-ldgv-translation} maps \LSST's
types, environments, and typing derivations to \LDGV. It extends
homomorphically over all types, expressions, and processes that are
not mentioned explicitly. The translation of typing environments
annotates each binding with the multiplicity derived from the type.
As expected, internal (external) choice maps to sending (receiving) a
label followed by a case distinction on that label. Actively
(passively) ending a connection maps to sending (receiving) a
distinguished $\EOFLabel$ token and dropping the channel.

The translation is a conservative embedding as it preserves subtyping and
typing. We establish a simulation and a co-simulation between the original \LSST{} expression
and its image in \LDGV{}.  In the
simulation, each step gives rise to one or more steps in the image of the translation.
In co-simulation, one step in the image may yield an expression that is still
related to the same preimage.

%
%





\begin{restatable}[Typing Preservation]{theorem}{SimulationTypingPreservation}~\\[-\baselineskip]
  \begin{enumerate}
  \item If $\TEnv \vdash_\LSST M : A$, then
    $\DGVEnvForm \MultLin {\Embed{\TEnv}}$ and
    $\Embed\TEnv \vdash_\DGV \Embed{\TEnv \vdash_\LSST M : A} : \Embed
    A$.
  \item If $\TEnv \vdash_\LSST P$, then $\Embed\TEnv \vdash_\DGV
    \Embed{\TEnv \vdash_\LSST P}$.
  \end{enumerate}
\end{restatable}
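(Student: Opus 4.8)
The plan is to prove both parts simultaneously by induction on the \LSST{} typing derivation. This is the natural measure because the expression translation $\Embed{\TEnv \vdash_\LSST M : A}$ is itself defined by recursion on that derivation, so the case analysis for typing preservation mirrors the case analysis of the translation. Before the main argument I would establish, by mutual induction on the relevant \LSST{} judgements, four structural facts about $\Embed\cdot$. \emph{(Kind preservation)} if $\GVKind\Multm A$ then $\Embed\TEnv \vdash \Embed A : \Multm'$ for some $\Multm' \MultLE \Multm$; note every translated session type is linear, since its outermost constructor is always a send or receive, even for the translated end markers. Together with the pointwise definition of $\Embed\cdot$ on environments this yields the first conjunct $\DGVEnvForm\MultLin{\Embed\TEnv}$ and \emph{(split preservation)} $\Decompose\TEnv{\TEnv_1}{\TEnv_2}$ implies $\Decompose{\Embed\TEnv}{\Embed{\TEnv_1}}{\Embed{\TEnv_2}}$, as the split is decided binding-by-binding from the kind of each type. \emph{(Duality preservation)} $\Embed{\DUAL S} = \DUAL{\Embed S}$, by induction on $S$; the send/receive and choice cases commute with $\DUAL\cdot$, and $\DUAL{\Embed\ENDS} = {?(x : \Setof\EOFLabel)}\TEnd = \Embed\ENDR$. \emph{(Subtyping preservation)} $A \Subtype B$ in \LSST{} implies $\Embed\TEnv \vdash \Embed A \Subtype \Embed B : \Kind$, needed to translate the \LSST{} subsumption rule: width subtyping of choice and branch maps onto \TirName{Sub-Case} and depth subtyping onto \TirName{Sub-Send}/\TirName{Sub-Recv}.

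With these lemmas the inductive step is routine for the functional and structural constructs, which translate homomorphically and re-use the matching \LDGV{} rule (\TirName{Pi-I}, \TirName{Pi-E}, \TirName{Sigma-I}, \TirName{Sigma-E}, and so on). The session operators carry the real content. For \TirName{gv-send} the induction hypothesis gives $\Embed M : {!(x : \Embed A)}\Embed S$; as $\Embed S$ has no free $x$ (\LSST{} types are non-dependent), \TirName{Ssn-Send-E} produces the degenerate $\Pi_\KindLin (x : \Embed A)\Embed S$, which is exactly $\Embed{A \to_\KindLin S}$; \TirName{gv-recv} is dual via \TirName{Ssn-Recv-E}, and \TirName{gv-new} uses \TirName{Ssn-I} together with duality preservation so that $\Sigma (x : \Embed S)\DUAL{\Embed S} = \Embed{S \times \Dual S}$. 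The choice operators exercise the label machinery: in \TirName{gv-select}, after \TirName{Ssn-Send-E} the body $\APP{\SEND x} \ell'$ has type $\CASE{\ell'}{\overline{\ell : \Embed{S_\ell}}}$, which \TirName{Conv-Beta} reduces to $\Embed{S_{\ell'}}$, so the enclosing $\lambda_\KindLin$ receives the claimed function type.

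The main obstacle is \TirName{gv-rcase}. Its image receives a pair $(x, y)$ with $x : L$ and $y : \CASE x {\overline{\ell : \Embed{S_\ell}}}$ and then cases on $x$; the only rule retaining enough information to type it is the refined elimination \TirName{Sigma-G}, which re-checks the body once per label $\ell \in L$. In the branch for $\ell$, rule \TirName{Lab-E} adopts the equation $\Equation[L] x \ell$, which lets \TirName{Conv-Beta} reduce the type of the continuation $y$ to $\Embed{S_\ell}$; the induction hypothesis on $\TEnv_2, y : S_\ell \vdash_\LSST N_\ell : A$ then yields $\Embed{N_\ell} : \Embed A$ in $\Embed{\TEnv_2}, y : \Embed{S_\ell}$, and split preservation reconciles the linear contexts of the header and the branches. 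Threading these nested conversions through \TirName{Sigma-G} is the delicate step. A secondary point needing care is \TirName{gv-close}: its image $\APP{\SEND{\Embed M}}{\EOFLabel}$ leaves a residual endpoint of type $\TEnd$ while the statement asks for $\Embed\TUnit = \TUnit$; this is reconciled by treating $\TEnd$ as the unrestricted, discardable terminal type in the position where close occurs (the companion \TirName{gv-wait} instead returns $\EUnit$ explicitly after the receive, so no such reconciliation is needed there).

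Finally, part (2) follows from part (1) by a short induction on the process typing derivation: \TirName{Proc-Expr} appeals to (1) directly, \TirName{Proc-Par} uses split preservation, and the channel-restriction rule uses duality preservation together with the session kind supplied by the first lemma.
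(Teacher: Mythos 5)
Your strategy coincides with the paper's: it likewise proves a Subtyping Preservation lemma first (by induction on the \LSST{} subtyping derivation, with choice and branch as the main cases, landing on \TirName{Sub-Case}), and then proceeds by induction on the \LSST{} typing derivation, handling \TirName{gv-select} and \TirName{gv-rcase} with exactly the conversion machinery you describe: \TirName{Lab-E} introduces the per-branch equation $\Equation x \ell$, after which \TirName{Conv-Subst} and \TirName{Conv-Beta} retype the continuation at $\Embed{S_\ell}$. Your explicit auxiliary lemmas (kind, split, and duality preservation) are left implicit in the paper but are genuinely needed — duality preservation in particular for \TirName{gv-new} — so making them explicit is an improvement. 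One correction: \TirName{Sigma-G} is not ``the only rule retaining enough information'' for \TirName{gv-rcase}. Because the translated body is syntactically a $\CASEK$ on $x$, the paper's derivation goes through plain \TirName{Sigma-E}, typing that case via \TirName{Lab-E}; \TirName{Sigma-G} is the rule needed when the let-body is \emph{not} already a case (its role in examples like \textit{sendNode}). Using \TirName{Sigma-G} here also works, but it is not forced.

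The one step in your proposal that does not hold as stated is \TirName{gv-close}. You correctly observe that $\SEND{\Embed M}\,\EOFLabel$ has type $\TEnd$ while the theorem demands $\Embed\TUnit = \TUnit$; however, ``treating $\TEnd$ as the unrestricted, discardable terminal type'' is not a typing step. Unrestrictedness lets a value of type $\TEnd$ be dropped, but it does not change the type of the term, and \LDGV{} has no conversion or subtyping rule relating $\TEnd$ to $\TUnit$, so the case cannot be closed this way. To be fair, the paper's own proof announces \CLOSEK{} and \WAITK{} as interesting cases and then silently omits them, so this gap is inherited from the paper rather than introduced by you. A real fix must either amend the \CLOSEK{} translation so that it explicitly consumes the residual endpoint and returns $\EUnit$ (mirroring the \WAITK{} translation, e.g.\ by applying a function of type $\Pi_\MultLin(y:\TEnd)\TUnit$ to the send), or weaken the statement of the theorem for this clause; your write-up should say one of these rather than appeal to discardability.
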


\begin{theorem}[Simulation]~\\[-\baselineskip]
  \begin{enumerate}
  \item If $\TEnv \vdash_\LSST M : A$ and $M \ReducesTo_\LSST N$, then
    $\Embed{\TEnv\vdash_\LSST M:A} \ReducesTo^+_\DGV
    \Embed{\TEnv\vdash_\LSST N:A}$.
  \item If $\TEnv \vdash_\LSST P$ and $P \ReducesTo_\LSST Q$, then
    $\Embed{\TEnv \vdash_\LSST P} \ReducesTo^+_\DGV \Embed{\TEnv
      \vdash_\LSST Q}$.
  \end{enumerate}
\end{theorem}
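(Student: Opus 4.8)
The plan is to prove the two parts by a single induction on the derivation of the \LSST{} reduction, the process rule \TirName{Rl-Ctx-Exp} of part~(2) appealing to part~(1). The argument rests on two commutation lemmas that I would establish first. (i)~\emph{Translation commutes with substitution and preserves values}: $\Embed{M[V/x]} = \Embed{M}[\Embed{V}/x]$, and $\Embed{V}$ is a \LDGV{} value whenever $V$ is an \LSST{} value; the first uses the \LSST{} substitution lemma and the second is read off the value grammars. (ii)~\emph{Translation commutes with evaluation contexts}: for every \LSST{} context $\ECN$ there is a \LDGV{} evaluation context $\mathcal D$ with $\Embed{\Context{M}} = \mathcal D[\Embed{M}]$, where $\mathcal D$ has the same shape as $\ECN$ on the homomorphic formers and is enlarged on the four non-homomorphic ones (for instance $\RCASE{\ECN}{\cdots}$ becomes a $\RECV{[\cdot]}$ nested inside a product-eliminating $\textsf{let}$, which is still a legal \LDGV{} evaluation context). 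Throughout I would use the Typing Preservation theorem to know the image is well typed, so that the type-level side conditions of the \LDGV{} communication rules are automatically met by the translated redexes. Since several \LSST{} primitives are translated to compound \LDGV{} terms, one source step is matched by several target steps, which is why the statement is phrased with $\ReducesTo^{+}$.

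The homomorphic cases are immediate and cost one step each. \TirName{Rl-Fork} is matched by the same fork rule in \LDGV{}; rule \eqref{eq:rl-new} is matched by \eqref{eq:rl-new-ldgv}, using that the \LSST{} pair $\EPair[\MultLin]\Chanc\Chand$ translates to the dependent pair $\DPair[\MultLin]{x}\Chanc\Chand$; and the communication rule \eqref{eq:rl-send-recv} is matched by \eqref{eq:rl-send-recv-ldgv} because $\SENDK$ and $\RECVK$ are translated homomorphically. The pure expression reductions treated in part~(1), namely $\beta_v$-reduction and product elimination, are likewise matched one-for-one by \TirName{Rl-Betav} and \TirName{Rl-Prod-Elim}, after rewriting the reduct with lemma~(i).

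The first compound case is \TirName{Rl-Branch}. The selector $\SELECT\ell'\,\Chanc$ translates to $(\lambda_\MultLin(x{:}\cdots).\SEND x\,\ell')\,\Chanc$, and the receiving $\RCASE$ translates to $\ELet{\EPair[]{x}{y}}{\RECV\Chand}{\CASE{x}{\overline{\ell:\Embed{N_\ell}}}}$. I would match the single source step by four target steps: a $\beta_v$-step (\TirName{Rl-Betav}) turns the selector image into $\SEND\Chanc\,\ell'$; a communication step \eqref{eq:rl-send-recv-ldgv} transmits $\ell'$, replacing $\RECV\Chand$ by the pair $\DPair{w}{\ell'}{\Chand}$; a product elimination (\TirName{Rl-Prod-Elim}) substitutes $\ell'$ for $x$ and $\Chand$ for $y$; and a case reduction (\TirName{Rl-Case}) selects the $\ell'$ branch, yielding $\Embed{N_{\ell'}}[\Chand/y]$, which lemma~(i) identifies with the image of the \LSST{} reduct.

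The main obstacle is the close/wait case \eqref{eq:rl-close}. The \LSST{} rule consumes both endpoints \emph{and} deletes the restriction $\NUC\Chanc\Chand$, whereas \LDGV{} has no dedicated close: closing is simulated by sending the $\EOFLabel$ token and dropping the depleted endpoint, which at that point carries the unrestricted type $\TEnd$ (kind $\KindUnSession$, rule \TirName{End-F}). Operationally the send of $\EOFLabel$ fires one communication step; the sender then discards its residual $\TEnd$-endpoint and the receiver's image $\ELet{\EPair[]{x}{y}}{\RECV\Chand}{\EUnit}$ eliminates the received pair, discarding both $x{:}\Setof\EOFLabel$ and $y{:}\TEnd$ and returning $\EUnit$. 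The delicate point is that, after these steps, $\Chanc$ and $\Chand$ no longer occur yet the restriction $\NUC\Chanc\Chand$ still scopes the two processes, so to reach the restriction-free image of the reduct one must garbage-collect the dead $\nu$ up to structural congruence---sound exactly because both endpoints are unrestricted and no longer used. Pinning down this bookkeeping (the discards of the $\TEnd$-endpoints and the collection of the dead restriction, all modulo $\equiv$) is the crux; the remaining context and congruence cases then close routinely by the induction hypothesis, lemma~(ii), and \TirName{Rl-Cong}.
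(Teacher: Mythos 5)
Your overall strategy---substitution and context-commutation lemmas, then a case analysis on redex images, with one source step matched by several target steps for the compound primitives---is exactly the approach the paper's own material embodies. In fact the paper never writes out a proof of Simulation at all: only Typing Preservation and Co-Simulation are proved in the appendix, and the Co-Simulation proof opens with your lemma~(ii) (``evaluation contexts are compatible with the translation''), reduces to redex images, and then performs, for the $\SELECTK$/$\RCASEK$ rendezvous, essentially your four-step computation for \TirName{Rl-Branch} (beta, communication via \eqref{eq:rl-send-recv-ldgv}, \TirName{Rl-Prod-Elim}, \TirName{Rl-Case}). Up to and including that case, your proposal is sound and matches the paper.

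The genuine gap is \TirName{Rl-Close}, and it is not the ``bookkeeping'' you hope for. Writing $\ECN,\ECN[F]$ for the translated contexts, the image of the redex is $\NUC\Chanc\Chand(\Proc{\Context{\SEND\Chanc\,\EOFLabel}} \PAR \Proc{\Context[F]{\ELet{\EPair[]xy}{\RECV\Chand}\EUnit}})$; the two available steps (rule \eqref{eq:rl-send-recv-ldgv}, then \TirName{Rl-Prod-Elim}) land on $\NUC\Chanc\Chand(\Proc{\Context{\Chanc}} \PAR \Proc{\Context[F]{\EUnit}})$, whereas the image of the \LSST{} reduct is $\Proc{\Context{\EUnit}} \PAR \Proc{\Context[F]{\EUnit}}$. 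No rule of \LDGV{} bridges the two differences. First, nothing ``discards'' the sender's residual endpoint: $\Chanc$ is a value left in evaluation position by the communication rule, so the context continues computing with $\Chanc$ (of type $\TEnd$) where the target term has $\EUnit$ (of type $\Embed\TUnit=\TUnit$)---there is no reduction that erases a value, and the mismatch is even a typing mismatch. Second, the structural congruence of \LDGV{} is inherited verbatim from Figure~\ref{fig:gv-reduction}: \TirName{sc-gc} collects only parallel components $\Proc\EUnit$; there is no axiom $\NUC\Chanc\Chand P \equiv P$ for $\Chanc,\Chand\notin\FV(P)$, so the dead restriction cannot be garbage-collected. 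Hence the $\ReducesTo^+_\DGV$ sequence you describe cannot terminate at $\Embed{\TEnv \vdash_\LSST Q}$, and the two mechanisms your ``crux'' paragraph appeals to simply do not exist in the calculus as defined. To be fair, you have put your finger on precisely the spot the paper hides: its Co-Simulation proof dismisses close/wait as translating ``in a similar way'', and its Typing Preservation proof announces \CLOSEK{} and \WAITK{} as interesting cases but proves only \SELECTK{} and \RCASEK. Closing this case honestly requires amending the system---e.g.\ translating $\CLOSE M$ so that the residual endpoint is dropped, say as $(\lambda_\MultLin(z:\TEnd).\EUnit)\,(\SEND{\Embed M}\,\EOFLabel)$, and adding a dead-restriction axiom to $\equiv$---or weakening the theorem to equality up to such administrative equivalence; as stated, with the paper's definitions, the \TirName{Rl-Close} case of the Simulation theorem cannot be proved.
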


\begin{restatable}[Co-Simulation]{theorem}{CoSimulation}~\\[-\baselineskip]
  \begin{enumerate}
  \item If $\TEnv \vdash_\LSST M : A$ and
    $\Embed{\TEnv\vdash_\LSST M:A} \ReducesTo_\DGV N$, then
    $M \ReducesTo_\LSST M'$ and
    $N \ReducesTo^*_\DGV\Embed{\TEnv \vdash_\LSST M':A}$.
  \item If $\TEnv \vdash_\LSST P$ and
    $\Embed{\TEnv\vdash_\LSST P} \ReducesTo_\DGV Q$, then
    $P \ReducesTo_\LSST P'$ and
    $Q \ReducesTo^*_\DGV \Embed{\TEnv \vdash_\LSST P'}$.
  \end{enumerate}
\end{restatable}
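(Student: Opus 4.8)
The plan is to prove both statements together by induction on the source typing derivation, with a case analysis on the single \LDGV{} reduction $\Embed{\TEnv \vdash_\LSST M : A} \ReducesTo_\DGV N$ (and, for part~2, on the reduction of $\Embed{\TEnv \vdash_\LSST P}$). Two routine preliminaries make the induction go through. First, the translation commutes with value substitution, $\Embed{M[V/x]} = \Embed{M}[\Embed{V}/x]$, up to the evident retyping; this is exactly what relates the \LDGV{} reducts of \TirName{Rl-Betav} and \TirName{Rl-Prod-Elim} to their \LSST{} counterparts. Second, the translation commutes with evaluation contexts, so that any \LDGV{} redex occurring in $\Embed{M}$ lies inside the image of a unique \LSST{} evaluation context; combined with an inversion principle for the translation, this lets me read off from each \LDGV{} redex the \LSST{} construct that produced it.

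I would first dispatch the \emph{direct} cases, in which the \LDGV{} redex is the homomorphic image of an \LSST{} redex. A $\beta$-redex coming from an \LSST{} application, a product elimination coming from an \LSST{} \syntax{let}, and a label \CASEK{} coming from an \LSST{} \syntax{case} each reflect immediately: the source step $M \ReducesTo_\LSST M'$ is the matching \LSST{} rule, $N = \Embed{M'}$, and the trailing $\ReducesTo^*_\DGV$ is empty. The \NEW{} and \FORKK{} reductions are equally direct, since the translation leaves these constructs unchanged.

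The substantive work is in the cases generated by the communication primitives, where the encoding expands one \LSST{} operation into several \LDGV{} constructs, so that a single \LDGV{} step is typically \emph{administrative}. Reducing the encoding of a selection first $\beta$-unfolds the wrapper $\lambda$, turning $\Embed{\SELECT\ell'}\,\Chanc$ into $\SEND\Chanc\,\ell'$; dually, the encoding of $\RCASE M{\ldots}$ performs a $\RECVK$, then a product elimination, then a label \CASEK{}; and the encodings of \CLOSEK{} and \WAITK{} hinge on transmitting the distinguished token $\EOFLabel$ and discarding the endpoint. For each such redex I would name the source communication step (\TirName{Rl-Com}, \TirName{Rl-Branch}, or \TirName{Rl-Close}) it belongs to, take $M'$ to be the corresponding \LSST{} reduct, and then drive the intermediate term $N$ to $\Embed{M'}$ by running the remaining steps of that encoding. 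That these residual reductions are forced — so that $N \ReducesTo^*_\DGV \Embed{M'}$ holds — I would isolate as a determinacy lemma for the administrative fragment of \LDGV{} reduction; concretely, these are the very multi-step sequences assembled in the Simulation theorem, now read in the reflecting direction.

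The main obstacle is reconciling such a premature administrative step with the all-or-nothing nature of synchronous \LSST{} communication. An administrative redex — for instance the $\beta$-unfolding of a \SELECTK{}-encoding — becomes enabled in $\Embed{M}$ as soon as its channel argument is a value, whereas the source step it anticipates (say \TirName{Rl-Branch}) additionally requires the matching receiver to be in reducing position. The heart of the proof is therefore a commitment argument: in a well-typed configuration one must show that whenever such a step fires there really is a source step $M \ReducesTo_\LSST M'$ onto which it reflects, and that $M'$ can be chosen so that the residual administrative reductions reassemble precisely $\Embed{M'}$, rather than leaving $N$ stranded at a half-completed encoding. Pinning down this matching uniformly across \TirName{Rl-Com}, \TirName{Rl-Branch}, and \TirName{Rl-Close} is the delicate part; once it is in place, the remaining process-level cases — structural congruence and the context rules \TirName{Rl-Ctx-Par} and \TirName{Rl-Ctx-Res} — close under the induction hypothesis without further difficulty.
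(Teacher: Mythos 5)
Your skeleton---evaluation-context compatibility, case analysis on the images of redexes, and completion of the residual steps of each encoding (the Simulation sequences ``read in the reflecting direction'')---is exactly the paper's proof. The genuine gap is the step you yourself call the heart of the argument: the ``commitment argument'' cannot be proved, because what it asserts is false under the literal reading of the translation that you (correctly) adopt. The wrapper redex of a selection fires with no side condition on the partner: take $M = \SELECT\ell'\,c$ with $c : \oplus\{\overline{\ell:S_\ell}\}$ in $\TEnv$. This is a well-typed \LSST{} expression, and $\Embed{\TEnv \vdash_\LSST M : S_{\ell'}} = \Embed{\SELECT\ell'}\,c$ is a $\beta$-redex that steps to $\SEND c\,\ell'$; but $M$ itself has \emph{no} \LSST{} expression reduction whatsoever, since selection reduces only at process level through \TirName{Rl-Branch}. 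Hence no $M'$ as required in part~1 exists. A process-level variant defeats part~2: put the selecting process in parallel with a partner that is not yet at its \RCASEK{} (say it still has a $\beta$-redex of its own). The only source step $P \ReducesTo_\LSST P'$ lies in the partner, and $\Embed{P'}$ still contains the uncontracted wrapper $\Embed{\SELECT\ell'}\,c$, whereas the \LDGV{} reduct $Q$ has already contracted it; reduction cannot be undone, so $Q \ReducesTo^*_\DGV \Embed{P'}$ fails for every available $P'$. No commitment or determinacy lemma can close this case: the administrative step genuinely outruns the synchronous source semantics.

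It is worth seeing why the paper's own proof never meets this case: in its displayed computation it writes $\Embed{\SELECT\ell\,d}$ as \emph{equal} to the already-unfolded send (the unexplained ``$=$'' step), i.e., it treats the wrapper contraction as definitional rather than as a reduction of the image. Under that fused reading, the image of a select/rcase pair offers exactly one first step, the synchronous \TirName{Rl-Com}, which fires only when both partners are ready and therefore reflects \TirName{Rl-Branch}; the residual product-elimination and case steps are then forced and reach the image of the source reduct, and the \CLOSEK/\WAITK{} case is analogous (those encodings, unlike selection's, contain no administrative redex at all). So the repair is not a stronger lemma but a change of setup: either build the contraction into the translation of applied selections, defining $\Embed{\SELECT\ell'\,M} = \SEND{\Embed M}\,\ell'$, or weaken the statement to relate $N$ to $\Embed{M}$ up to administrative reduction (allowing zero source steps), as in standard CPS co-simulation arguments. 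As written, your proposal commits to proving a claim that the selection encoding refutes.
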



\section{Implementation}
\label{sec:implementation}

We implemented a frontend consisting of a parser and a type checker
for the \LDGV{} calculus, which is available in a GitHub
repository\footnote{Available at
  \href{https://github.com/proglang/ldgv}{https://github.com/proglang/ldgv}}.
The parser implements an OCaml-inspired 
syntax which deviates slightly from the
Haskell-inspired syntax used in Section~\ref{sec:motivation}.

The type checker implements exactly the algorithmic rules from
Section~\ref{sec:algorithmic-typing} including subtyping as well as
additional algorithmic unfolding, subtyping, and synthesis rules
dealing with natural numbers and their recursor. The type checker
supports a coinductive reading of the typing and subtyping rules
so that types and session types can be equirecursive. The
implementation requires caching of the weakened judgments modulo
alpha conversion. This complication arises because types may contain
free variables of label type. A weakened judgment contains just the
bindings for these free variables; comparing modulo alpha conversion
means that the names of the free variables do not matter: $x:L \vdash
\CASE x {\dots}$ is equal to $y:L \vdash \CASE y {\dots}$ (if the
$\dots$ match). Caching modulo alpha conversion is needed to make the
type checker terminate.


\section{Related Work}
\label{sec:related-work}

\paragraph{Linear and Dependent Types}


\citet{DBLP:conf/lics/CervesatoP96} developed the first logical
framework supporting linear type theory and dependent types.
\citet{DBLP:journals/scp/ShiX13} propose using linear types on top of
ATS, their dependently typed language for developing provably correct
code.

F$^*$ \cite{DBLP:journals/jfp/SwamyCFSBY13} is a language that
includes linear types and value dependent types. The authors use
affine environments to control the use of linear values and
distinguish between value application and standard application to
properly deal with dependency. F$^*$ has further developed into a
verification system with full-fledged dependent types
\cite{DBLP:journals/pacmpl/AhmanFHMRS18}.

Trellys \cite{DBLP:conf/popl/CasinghinoSW14} combines a general computation
language with a specification language via dependent types. While
Trellys has no support for linear types, it has been inspiring in
finding a replacement for value dependency and in its treatment of
equations.

Idris \cite{DBLP:journals/jfp/Brady13} is a dependently typed language
with uniqueness types. While linear types avoid duplication
and dropping of values, a value with unique type is referenced at most
once at run time. \citet{DBLP:journals/aghcs/Brady17} shows how to use
this feature combination to develop concurrent systems.

\citet{DBLP:journals/corr/abs-1104-0193}
introduce a lambda calculus
with linear dependent types and full higher-order recursion. It relies
on a decoration of PCF with first-order index expressions. Under certain
assumptions, their type system is complete, i.e., all operational
behavior can be captured by typing.
\citet{DBLP:conf/ppdp/LagoP12}
also consider a sound and complete
linear dependent type system. Their emphasis is on complexity analysis
for higher-order functional programs.

\citet{DBLP:conf/popl/KrishnaswamiPB15} propose a full-spectrum
language that integrates linear and dependent types. It
is based on the observation that intuitionistic linear logic can be
modeled with an adjunction. The resulting syntactic theory
consists of an intuitionistic and a linear lambda calculus combined
via two modal operators corresponding to the adjunction.


Our work stays in the tradition that keeps linear and unrestricted
resources apart. Computations and processes are allowed to
depend on unrestricted index values, but dependencies on linear
resources are ruled out. Unlike the cited work, our calculus supports
dependent subtyping \cite{AspinallCompagnoni2001}.

McBride's and Atkey's works combine linear and
dependent types in Quantitative Type Theory (QTT)
\cite{DBLP:conf/birthday/McBride16,DBLP:conf/lics/Atkey18}.
In QTT types may depend on linear resources, whereas
types in our system can only depend on unrestricted values.
%
%

Linear Haskell \cite{DBLP:journals/pacmpl/BernardyBNJS18} 
is a proposal to integrate linear types with stock functional
programming.  It does not have dependent types and it manages bindings
using a semiring.

\paragraph{Session Types}

\citet{CairesPfenning2010} developed logical foundations for session
types building on intuitionistic linear logic. Their approach enables
viewing $\pi$-calculus reductions as proof transformations in the
logic. \citet{Wadler2012} proposed a foundation based on classical
linear logic.

Dependent session types have been proposed first by
\citet{ToninhoCairesPfenning2011} for the $\pi$-calculus with value
passing. The calculus is aimed at specification and verification, and
features a rich logic structure with correspondingly rich proof terms.
\citet{DBLP:journals/corr/WuX17} encode session types in their wide
spectrum language ATS, which includes DML-style type dependency. Indexed types with unpolarized quantification
are used to represent channel types. Types for channel ends are
obtained by interpreting the quantifiers. While ATS provides all
features for verification, \LDGV{} is a minimalist dependent calculus
geared towards practical applications.
\citet{DBLP:conf/fossacs/ToninhoY18} develop a language with dependent
session types that integrates processes and functional computation via
a monadic embedding.  Processes may thus depend on expressions as well
as expressions may depend on monadic process values.

Compared to our work, their theory encompasses type-level functions
with type and value dependent kinds and monads, whereas type-level
computation in \LDGV{} is restricted to label introduction and
elimination. Their work strictly separates linear and unrestricted
assumptions, which leads to further duplication, and it has no notion
of subtyping.  Our setup formalizes large elimination for labels,
which is needed in practical applications, but not considered in their
work. Moreover, the point of our calculus is to showcase an economic
operational semantics with just one communication reduction at the
process level. We expect that \LDGV{} can be extended with further
index types and type-level computation without complicating the
operational semantics.

Lolliproc \cite{DBLP:conf/icfp/MazurakZ10} is a core calculus for
concurrent functional programming. Its primitives are derived from a
Curry-Howard interpretation of classical linear logic. Some form of
session types can be expressed in Lolliproc, but it does not support
unrestricted values nor dependency.


\citet{DBLP:journals/corr/abs-1211-4099} introduce a notion of
session types with refinements over linear resources specified by
uninterpreted predicates.  Even if linear, the dependency is not on
expressions of the programming language, thus greatly simplifying the
underlying theory.
\citet{DBLP:journals/entcs/BonelliCG04} study a simpler
extension for session types whereby assume/assert labels present in
expressions make their way into types to represent starting and ending
points in protocols.


\citet{DBLP:journals/mscs/GotoJJPR16} consider a polymorphic session
typing system for a $\pi$-calculus which replaces branching and choice
by matching and mismatching tests. These tests compare tokens, akin to
our labels, and introduce (in)equational constraints in the type
system.

\paragraph{Others.}

\citet{Nishimura1998} considers a calculus for objects where messages
(a method name and parameters) are first-class constructs. Each such
message is typed as the set of method names that may be invoked by the
message, formalized in a second order polymorphic type system.
\citet{DBLP:conf/isotas/VasconcelosT93} and
\citet{DBLP:journals/iandc/Sangiorgi98} pursue a similar idea in
the context of the $\pi$-calculus that allow the transmission of
variant values, say a label, together with an integer value. We follow
a different approach, by exchanging values only, while labels appear
as a particular case. Neither these works use (label) dependent types
to classify messages.

ROSE \cite{DBLP:journals/pacmpl/MorrisM19} is a versatile theory of
row typing that could be applied to session types among other
applications. Strikingly, a row is a mapping from labels to
types. Hence, a row type could be expressed by a $\Pi$ type in our
system using a case for the label dispatch. ROSE has a fixed set of
constraints for combining rows and requires labels to be compile-time
constants. \LDGV{} labels are first-class objects and combinations are
expressed with user-defined functions.


\section{Conclusions}
\label{sec:conclusions}

\LDGV{} is a minimalist calculus that combines dependent types and
session types from the point of economy of expression: a single pair
of communication primitives is sufficient. It faithfully extends
existing systems while retaining wire compatibility with
them. Building the calculus on dependent types liberates the structure
of session-typed programs from mimicking the type structure.

%
\LDGV{} supports encodings of algebraic datatypes with subtyping by
modeling tagged data with $\Sigma$-types. The same approach may be
used to simulate session calculi based on sending and receiving tagged
data.
It further incorporates natural numbers and primitive recursion at the
type level.

We are currently working on a few extensions for \LDGV.
\begin{enumerate}
\item Our implementation already supports recursive session types and
  we expect that the properties of algorithmic typing also extend to
  this setting.
\item We plan to address subtyping for the type recursor in a coinductive manner.
\item 
  It would be interesting to add further kinds of predicates beyond
  equality as well as type dependency (as supported by previous work
  \cite{ToninhoCairesPfenning2011,DBLP:conf/fossacs/ToninhoY18}).
\end{enumerate}


\begin{acks}
  This work was supported by FCT through the LASIGE Research Unit,
  ref.\ UID/CEC/00408/2019, and by Cost Action CA15123 EUTypes.
\end{acks}

\bibliography{abbrv,papers,books,collections,misc,theses}

\iffinal\else
\clearpage 
\appendix
\section{Rules for Process Typing and for Algorithmic Subtyping}
\label{sec:compl-typing-rules}
\begin{itemize}
\item Process typing (Figure~\ref{fig:ldgv-process-typing})
\item Algorithmic kinding (Figure~\ref{fig:ldgv-algorithmic-kind-inference})
\item Algorithmic subtyping (Figure~\ref{fig:ldgv-algorithmic-subtyping})
\item Extensions to algorithmic typing to support naturals (Figure~\ref{fig:nats-alg})
\end{itemize}
\begin{figure}[tp]
  \begin{mathpar}
    \RuleProcExpr

    \RuleProcChannel

    \RuleProcPar
\end{mathpar}
  \caption{\LDGV{} process typing}
  \label{fig:ldgv-process-typing}
\end{figure}

\begin{figure*}[tp]
  Algorithmic kind synthesis and checking
  \hfill\fbox{$\JAlgKindSynth\TEnv A \Kind$}\quad\fbox{$\JAlgKindCheck\TEnv A \Kind$}

  \begin{mathpar}
    \RuleAEqualityF

    \RuleAUnitF

    \RuleAEndF

    \RuleALabF

    \RuleALabET

    \RuleAPiF

    \RuleASigmaF

    \RuleASsnOutF

    \RuleASsnInF
    \RuleASubKind
  \end{mathpar}
  \caption{Algorithmic kinding}
  \label{fig:ldgv-algorithmic-kind-inference}
\end{figure*}
\begin{figure*}[tp]
  Algorithmic subtyping (synthesis)
  \hfill
  \fbox{$\JAlgSubtypeSynth\TEnv A B \Kind$}
  \begin{mathpar}
    \RuleASUnit

    \RuleASLabel
  
    \RuleASPi

    \RuleASSigma

    \RuleASSend

    \RuleASRecv

    \RuleASCaseLeftA
    
    \RuleASCaseLeftB

    \RuleASCaseRightA

    \RuleASCaseRightB
  \end{mathpar}
  Algorithmic subtyping (check against)
  \hfill
  \fbox{$\JAlgSubtypeCheck\TEnv A B \Kind$}
  \begin{mathpar}
    \RuleASCheck    
  \end{mathpar}
  \caption{Algorithmic subtyping in \LDGV{}}
  \label{fig:ldgv-algorithmic-subtyping}
\end{figure*}


\begin{figure*}[t]
  Algorithmic Value Conversion
  \hfill\fbox{$    \JAlgConv\TEnv V W{A}$}
  \begin{mathpar}
    \RuleACReflNat

    \RuleACAssNat
  \end{mathpar}
  Algorithmic Value Unfolding
  \hfill\fbox{$\JAlgTypeUnfold\TEnv {A}{B} $}
  \begin{mathpar}
    \RuleAUnfoldRecZ

    \RuleAUnfoldRecS

    \RuleAUnfoldRecX
  \end{mathpar}
  Algorithmic kind synthesis
  \hfill\fbox{$\JAlgKindSynth\TEnv{A}{\Kind}$}
  \begin{mathpar}
    \RuleARecET

    \RuleARecAlpha
  \end{mathpar}
  Algorithmic subtyping
  \hfill\fbox{$\JAlgSubtypeSynthExt\TEnv A B \Kind \Subst$}
  \begin{mathpar}
    \RuleASRecLeftZ

    \RuleASRecLeftS

    \RuleASRecLeftX

    \RuleASTVarLeft
  \end{mathpar}
  Algorithmic type synthesis\hfill\fbox{$\JAlgTypeSynth\TEnv M A \TOut$}
  \begin{mathpar}
    \RuleANatE
  \end{mathpar}
  \caption{Extensions of algorithmic typing for natural numbers and recursor}
  \label{fig:nats-alg}
\end{figure*}



\section{Example of a Typing Derivation}
\label{sec:example-type-deriv}

 An an example, we present the type derivation for the function
\lstinline{sendNode} from Section~\ref{sec:furth-opport},
Listing~\ref{lst:motivation:sending-receiving-nodes}. For brevity we
write $E$ and $N$ for the labels Empty and Node.
\begin{align*}
  \textit{sendNode}
  &=
    \lambda_\MultUn (n:\textit{Node})
    \lambda_\MultUn (c:\textit{NodeC})
    \ELet{\EPair[] t v} n \SEND{(\SEND c\,t)}\, v
  \\
  \TEnv_1
  &= n:\textit{Node}, c:\textit{NodeC}
  \\
  \TEnv_0 = \UNR{\TEnv_1}
  &= n:\textit{Node}
  \\
  \textit{EN}
  &= \{E, N\}
  \\
  \TEnv_{tv}
  &=  t :\textit{EN}, v : \CASE t {E:\TUnit, N:\TInt}
\end{align*}
The typing environment $\TEnv_1$ shows up in the premises after
processing the two lambdas. We have 
$\Decompose{\TEnv_1}{\TEnv_1}{\TEnv_0}$ and $\TEnv_0 = \UNR{\TEnv_1}$.
\begin{mathpar}
  \inferrule*
  {\inferrule*
    {\TEnv_0 \vdash n : \textit{Node} \\
      {D} \\
      D_0
    }
    {\TEnv_1 \vdash \ELet{\EPair[] t v} n \SEND{(\SEND c\,t)}\, v : \TUnit}}
  {\EmptyEnv \vdash \textit{sendNode} : \Pi_\MultUn(n:\textit{Node})\Pi_\MultUn(c:\textit{NodeC})\TUnit}
\end{mathpar}
As $t$ has a label type, we apply the $\TirName{Sigma-G}$ rule at the
top.  The subderivation $D$ establishes formation for the type of $v$.
\begin{mathpar}
  \inferrule*
  {
    \inferrule*
    {\EnvForm \MultUn {\UNR{\TEnv_1}, t : \textit{EN}, x_E : t=E} }
    {\UNR{\TEnv_1}, t : \textit{EN}, x_E : t=E \vdash \TUnit : \KindUn}
    \\
    \inferrule*
    {\EnvForm \MultUn {\UNR{\TEnv_1}, t : \textit{EN}, x_N : t=N} }
    {\UNR{\TEnv_1}, t : \textit{EN}, x_N : t=N \vdash \TInt : \KindUn}
  }
  {\UNR{\TEnv_1}, t : \textit{EN} \vdash \CASE t {E:\TUnit, N:\TInt} : \KindUn}
\end{mathpar}
The subderivation $D_0$ checks the expression $\CASE t {E: \SEND{(\SEND c\,t)}\, v,
  N: \SEND{(\SEND c\,t)}\, v}$ which amounts to checking $\SEND{(\SEND
  c\,t)}\, v$ once with $t=E$ and once with $t=N$.
\begin{mathpar}
  D_0 = 
  \inferrule*
  {
    \inferrule*
    {
    D_1
    \\
      \inferrule*
      {\TEnv_1,\TEnv_{tv}, x_E : t=E \vdash v : \CASE t {E:\TUnit, N:\TInt}}
      {\TEnv_1,\TEnv_{tv}, x_E : t=E \vdash v : \TUnit}
    }
    {\TEnv_1,\TEnv_{tv}, x_E : t=E \vdash \SEND{(\SEND c\,t)}\, v : \TUnit}
    \\
    D_N
  }
  {\TEnv_1,\TEnv_{tv} \vdash \CASE t {E:\SEND{(\SEND c\,t)}\, v, N: \SEND{(\SEND c\,t)}\, v} : \TUnit}
\end{mathpar}
The omitted subderivation $D_N$ for $t=N$ is analogous. 
The subderivation $D_1$ analyses the application of the send operations.
\begin{mathpar}
  \inferrule*{
  \inferrule*
  {
    \inferrule*
    {\TEnv_1,\TEnv_{tv},x_E : t=E \vdash \SEND c : \Pi_\MultLin (t:\textit{EN})\CASE t{E:!\TUnit.\TUnit, N:!\TInt.\TUnit
      } \\
      \TEnv_0, \TEnv_{tv},x_E : t=E \vdash t : \textit{EN}
    }
    {
      \TEnv_1,\TEnv_{tv},x_E : t=E \vdash \SEND c\,t : \CASE t{E:!\TUnit.\TUnit, N:!\TInt.\TUnit
      }
    }
  }
  {
    \TEnv_1,\TEnv_{tv},x_E : t=E \vdash \SEND c\,t : !\TUnit.\TUnit
  }
}{
    \TEnv_1,\TEnv_{tv},x_E : t=E \vdash \SEND (\SEND c\,t) : \Pi_\MultLin(\TUnit)\TUnit
  }
\end{mathpar}


\section{Proofs for {\LDGV}}
\label{sec:metatheory}

This section collects the standard metatheoretical results for {\LDGV}
culminating in typing preservation and progress.

\begin{restatable}[Strengthening]{lemma}{Strengthening}
  \label{lem:strengthening}
    If $\TEnv,x\colon B \vdash A : \Kind$ and $x\notin\FV(A)$, then $\TEnv\vdash A :  \Kind$.
\end{restatable}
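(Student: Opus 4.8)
The plan is to prove strengthening by a single simultaneous induction on derivations, covering not just type formation but all the mutually-defined judgments on which it depends: environment formation $\EnvForm\MultUn\TEnv$, the value/expression typing judgment $\TEnv\vdash M:A$ (needed for the premises of \TirName{Equality-F} and \TirName{Lab-E'}, and for \TirName{Rec-F}), and—since typing in turn invokes \TirName{Sub-Type}—conversion and subtyping as well. The statement as given places the stripped binding $x{:}B$ at the right end of the context, but the binder rules break this invariant, so first I would generalize it to allow the binding anywhere: if $\TEnv, x{:}B, \SEnv \vdash A : \Kind$ and $x \notin \FV(A)\cup\FV(\SEnv)$, then $\TEnv,\SEnv\vdash A:\Kind$, and symmetrically for each companion judgment. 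The original lemma is the special case $\SEnv=\EmptyEnv$.

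The leaf and structural cases are routine. For \TirName{Unit-F}, \TirName{End-F}, \TirName{Lab-F}, and \TirName{Nat-F} the only premise is an environment-formation judgment $\EnvForm\MultUn{(\TEnv,x{:}B,\SEnv)}$; here I appeal to the environment-formation component of the induction to obtain $\EnvForm\MultUn{(\TEnv,\SEnv)}$ and reassemble. That component is itself proved by induction on $\SEnv$: deleting $x{:}B$ only requires re-deriving the well-formedness of each type of $\SEnv$ in the shortened prefix, which is exactly the type-formation component applied to a strictly smaller derivation, using $x\notin\FV(\SEnv)$. Rule \TirName{TVar-F} is analogous, and \TirName{Sub-Kind} merely re-applies subkinding to the strengthened premise.

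The interesting cases are the binder rules \TirName{Pi-F}, \TirName{Sigma-F}, \TirName{Ssn-Out-F}, \TirName{Ssn-In-F}, together with \TirName{Lab-E'}. Consider \TirName{Pi-F} for $\Pi_\Multm(y{:}A')B'$: its premise reads $\DemoteExtend{(\TEnv,x{:}B,\SEnv)} y {A'}\vdash B':\Multn$, and from $x\notin\FV(\Pi_\Multm(y{:}A')B')$ we get $x\notin\FV(A')$ and $x\notin\FV(B')$. Because conditional extension inspects only the kind of $A'$, via a kinding derivation over $\UNR{(\TEnv,x{:}B,\SEnv)}$ that does not mention $x$, deleting $x{:}B$ leaves the extension decision unchanged; concretely $\DemoteExtend{(\TEnv,\SEnv)} y {A'}$ is the result of removing $x{:}B$ from $\DemoteExtend{(\TEnv,x{:}B,\SEnv)} y {A'}$, with the binding $y{:}A'$ (when present) absorbed into an enlarged suffix $\SEnv'$ still satisfying $x\notin\FV(\SEnv')$. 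The induction hypothesis then strengthens the premise and \TirName{Pi-F} rebuilds the conclusion; \TirName{Sigma-F} and the two session rules have identical shape. Rule \TirName{Lab-E'} additionally carries a typing premise $\UNR{(\TEnv,x{:}B,\SEnv)}\vdash V:L$ and extends the context with an equality type $\Equation[L]V\ell$; since $x\notin\FV(V)$ these are discharged by the typing component of the mutual induction, and the inserted equality binding again contains no $x$.

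The main obstacle is precisely this commutation of conditional extension with context deletion: establishing that removing $x{:}B$ does not change which branch of conditional extension fires. This is not automatic from the syntactic statement, because the operator $\DemoteExtend\TEnv yA'$ is defined through a kinding judgment; it forces the kinding strengthening to be carried in the same simultaneous induction, and requires the small auxiliary observation that taking the unrestricted part of a context and forming a conditional extension both commute with deletion of a binding that is absent from the relevant types. Once that observation is in place, every case closes by applying the induction hypotheses to the structurally smaller premises and re-applying the same rule, so the generalized statement—and hence the lemma with $\SEnv=\EmptyEnv$—follows.
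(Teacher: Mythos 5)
Your proposal is correct and takes essentially the same approach as the paper: the paper's entire proof is the single line ``By rule induction,'' and your argument is a faithful elaboration of exactly that induction. The points you work out explicitly---generalizing the statement so the stripped binding may occur mid-context, running the induction mutually over the interdependent judgments (environment formation, typing, conversion, subtyping, kinding), and checking that conditional extension $\DemoteExtend\TEnv y{A'}$ commutes with deleting a binding absent from the relevant types---are precisely the details the paper leaves implicit.
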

\begin{proof}
  By rule induction.
\end{proof}

Weakening can only be established for additional unrestricted
bindings, that is, bindings for types $A:\KindUn$.

\begin{restatable}[Weakening]{lemma}{Weakening}
  \label{lemma:weakening}
    If $\TEnv \vdash A : \Kind$ and $\TEnv \vdash B : \Kindn$, then
    $\DemoteExtend\TEnv xB \vdash A : \Kind$.
\end{restatable}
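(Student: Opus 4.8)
The plan is to first collapse the statement using the definition of conditional extension. Since all type-formation environments are unrestricted, we have $\UNR\TEnv = \TEnv$, so $\DemoteExtend\TEnv x B$ reduces to one of two cases according to whether $B$ is (minimally) linear or unrestricted. If $B$ is linear, then $\DemoteExtend\TEnv x B = \TEnv$ and the goal $\DemoteExtend\TEnv x B \vdash A : \Kind$ is literally the hypothesis $\TEnv \vdash A : \Kind$, so there is nothing to prove. Hence the entire content lies in the case where $B$ is unrestricted, where $\DemoteExtend\TEnv x B = \TEnv, x:B$ and we must establish ordinary weakening by an unrestricted binding, namely $\TEnv, x:B \vdash A : \Kind$ with $x$ chosen fresh for $A$ and $\TEnv$.

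Second, I would prove this unrestricted-weakening statement by rule induction on the derivation of $\TEnv \vdash A : \Kind$. Because type formation in \LDGV{} is not self-contained—rules \TirName{Equality-F} and \TirName{Lab-E'} appeal to the expression typing judgment—this induction must run simultaneously with the analogous weakening statements for kinding, subtyping, type conversion, and expression typing; I would set up one mutual induction and present the type-formation slice here. The base cases (\TirName{Unit-F}, \TirName{End-F}, \TirName{Lab-F}, \TirName{Nat-F}, \TirName{TVar-F}) reduce to the single fact that adding an unrestricted binding preserves environment formation, i.e. $\EnvForm\MultUn\TEnv$ implies $\EnvForm\MultUn{(\TEnv, x:B)}$, which holds because $B$ is unrestricted. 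The cases \TirName{Equality-F} and \TirName{Lab-E'} additionally invoke the mutually-proved weakening for the typing judgment of the scrutinized value, and \TirName{Sub-Kind} is immediate since it leaves the environment untouched.

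Third, the genuinely delicate cases are those whose premises use conditional extension under a binder—\TirName{Pi-F}, \TirName{Sigma-F}, \TirName{Ssn-Out-F}, \TirName{Ssn-In-F}, and \TirName{Lab-E'}. For \TirName{Pi-F}, say, the premise is $\DemoteExtend\TEnv y C \vdash D : \Kindn$ while the premise we must reconstruct is $\DemoteExtend{(\TEnv, x:B)} y C \vdash D : \Kindn$. To reconnect the induction hypothesis (which appends $x:B$ at the end of whatever environment it is applied to) with the reassembly of the rule, I need the commutation fact $\DemoteExtend{(\TEnv, x:B)} y C = (\DemoteExtend\TEnv y C), x:B$, valid up to exchange of the independent bindings $x$ and $y$. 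This in turn requires that the linear/unrestricted classification of $C$ is stable under adding the unrestricted binding $x:B$—so that conditional extension selects the same branch before and after weakening—which is itself an instance of the (mutually proved) weakening for the kinding judgment applied to $C$. Granting this, the IH on the premise together with exchange yields the reassembled conclusion via the same rule.

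I expect the main obstacle to be precisely this interaction between conditional extension and weakening: making the commutation $\DemoteExtend{(\TEnv, x:B)} y C = (\DemoteExtend\TEnv y C), x:B$ precise, justifying the implicit exchange of the fresh binding $x$ past $y$, and—most subtly—threading the classification-stability of $C$ through the mutual induction. Since that auxiliary kinding derivation for $C$ is not a strict subderivation of the one being inducted on, some care is needed to keep the induction well-founded (for instance by inducting on a suitable size measure that accounts for these auxiliary derivations, or by treating kind classification as deterministic metadata rather than as an independent derivation). Everything else is bookkeeping over environment formation.
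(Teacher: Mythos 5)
Your proposal is correct and takes essentially the same approach as the paper: the paper's entire proof of this lemma is the single line ``By rule induction,'' which is precisely the induction you carry out. The extra structure you supply---the initial case split on whether $B$ is linear or unrestricted, the mutual induction with the typing/subtyping/conversion judgments forced by rules like \TirName{Equality-F} and \TirName{Lab-E'}, and the commutation of conditional extension $\DemoteExtend{(\TEnv, x:B)} y C = (\DemoteExtend\TEnv y C), x:B$ together with its well-foundedness subtlety---is all detail the paper leaves implicit, and your treatment of it is sound.
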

\begin{proof}
  By rule induction.
\end{proof}

The below proof includes the cases of the rules for natural numbers.

\begin{restatable}[Agreement]{lemma}{Agreement}
  \label{lem:agreement}
  \
  \begin{enumerate}
  \item If $\EnvForm\KindUn{\TEnv}$ and
    $\Decompose{\TEnv}{\TEnv_1}{\TEnv_2}$, then
    $\EnvForm\KindUn{\TEnv_1}$ and $\EnvForm\KindUn{\TEnv_2}$.
  \item If $\TEnv \vdash A : \Kind$, then $\EnvForm\KindUn{\TEnv}$. 
  \item If $\TEnv \vdash \Convertible A B : \Kind$, then 
    $\TEnv \vdash A : \Kind$ and $\TEnv \vdash B : \Kind$. 
  \item If $\TEnv \vdash A \Subtype B : \Kind$, then 
    $\TEnv \vdash A : \Kind$ and $\TEnv \vdash B : \Kind$. 
  \item If $\TEnv \vdash M : A$, then $\UNR\TEnv \vdash A: \Kind$,
    for some $\Kind$.
  \end{enumerate}
\end{restatable}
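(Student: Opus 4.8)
The plan is to prove the five statements by a single mutual rule induction, since the underlying judgments are interdependent: type formation (2) appeals to the typing judgment through \TirName{Equality-F} and \TirName{Lab-E'}, typing (5) appeals to both formation and subtyping, subtyping (4) is generated by conversion via \TirName{Sub-Conv}, and conversion (3) bottoms out in formation. I would order the work as (1), then (2)--(4) together, then (5), carrying a shared induction hypothesis keyed on derivation height so that every cross-reference lands on a strictly smaller derivation.

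Statement (1) I would dispatch first by induction on the splitting derivation $\Decompose\TEnv{\TEnv_1}{\TEnv_2}$, which is essentially self-contained modulo Strengthening (Lemma~\ref{lem:strengthening}). The empty case is immediate. In the duplicating case the two side conditions $\TEnv_1 \vdash A : \Un$ and $\TEnv_2 \vdash A : \Un$ combine with the induction hypotheses to re-establish $\EnvForm\KindUn{\TEnv_1}$ and $\EnvForm\KindUn{\TEnv_2}$ directly. In the two linear cases the retained type is only known at kind $\Lin$, so I would recover the needed $\Un$-formation by inverting the hypothesis $\EnvForm\KindUn\TEnv$ to get $\TEnv \vdash A : \Un$ and then strengthening it down to the smaller context; this is sound because the discarded bindings cannot be free in $A$, as the split rule's own premise already types $A$ in that smaller context.

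For statements (2)--(4) I would induct on the formation, conversion, and subtyping derivations respectively. For (2) the axioms \TirName{Unit-F}, \TirName{End-F}, \TirName{Lab-F} (and \TirName{Nat-F}, \TirName{TVar-F}) carry $\EnvForm\KindUn\TEnv$ outright, while the binder rules \TirName{Pi-F}, \TirName{Sigma-F}, \TirName{Ssn-Out-F}, \TirName{Ssn-In-F} yield unrestrictedness of the conditionally extended environment $\DemoteExtend\TEnv x A$ by induction, from which $\EnvForm\KindUn\TEnv$ follows because conditional extension only appends an unrestricted binding or leaves $\TEnv$ untouched; \TirName{Equality-F} and \TirName{Lab-E'} are closed by the induction hypotheses on their formation premises. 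For (3) and (4) each rule either hands off to a smaller instance (\TirName{Sub-Conv} to (3); \TirName{Conv-Refl} to its formation premise) or reassembles both endpoints with the matching formation rule after applying the induction hypotheses to the components, threading the conditional-extension bookkeeping exactly as in (2). The delicate case is \TirName{Sub-Case}, where the two case types range over distinct label sets $L$ and $L'$ and the shared assumption is $x : L \cap L'$; re-deriving formation of each case type over its own index set forces me to consult the rule's dedicated premises for the off-diagonal branches $\ell \in L \setminus L'$ and $\ell \in L' \setminus L$, which exist precisely for this purpose.

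Statement (5) is the crux and proceeds by induction on the typing derivation, using $\Decompose\TEnv\TEnv{\UNR\TEnv}$ and Weakening (Lemma~\ref{lemma:weakening}) to pass between $\TEnv$ and $\UNR\TEnv$ as the formation judgment requires. Most rules conclude with a type supplied by a premise or rebuilt by a formation rule whose hypotheses come from the induction hypotheses, occasionally after strengthening away an equality binding that is not free in the result type. I expect the main obstacle to be \TirName{Pi-E}, whose result type is the substituted $B[N/x]$: establishing $\UNR\TEnv \vdash B[N/x] : \Kind$ requires a type-formation substitution lemma that replaces a variable of label (or $\TNat$) type by a value, propagating the attendant equation through the conditional extensions and remaining coherent with \TirName{Conv-Beta} / \TirName{Conv-S} once the substituted value is a concrete label or numeral. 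I would isolate this substitution property as an auxiliary lemma, proved by its own induction ahead of the main argument so as to avoid circularity; the analogous substituted conclusions of \TirName{Sigma-I}, \TirName{Sigma-G}, and the recursor rule \TirName{Nat-E} then fall to the same lemma.
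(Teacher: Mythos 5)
Your skeleton is the paper's: a mutual rule induction over all five judgments, supported by Strengthening, Weakening, and a type-level Substitution lemma, and your identification of the substitution lemma as the key to \TirName{Pi-E} (and likewise \TirName{Sigma-I}, \TirName{Sigma-G}, \TirName{Nat-E}) matches the paper's proof exactly. However, your generic recipe for parts (3) and (4) --- ``reassemble both endpoints with the matching formation rule after applying the induction hypotheses'' --- breaks down on precisely the cases the paper singles out, and these are genuine gaps rather than routine bookkeeping. In \TirName{Sub-Pi} (and \TirName{Sub-Send}, \TirName{Sub-Recv}, \TirName{Sub-Case}) the induction hypothesis types $B$ only in the environment $\DemoteExtend\TEnv x{A'}$, whereas forming the left-hand side $\Pi_\Multm(x:A)B$ via \TirName{Pi-F} requires $\DemoteExtend\TEnv x{A} \vdash B : \Kind_B$; since $A$ and $A'$ are distinct types related only by subtyping, you need a Context Subtyping lemma (the paper's Lemma~\ref{lemma:context-subtyping}: formation is preserved when the type of a conditional binding is replaced by a supertype), which your outline never supplies. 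Similarly, in \TirName{Conv-Subst} the premises type only $V$ (namely $\TEnv \vdash V : L$ and each branch under the hypothesis $\Equation V\ell$), yet forming the right-hand endpoint $\CASE W{\overline{\ell:A_\ell}}$ needs $\UNR\TEnv \vdash W : L$ and the branches under $\Equation W\ell$; transferring these across the assumption $y : \Equation VW$ requires the properties-of-equality-type lemma (Lemma~\ref{lemma:equality-type}), not a formation rule, and your treatment of conversion does not provide it.

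The third gap is in part (5), case \TirName{Ssn-I}: the result type of $\NEW$ is $\Sigma(x:S)\DUAL S$, and its formation demands $\UNR\TEnv \vdash \DUAL S : \KindSession$ in addition to the premise $\UNR\TEnv \vdash S : \KindSession$. Nothing in your outline yields kinding of dual session types; the paper invokes Kinding Duality here (Lemma~\ref{lemma:kinding-duality}: $\TEnv \vdash S : \Kind$ implies $\TEnv \vdash \DUAL S : \Kind$). None of these three omissions is fatal to your overall strategy --- each is repaired by stating and proving the corresponding auxiliary lemma ahead of the main induction, exactly as the paper does --- but as written, the inductive steps for \TirName{Sub-Pi}, \TirName{Conv-Subst}, and \TirName{Ssn-I} do not go through.
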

\begin{proof}
  By mutual rule induction on the various hypotheses.
  \par
  (3) Use Weakening (Lemma~\ref{lemma:weakening}) in the case of rule
  \TirName{Conv-Beta};
  Properties of equality type (Lemma~\ref{lemma:equality-type}) in the
  case of rule \TirName{Conv-Subst};
  Substitution (Lemma~\ref{lemma:type-substitution}) in the case of
  \TirName{Conv-S}.

  (4) Use Context Subtyping (Lemma~\ref{lemma:context-subtyping}) in
  the case of rule \TirName{Sub-Pi}, \TirName{Sub-Send},
  \TirName{Sub-Rev}, and \TirName{Sub-Case}.

  (5) Use Strengthening (Lemma~\ref{lem:strengthening}) in the case of
  rules \TirName{Lab-E}, \TirName{Sigma-E}, \TirName{Lab-G}, and
  \TirName{Nat-E};
  Substitution (Lemma~\ref{lemma:type-substitution}) in the case of
  rule \TirName{Pi-E} ;
  Properties of context split (Lemma~\ref{lemma:context-split}) in
  the case of rules \TirName{Sigma-I}, \TirName{Sigma-E}, and
  \TirName{Sigma-G};
  Weakening (Lemma~\ref{lemma:weakening}) in the case of rule
  \TirName{Ssn-I};
  Kinding duality (Lemma~\ref{lemma:kinding-duality}) in the case of
  rule \TirName{Ssn-I}.
\end{proof}

The following lemma introduces basic properties of the context split
operations, used in mostly other results.

\begin{lemma}[Properties of context split]
  \label{lemma:decomposition-preserves-wellformedness}
  \label{lemma:decomposition-unrestricted}
  \label{lemma:properties-un}
  \label{lemma:context-split}
  Suppose that $\Decompose{\TEnv}{\TEnv_1}{\TEnv_2}$.
  \begin{enumerate}
  \item If $\EnvForm \Multm \TEnv$,
  then $\EnvForm \Multm {\TEnv_1}$ and $\EnvForm \Multm{ \TEnv_2}$.
  \item If $\EnvForm \MultUn \TEnv$, then $\TEnv = \TEnv_1 = \TEnv_2$.
  \item\label{item:uniquely-det} If $\EnvForm \MultUn{ \TEnv_1}$, then $\TEnv = \TEnv_2$ and $\TEnv_1$
    is uniquely determined by $\TEnv$.
  \item\label{item:split-omega} If $\EnvForm \MultUn{ \TEnv_1}$ and
    $\EnvForm \MultUn{ \TEnv_2}$, then $\EnvForm \MultUn \TEnv$.
  \item\label{item:split-commutative} $\Decompose{\TEnv}{\TEnv_2}{\TEnv_1}$.
  \item\label{item:split-demote} $\DecomposeOp{\UNR\TEnv_1}{\TEnv_2}$ is defined.
  \item $\Decompose\TEnv\TEnv{\UNR\TEnv}$.
  \item $\UNR\TEnv = \UNR{\TEnv_1} = \UNR{\TEnv_2}$.
  \end{enumerate}
\end{lemma}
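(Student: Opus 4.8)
The plan is to prove all eight parts by induction, choosing the induction principle per part and ordering them so that later parts may invoke earlier ones. Parts (1), (5), and (8) are pure structural inductions on the derivation of $\Decompose\TEnv{\TEnv_1}{\TEnv_2}$: the empty case is immediate, and for each of the three compound rules I read off how the fresh binding $z{:}A$ is distributed and check that the claimed property is preserved. For (1) I use subkinding to lift $\TEnv_i \vdash A : \MultUn$ to $\TEnv_i \vdash A : \Multm$ in the unrestricted-split rule; for (5) I observe that the unrestricted rule is symmetric while the two linear rules are exchanged by swapping $\TEnv_1$ and $\TEnv_2$; for (8) I use that $\UNR{}$ retains exactly the bindings duplicated by the unrestricted rule and discards those introduced by a linear rule, so $\UNR{}$ of all three environments coincides in every case.

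Parts (2), (4), and (3) are the ``unrestricted-forcing'' statements and all run by induction on the split, with the essential step being an inversion argument: when a binding $z{:}A$ has been placed by one of the two linear rules, its host component is extended with a genuinely linear $A$ and therefore cannot be unrestricted. Hence in (2), where $\EnvForm\MultUn\TEnv$, inversion on environment formation shows every binding is unrestricted, so only the duplicating rule can have fired and $\TEnv_1 = \TEnv_2 = \TEnv$; in (4), if both $\TEnv_1$ and $\TEnv_2$ are unrestricted then again no linear rule fired, whence $\TEnv$ is unrestricted; and (3) combines the same reasoning with parts (5) and (7) (established independently below) to identify $\TEnv_1$ with $\UNR\TEnv$, giving uniqueness. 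The delicate point here is that the linear- and unrestricted-split premises must genuinely partition the bindings according to the multiplicity of $A$; I rely on inversion of the environment-formation rules to rule out the spurious overlap that bare subkinding would otherwise allow.

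The $\UNR{}$-interaction parts (6) and (7) are where the real work sits, and (7) is the one I expect to be the main obstacle. I prove (7), $\Decompose\TEnv\TEnv{\UNR\TEnv}$, by induction on $\TEnv$: for $\TEnv', z{:}A$ I split on whether $A$ is unrestricted. If it is, then $\UNR{(\TEnv', z{:}A)} = \UNR{\TEnv'}, z{:}A$ and I apply the duplicating rule to the split given by the induction hypothesis; if $A$ is linear, then $\UNR{}$ discards $z{:}A$ and I apply the left-linear rule. In either case the split rule demands a kinding judgment for $A$ relative to $\TEnv'$ (resp.\ $\UNR{\TEnv'}$), whereas the definition of $\UNR{}$ tests the kind of $A$ relative to $\UNR{\TEnv'}$; reconciling the two requires transporting the kinding of $A$ between $\TEnv'$ and its unrestricted part. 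This is exactly where I invoke the already-established Strengthening (Lemma~\ref{lem:strengthening}) and Weakening, together with the structural fact that a well-formed type never mentions a linear variable, so that $\FV(A)$ lies entirely within $\UNR{\TEnv'}$ and the two kinding judgments agree. Part (6) follows by a parallel induction on the given split, constructing $\DecomposeOp{\UNR{\TEnv_1}}{\TEnv_2}$ rule by rule and discharging its side conditions by the same kind-transport argument. The principal risk throughout is keeping these kinding side-conditions satisfied as the environments are rearranged, so I would first isolate the kind-transport statement ($\TEnv \vdash A : \Kind$ iff $\UNR\TEnv \vdash A : \Kind$) as an explicit auxiliary before tackling (6) and (7).
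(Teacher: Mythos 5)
You choose rule induction on whichever derivation each item hypothesizes, order the items so that later ones may cite earlier ones, and single out items (6) and (7) as the hard cases; structurally this is exactly the paper's proof, which consists of the single sentence ``By rule induction on the various hypotheses.'' Items (1) and (5) go through precisely as you describe. However, the two mechanisms you invoke at the points you yourself flag as delicate do not hold up, and taken literally those steps fail.

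First, the overlap between the linear and the unrestricted split rules cannot be ``ruled out by inversion of the environment-formation rules.'' The linear-left rule's premise is the kinding judgment $\TEnv_1 \vdash A : \MultLin$, and because $\MultUn \MultLE \MultLin$ together with \TirName{Sub-Kind} lets every unrestricted type inhabit kind $\MultLin$, that premise is derivable for $A = \TUnit$: the split $\Decompose{(z:\TUnit)}{(z:\TUnit)}{\EmptyEnv}$ is then derivable, falsifying item (2) outright. Inversion of environment formation only yields $\TEnv \vdash A : \MultUn$ for each binding, which does not contradict the linear premise; the overlap can only be excluded by reading the kinding premises of the split (and conditional-extension) rules as \emph{principal} kinds, so that the linear rules apply only to types that are not unrestricted --- a convention about the rules, not a fact one derives by inversion. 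Second, your kind-transport auxiliary ``$\TEnv \vdash A : \Kind$ iff $\UNR\TEnv \vdash A : \Kind$'' contradicts the paper's Agreement lemma (Lemma~\ref{lem:agreement}), whose second item says that any derivable kinding judgment forces $\EnvForm\MultUn\TEnv$; the direction you actually need for item (7) --- passing from $\UNR{\TEnv'} \vdash A : \MultUn$ to $\TEnv' \vdash A : \MultUn$ so that the duplicating rule can fire --- is exactly the false one whenever $\TEnv'$ contains a linear binding (take $\TEnv' = c : {!(x:\TUnit)}\TEnd$ and $A = \TUnit$: rule \TirName{Unit-F} demands an unrestricted environment). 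For item (7) to be true at all, the kinding premises of the split rules must themselves be understood relative to the unrestricted parts of the environments, after which no transport lemma is needed. In fairness, both defects mirror informalities that the paper's own one-line proof silently glosses over; but your proposal, as written, breaks at exactly these two steps.
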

\begin{proof} By rule induction on the various hypotheses. 
\end{proof}


\begin{lemma}[Properties to equality type]
  \label{lemma:equality-type}
    Let $\TEnv \vdash \Equation VW : m$.
    \begin{enumerate}
    \item If $\TEnv \vdash V : A$, then  $\TEnv \vdash W : A$.
    \item If $\TEnv x : \Equation VU \vdash A : m$, then
      $\TEnv x : \Equation WU \vdash A : m$.
    \end{enumerate}
\end{lemma}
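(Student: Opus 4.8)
The plan is to handle both parts by first inverting the hypothesis $\TEnv \vdash \Equation VW : m$. The only type-formation rule that can introduce an equality type is \TirName{Equality-F} (possibly followed by \TirName{Sub-Kind} to raise the kind), so inversion yields a common type $A_0$ and a kind $m_0 \Subkind m$ with $\TEnv \vdash V : A_0$, $\TEnv \vdash W : A_0$, and $\TEnv \vdash A_0 : m_0$. This common typing of the two sides is the single fact about equality types that both parts rely on; everything else is standard structural reasoning using the already-established Weakening (Lemma~\ref{lemma:weakening}), Strengthening (Lemma~\ref{lem:strengthening}), and Agreement (Lemma~\ref{lem:agreement}).

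For part~(1) I would proceed by induction on the derivation of $\TEnv \vdash V : A$. The case \TirName{Sub-Type}, where $V : A'$ with $\UNR\TEnv \vdash A' \Subtype A : \Kind$, is immediate: the induction hypothesis gives $\TEnv \vdash W : A'$, and one reapplies \TirName{Sub-Type} to conclude $\TEnv \vdash W : A$. This reduces the problem to the syntax-directed cases, i.e.\ to the minimal type of $V$. In these cases $V$ is a variable, a label, or (in the extension) a numeral, and I would use the common type $A_0$ obtained by inversion to show that $V$ and $W$ admit the same types, so that the minimal type of $V$ is also a type of $W$. The content here is a small generation argument for the value shapes that may occur on the two sides of a well-formed equality, combined with subsumption to lift back to the general $A$.

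For part~(2) the first step is to note that $\Equation WU$ is itself well formed: the binding $x\colon\Equation VU$ occurs in a well-formed context, so by Agreement $\TEnv \vdash \Equation VU : m$, and inverting \TirName{Equality-F} gives a common type $B_0$ for $V$ and $U$; applying part~(1) to $\TEnv \vdash V : B_0$ transfers this to $\TEnv \vdash W : B_0$, whence \TirName{Equality-F} yields $\TEnv \vdash \Equation WU : m$. It then remains to rebuild the derivation of $A : m$ under the swapped binding. I would do this by induction on $\TEnv, x\colon\Equation VU \vdash A : m$: every rule except those that actually consult the binding $x$ is reproduced verbatim, and the only place the binding is consulted is the conversion machinery (\TirName{Conv-Subst}, and at the algorithmic level \TirName{AC-Assoc}), which uses $x$ to rewrite along the stored equation.

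I expect this conversion step in part~(2) to be the main obstacle: replacing $x\colon\Equation VU$ by $x\colon\Equation WU$ changes which rewrites the equation licenses (from $V\!\leftrightarrow\! U$ to $W\!\leftrightarrow\! U$), so closing the induction requires that $V$ and $W$ be interchangeable not merely as typing subjects (part~(1)) but throughout conversion. The way to discharge this is to combine the well-formedness of $\Equation VW$ with part~(1): wherever the old derivation rewrote $V$ to $U$ using the binding, the new derivation factors the step through $W$, using that $V$ and $W$ share their types. Care is needed because equalities may be vacuous (e.g.\ an impossible branch of a \TirName{Lab-E}), so the argument must not presuppose that the stored equation is satisfiable; it only transports well-formedness of $A$, never its inhabitation.
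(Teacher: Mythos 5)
You cannot be faulted for not matching the paper here: the paper's own proof of this lemma is an unfinished author note (a literal ``TODO''), so there is nothing to compare against and your argument must stand on its own. It does not, and the failure is at exactly the step you leave informal in part~(1). Inverting \TirName{Equality-F} yields only that $V$ and $W$ share \emph{one} type $A_0$; it does not follow that ``$V$ and $W$ admit the same types,'' and in this system they need not. Concretely, let $\TEnv = x : \Setof{a,b}$ and type-check $\CASE x {a: N_a,\, b: N_b,\, c: N_c}$. Rule \TirName{Lab-E} applies, because $\TEnv \vdash x : \Setof{a,b,c}$ by \TirName{Sub-Lab} and \TirName{Sub-Type}, and its $c$-branch is checked under the extra binding $y : \Equation x c$ (conditional extension keeps it, since equality types are unrestricted). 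This equality type is well formed, with common type $A_0 = \Setof{a,b,c}$, and $\TEnv, y : \Equation x c \vdash x : \Setof{a,b}$ holds by \TirName{Name}; yet $\TEnv, y : \Equation x c \vdash c : \Setof{a,b}$ is underivable: \TirName{Lab-I} gives $c$ only the type $\Setof{c}$, the inclusion $\Setof{c} \subseteq \Setof{a,b}$ needed for \TirName{Sub-Lab} fails, and no conversion rule rewrites a bare label type (\TirName{Conv-Subst} only rewrites the head of a \CASEK{} type). So the minimal type of $V$ need not be a type of $W$: the vacuous-equation situation, which you correctly flag at the end but do not act on, is not a technical nuisance to be absorbed by your induction --- it defeats the very step your induction rests on, and it shows part~(1) cannot be proved without a stronger hypothesis (e.g.\ tying the type $A$ to the hidden annotation of the equality, or restricting to equations whose annotation is the declared type of $V$).

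Part~(2) has a second, independent gap. To replay a \TirName{Conv-Subst} step that rewrote along $\Equation VU$ in a derivation now carrying only $x : \Equation WU$, you propose to ``factor the step through $W$, using that $V$ and $W$ share their types.'' But sharing a type licenses no conversion: \TirName{Conv-Subst} demands an \emph{inhabitant} of $\Equation VW$ in the typing environment, and the lemma's hypothesis supplies only well-formedness of that type. At the one place the paper invokes this lemma --- the \TirName{Conv-Subst} case of Agreement --- an inhabitant $\TEnv \vdash y : \Equation VW$ is in fact available, and with it your factoring would go through by \TirName{Conv-Trans}; but that stronger hypothesis is not what the lemma states, and even it would not repair part~(1), since the counterexample above has an inhabited equation. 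As written, both parts of your proof rely on transfers between $V$ and $W$ that the assumed judgment $\TEnv \vdash \Equation VW : m$ is simply too weak to justify.
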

\begin{proof}
  \vv{TODO}
\end{proof}

\begin{restatable}{lemma}{LDGVUnrestrictedValuesEnvironments}\label{lemma:unrestricted-values-have-unrestricted-environments}
  If $\TEnv \vdash V : A$ and $\UNR\TEnv \vdash A : \KindUn$, then
  $\EnvForm \MultUn \TEnv$.
\end{restatable}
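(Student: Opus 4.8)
The plan is to proceed by induction on the derivation of $\TEnv \vdash V : A$, case-splitting on the last rule applied. Since $V$ is a value, the last rule is either a value-introduction rule or \TirName{Sub-Type}. For each introduction rule the conclusion $\EnvForm\MultUn\TEnv$ is essentially read off from an environment-formation premise, once the hypothesis $\UNR\TEnv \vdash A : \KindUn$ is used to pin down the relevant multiplicity; \TirName{Sub-Type} is then reduced to its sub-derivation by an auxiliary monotonicity property of subtyping with respect to kinds.

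First I would dispatch the base values. For \TirName{Lab-I} and \TirName{Unit-I} the premise is literally $\EnvForm\MultUn\TEnv$, and the same holds for $\Zero$ (rule \TirName{Nat-I-Z}) and $\Succ(W)$ (rule \TirName{Nat-I-S}, after an immediate appeal to the induction hypothesis, using $\TNat:\KindUn$) from Figure~\ref{fig:nats}. For a variable $V=z$ (rule \TirName{Name}) the premise is $\EnvForm\MultUn(\DemoteExtend{\TEnv_1}z{A},\TEnv_2)$; since $A:\KindUn$ in $\UNR\TEnv$, conditional extension retains the binding $z:A$, so the premise already reads $\EnvForm\MultUn\TEnv$. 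For $V=\lambda_\Multm(x:A').M$ (rule \TirName{Pi-I}) the premise is $\EnvForm\Multm\TEnv$; inverting the kinding of $\Pi_\Multm(x:A')B$ shows its kind is $\Multm$, and since $\Multm\Subkind\KindUn$ forces $\Multm=\KindUn$, this premise is exactly $\EnvForm\KindUn\TEnv$. The send value $V=\SEND W$ (rule \TirName{Ssn-Send-E}) is vacuous, since its type $\Pi_\KindLin(x:A')S$ has kind $\KindLin$ and so cannot be given kind $\KindUn$, contradicting the hypothesis.

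The interesting composite case is the dependent pair $V=\DPair{x:A'}{W_1}{W_2}$ (rule \TirName{Sigma-I}) with $\Decompose\TEnv{\TEnv_1}{\TEnv_2}$. Inverting the kinding of $\Sigma(x:A')B$ at $\KindUn$ via \TirName{Sigma-F} yields $A':\KindUn$ and $B:\KindUn$ in the appropriate unrestricted contexts. Using $\UNR{\TEnv_1}=\UNR{\TEnv_2}=\UNR\TEnv$ (Lemma~\ref{lemma:context-split}), the induction hypothesis applied to $W_1$ gives $\EnvForm\MultUn{\TEnv_1}$, and applied to $W_2$—after absorbing the unrestricted binding $x:A'$ and the equality binding by Weakening (Lemma~\ref{lemma:weakening}) and discarding them again by Strengthening (Lemma~\ref{lem:strengthening})—gives $\EnvForm\MultUn{\TEnv_2}$. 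Recombining by Lemma~\ref{lemma:context-split}(\ref{item:split-omega}) yields $\EnvForm\MultUn\TEnv$.

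The main obstacle is the \TirName{Sub-Type} case: from $\TEnv\vdash V:A$ obtained from $\TEnv\vdash V:A_0$ and $\UNR\TEnv\vdash A_0\Subtype A:\Kind$ with $\UNR\TEnv\vdash A:\KindUn$, I must produce $\UNR\TEnv\vdash A_0:\KindUn$ before invoking the induction hypothesis on the sub-derivation. What is needed is that subtyping is \emph{kind-monotone}: whenever $\Gamma\vdash A_0\Subtype A:\Kind$ and $\Gamma\vdash A:\KindUn$, also $\Gamma\vdash A_0:\KindUn$. I would attempt this by induction on the subtyping derivation, appealing to Agreement (Lemma~\ref{lem:agreement}): the \TirName{Sub-Pi} case is forced by the side condition $\Multm\MultLE\Multn$, so $\Multn=\KindUn$ entails $\Multm=\KindUn$; \TirName{Sub-Sigma} and \TirName{Sub-Case} recurse componentwise; \TirName{Sub-Lab} is immediate; and \TirName{Sub-Send}/\TirName{Sub-Recv} are vacuous because session types never carry kind $\KindUn$. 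I expect the genuinely delicate point—and the step most likely to require extra hypotheses or a normalization argument—to be \TirName{Sub-Conv}, since \TirName{Conv-Beta} can identify an unrestricted type with a case on a concrete label whose \emph{other} branches are linear; controlling how the kind of a case type behaves across \TirName{Conv-Beta} and \TirName{Conv-Eta} is where the real work of this lemma lies.
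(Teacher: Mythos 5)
You follow the same overall route as the paper's proof: rule induction on $\TEnv \vdash V : A$, with the value-introduction cases dispatched just as the paper does them --- \TirName{Unit-I}, \TirName{Lab-I} and the two natural-number rules read the conclusion off their premises, \TirName{Name} combines its formation premise with the kinding hypothesis, \TirName{Pi-I} forces $\Multm = \MultUn$ from the kind of the $\Pi$-type, \TirName{Ssn-Send-E} is vacuous because $\Pi_\KindLin(x:A')S$ cannot have kind $\KindUn$, and \TirName{Sigma-I} goes componentwise (the paper is terser there, citing Agreement, but your version is sound). The gap is exactly where you yourself locate the ``real work'': the \TirName{Sub-Type} case. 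You reduce it to an auxiliary kind-monotonicity claim --- if $\Gamma \vdash A_0 \Subtype A : \Kind$ and $\Gamma \vdash A : \KindUn$, then $\Gamma \vdash A_0 : \KindUn$ --- which you leave unproven; worse, that claim is false in this system, so the proposal cannot be completed as written.

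Concretely, let $C = \CASE{a}{a:\TUnit,\, b:{!(x:\TUnit)}\TEnd}$. By \TirName{Lab-E'} every branch must carry the kind of the whole case type; since ${!(x:\TUnit)}\TEnd$ has only kind $\KindLin$ (rule \TirName{Ssn-Out-F}) and subkinding permits only $\KindUn \Subkind \KindLin$, the type $C$ has kind $\KindLin$ and not $\KindUn$. Yet \TirName{Conv-Beta} (with both branches kinded at $\KindLin$) yields $\TEnv \vdash \Convertible{C}{\TUnit} : \KindLin$, hence $\TEnv \vdash C \Subtype \TUnit : \KindLin$ by \TirName{Sub-Conv}, while $\TEnv \vdash \TUnit : \KindUn$. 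So monotonicity fails through precisely the \TirName{Conv-Beta} scenario you flagged, and no normalization argument will rescue the statement in that generality. The paper instead closes this case in one line via Agreement (Lemma~\ref{lem:agreement}): from the premise $\UNR\TEnv \vdash C \Subtype A : \Kindm$ it extracts a kinding for $C$ at the kind $\Kindm$ \emph{annotating the subtyping judgment}, rather than at an arbitrary kind that $A$ happens to possess (the paper's step is itself terse about why $\Kindm$ can be taken to be $\KindUn$, but it rests on a stated lemma rather than on a false one). To repair your proof you would have to argue at that judgment kind and track when it can be chosen to be $\KindUn$, or else weaken your auxiliary lemma so that the case-type-with-linear-branch situation is excluded; as it stands, the \TirName{Sub-Type} case --- the crux of the lemma --- remains open.
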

\begin{proof}
  By rule induction on the first hypothesis.
  
  \textbf{Cases }\TirName{Unit-I}, \TirName{Lab-I}, and
  \TirName{Z-I}. The conclusion is one of the premises to the rule.

  \textbf{Case }\TirName{Lab-E}. From premise $\UNR\TEnv \vdash V : L$
  and Agreement (Lemma~\ref{lem:agreement}) we know that
  $\UNR\TEnv \vdash L : \Multm$. The result follows from the premise of
  rule \TirName{Lab-F}, the only rule that applies.

  \textbf{Case }\TirName{Name}. From the premise and the second
  hypothesis to the lemma.

  \textbf{Case }\TirName{Pi-I}. From the rule we read that
  $C = \Pi_\Multm (x:A)B$ and $\EnvForm \Multm \TEnv$.  By hypothesis
  $\UNR{\TEnv} \vdash C:\KindUn$ it must be that $\Multm = \MultUn$,
  which proves the claim.

  \textbf{Case }\TirName{Sigma-I}. We have $C= \Sigma (x:A)B$ as
  well as the following premise $\UNR\TEnv \vdash C : \MultUn$. The
  result follows from Agreement.

  \textbf{Case }\TirName{Sssn-Send-E}. We find that
  $C = \Pi_\MultLin (x:A)B$ which contradicts the assumption
  $\UNR{\TEnv} \vdash C : \MultUn$. This contradiction establishes the
  claim.
  
  \textbf{Case }\TirName{Sub-Type}. Premises are $\TEnv \vdash V:C$
  and $\UNR\TEnv \vdash C \Subtype A : \Kindm$. From hypothesis
  $\UNR\TEnv \vdash A: \MultUn$ and Agreement we have
  $\UNR\TEnv \vdash C : \MultUn$. By induction $\TEnv \vdash \MultUn$.
  
  \textbf{Case }\TirName{S-I}. By induction.
\end{proof}

\begin{lemma}[Context subtyping]
  \label{lemma:context-subtyping}
  If $\UNR\TEnv \vdash A \Subtype B$ and
  $\DemoteExtend\TEnv xA \vdash C \colon \Kind$, then
  $\DemoteExtend\TEnv xB \vdash C \colon \Kind$.
\end{lemma}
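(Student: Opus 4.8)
The statement is a monotonicity property of type formation: replacing the type of a conditionally-bound variable by a supertype preserves well-formedness of $C$. The plan is to first reduce to a single substantive case by analysing the kinds of $A$ and $B$, and then to dispatch that case by rule induction on the derivation of $\DemoteExtend\TEnv xA \vdash C : \Kind$. Throughout I note that type-formation contexts are unrestricted, so $\UNR\TEnv = \TEnv$.

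By Agreement (Lemma~\ref{lem:agreement}(4)) the hypothesis $\UNR\TEnv \vdash A \Subtype B$ yields $\TEnv \vdash A : \Kind'$ and $\TEnv \vdash B : \Kind'$, so I may case-split on which clause of conditional extension fires for $A$ and for $B$. If $A$ is linear, then $\DemoteExtend\TEnv xA = \TEnv$, the premise reads $\TEnv \vdash C : \Kind$, and the goal $\DemoteExtend\TEnv xB \vdash C : \Kind$ follows in one step from Weakening (Lemma~\ref{lemma:weakening}) applied with $\TEnv \vdash B : \Kind'$; this single appeal covers both $B$ linear and $B$ unrestricted. If $A$ is unrestricted but $B$ is linear, then $\DemoteExtend\TEnv xB = \TEnv$ and I must remove $x : A$; the key observation is that meaningful type-level dependency is confined to label and $\TNat$ types, which are unrestricted and whose supertypes stay unrestricted, so $x \in \FV(C)$ would force $B$ to be unrestricted, contradicting the case. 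Hence $x \notin \FV(C)$ and Strengthening (Lemma~\ref{lem:strengthening}) applies.

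The substantive case is $A$ and $B$ both unrestricted, where $\DemoteExtend\TEnv xA = \TEnv, x:A$ and $\DemoteExtend\TEnv xB = \TEnv, x:B$, and I proceed by rule induction on $\TEnv, x:A \vdash C : \Kind$. The rules for $\TUnit$, $\TEnd$, and label types do not touch the binding and go through directly; the rules for $\Pi$, $\Sigma$, and the send/receive types apply the induction hypothesis to their body subderivations, threading the replacement through the extra conditionally-bound variable each rule introduces and appealing to Weakening and to the properties of the equality type (Lemma~\ref{lemma:equality-type}) for the equations $\Equation V \ell$ pushed onto the context in the branches of \TirName{Lab-E'}. The crux is \TirName{Lab-E'} itself, where $C = \CASE V {\overline{\ell : C_\ell}^{\ell\in L}}$ and the premise demands $\UNR{(\TEnv, x:A)} \vdash V : L$: this is the only place where the binding for $x$ is genuinely consumed, and the only place where $A \Subtype B$ is used.

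The main obstacle is precisely re-deriving the scrutinee typing $V : L$ once the binding has changed to $x : B$. When $V$ does not mention $x$ this is immediate; the delicate situation is $V = x$, where the original derivation produces $x : L$ by subsumption and I must recover the analogous judgment under $x : B$. I resolve this by combining $A \Subtype B$ with the subtyping read off from the original premise, via transitivity (\TirName{Sub-Trans}) followed by subsumption (\TirName{Sub-Type}), to obtain $\UNR{(\TEnv, x:B)} \vdash x : L$; this is exactly the step where the two label sets must be reconciled, and it is where one must verify that the direction of the subtyping hypothesis is compatible with the exhaustiveness condition of the \TirName{Lab-E'} rule. The branch subderivations then close under the induction hypothesis after adjusting the pushed equation with Lemma~\ref{lemma:equality-type}.
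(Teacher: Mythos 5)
The paper does not actually prove this lemma: in the source, the proof body of Lemma~\ref{lemma:context-subtyping} is an empty author placeholder (a draft ``TODO'' note). So there is no proof to compare yours against, and your proposal must stand on its own. It does not.

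Your opening reduction is fine as far as it goes. When $A$ is linear, the premise is literally $\TEnv \vdash C : \Kind$, and Weakening (which is stated with conditional extension) yields the goal for either kind of $B$. The ($A$ unrestricted, $B$ linear) case does reduce to $x \notin \FV(C)$ plus Strengthening (Lemma~\ref{lem:strengthening}), although your justification that $x \in \FV(C)$ would force $B$ to be unrestricted relies on an inversion property of subtyping (``supertypes of label-like types are label-like'') that the paper never establishes, so this step is a sketch rather than a proof.

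The fatal gap is in the case you yourself single out as the crux, the \TirName{Lab-E'} case (the same issue arises in \TirName{Equality-F}), and it is a matter of direction, not bookkeeping. Under $x:A$, the scrutinee premise $\UNR{(\TEnv,x:A)} \vdash x : L$ comes from \TirName{Name} followed by \TirName{Sub-Type}, i.e.\ from $A \Subtype L$. Your hypotheses additionally give $A \Subtype B$. But \TirName{Sub-Trans} composes $X \Subtype Y$ with $Y \Subtype Z$; from $A \Subtype B$ and $A \Subtype L$, both of which have $A$ on the \emph{left}, nothing about $B$ versus $L$ follows, and what you need under $x:B$ is $B \Subtype L$. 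That judgment is in general underivable, and in fact the lemma as stated is false. Concretely, take $\TEnv = \EmptyEnv$, $A = \Setof{\ell_1}$, $B = \Setof{\ell_1,\ell_2}$, and $C = \CASE x {\ell_1 : \TUnit}$. Then $\vdash A \Subtype B$ by \TirName{Sub-Lab}, and $x : A \vdash C : \KindUn$ by \TirName{Lab-E'}; but $x : B \vdash C : \Kind$ would require $x : \Setof{\ell_1,\ell_2} \vdash x : \Setof{\ell_1}$, hence $\Setof{\ell_1,\ell_2} \Subtype \Setof{\ell_1}$, which no combination of \TirName{Sub-Lab}, conversion, and transitivity can produce: the one-branch case is simply not exhaustive for a scrutinee of type $\Setof{\ell_1,\ell_2}$. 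The direction that \emph{is} provable is narrowing --- from $\DemoteExtend\TEnv xB \vdash C : \Kind$ and $\UNR\TEnv \vdash A \Subtype B$ conclude $\DemoteExtend\TEnv xA \vdash C : \Kind$, since there $B \Subtype L$ composes with $A \Subtype B$ to give $A \Subtype L$. The widening direction stated in the lemma is exactly what the paper's Agreement argument wants in the \TirName{Sub-Pi} case (whose body premise lives under the supertype's domain $x:A'$), so the defect sits in the paper's unproven lemma and rule design; no induction on your part can repair it without changing the statement or the system.
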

\begin{proof}
  \vv{TODO}
\end{proof}

\begin{lemma}[Subtyping duality]\label{lemma:subtyping-duality}
  If $\TEnv \vdash R \Subtype S : \Kind$,
  then
  $\TEnv \vdash \DUAL S \Subtype \DUAL R : \Kind$.
\end{lemma}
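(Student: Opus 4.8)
The plan is to prove the statement by rule induction on the derivation of $\TEnv \vdash R \Subtype S : \Kind$, reading off the matching dual rule in each case and appealing to the induction hypothesis, which flips the direction of $\Subtype$ exactly as duality requires. Since $\DUAL{\cdot}$ is defined only on the syntactic category of session types, I first observe that when $R$ and $S$ are session types the last rule must be one whose conclusion is again a session type: \TirName{Sub-Conv}, \TirName{Sub-Trans}, \TirName{Sub-Sub}, \TirName{Sub-Send}, \TirName{Sub-Recv}, \TirName{Sub-Case}, and (in the extension of Section~\ref{sec:naturals}) \TirName{Sub-Rec} and \TirName{Sub-TVar}; the rules \TirName{Sub-Lab}, \TirName{Sub-Pi}, and \TirName{Sub-Sigma} conclude with non-session types and cannot occur. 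This observation also forces the intermediate type in \TirName{Sub-Trans} to be a session type, so the induction hypothesis applies to both of its premises.

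The two characteristic cases are \TirName{Sub-Send} and \TirName{Sub-Recv}, which swap under duality. For \TirName{Sub-Send} we have $R = {!(x:A)}T$ and $S = {!(x:A')}T'$ with premises $\TEnv \vdash A' \Subtype A : \Kind$ and $\DemoteExtend\TEnv x{A'} \vdash T \Subtype T' : \KindSession$; since $\DUAL R = {?(x:A)}\DUAL T$ and $\DUAL S = {?(x:A')}\DUAL{T'}$, I apply \TirName{Sub-Recv}, reusing $\TEnv \vdash A' \Subtype A$ unchanged and obtaining $\DemoteExtend\TEnv x{A'} \vdash \DUAL{T'} \Subtype \DUAL T$ from the induction hypothesis. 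The key point is that the conditional extension $\DemoteExtend\TEnv x{A'}$ produced by \TirName{Sub-Send} is precisely the one demanded by \TirName{Sub-Recv} on the dualised judgment, so no reconciliation of environments is needed; the \TirName{Sub-Recv} case is symmetric. For \TirName{Sub-Case} I exchange the roles of the label sets $L$ and $L'$: duality commutes with $\CASEK$, so $\DUAL R$ and $\DUAL S$ are cases over $L$ and $L'$ with branches $\DUAL{A_\ell}$ and $\DUAL{A'_\ell}$; the subtyping premise over $L\cap L'$ is discharged by the induction hypothesis, the side condition $\UNR\TEnv \vdash x : L\cap L'$ is unchanged, and the two well-formedness premises over $L\setminus L'$ and $L'\setminus L$ are recovered from the original ones by the kinding-duality lemma (Lemma~\ref{lemma:kinding-duality}), their contexts coinciding because $L\cap L' = L'\cap L$.

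The remaining structural cases are routine: \TirName{Sub-Trans} composes two applications of the induction hypothesis in the reversed order, yielding $\DUAL S \Subtype \DUAL T \Subtype \DUAL R$, and \TirName{Sub-Sub} is simply re-applied after noting that duality preserves kinds (again via kinding duality). For \TirName{Sub-Conv} I reduce to a companion \emph{conversion duality} statement, namely that $\TEnv \vdash \Convertible A B : \Kind$ with $A,B$ session types entails $\TEnv \vdash \Convertible{\DUAL A}{\DUAL B} : \Kind$, combined with symmetry \TirName{Conv-Sym}; because conversion is defined independently of subtyping, this can be proved beforehand by a separate induction on the conversion derivation, where the $\CASEK$ rules go through exactly as in \TirName{Sub-Case}.

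The hard part will be the recursor, in both the conversion-duality and the \TirName{Sub-Rec} cases, because of the polarity bookkeeping built into the duality clause for the recursor (Figure~\ref{fig:nats}), which swaps the polarities of the recursion variable inside the dualised body. For \TirName{Sub-Rec} the induction hypothesis gives $\TEnv,\TVar \vdash \DUAL{B'} \Subtype \DUAL B$, and I must show this survives the uniform polarity swap; this follows from an auxiliary renaming lemma stating that the substitution $[\TVar_\PolNeg/\TVar_\PolPos, \TVar_\PolPos/\TVar_\PolNeg]$ preserves $\Subtype$, which holds precisely because \TirName{Sub-TVar} treats each polarised variable invariantly. The genuinely delicate point is the conversion rule \TirName{Conv-S} for recursor unrolling: I must verify that dualising $B[\TRec{V}{A}{\TVar}{\Kind}{B}/\TVar]$ agrees with unrolling the \emph{dual} recursor, which is exactly the caveat noted in Section~\ref{sec:naturals} that unrolling into a negative occurrence substitutes the dual of the recursor. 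Establishing that duality commutes with this substitution — the single place where the polarity convention must be shown coherent — is where I expect to spend most of the effort; the base case \TirName{Sub-TVar} is immediate, since $\DUAL{\TVar_\Pol}$ is again a polarised variable related to itself by \TirName{Sub-TVar}.
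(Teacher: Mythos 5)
Your proposal is correct and follows exactly the route the paper takes: the paper's entire proof is ``Rule induction on the hypothesis,'' and your case analysis (swapping \TirName{Sub-Send}/\TirName{Sub-Recv}, exchanging $L$ and $L'$ in \TirName{Sub-Case}, the conversion-duality companion lemma for \TirName{Sub-Conv}, and the polarity-swap renaming for \TirName{Sub-Rec}) is a faithful and accurate elaboration of that induction, including the caveat the paper itself notes about unrolling into negative variables. No gaps; your write-up supplies the detail the paper leaves implicit.
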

\begin{proof}
  Rule induction on the hypothesis.
\end{proof}

It follows that, if $\DUAL A$ and $\DUAL B$ are both defined, then
$\TEnv \vdash A \Subtype B : \Kind$ iff
$\TEnv \vdash \DUAL B \Subtype \DUAL A : \Kind$.

\begin{lemma}[Kinding duality]\label{lemma:kinding-duality}
  If $\TEnv \vdash S : \Kind$, then $\TEnv \vdash \DUAL S:\Kind$.
\end{lemma}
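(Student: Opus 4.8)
The plan is to proceed by rule induction on the derivation of $\TEnv \vdash S : \Kind$. Since $S$ is a session type and duality is a total inductive metafunction on session types, $\DUAL S$ is always defined; the goal is to check that for each formation rule concluding a session type there is a dual rule producing $\DUAL S$ at the \emph{same} kind $\Kind$ in the same environment $\TEnv$. The cases split according to the last rule applied, including the recursor extension of Section~\ref{sec:naturals}.

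The base and pass-through cases are immediate. If the derivation ends in \TirName{End-F}, then $S = \TEnd$ and $\DUAL\TEnd = \TEnd$, so the very same instance of \TirName{End-F} applies. If it ends in \TirName{Sub-Kind}, with premise $\TEnv \vdash S : \Kindm$ and $\Kindm \Subkind \Kind$, the induction hypothesis gives $\TEnv \vdash \DUAL S : \Kindm$ and \TirName{Sub-Kind} re-raises the kind to $\Kind$.

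The send/receive and case cases are symmetric rewrites. If the last rule is \TirName{Ssn-Out-F}, then $S = {!(x:A)}S'$ with premise $\DemoteExtend\TEnv x A \vdash S' : \KindSession$; the induction hypothesis yields $\DemoteExtend\TEnv x A \vdash \DUAL{S'} : \KindSession$, and since $\DUAL{{!(x:A)}S'} = {?(x:A)}\DUAL{S'}$, rule \TirName{Ssn-In-F} gives the conclusion at kind $\KindSession$. The \TirName{Ssn-In-F} case is identical with the two rules exchanged. For the large-elimination case \TirName{Lab-E'}, $S = \CASE V {\overline{\ell:S_\ell}}$ and duality commutes with the branches while leaving the scrutinee untouched, $\DUAL S = \CASE V {\overline{\ell:\DUAL{S_\ell}}}$; the premise $\UNR\TEnv \vdash V : L$ is reused verbatim, and applying the induction hypothesis to each branch premise $\DemoteExtend\TEnv x {\Equation[L]V\ell} \vdash S_\ell : \Kind$ re-establishes \TirName{Lab-E'} for the dualized branches.

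The interesting work is in the recursor extension, and this is where I expect the main obstacle. If the last rule is \TirName{Rec-F}, then $S = \TRec V{S'}\TVar\Kind R$ with premises $\TEnv \vdash V : \TNat$, $\TEnv \vdash S' : \Kind$, and $\TEnv, \TVar \vdash R : \Kind$, while the dual is $\DUAL S = \TRec V{\DUAL{S'}}\TVar\Kind{\DUAL R[\TVar_\PolNeg/\TVar_\PolPos, \TVar_\PolPos/\TVar_\PolNeg]}$. The induction hypothesis directly supplies $\TEnv \vdash \DUAL{S'} : \Kind$ and $\TEnv, \TVar \vdash \DUAL R : \Kind$, so to apply \TirName{Rec-F} to $\DUAL S$ it remains to show that the polarity swap preserves kinding, i.e.\ that $\TEnv, \TVar \vdash \DUAL R[\TVar_\PolNeg/\TVar_\PolPos, \TVar_\PolPos/\TVar_\PolNeg] : \Kind$. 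The key observation is that kinding ignores polarity: rule \TirName{TVar-F} assigns $\TVar_\Pol : \Un$ independently of $\Pol$, and the environment records only $\TVar$, not its polarity. I would therefore isolate a small auxiliary lemma stating that swapping the polarities on all occurrences of a recursion variable in a well-kinded type preserves its kind; this is a routine rule induction, since the only rule that mentions polarity, \TirName{TVar-F}, is polarity-insensitive. Applying this lemma to $\TEnv, \TVar \vdash \DUAL R : \Kind$ discharges the remaining premise, and \TirName{Rec-F} yields $\TEnv \vdash \DUAL S : \Kind$. Finally, the variable case \TirName{TVar-F} itself is immediate: $\DUAL{\TVar_\PolPos} = \TVar_\PolNeg$ (and dually), and both polarities receive kind $\Un$ in the same environment by the same rule.
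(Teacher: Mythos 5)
Your proof is correct. The paper in fact states this lemma \emph{without} any proof (unlike the neighbouring Subtyping duality lemma, which at least says ``rule induction''), so there is no authorial argument to diverge from; your rule induction is exactly the routine argument the authors evidently had in mind, and your one genuinely non-trivial observation --- that in the \TirName{Rec-F} case the polarity swap $[\TVar_\PolNeg/\TVar_\PolPos, \TVar_\PolPos/\TVar_\PolNeg]$ is harmless because the only polarity-mentioning rule \TirName{TVar-F} assigns $\TVar_\Pol : \Un$ irrespective of $\Pol$ --- is precisely the detail that makes the induction go through and that the paper leaves implicit.
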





\begin{restatable}[Substitution for context formation]{lemma}{LDGVSubstitutionFormation}\label{lemma:substitution-formation}
  If
  $\EnvForm \MultUn {\TEnv_1, x: A, \TEnv_2}$ and
  $\TEnv_1 \vdash V : A$, then
  $\EnvForm \MultUn {\TEnv_1, \TEnv_2[V/x]}$.
\end{restatable}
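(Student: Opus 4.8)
The plan is to prove this as one clause of a larger \emph{mutual} substitution lemma that simultaneously treats substitution in the environment-formation, type-formation, typing, conversion, and subtyping judgments. These judgments are mutually recursive (e.g.\ \TirName{Equality-F} appeals to the typing judgment, while the type-formation rules \TirName{Unit-F} and \TirName{End-F} appeal back to environment formation), so none of them can be established in isolation. For the clause stated here I would induct on the structure of $\TEnv_2$, equivalently on the derivation of $\EnvForm\MultUn{\TEnv_1, x:A, \TEnv_2}$, proceeding by inversion on the last formation rule, which is syntax-directed on the shape of the trailing environment. Note that only $\TEnv_2$ is substituted: since a binding may depend only on those to its left, $x$ does not occur in $\TEnv_1$, so $\TEnv_1[V/x] = \TEnv_1$ and the split form of the statement is exactly what the induction produces.

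First, in the base case $\TEnv_2 = \EmptyEnv$, inversion of the formation rule on $\EnvForm\MultUn{\TEnv_1, x:A}$ yields $\EnvForm\MultUn{\TEnv_1}$ (discarding the side premise $\TEnv_1 \vdash A : \MultUn$), and since $\EmptyEnv[V/x] = \EmptyEnv$ this is the goal. In the inductive case $\TEnv_2 = \TEnv_2', z:B$, inversion gives $\EnvForm\MultUn{\TEnv_1, x:A, \TEnv_2'}$ and $(\TEnv_1, x:A, \TEnv_2') \vdash B : \MultUn$. The induction hypothesis on the first premise yields $\EnvForm\MultUn{\TEnv_1, \TEnv_2'[V/x]}$, and the (mutually inductive) type-formation clause applied to the second premise, together with $\TEnv_1 \vdash V : A$, yields $(\TEnv_1, \TEnv_2'[V/x]) \vdash B[V/x] : \MultUn$. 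One use of the formation rule reassembles $\EnvForm\MultUn{\TEnv_1, \TEnv_2'[V/x], z:B[V/x]}$, which is the goal because $(\TEnv_2', z:B)[V/x] = \TEnv_2'[V/x], z:B[V/x]$. A simplification specific to this lemma is that formation at $\MultUn$ forces $\TEnv \vdash C : \MultUn$ for every entry, so all bindings here are unrestricted, the conditional extension $\DemoteExtend\TEnv y C$ collapses to ordinary extension, and no linear binding is ever dropped.

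The hard part will not be this clause but the type-formation clause it rests on, together with the bookkeeping that makes the whole mutual induction well-founded; I would run it on the combined height of the derivations, so that the appeal from environment formation to type formation, and from type formation back to environment formation in rules like \TirName{Unit-F}, both strictly decrease. The delicate type-formation cases are the dependent ones: \TirName{Equality-F}, which forces the argument to descend into the typing judgment, and \TirName{Lab-E'}, where equations of the form $\Equation[L]{W}{\ell}$ are threaded through the context and must be shown to commute with the substitution. I expect these to go through using the properties of the equality type (Lemma~\ref{lemma:equality-type}) and the fact that substitution of a value for a variable leaves kinds unchanged, so that $B[V/x]$ retains the kind $\MultUn$ needed to reapply the formation rule.
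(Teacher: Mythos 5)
Your proposal is correct and takes essentially the same route as the paper: the paper also proves this by induction on $\TEnv_2$, dispatching the empty case immediately and, in the case $\TEnv_2', z:B$, combining the induction hypothesis with a mutual appeal to the substitution lemma for typing/type formation to obtain $\TEnv_1, \TEnv_2'[V/x] \vdash B[V/x] : \Kind$ before reassembling the environment. Your extra remarks---that the whole family of substitution statements must be proved as one mutual lemma, and that at $\MultUn$ every binding is unrestricted so conditional extension collapses to plain extension---are consistent with (and indeed anticipated by) the paper's own setup rather than a departure from it.
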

\begin{proof}
  By induction on $\TEnv_2$.

  \textbf{Case }$\EmptyEnv$. Immediate.

  \textbf{Case }$\TEnv_2, y: B$ with $x\ne y$ and $\TEnv_2 \vdash B:\KindUn$
  by assumption.
  By induction $\EnvForm \MultUn{ \TEnv_1, \TEnv_2[W/x]}$.
  By induction (on typing) $\TEnv_1, \TEnv_2[W/x] \vdash B[W/x]:
  \Kind$.
  Hence $\EnvForm \MultUn{ \TEnv_1, \TEnv_2[W/x], y: B[W/x]}$.
\end{proof}

\begin{restatable}[Substitution for Types]{lemma}{LDGVTypeSubstitution}\label{lemma:type-substitution}
  If $\DemoteExtend{\TEnv_1} x B \vdash A : \Kind$ and
  $\TEnv_2 \vdash N : B$ and $\Decompose\TEnv{\TEnv_1}{\TEnv_2}$, then
  $\TEnv \vdash A[N/x] : \Kind$.
\end{restatable}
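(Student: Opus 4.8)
The plan is to argue by rule induction on the derivation of $\DemoteExtend{\TEnv_1}{x}{B} \vdash A : \Kind$, carried out \emph{simultaneously} with the companion substitution statements for the value/expression typing, conversion, and subtyping judgments: the type-formation rules \TirName{Equality-F} and \TirName{Lab-E'} have premises in the typing judgment, so in a dependently typed setting these inductions cannot be separated. Before entering the induction I would first normalise the environments. By Agreement (Lemma~\ref{lem:agreement}) the hypothesis forces $\DemoteExtend{\TEnv_1}{x}{B}$, and hence $\TEnv_1$, to be unrestricted; moreover the conclusion is itself a type-formation judgment, so by Agreement the situation of interest has $\EnvForm\MultUn\TEnv$. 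Applying the properties of context split (Lemma~\ref{lemma:context-split}) to $\Decompose\TEnv{\TEnv_1}{\TEnv_2}$ with these environments unrestricted lets me take $\TEnv = \TEnv_1 = \TEnv_2$, so that $\TEnv \vdash N : B$ holds at the very environment in which the goal must be established.

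Next I would split on the kind of $B$, mirroring the two clauses of conditional extension. If $B : \KindLin$, then $\DemoteExtend{\TEnv_1}{x}{B} = \TEnv_1$, so $x$ never enters a type-formation context; since types depend only on unrestricted values, $x \notin \FV(A)$ and $A[N/x] = A$, making the goal identical to the hypothesis. If $B : \KindUn$, then $\DemoteExtend{\TEnv_1}{x}{B} = \TEnv_1, x{:}B$ and genuine substitution is required; here I proceed by cases on the last rule. For the leaf rules \TirName{Unit-F}, \TirName{End-F}, \TirName{Lab-F}, and \TirName{Nat-F} the type $A$ is a constant with $x \notin \FV(A)$, so the work reduces to re-establishing environment formation, which is exactly Substitution for context formation (Lemma~\ref{lemma:substitution-formation}). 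For \TirName{Equality-F} I would substitute into the two value-typing premises using the companion expression-substitution lemma from the simultaneous induction, and into the kinding premise by the inductive hypothesis. For \TirName{Lab-E'} I substitute in the scrutinee $V$ and then push the induction through each branch under $\DemoteExtend\TEnv{y}{\Equation[L]{V}{\ell}}$, using the properties of the equality type (Lemma~\ref{lemma:equality-type}) to keep the rewritten branch contexts well formed. For the binder-introducing rules \TirName{Pi-F}, \TirName{Sigma-F}, \TirName{Ssn-Out-F}, and \TirName{Ssn-In-F} I apply the inductive hypothesis to the domain $A_0$ and to the body under $\DemoteExtend\TEnv{y}{A_0}$; \TirName{Sub-Kind} is immediate from the hypothesis.

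\textbf{Main obstacle.} Two points carry the real weight. First, because substitution on the body of a $\Pi$/$\Sigma$/session type is performed inside a \emph{conditionally extended} context $\DemoteExtend\TEnv{y}{A_0}$, I must know that the branch taken by conditional extension is stable under substitution, i.e.\ that $A_0$ and $A_0[N/x]$ have the same kind; this is supplied by the inductive hypothesis applied to the domain, but it is precisely the interaction of substitution with $\DemoteExtend{\cdot}{\cdot}{\cdot}$ and $\UNR{\cdot}$ that must be tracked carefully throughout. Second, whenever the substitution is non-trivial, $x$ occurs in $A$ only in value positions (the scrutinee of a \CASEK{} or an argument of an equality type), so $A[N/x]$ is syntactically a type only when $N$ is a value; this is exactly the regime in which the lemma is invoked (dependencies arise only on unrestricted arguments, which at the elimination rules are values, cf.\ Lemma~\ref{lemma:unrestricted-values-have-unrestricted-environments}), and I would make this value restriction explicit in the branch where $x \in \FV(A)$. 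The remaining cases are routine given these two observations and the standard substitution and weakening machinery (Lemma~\ref{lemma:weakening}).
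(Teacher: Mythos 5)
Your proposal takes essentially the same approach as the paper: the paper's own proof of this lemma is nothing more than a draft placeholder stating that it proceeds by mutual rule induction on the first hypothesis, with all the substitution lemmas folded into one big simultaneous lemma, and recording that only the equality, unit, end, and label cases were actually checked. Your plan is exactly that strategy, carried out in more detail than the paper itself provides — the environment normalisation via Agreement and the split properties, the kind split on $B$, and the binder cases \TirName{Pi-F}, \TirName{Sigma-F}, \TirName{Ssn-Out-F}, \TirName{Ssn-In-F} all go beyond what the paper verifies.
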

\begin{proof}
  \vv{The proof is by mutual rule induction on the first
    hypothesis. All the substitution lemmas should come together in a
    big lemma with many cases. Checked equality, unit, end, and
    lab. Did not have a look at the remaining cases}
\end{proof}

\begin{restatable}[Substitution for convertibility]{lemma}{LDGVSubstitutionConvertibility}\label{lemma:substitution-convertibility}
  Suppose that
  $\TEnv \vdash V : A$ and
  $\UNR\TEnv \vdash C : \KindUn$. If
  $\UNR\TEnv, x : C, \SEnv \vdash \Convertible A B :
  \Kind$, then
  $\TEnv, \SEnv[V/x] \vdash \Convertible {A[V/x]} {B[V/x]} : \Kind$.
\end{restatable}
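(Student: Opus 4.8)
The plan is to prove this statement as one clause of the combined substitution lemma, by rule induction on the derivation of $\UNR\TEnv, x : C, \SEnv \vdash \Convertible A B : \Kind$, carried out simultaneously with the substitution results for context formation (Lemma~\ref{lemma:substitution-formation}), type formation (Lemma~\ref{lemma:type-substitution}), and value/expression typing. Before the induction, I would reconcile the contexts: since $V$ inhabits the (unrestricted) type $C$ of the variable $x$ it replaces and $\UNR\TEnv \vdash C : \KindUn$, Lemma~\ref{lemma:unrestricted-values-have-unrestricted-environments} gives $\EnvForm \MultUn \TEnv$, so $\UNR\TEnv = \TEnv$ and every context appearing in the derivation is unrestricted. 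The substituted tail $\SEnv[V/x]$ is then well formed by Lemma~\ref{lemma:substitution-formation}, so I may apply the induction hypotheses freely on the (unrestricted) subcontexts.

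The routine cases go through by the induction hypothesis together with the companion substitution lemmas. For \TirName{Conv-Refl} I appeal directly to type-formation substitution (Lemma~\ref{lemma:type-substitution}) and re-apply the rule. \TirName{Conv-Sym} and \TirName{Conv-Trans} follow by the induction hypothesis, noting for transitivity that the intermediate type is substituted as well. In \TirName{Conv-Beta} the scrutinee is a concrete label $\ell'$, which substitution leaves untouched, so after substituting the branch types (type-formation substitution on the premises) the rule re-applies. \TirName{Conv-Subst} substitutes into the equality witness and the scrutinee (value-typing substitution) and into the branch formations (induction hypothesis), then re-applies; it remains applicable even when a scrutinee becomes a concrete label, since the rule never requires scrutinees to be variables. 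The rules \TirName{Conv-Z} and \TirName{Conv-S} for natural numbers are handled analogously, substituting into the index value and the recursor bodies.

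The one genuinely delicate case is \TirName{Conv-Eta}, whose premise types a \emph{variable} scrutinee $w$ at a label type $L$ and whose conclusion is $\Convertible{A}{\CASE w {\overline{\ell : A}^{\ell\in L}}}$. If $w \neq x$, substitution keeps $w$ a variable of type $L$ (value-typing substitution) and, after substituting the common branch body $A$, the rule re-applies verbatim. If $w = x$, then $w[V/x] = V$, and by inversion on the typing of $V$ at the label type $L$ (the canonical-forms fact that the only values inhabiting a label type are labels and variables) I split into two subcases: if $V = y$ is a variable, \TirName{Conv-Eta} re-applies with scrutinee $y$; if $V = \ell'$ is a label, then $\ell' \in L$ and all branches of $\CASE{\ell'}{\overline{\ell : A[V/x]}^{\ell\in L}}$ equal $A[V/x]$, so \TirName{Conv-Beta} gives $\Convertible{\CASE{\ell'}{\overline{\ell : A[V/x]}^{\ell\in L}}}{A[V/x]}$ and \TirName{Conv-Sym} supplies the required orientation.

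I expect this \TirName{Conv-Eta} case to be the main obstacle: it is the only place where substitution can change the shape of the judgment, collapsing a stuck variable-case into either a reducible label-case or a different stuck case, and thus forcing a case analysis together with a switch from the $\eta$-rule to a $\beta$-step in the label subcase. It also depends on the canonical-forms characterization of label-typed values and on the substitution property for the typing judgment being available from the surrounding mutual induction; everything else reduces to routine bookkeeping with the type-formation and context-formation substitution lemmas.
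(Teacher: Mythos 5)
Your proof is correct, and its skeleton---rule induction on the conversion derivation, carried out inside the mutual induction with the typing, type-formation, and context-formation substitution lemmas, discharging leaf premises with those companion lemmas and re-applying each rule---is the same strategy the paper uses. The substantive difference is which rule set each proof actually confronts. The paper's proof is written against an apparently older formulation of conversion: it treats \TirName{Conv-Refl}, \TirName{Conv-Sym}, a term-witness variant of \TirName{Conv-Subst}, and a rule \TirName{Conv-Red} (closure of conversion under reduction inside type contexts) that does not appear in Figure~\ref{fig:ldgv-subtyping}; it never touches \TirName{Conv-Trans}, \TirName{Conv-Beta}, or \TirName{Conv-Eta}, the rules the main text actually defines. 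Your proof targets the main-text rules, and in doing so you isolate exactly the case the paper's proof never faces: \TirName{Conv-Eta}, where substituting for the scrutinee variable changes the shape of the judgment. Your resolution---apply the value-typing substitution lemma to the premise $\vdash x : L$ to get the substituted value at type $L$, then use canonical forms to split into a variable (re-apply \TirName{Conv-Eta}) versus a label $\ell' \in L$ (conclude by \TirName{Conv-Beta} followed by \TirName{Conv-Sym})---is the right move, and it is genuine content that the paper's own proof lacks. Your preliminary observation that $\TEnv \vdash V : C$ with $\UNR\TEnv \vdash C : \KindUn$ forces $\EnvForm\MultUn\TEnv$, so that $\UNR\TEnv = \TEnv$ and all contexts in play are unrestricted, is also a clean simplification the paper does not make. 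One small gloss worth recording: in the main-text \TirName{Conv-Subst} the witness is written as a variable $y$, so in the subcase $y = x$ you need that the value substituted for it is again a variable; this follows by the same canonical-forms reasoning you already invoke (equality types have no value constructors), so it is bookkeeping rather than a gap.
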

\begin{proof}
  The proof is by induction on the derivation of $\TEnv_1, x : C, \TEnv_3
  \vdash \Equation A B : \Kind$.

  \textbf{Case }$\inferrule[Conv-Refl]{\TEnv \vdash A : \Kind}{\TEnv \vdash \Convertible A A :
    \Kind}$: By induction (on typing), we obtain
  \begin{gather}
    \label{eq:65}
    \TEnv, \TEnv_3[W/x] \vdash {A[W/x]} : \Kind
    \intertext{and we apply \TirName{Conv-Refl} to yield}
    \TEnv, \TEnv_3[W/x] \vdash \Convertible{A[W/x]}{A[W/x]} : \Kind
m  \end{gather}

  \textbf{Case }$\inferrule[Conv-Sym]{\TEnv \vdash \Convertible A B : \Kind}{\TEnv \vdash
    \Convertible B A : \Kind}$: immediate by induction.


  \textbf{Case }$
  \inferrule[Conv-Subst]{
    \TEnv, z:\LabelSet \vdash B : \Kind \\
    \TEnv \vdash A : \Equation M N
  }{
    \TEnv \vdash
    \Convertible {B[M/z]} {B[N/z]}
    : \Kind
  }
  $: By induction (on typing) and observing that $z$ is chosen such
  that $z\ne x$, we obtain 
  \begin{gather}
    \label{eq:66}
    \TEnv, z:\LabelSet, \TEnv_3[W/x] \vdash B[W/x] : \Kind \\
    \label{eq:120}
    \TEnv, \TEnv_3[W/x] \vdash A[W/x] : \Equation {M[W/x]} {N[W/x]} 
  \end{gather}
  Applying rule \TirName{Conv-Subst} yields the desired result.
  \begin{gather}
    \TEnv, \TEnv_3[W/x] \vdash
    \Convertible {B[M/z][W/x]} {B[N/z][W/x]}
    : \Kind
  \end{gather}

  \textbf{Case }$\inferrule[Conv-Red]
  {\TEnv \vdash \Context[T]A: \Kind}
  {\TEnv \vdash \Convertible {\Context[T]A} {\Context[T]B} : \Kind}
  \quad{\text{if}\ A \ReducesTo B}$: By induction (on typing), we obtain
  \begin{gather}
    \label{eq:69}
    \TEnv, \TEnv_3[W/x] \vdash {\Context[T]A}[W/x] : \Kind
  \end{gather}
  Furthermore, reduction is closed under substitution of values
  hence $A[W/x] \ReducesTo B[W/x]$ and 
  we can conclude with rule \TirName{Conv-Red}:
  \begin{gather}
    \label{eq:70}
    \TEnv, \TEnv_3[W/x] \vdash {\Context[T]B}[W/x] : \Kind
  \end{gather}

  This case concludes the proof.
\end{proof}

\begin{restatable}[Substitution for subtyping]{lemma}{LDGVSubstitutionSubtyping}\label{lemma:substitution-subtyping}
  Suppose that
  $\UNR\TEnv \vdash C : \Kind$. If
  $\UNR\TEnv, x: C, \SEnv \vdash A \Subtype B : \Kind'$ and
  $\TEnv \vdash V : C$, then
  $\TEnv, \SEnv[V/x] \vdash A[V/x] \Subtype B[V/x] : \Kind'$.
\end{restatable}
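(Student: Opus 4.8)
The plan is to prove this by rule induction on the derivation of $\UNR\TEnv, x:C, \SEnv \vdash A \Subtype B : \Kind'$, carried out as part of the same mutual induction as Substitution for Types (Lemma~\ref{lemma:type-substitution}) and Substitution for convertibility (Lemma~\ref{lemma:substitution-convertibility}), since a subtyping derivation bottoms out in type-formation and conversion judgments. Throughout I adopt the variable convention so that every bound variable (the binder in $\Pi$-, $\Sigma$-, send- and receive-types, the recursion variable in the recursor, and the equality witness $y$ in the case rules) is distinct from $x$ and not free in $V$. I will use freely that $\Decompose\TEnv\TEnv{\UNR\TEnv}$ (Lemma~\ref{lemma:properties-un}), so that $\TEnv$ and $\UNR\TEnv$ may be used side by side, and that value substitution preserves kinds (from the companion type-substitution lemma). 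The latter makes the linear/unrestricted status of a type invariant under $[V/x]$, so that conditional extension commutes with substitution: $(\DemoteExtend{\TEnv} z {A'})[V/x]$ adds the binder $z$ under exactly the same circumstances as $\DemoteExtend{(\TEnv[V/x])} z {A'[V/x]}$.

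The purely structural cases are routine. For \TirName{Sub-Trans} and \TirName{Sub-Sub} I apply the induction hypothesis to the premises and reassemble, using that substitution distributes over the intermediate type and leaves the subkinding side condition untouched. For \TirName{Sub-Lab} the two types are ground label sets unaffected by the substitution, so it suffices to re-establish $\EnvForm\MultUn{\TEnv, \SEnv[V/x]}$, which is Substitution for context formation (Lemma~\ref{lemma:substitution-formation}). The case \TirName{Sub-Conv} is discharged directly by Substitution for convertibility (Lemma~\ref{lemma:substitution-convertibility}) followed by a final application of \TirName{Sub-Conv}. For the binding cases \TirName{Sub-Pi}, \TirName{Sub-Sigma}, \TirName{Sub-Send} and \TirName{Sub-Recv} I apply the induction hypothesis to the domain/component premise; for the codomain premise I present the conditionally extended context $\DemoteExtend{\TEnv} z {A'}$ in the required form $\UNR\TEnv, x:C, \SEnv'$ (with $\SEnv' = \SEnv$ or $\SEnv' = \SEnv, z:A'$ according to the kind of $A'$), invoke the induction hypothesis, and put the pieces back together using the commutation above. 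The recursor cases \TirName{Sub-Rec} and \TirName{Sub-TVar} are analogous, the extra type-variable binding being orthogonal to the value substitution.

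The one delicate case is \TirName{Sub-Case}, whose scrutinee is a variable $w$ of label type, and it splits on whether $w = x$. If $w \neq x$, substitution keeps the scrutinee a variable: I push $[V/x]$ through every branch subtyping premise and through the equality assumptions $\Equation[L\cap L'] w \ell$ (preserved since $w \neq x$ and the labels $\ell$ are ground), re-establish $\UNR{(\TEnv, \SEnv[V/x])} \vdash w : L\cap L'$ via the companion substitution-for-typing result, and reapply \TirName{Sub-Case}. If $w = x$, the scrutinee becomes $V$; since the premise $\UNR\TEnv \vdash x : L\cap L'$ forces the scrutinee to have label type and $\TEnv \vdash V : C$, the value $V$ is either a variable, in which case I finish exactly as in the previous subcase, or a concrete label $\ell'$ (necessarily $\ell' \in L\cap L'$). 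In the latter situation both substituted case types beta-reduce to their $\ell'$-branches, so by \TirName{Conv-Beta} and \TirName{Sub-Conv} it suffices to prove $A_{\ell'}[V/x] \Subtype A'_{\ell'}[V/x]$; this is obtained by applying the induction hypothesis to the $\ell = \ell'$ instance of the branch premise (its equality hypothesis $\Equation[L\cap L'] x {\ell'}$ substitutes to the trivially inhabited $\Equation[L\cap L'] {\ell'} {\ell'}$), strengthening away the now-unused equality witness $y$, and stitching the conversions together with \TirName{Sub-Trans}.

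I expect the $w = x$, $V = \ell'$ subcase of \TirName{Sub-Case} to be the main obstacle: it is the only place where the shape of the judgment genuinely changes under substitution, since a variable-headed case collapses to a selected branch, forcing me to leave the syntax-directed world of subtyping and route through conversion. Making this precise requires several results to interact at once---$\beta$-conversion of the case, inhabitation of the reflexive equality type, and strengthening of the equality witness out of a subtyping judgment---which is precisely why the lemma must be established simultaneously with type substitution and convertibility substitution rather than in isolation.
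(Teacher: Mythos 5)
Your proof is correct, and its overall skeleton --- rule induction on the subtyping derivation, run mutually with the substitution lemmas for typing and convertibility, with \TirName{Sub-Conv} discharged by substitution for convertibility, \TirName{Sub-Lab} by substitution for context formation, and the binding rules \TirName{Sub-Pi}/\TirName{Sub-Sigma}/\TirName{Sub-Send}/\TirName{Sub-Recv} by commuting conditional extension with substitution --- matches the paper's proof. The genuine difference is in \TirName{Sub-Case}. The paper's proof treats the scrutinee as an arbitrary value $N$: it applies the induction hypotheses to the typing premise and to each branch premise, observes that $(\SEnv, z : \Equation{N}{\ell})[V/x]$ is the same environment as $\SEnv[V/x], z : \Equation{N[V/x]}{\ell}$, and then simply reapplies \TirName{Sub-Case} to the substituted scrutinee $N[V/x]$; no case split and no conversion detour occur, because under that reading the substituted judgment is still an instance of the rule. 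You instead work with the rule exactly as stated in Figure~\ref{fig:ldgv-subtyping}, where the scrutinee must be a \emph{variable}, and therefore must split on whether the scrutinee is $x$: when it is, and $V$ is a concrete label $\ell'$, the rule no longer applies after substitution, so you collapse both sides via \TirName{Conv-Beta} and \TirName{Sub-Conv}, apply the induction hypothesis to the $\ell'$-branch premise (whose equality hypothesis substitutes to the reflexive $\Equation{\ell'}{\ell'}$), strengthen away the now-vacuous witness, and stitch everything together with \TirName{Sub-Trans}. Your route is the one faithful to the variable-scrutinee formulation --- the paper's appendix proof silently assumes the more liberal value-scrutinee variant of the rule --- at the cost of needing an auxiliary fact the paper never states (strengthening an unused reflexive equality binding out of a subtyping judgment); the paper's route buys uniformity and brevity, but only goes through against the more general rule.
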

\begin{proof}
  By induction on the derivation of the subtyping judgment.

  \textbf{Case }$  \inferrule[Sub-Conv]{\TEnv \vdash \Convertible A B
    : \Kind}{\TEnv \vdash A \Subtype B : \Kind} 
  $: immediate by IH through
  Lemma~\ref{lemma:substitution-convertibility}.

  \textbf{Case }$  \inferrule[Sub-Lab]
  { \EnvForm \MultUn \TEnv \\ \LabelSet \subseteq \LabelSet' }
  { \TEnv \vdash \LabelSet \Subtype \LabelSet' : \KindUn }
  $: immediate.

  \textbf{Case }$  \inferrule[Sub-Trans]{
    \TEnv \vdash  A \Subtype B : \Kind \\
    \TEnv \vdash  B \Subtype C : \Kind}{
    \TEnv \vdash  A \Subtype C : \Kind}
  $: immedidate by IHs.

  \textbf{Case }$ \inferrule[Sub-Pi]
  { \TEnv \vdash A' \Subtype A : \Kind_A^\Multl
    \\ \TEnv, z:^{\Demote\Multl}A' \vdash B \Subtype B' : \Kind_B
    \\ \Multm \MultLE \Multn }
  {\TEnv \vdash \Pi_\Multm (z:A) B \Subtype \Pi_\Multn (z:A')B' : \Multn}
  $:
  Induction on the first subgoal yields
  \begin{gather}
    \TEnv, \TEnv_3[W/x] \vdash A'[W/x] \Subtype A[W/x] : \Kind_A^\Multl
  \end{gather}
  Induction on the second subgoal yields
  \begin{gather}
    \TEnv, (\TEnv_3, z:^{\Demote\Multl}A')[W/x] \vdash B[W/x] \Subtype B'[W/x] : \Kind_B
  \end{gather}
  Hence the claim
  \begin{gather}
    {\TEnv, \TEnv_3[W/x] \vdash (\Pi_\Multm (z:A) B \Subtype
      \Pi_\Multn (z:A')B')[W/x] : \Multn}
  \end{gather}

  \textbf{Case }$\RuleSubSigma$:\\
  Induction on the first subgoal yields
  \begin{gather}
    \TEnv, \TEnv_3[W/x] \vdash A[W/x] \Subtype A'[W/x] : \Kind
  \end{gather}
  Induction on the second subgoal yields
  \begin{gather}
    \TEnv, (\DemoteExtend{\TEnv_3} zA)[W/x] \vdash B[W/x]
    \Subtype B'[W/x] : \Kind
  \end{gather}
  Putting those two together yields the claim
  \begin{gather}
    {\TEnv, \TEnv_3[W/x] \vdash (\Sigma (z:A) B \Subtype \Sigma (z:A')B')[W/x]
      : \Multn}  
  \end{gather}

  \textbf{Case }$\RuleSubSend$:\\
  Analogous to \TirName[Sub-Pi].

  \textbf{Case }$\RuleSubRecv$:\\
  Analogous to \TirName{Sub-Sigma}.

  \textbf{Case }$\RuleSubCase$:\\
  Induction on the first subgoal (for typing) yields
  \begin{gather}
    \TEnv, \TEnv_3[W/x] \vdash N[W/x] : \LabelSet\cap \LabelSet'
  \end{gather}
  Induction on the second family of subgoals yields, for each $\ell\in \LabelSet\cap \LabelSet'$,
  \begin{gather}
    \TEnv, (\TEnv_3, z: \Equation N \ell)[W/x] \vdash
    B_\ell[W/x] \Subtype B'_\ell[W/x] : \Kind
    \intertext{which is the same as}
    \TEnv, \TEnv_3[W/x], z: \Equation {N[W/x]} \ell \vdash
    B_\ell[W/x] \Subtype B'_\ell[W/x] : \Kind
  \end{gather}
  Hence, we can conclude with
  \begin{gather}
    { \TEnv, \TEnv_3[W/x] \vdash
      (\CASE N {\overline{\ell: {B_\ell}}^{\ell\in \LabelSet}}
      \Subtype 
      \CASE N {\overline{\ell: {B'_\ell}}^{\ell\in \LabelSet'}})[W/x] : \Kind
    }  
  \end{gather}
  This case concludes the proof.
\end{proof}

\begin{restatable}[Substitution for typing]{lemma}{LDGVSubstitutionTyping}\label{lemma:substitution}
  Suppose that $\Decompose\TEnv{\TEnv_1}{\TEnv_2}$ and
  $\UNR\TEnv \vdash A : \Kind$ \vv{This follows from hypothesis
    $\TEnv_2 \vdash V : A$ by agreement}. If
  $\DemoteExtend{\TEnv_1} x A \vdash M : B$ and
  $\TEnv_2 \vdash V : A$, then $\TEnv \vdash M[V/x] : B[V/x]$.
  \vv{This lemma must be generalised to $\TEnv_2 \vdash N :
    A$. Furthermore all substitutions statements must be under a big
    lemma. And the proof is by mutual rule induction.}
\end{restatable}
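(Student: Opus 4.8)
The plan is to prove the statement by mutual rule induction on the derivation of $\DemoteExtend{\TEnv_1} x A \vdash M : B$, carried out simultaneously with the three companion substitution statements already listed — for type formation (Lemma~\ref{lemma:type-substitution}), convertibility (Lemma~\ref{lemma:substitution-convertibility}), and subtyping (Lemma~\ref{lemma:substitution-subtyping}) — since the elimination rules discharge well-formedness side conditions living in exactly those judgments. Throughout, the bound variable $x$ is fresh for $\TEnv$, so $A$ contains no free $x$ and $A[V/x] = A$; this keeps the statement of $\TEnv_2 \vdash V : A$ stable under substitution.

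The first move is to dispatch on the kind of $A$, which governs the shape of the conditional extension. If $\UNR\TEnv \vdash A : \KindUn$, then $\DemoteExtend{\TEnv_1} x A = \PlainExtend{\TEnv_1} x A$, and Lemma~\ref{lemma:unrestricted-values-have-unrestricted-environments} forces $\EnvForm\MultUn{\TEnv_2}$; combining commutativity and the behaviour of splitting on unrestricted parts (Lemma~\ref{lemma:context-split}) yields $\EnvForm\MultUn\TEnv$ and $\TEnv = \TEnv_1 = \TEnv_2$. This is precisely what permits the unrestricted value $V$ to be dropped into each occurrence of $x$, whether that number of occurrences is zero, one, or many. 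If instead $\UNR\TEnv \vdash A : \KindLin$, the conditional extension collapses $x$; reading this in harmony with the introduction rules (which bind $x$ through plain extension $\PlainExtend{\TEnv_1} x A$), the single linear occurrence of $x$ must be supplied the resources $\TEnv_2$ of $V$ by refining the ambient split.

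With this dichotomy in hand I would proceed rule by rule. In \TirName{Name} the interesting subcase is $z = x$, where $M[V/x] = V$ and the goal $\TEnv \vdash V : A$ reduces, via $\TEnv = \TEnv_2$ from Lemma~\ref{lemma:context-split}, to the hypothesis $\TEnv_2 \vdash V : A$; the subcase $z \ne x$ is immediate. For the binary introduction and elimination rules — \TirName{Pi-E}, \TirName{Sigma-I}, \TirName{Sigma-E}, \TirName{Sigma-G} — the internal split $\Decompose{\TEnv_1}{\TEnv_{11}}{\TEnv_{12}}$ must be refined so that $\TEnv_2$ is routed into precisely the subderivation that mentions $x$; I attach $\TEnv_2$ to that branch, apply the induction hypothesis there, and leave the other branch untouched, relying on commutativity, associativity, and definedness of iterated splits from Lemma~\ref{lemma:context-split}. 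The binders in \TirName{Pi-I} and \TirName{Sigma-G} are reconstructed after choosing the bound variable fresh for $V$; \TirName{Lab-E} additionally pushes equations $\Equation V \ell$ through the substitution, handled by Lemma~\ref{lemma:equality-type}. The rule \TirName{Sub-Type} is discharged by feeding its subtyping premise to Lemma~\ref{lemma:substitution-subtyping} alongside the induction hypothesis, and in every elimination case the well-formedness of the substituted result type $B[V/x]$ follows from Lemma~\ref{lemma:type-substitution} and Lemma~\ref{lemma:substitution-convertibility}.

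The main obstacle is the linear-resource bookkeeping in the binary rules: correctly splitting $\TEnv_1$ and re-routing $\TEnv_2$ to the side on which $x$ occurs, while simultaneously substituting into the dependent result type and tracking the interplay with conditional extension. The two branches of the kind dichotomy behave quite differently here — when $A$ is linear the single occurrence of $x$ must consume $\TEnv_2$, whereas when $A$ is unrestricted $\TEnv_2$ is itself unrestricted and may be replicated or discarded — and keeping this uniform across all cases is what makes the argument delicate. This is exactly why, as the author's remark indicates, the cleanest route is to fold the four substitution statements into a single mutually-inductive lemma and to generalise the substituted value $V$ to an arbitrary expression $N$, so that the rules whose eliminations substitute non-values are covered by the same induction.
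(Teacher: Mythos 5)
Your overall skeleton --- one big mutual induction over all the substitution statements, routing $\TEnv_2$ into the sub-derivation that mentions $x$ in the binary rules, and discharging \TirName{Sub-Type} via the subtyping substitution lemma --- matches the paper's intent. But there are two genuine gaps.

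First, you never strengthen the induction statement, and without that the induction does not go through. The paper's proof begins by generalizing the claim to: for all $\TEnv_3$, if $\DemoteExtend{\TEnv_1} x A, \TEnv_3 \vdash M : B$ and $\TEnv_2 \vdash V : A$, then $\TEnv, \TEnv_3[V/x] \vdash M[V/x] : B[V/x]$. This is unavoidable, because every rule that extends the environment breaks your un-generalized induction hypothesis: the premise of \TirName{Pi-I} is $\PlainExtend{(\DemoteExtend{\TEnv_1} x A)} y {A'} \vdash M : B$, the branches of \TirName{Lab-E} carry equation bindings $z : \Equation[L]{V'}{\ell}$, and \TirName{Sigma-I}, \TirName{Sigma-E}, \TirName{Sigma-G} push further bindings as well; in all of these, $x$ is no longer the last entry of the environment, and the entries after it (both equations and types) may themselves mention $x$ and must be substituted --- which is exactly what the conclusion $\TEnv, \TEnv_3[V/x] \vdash \cdots$ accounts for. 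Your freshness convention for inner binders only prevents capture; it does not make the IH applicable to these premises. Nor does mutual induction with Lemmas~\ref{lemma:substitution-convertibility} and~\ref{lemma:substitution-subtyping} rescue this: those statements do carry a trailing $\SEnv$, but they concern conversion and subtyping judgments, whereas the stuck premises here are typing judgments.

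Second, your kind dichotomy asserts something false. From $\UNR\TEnv \vdash A : \KindUn$ you conclude $\EnvForm\MultUn\TEnv$ and $\TEnv = \TEnv_1 = \TEnv_2$. What actually follows is only $\EnvForm\MultUn{\TEnv_2}$ (by Lemma~\ref{lemma:unrestricted-values-have-unrestricted-environments}), hence, by commutativity and item~\ref{item:uniquely-det} of Lemma~\ref{lemma:context-split}, $\TEnv = \TEnv_1$; the full environment $\TEnv$ may still contain linear bindings consumed by $M$ --- for instance a channel endpoint $c : S$ with $S : \KindLin$ --- so in general $\TEnv \neq \TEnv_2$ and $\TEnv$ is not unrestricted. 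The paper uses precisely the weaker fact: $\TEnv_2$ is unrestricted, so $V$ may be duplicated or discarded, while $\TEnv = \TEnv_1$ keeps the linear resources of $M$; the unrestricted/linear case analysis you sketch is performed locally in \TirName{Pi-E}, split further on whether $x \in \FV(M)$ or $x \in \FV(N)$. Your \TirName{Name} case happens to survive because that rule's own premise forces the surrounding environment to be unrestricted, but elsewhere the claim $\EnvForm\MultUn\TEnv$ amounts to admitting weakening by linear bindings, which is unsound in this system.
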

\begin{proof}
  We need to generalize the statement to account for the possibility
  that $x: C$ occurs somewhere in the middle of the
  typing environment: for all $\TEnv_3$, if
  $\DemoteExtend{\TEnv_1} x C, \TEnv_3 \vdash M : B$ and
  $\TEnv_2 \vdash W : C$, then
  $\TEnv, \TEnv_3[W/x] \vdash M[W/x] : B[W/x]$.
  \vv{This should be the statement of the lemma.}

  The proof is by induction on the derivation of
  $\DemoteExtend{\TEnv_1} x C, \TEnv_3 \vdash M : B$.

  \textbf{Case }$\RuleSubKind$: immediate by IH and because kinds are atomic.

  \textbf{Case }$\RuleSubType$: From
  \begin{gather}
    \PlainExtend{\TEnv_1} x C, \TEnv_3 \vdash M : B
  \end{gather}
  we obtain by inversion
  \begin{gather}
    \label{eq:121}
    \PlainExtend{\TEnv_1} x C, \TEnv_3 \vdash M : A
    \\
    \label{eq:122}
    \UNR{(\TEnv_1 , x: C, \TEnv_3)} \vdash A \Subtype B
    : \Kind
  \end{gather}
  Induction for (\ref{eq:121}) yields
  \begin{gather}
    \TEnv, \TEnv_3[W/x] \vdash M[W/x] : A[W/x]
  \end{gather}
  Induction for (\ref{eq:122}) yields
  \begin{gather}
    \UNR{(\TEnv_1, \TEnv_3[W/x])} \vdash A[W/x] \Subtype B[W/x]
    : \Kind
  \end{gather}
  Applying \TirName{Sub-Type} yields
  \begin{gather}
    \TEnv, \TEnv_3[W/x] \vdash M[W/x] : B[W/x]
  \end{gather}

  \textbf{Case }$\inferrule[Unit-F]{\EnvForm \MultUn \TEnv}{\TEnv \vdash
    \TUnit : \KindUnSession}$: by Lemma~\ref{lemma:substitution-formation} if
  $\Multn = \MultUn$. If $\Multn=\MultLin$, the implication is void.

  \textbf{Case }$\inferrule[Unit-I]{\EnvForm \MultUn \TEnv}{\TEnv \vdash \EUnit :  \TUnit}$:
  by Lemma~\ref{lemma:substitution-formation} as in the previous case.

  \textbf{Case }$\inferrule[Lab-F]{\EnvForm \MultUn \TEnv \\
    \LabelSet \subseteq_\Fin \LabelUniv
  }{\TEnv \vdash \LabelSet : \KindUn}$: immediate with Lemma~\ref{lemma:substitution-formation}.

  \textbf{Case }$\inferrule[Lab-I]{
    \ell \in \LabelSet \\
    \TEnv \vdash \LabelSet : \KindUn
  }{\TEnv \vdash \ell : \LabelSet}$: immediate by induction.

  \textbf{Case }$
  \inferrule[Lab-E]{
    \Decompose{\TEnv}{\TEnv_1}{\TEnv_2} \\
    \EnvForm \MultUn \TEnv_1 \\
    \TEnv_1 \vdash M : \LabelSet \\
    (\forall\ell\in \LabelSet)~\TEnv_2, z : \Equation M \ell \vdash N_\ell : B
  }{\TEnv \vdash \CASE M {\overline{\ell: N}^{\ell\in L}} : B
  }$: the starting point is
  \begin{gather}
    \label{eq:74}
    \TEnv_1 , x: C, \TEnv_3 \vdash \CASE M {\overline{\ell: N}^{\ell\in L}} : B
  \end{gather}
  Inversion yields
  \begin{gather}
    \label{eq:75}
    \Decompose{\TEnv_1 , x: C, \TEnv_3}{\TEnv_u}{\TEnv_r}
    \\
    \label{eq:76}
    \EnvForm \MultUn{ \TEnv_u}
    \\
    \label{eq:77}
    \TEnv_u \vdash M : \LabelSet
    \\
    \label{eq:78}
    (\forall\ell\in L)~ \TEnv_r, z: \Equation M \ell \vdash N_\ell : B
  \end{gather}
  By Lemma~\ref{lemma:decomposition-unrestricted}
  \begin{gather}
    \label{eq:79}
    \TEnv_r = \TEnv_1 , x: C, \TEnv_3 \\
    \TEnv_u = \DemoteExtend{\UNR{\TEnv_1}} x C, \UNR{\TEnv_3} \\
    \Decompose{\TEnv_1}{\TEnv_1}{\UNR{\TEnv_1}} \\
    \Decompose{\TEnv_3}{\TEnv_3}{\UNR{\TEnv_3}}
  \end{gather}
  Induction on \eqref{eq:78} yields
  \begin{gather}
    \label{eq:80}
    (\forall\ell\in L)~ \TEnv_1, \TEnv_3[W/x], \Equation {M[W/x]} \ell \vdash N_\ell[W/x] : B[W/x]
  \end{gather}
  Induction on \eqref{eq:77} yields
  \begin{gather}
    \label{eq:81}
    \UNR{\TEnv_1},\UNR{ \TEnv_3}[W/x] \vdash M[W/x] : \LabelSet
  \end{gather}
  Lemma~\ref{lemma:substitution-formation} applied to \eqref{eq:76} yields
  \begin{gather}
    \label{eq:82}
    \EnvForm \MultUn{ \UNR{\TEnv_1}, \UNR{\TEnv_3}[W/x]}
  \end{gather}
  Finally
  \begin{gather}
    \label{eq:83}
    \Decompose{\TEnv_1, \TEnv_3[W/x]}{\TEnv_1, \TEnv_3[W/x]}{\UNR{\TEnv_1}, \UNR{\TEnv_3}[W/x]}
  \end{gather}
  Applying \TirName{Lab-E} to (\ref{eq:83}), (\ref{eq:82}),
  (\ref{eq:81}), and~(\ref{eq:80}) yields the desired result 
  \begin{gather}
    \label{eq:84}
    \TEnv_1, \TEnv_3[W/x] \vdash
    \CASE {M[W/x]} {\overline{\ell: N_\ell[W/x]}^{\ell\in L}} : B[W/x]
  \end{gather}

  \textbf{Case }$\RulePiF$: immediate by induction.

  \textbf{Case }$\RuleName$: There are two cases.

  \textbf{Subcase }$x=z$: we are looking at
  \begin{gather}
    \label{eq:85}
    \TEnv_1 , z: C \vdash z : C
  \end{gather}
  Inversion yields
  \begin{gather}
    \label{eq:86}
    \EnvForm \MultUn {\TEnv_1}
  \end{gather}
  By assumption $\TEnv_2 \vdash W:C$, by Lemma~\ref{lemma:decomposition-unrestricted}
  $\Decompose{\TEnv_2}{\TEnv_2}{\TEnv_1}$, $W = x[W/x]$, and weakening (Lemma~\ref{lemma:weakening})
  we obtain the result 
  \begin{gather}
    \label{eq:88}
    \TEnv_2 \vdash W : C
  \end{gather}

  \textbf{Subcase }$x\ne y$: by induction, considering that $x$ may
  occur before or after $y$ in the environment. In the first case, $x$
  may appear in $A$, in the latter, it does not. Both are straightforward.

  \textbf{Case }$\RulePiI$:

  Assuming that $x\ne y$, our starting point is
  \begin{gather}
    \TEnv_1 , x: C, \TEnv_3 \vdash
    \lambda_\Multm (y: A).M : \Pi_\Multm (y:  A)B
   \end{gather}
   Inversion yields
   \begin{gather}
     \EnvForm \Multm {\TEnv_1 , x: C, \TEnv_3}
     \\
     \DemoteExtend{\TEnv_1, x:C, \TEnv_3} y A \vdash M : B
   \end{gather}
   Induction for all inverted judgments yields
   \begin{gather}
     \EnvForm \Multm{ \TEnv_1, \TEnv_3[W/x]}
     \\
     \DemoteExtend{\TEnv_1, \TEnv_3[W/x]} y {A[W/x]} \vdash M[W/x] : B[W/x]
   \end{gather}
   Applying \TirName{Pi-I} yields
   \begin{gather}
     \TEnv_1, \TEnv_3[W/x] \vdash
     (\lambda_\Multm (y: A).M)[W/x] : (\Pi_\Multm (y:  A)B)[W/x]
   \end{gather}

   \textbf{Case }$\RulePiE$.

  The starting point is
  \begin{gather}
    \Decompose\TEnv{\TEnv_1}{\TEnv_2}
    \\
    \TEnv_1, x:C, \TEnv_3 \vdash M\,N : B[N/y]
    \\
    \TEnv_2 \vdash W : C
  \end{gather}
  \textbf{Subcase.} Assuming that $\UNR{\TEnv_2} \vdash C : \KindUn$, inversion yields
  \begin{gather}
    \Decompose
    {(\TEnv_1, x:C, \TEnv_3)}
    {(\TEnv'_1, x: C, \TEnv'_3)}
    {(\TEnv''_1, x: C, \TEnv''_3)}
    \\
    \label{eq:60}
    \TEnv'_1, x: C, \TEnv'_3 \vdash M : \Pi_\Multm (y: A)B
    \\
    \label{eq:63}
    \TEnv''_1, x: C, \TEnv''_3 \vdash N : A
    \\
    \label{eq:64}
    \UNR{(\TEnv_1, x:C, \TEnv_3)} \vdash B[N/y] : \Kind
  \end{gather}

  The induction hypothesis for~(\ref{eq:60}),~(\ref{eq:63}), and~(\ref{eq:64}) yields
  \begin{gather}
    \label{eq:95}
    (\DecomposeOp{\TEnv'_1}{ \TEnv_2}), \TEnv'_3[W/x] \vdash M[W/x] : (\Pi_\Multm (y: A)B)[W/x]
    \\
    \label{eq:96}
    (\DecomposeOp{\TEnv''_1}{ \TEnv_2}), \TEnv''_3[W/x] \vdash N[W/x] : A[W/x]
    \\
    \label{eq:97}
    \UNR{(\TEnv_1, \TEnv_3[W/x])} \vdash B[N/y][W/x] : \Kind
  \end{gather}
  By
  Lemma~\ref{lemma:unrestricted-values-have-unrestricted-environments},
  we know that $\EnvForm\KindUn{\TEnv_2}$. Hence,
  $\Decompose{\TEnv_1'}{\TEnv_1'}{\TEnv_2}$ and
  $\Decompose{\TEnv_1''}{\TEnv_1''}{\TEnv_2}$ so that
  \begin{gather}
    \label{eq:94}
    \Decompose{(\TEnv_1, \TEnv_3[W/x])}{(\TEnv_1', \TEnv_3'[W/x])}{(\TEnv_1'', \TEnv_3''[W/x])}
  \end{gather}
  It remains to apply \TirName{Pi-E} to~(\ref{eq:94}), (\ref{eq:95}), (\ref{eq:96}), and~(\ref{eq:97}) to prove the judgment.

  \textbf{Subcase.} Assuming that $\UNR{\TEnv_2} \vdash C : \KindLin$
  and that $x\in\FV(N)$, inversion yields 
  \begin{gather}
    \Decompose
    {(\TEnv_1, x:C, \TEnv_3)}
    {(\TEnv'_1, \TEnv'_3)}
    {(\TEnv''_1, x: C, \TEnv''_3)}
    \\
    \label{eq:98}
    \TEnv'_1, \TEnv'_3 \vdash M : \Pi_\Multm (y: A)B
    \\
    \label{eq:99}
    \TEnv''_1, x: C, \TEnv''_3 \vdash N : A
    \\
    \label{eq:100}
    \UNR{(\TEnv_1, x: C, \TEnv_3)} \vdash B[N/y] : \Kind
  \end{gather}
  Hence $x$ 
  does not appear in $\TEnv'_3$, $M$, and $\Pi_\Multm (y: A)B$ so that
  they are indifferent to substitution:
  \begin{gather}
    \label{eq:67}
    \TEnv'_1, \TEnv'_3[W/x]
    \vdash M[W/x] : (\Pi_\Multm (y: A)B)[W/x]
  \end{gather}
  Moreover,   induction applied to~\eqref{eq:99} yields
  \begin{gather}\label{eq:68}
    (\DecomposeOp{\TEnv''_1}{ \TEnv_2}), \TEnv''_3[W/x] \vdash N[W/x] : A[W/x]
  \end{gather}
  Furthermore, $\UNR{(\TEnv_1, x:C, \TEnv_3)} = \UNR{(\TEnv_1, \TEnv_3)}$ contains
  no binding for $x$ so that $x$ does not appear on the
  right side of the kinding judgment~(\ref{eq:64}), which yields
  \begin{gather}\label{eq:71}
    \UNR{(\TEnv_1, \TEnv_3[W/x])} \vdash B[N/y][W/x] : \Kind
  \end{gather}
  Hence, we can apply rule \TirName{Pi-E} to (\ref{eq:67}),
  (\ref{eq:68}), and (\ref{eq:71}) to obtain
  \begin{gather}
    \DecomposeOp{(\TEnv'_1, \TEnv'_3[W/x])}{ ((\TEnv''_1 + \TEnv_2), \TEnv''_3[W/x])}
    \vdash (M\, N)[W/x] : B[N/y][W/x]
    \intertext{and resolving the decompositions yields}
    \TEnv, \TEnv_3[W/x] \vdash (M\, N)[W/x] : B[N/y][W/x]
  \end{gather}

  \textbf{Subcase.} The case where $x \in \FV (M)$ is analogous.

  \textbf{Case }$\RuleSigmaF$.
  Immediate by induction.

  \textbf{Case }$\RuleSigmaI$.
  Immediate by induction.


  \textbf{Case }$\RuleSigmaE$.

  Immediate by induction.

  \textbf{Case }$\RuleSsnOutF$.

  Immediate by induction.

  \textbf{Case}$\RuleSsnInF$.

  Immediate by induction.

  \textbf{Case}$\RuleSsnSendE$.  Immediate by induction.
  
  \textbf{Case}$\RuleSsnRecvE$.

  Immediate by induction.
\end{proof}

\begin{lemma}[Substitution for Types]
  \label{lemma:type-substitution}
  If $\TEnv, \alpha : \Kind \vdash A : \Kind$ and $\TEnv \vdash B :
  \Kind$, then $\TEnv \vdash A[B/\alpha] : \Kind$.
\end{lemma}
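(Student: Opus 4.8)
The plan is to prove a slightly generalized statement by rule induction on the derivation of the first hypothesis, folding it into the single mutual induction that also establishes Lemmas~\ref{lemma:substitution-formation}, \ref{lemma:substitution-convertibility}, \ref{lemma:substitution-subtyping}, and \ref{lemma:substitution}. Two generalizations are needed for the induction to close. First, I would allow an arbitrary result kind $\Kind'$ in the conclusion, since the recursive calls under binders produce types whose kind need not equal the kind $\Kind$ of the variable $\alpha$. Second, following the pattern used for value substitution, I would permit $\alpha$ to sit in the middle of the environment, proving: \emph{if $\TEnv, \alpha:\Kind, \SEnv \vdash A : \Kind'$ and $\TEnv \vdash B : \Kind$, then $\TEnv, \SEnv[B/\alpha] \vdash A[B/\alpha] : \Kind'$}. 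The trailing context $\SEnv$ is forced upon us by going under the recursor's own type-variable binder in \TirName{Rec-F}.

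The interesting cases are the following. For \TirName{TVar-F} I split on the identity of the variable: if it is $\alpha$, then $A[B/\alpha]=B$ and the result is the second hypothesis, after weakening (Lemma~\ref{lemma:weakening}) to re-insert $\SEnv[B/\alpha]$; if it is some $\beta\neq\alpha$, then $A[B/\alpha]=\beta$ and I reapply \TirName{TVar-F}, using Strengthening (Lemma~\ref{lem:strengthening}) to drop the now-unused $\alpha$. The leaf rules \TirName{Unit-F}, \TirName{End-F}, \TirName{Lab-F}, and \TirName{Nat-F} are immediate: their types contain no type variable, so $A[B/\alpha]=A$, and I merely re-establish $\EnvForm\MultUn{(\TEnv,\SEnv[B/\alpha])}$ from the corresponding premise via Lemma~\ref{lemma:substitution-formation}. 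The binder rules \TirName{Pi-F}, \TirName{Sigma-F}, \TirName{Ssn-Out-F}, and \TirName{Ssn-In-F} proceed by applying the induction hypothesis to each premise; the one point requiring care is that the conditional extension must make the same linear/unrestricted choice before and after substitution, so that $\DemoteExtend{(\TEnv,\SEnv)} x {A_0}$ is carried to $\DemoteExtend{(\TEnv,\SEnv[B/\alpha])} x {A_0[B/\alpha]}$. This holds because the induction hypothesis, taken at the result kind of the domain type, shows substitution preserves the kind of $A_0$.

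The central case is \TirName{Rec-F} (and its session-type analogue). Here $A = \TRec{V}{A_0}{\beta}{\Kind'}{B_0}$; by the variable convention I take $\beta$ fresh for $\alpha$ and $B$, so substitution distributes as $\TRec{V}{A_0[B/\alpha]}{\beta}{\Kind'}{B_0[B/\alpha]}$ and leaves $V$ untouched (the index $V:\TNat$ is built from $\Zero$ and $\Succ$ and contains no type variable). The premise $\TEnv,\alpha:\Kind,\SEnv \vdash A_0 : \Kind'$ yields $\TEnv,\SEnv[B/\alpha]\vdash A_0[B/\alpha]:\Kind'$ by induction; the premise $\TEnv,\alpha:\Kind,\SEnv,\beta \vdash B_0 : \Kind'$ yields $\TEnv,\SEnv[B/\alpha],\beta \vdash B_0[B/\alpha]:\Kind'$ by induction, which is precisely where the trailing-context generalization (with $\SEnv$ extended by $\beta$) is used, after weakening $\TEnv\vdash B:\Kind$ over the fresh $\beta$. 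Reassembling with \TirName{Rec-F}, together with the untouched index premise $\TEnv,\SEnv[B/\alpha]\vdash V:\TNat$, gives the goal. Finally, \TirName{Sub-Kind} is routine: apply the induction hypothesis and reapply subkinding, kinds being atomic and unaffected by the substitution.

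The main obstacle is not any single case but keeping the mutual induction coherent where type formation dips into the expression-typing judgment, namely the value positions in \TirName{Equality-F}, \TirName{Lab-E'}, and the recursor index. There one must know that substituting a type for a type variable preserves the typing of the embedded value $V$, which strictly speaking requires carrying a companion type-substitution statement for the typing judgment in the same mutual induction. In this calculus these value positions are inhabited only by variables, labels, and numerals, none of which carry type annotations mentioning $\alpha$, so the obligation is discharged by observing $V[B/\alpha]=V$; the clean justification, however, is to prove all the substitution statements together, exactly as the other substitution lemmas in this section are organized.
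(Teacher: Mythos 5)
The paper gives you nothing to compare against here: its proof body for Lemma~\ref{lemma:type-substitution} is an empty draft placeholder (a ``TODO'' note), so your attempt can only be measured against the paper's stated intent that all the substitution lemmas ``should come together in a big lemma'' proved by one mutual rule induction. Your architecture matches that intent. The two generalizations you introduce --- an arbitrary result kind, and $\alpha$ sitting in the middle of the environment with a trailing context $\SEnv$ --- are exactly what the \TirName{Pi-F}, \TirName{Sigma-F}, \TirName{Ssn-Out-F}, \TirName{Ssn-In-F}, and \TirName{Rec-F} cases require; your point that the conditional extension $\DemoteExtend{\TEnv}{x}{A_0}$ must resolve the same way before and after substitution is the right thing to flag and is discharged correctly by the kind-preservation built into your induction hypothesis; and your observation that value positions inside well-formed types are only variables, labels, and numerals (so type substitution fixes them, modulo a companion statement for expression typing carried in the same mutual induction) is accurate.

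There is, however, one concrete gap: you ignore polarities. In the extension of Section~\ref{sec:naturals}, type-variable occurrences are written $\TVar_\Pol$ with $\Pol\in\{\PolPos,\PolNeg\}$, duality flips polarity ($\DUAL{\TVar_\PolPos}=\TVar_\PolNeg$), and the paper says explicitly that unrolling a recursor into a negative occurrence substitutes the \emph{dual} of the type for the variable. Consequently, in your \TirName{TVar-F} case the equation $A[B/\alpha]=B$ is only correct for $\alpha_\PolPos$; for $\alpha_\PolNeg$ one has $A[B/\alpha]=\DUAL{B}$, and closing that subcase needs Kinding Duality (Lemma~\ref{lemma:kinding-duality}) to obtain $\TEnv\vdash \DUAL{B}:\Kind$. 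This cannot be elided, because the one place the Agreement lemma invokes the present substitution lemma is exactly rule \TirName{Conv-S}, i.e.\ the unrolling for which the polarity caveat was introduced. Two smaller repairs: Lemma~\ref{lemma:substitution-formation} as stated concerns substituting a \emph{value} for an expression variable, so re-establishing $\EnvForm\MultUn{(\TEnv,\SEnv[B/\alpha])}$ in the leaf cases needs the type-variable analogue (which should simply be another component of your mutual induction); and your variable case silently assumes an occurrence of $\alpha$ is kinded at its declared kind $\Kind$, which agrees with the algorithmic rule \TirName{A-Rec-Var} but not with the literal declarative rule \TirName{TVar-F} (which assigns $\Un$ to every occurrence) --- an inconsistency in the paper itself that a complete proof must resolve explicitly, since otherwise the case $A=\alpha_\Pol$ with $\Kind=\KindLin$ would demand $\TEnv\vdash B:\Un$, which your hypotheses do not supply.
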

\begin{proof}
  \vv{TODO}
\end{proof}

{\LDGVSubjectReductionExpressions*}
\begin{proof}
  The proof is by cases on the reduction relation. According to
  a canonical derivation lemma, every typing derivation ends 
  with exactly one application of the \TirName{Sub-Type} rule on top of
  a structural rule. 

  \textbf{Case }$ \CASE {\ell_j} {\overline{\ell_i: N_i}^{1\le i\le n}}
   \ReducesTo
    N_j$ if $1\le j\le n$. 
  Suppose that
  \begin{gather}
    \label{eq:1}
    \TEnv \vdash \CASE {\ell_j} {\overline{\ell_i: N_i}^{1\le i \le n}} : A
  \end{gather}
  Inversion of subtyping yields
  \begin{gather}
    \label{eq:48}
    \Decompose{\TEnv}{\TEnv_1}{\TEnv_2} \\
    \label{eq:49}
    \EnvForm \MultUn {\TEnv_2} \\
    \label{eq:50}
    \TEnv_1 \vdash \CASE {\ell_j} {\overline{\ell_i: N_i}^{1\le i \le n}} : B\\
    \label{eq:51}
    \TEnv_2 \vdash B \Subtype A :\Kind
  \end{gather}
  Inversion of \eqref{eq:50} yields
  \begin{gather}
    \label{eq:2}
    \Decompose{\TEnv_1}{\TEnv_{11}}{\TEnv_{12}}
    \\
    \label{eq:52}
    \EnvForm \MultUn{ \TEnv_{11}} \\
    \label{eq:3}
    \TEnv_{11} \vdash \ell_j : \{\ell_1, \dots, \ell_n\} \\
    \label{eq:4}
    (\forall1\le i \le n)~\TEnv_{12}, z : \Equation{ \ell_j}{\ell_i} \vdash N_i : B
  \end{gather}
  Inversion of~\eqref{eq:3} yields
  \begin{gather}
    \label{eq:6}
    1 \le j \le n
    \\
    \TEnv_{11} \vdash \{\ell_1, \dots, \ell_n\} : \KindUn
  \end{gather}
  Now consider~\eqref{eq:4} for $i=j$.
  As $\EnvForm \MultUn{ \TEnv_{11}}$ and  $\EnvForm \MultUn{ \TEnv_2}$ it must be that
  $\TEnv_{12} = \TEnv_1$. Furthermore,
  the assumption $\Equation{\ell_j}{\ell_j}$ can be omitted. Thus, we have
  \begin{gather}
    \label{eq:7}
    \TEnv_1 \vdash N_j : B
  \end{gather}
  We apply subtyping to assumptions \eqref{eq:48}, \eqref{eq:49}, and
  \eqref{eq:51} to obtain
  \begin{gather}
    \label{eq:5}
    \TEnv_1 \vdash N_j : A
  \end{gather}

  \textbf{Case }$(\lambda_\Multm x.M)\, V  \ReducesTo M[V/x]$. Suppose
  that
  \begin{gather}
    \label{eq:8}
    \TEnv \vdash (\lambda_\Multm x.M)\, V : A
  \end{gather}
  Inversion of the top-level subtyping yields
  \begin{gather}
    \label{eq:54}
    \TEnv \vdash (\lambda_\Multm x.M)\, V : B[V/x] \\
    \label{eq:25}
    \UNR{\TEnv} \vdash  {B[V/x]} \Subtype A :\Kind'
  \end{gather}
  By inversion of \eqref{eq:54} using \TirName{Pi-E}
  \begin{gather}
    \label{eq:9}
    \Decompose{\TEnv}{\TEnv_{1}}{\TEnv_{2}} \\
    \label{eq:10}
    \TEnv_{1} \vdash (\lambda_\Multm x.M) : \Pi_\Multn (x:C)B \\
    \label{eq:11}
    \TEnv_{2} \vdash V : C \\
    \label{eq:21}
    \UNR{\TEnv} \vdash B[V/x] : \Kind'
  \end{gather}
  By a lemma of canonical derivations, there is a 
  \TirName{Sub-Type} rule on top of the derivation
  for~\eqref{eq:10}. Its inversion yields
  \begin{gather}
    \label{eq:22}
    \TEnv_{1} \vdash (\lambda_\Multm x.M) : \Pi_\Multm (x:C')B' \\
    \label{eq:23}
    \UNR{\TEnv_{1}} \vdash \Pi_\Multm (x:C')B' \Subtype \Pi_\Multn (x:C)B  :\Kind
  \end{gather}
  Further inversion of (\ref{eq:22}) yields
  \begin{gather}
    \label{eq:16}
    \EnvForm \Multm{ \TEnv_{1}} \\
    \label{eq:13}
    \PlainExtend{\TEnv_{1}} x {C'} \vdash M : B' \\
  \end{gather}
  Inversion of subtyping~(\ref{eq:23}) yields
  \begin{gather}
    \label{eq:12}
    \UNR{\TEnv_{1}} \vdash C \Subtype C' : \Kind_C
    \\
    \label{eq:15}
    \DemoteExtend{\UNR{\TEnv_{1}}} xC \vdash B' \Subtype B  :\Kind_B
    \\
    \label{eq:18}
    \Multm \MultLE \Multn
  \end{gather}
  Taking (\ref{eq:11}) and (\ref{eq:12}) together with
  Lemma~\ref{lemma:properties-un}, we find 
  \begin{gather}
    \label{eq:24}
    \TEnv_{2} \vdash V : C'
  \end{gather}
  In this situation, we apply the substitution
  Lemma~\ref{lemma:substitution} to~\eqref{eq:13} and~\eqref{eq:24}
  and Lemma~\ref{lemma:substitution-subtyping} to~(\ref{eq:15}) and~\eqref{eq:24} to obtain 
  \begin{gather}
    \TEnv \vdash M[V/x] : B'[V/x] \\
    \label{eq:26}
    \UNR{\TEnv} \vdash  B'[V/x] \Subtype  B[V/x] :\Kind_B
  \end{gather}
  Applying \TirName{Sub-Type} twice with ~\eqref{eq:26}
  and~\eqref{eq:25} yields the desired 
  \begin{gather}
    \TEnv \vdash M[V/x] : B[V/x]
    \\
    \TEnv \vdash M[V/x] : A
  \end{gather}


  \textbf{Case }$\ELet{\EPair[]xy}{\EPair V W}{N}
   \ReducesTo
   N[V/x][W/y]$.
   Suppose that 
   \begin{gather}
     \label{eq:19}
     \TEnv \vdash \ELet{\EPair[]xy}{\EPair V W}{N} : C 
   \end{gather}
   Treating the outermost subtyping is trivial, so we directly invert
   \TirName{Sigma-E}:
   \begin{gather}
     \label{eq:30}
     \Decompose\TEnv{\TEnv_1}{\TEnv_2}\\
     \label{eq:32}
     \TEnv_1 \vdash \EPair V W : \Sigma (x:A)B \\
     \label{eq:35}
     \TEnv_2, x: A, y: B \vdash N : C
     \\
     \label{eq:59}
     \UNR{\TEnv} \vdash C : \Kind_C
   \end{gather}
   Inversion of the subtyping on~\eqref{eq:32} yields
   \begin{gather}
     \label{eq:57}
     \TEnv_1 \vdash \EPair V W : \Sigma (x:A')B' \\
     \label{eq:58}
     \UNR{\TEnv_1} \vdash \Sigma (x:A')B' \Subtype
     \Sigma (x:A)B : \Kind
   \end{gather}
  Inversion of~\eqref{eq:57} using \TirName{Sigma-I} yields
  \begin{gather}
    \label{eq:36}
    \Decompose{\TEnv_1}{\TEnv_{11}}{\TEnv_{12}}\\
    \label{eq:37}
    \TEnv_{11} \vdash  V : A' \\
    \label{eq:39}
    \TEnv_{12} \vdash W : B' [V/x] \\
    \label{eq:40}
    \UNR{\TEnv_1} \vdash \Sigma (x:A') B' : \Kind
  \end{gather}
  Inversion of subtyping~(\ref{eq:58}) yields
  \begin{gather}
    \label{eq:31}
    \UNR{\TEnv_1} \vdash A' \Subtype A : \Kind_A \\
    \label{eq:44}
    \DemoteExtend{\UNR{\TEnv_1}} x{A'} \vdash B' \Subtype B :
    \Kind_B
  \end{gather}
  Using (\ref{eq:37}),~(\ref{eq:31}) and Lemma~\ref{lemma:properties-un} we obtain
  \begin{gather}
    \label{eq:38}
    \TEnv_{11} \vdash  V : A
  \end{gather}
  Applying substitution (Lemma~\ref{lemma:substitution}) to
  \eqref{eq:35} and~\eqref{eq:38}  yields (recall the $x\notin\FV (C)$
  by (\ref{eq:59})) 
  \begin{gather}
    \label{eq:41}
    (\DecomposeOp{\TEnv_{11}}{\TEnv_2}), y B[V/x] \vdash N[V/x] : C
  \end{gather}
  Substitution for~(\ref{eq:44}) with~(\ref{eq:38}) yields
  \begin{gather}
    \UNR{\TEnv_1} \vdash B'[V/x] \Subtype B[V/x] :
    \Kind_B
  \end{gather}
  which can be used with~(\ref{eq:39}) to yield
  \begin{gather}
    \label{eq:45}
    \TEnv_{12} \vdash W : B [V/x] 
  \end{gather}
  Applying substitution to~\eqref{eq:41} and~\eqref{eq:45} yields
  \begin{gather}
    \label{eq:42}
    (DecomposeOp{\TEnv_{11}}{\DecomposeOp{\TEnv_2}{\TEnv_{12}}}) \vdash N[V/x][W/y] : C
  \end{gather}
  as $C$ neither contains $x$ nor $y$, combining the environments
  yields  the desired
  \begin{gather}
    \label{eq:43}
    \TEnv \vdash N[V/x][W/y] : C
  \end{gather}
\end{proof}

To prove typing preservation for processes, we adapt the following two lemmas from
\citet{GayVasconcelos2010-jfp}.

\begin{restatable}[Subderivation introduction]{lemma}{SubderivationIntroduction}\label{lemma:subderivation-intro}
  If $\mathcal D$ is a derivation of $\TEnv \vdash \Context{M} : A$
  with $\FV (M) \subseteq \Dom (\TEnv)$,  
  then there are $\TEnv_1$, $\TEnv_2$, and $B$ such that
  $\Decompose\TEnv{\TEnv_1}{\TEnv_2}$, $\mathcal D$ has a
  subderivation $\mathcal D'$ concluding $\TEnv_2 \vdash M:B$, and the
  position of $\mathcal D'$ in $\mathcal D$ corresponds to the
  position of the hole in $\ECN{}$.
\end{restatable}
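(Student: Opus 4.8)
The plan is to induct on the structure of the derivation $\mathcal D$ and perform a case analysis on its last rule, letting the head constructor of the evaluation context $\ECN{}$ determine which structural rule can stand at the root once any uses of the non-syntax-directed rule \TirName{Sub-Type} have been peeled off. Framing the induction on $\mathcal D$ rather than on $\ECN{}$ is precisely what lets us absorb \TirName{Sub-Type}: if the last rule is \TirName{Sub-Type}, then its premise is a shorter derivation $\mathcal D_0$ of $\TEnv \vdash \Context{M} : A'$ with $\UNR\TEnv \vdash A' \Subtype A : \Kind$ for the \emph{same} context $\ECN{}$, so the induction hypothesis applied to $\mathcal D_0$ already supplies $\TEnv_1$, $\TEnv_2$, $B$ together with a subderivation $\TEnv_2 \vdash M : B$ whose position is unchanged, and we are done. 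In the base case $\ECN{} = \Box$ we have $\Context{M} = M$ and $\mathcal D$ itself concludes $\TEnv \vdash M : A$, so we take $\TEnv_2 = \TEnv$, $B = A$, $\mathcal D' = \mathcal D$, and note $\Decompose\TEnv{\UNR\TEnv}{\TEnv}$ from $\Decompose\TEnv\TEnv{\UNR\TEnv}$ (Lemma~\ref{lemma:properties-un}) followed by commutativity of the split (Lemma~\ref{lemma:context-split}).

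Each remaining case inverts the unique structural rule compatible with the head of $\ECN{}$ and recurses on the subderivation typing the immediate subcontext $\Context[E']{M}$. For the non-binding contexts this is routine bookkeeping over environment splits. For $\ECN{}\,N$ the last rule is \TirName{Pi-E}, giving $\Decompose\TEnv{\TEnv_a}{\TEnv_b}$ and $\TEnv_a \vdash \Context[E']{M} : \Pi_\Multm (x:A)B$; the induction hypothesis yields $\Decompose{\TEnv_a}{\TEnv_c}{\TEnv_2}$ with a subderivation $\TEnv_2 \vdash M : B'$, and I would recombine these nested splits into a single $\Decompose\TEnv{\TEnv_1}{\TEnv_2}$ with $\Decompose{\TEnv_1}{\TEnv_c}{\TEnv_b}$, using associativity of environment splitting (a standard property provable alongside the facts already collected in Lemma~\ref{lemma:context-split}). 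The scrutinee context $\CASE{\ECN{}}{\dots}$ (rule \TirName{Lab-E}) is even simpler: the scrutinee is typed in $\UNR\TEnv$, so the induction hypothesis returns a split of $\UNR\TEnv$ whose components both equal $\UNR\TEnv$ since $\EnvForm\MultUn{\UNR\TEnv}$ (Lemma~\ref{lemma:context-split}), whence $\Decompose\TEnv\TEnv{\UNR\TEnv}$ closes the case. The contexts $V\,\ECN{}$, $\ELet{\EPair[]xy}{\ECN{}}{N}$ (rule \TirName{Sigma-E}, hole in the scrutinee with no binder above it), $\SEND{\ECN{}}$, $\RECV{\ECN{}}$, and the recursor context are handled the same way.

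The hard part will be the dependent-pair context $\DPair {x:A} V {\ECN{}}$, whose last rule is \TirName{Sigma-I}: the hole sits \emph{underneath} the binder $x$ (and the equation witness $z$) introduced by the rule, so the subderivation typing $M$ lives not in a component of a split of $\TEnv$ but in $\DemoteExtend{\PlainExtend{\TEnv_b} x A} z {\Equation[A]xV}$, which carries extra bindings absent from $\TEnv$. The key observation that makes this benign is that these bindings are harmless for the linear accounting: $x{:}A$ is added only conditionally, when $A$ is unrestricted, and $z$ carries an equality type, which is unrestricted as well; hence the induction hypothesis still returns a split whose consumed linear resources are exactly those removed from $\TEnv$, with only unrestricted and equality bindings riding along. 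To state the conclusion in the exact form $\Decompose\TEnv{\TEnv_1}{\TEnv_2}$, I would either strengthen the extra bindings away when $x,z \notin \FV(M)$ (Lemma~\ref{lem:strengthening}), or more uniformly read the lemma with $\TEnv_2$ permitted to extend a split component by the unrestricted binders accumulated on the path from the root to the hole. I expect that threading the nested splits through this binder—while certifying that $M$'s linear footprint coincides with the difference $\TEnv \setminus \TEnv_1$—is the only genuinely delicate point; every other case reduces to inversion plus the commutativity and associativity properties of context splitting.
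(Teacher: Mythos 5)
Your overall plan coincides with the paper's proof, which argues by induction (on $\ECN$ rather than on $\mathcal D$, an immaterial difference) and details exactly the two cases you describe first: $\Hole$, with the same witnesses $\TEnv_1 = \UNR\TEnv$ and $\TEnv_2 = \TEnv$, and $\ECN\,N$, by inversion of \TirName{Pi-E}, the induction hypothesis, and reassociation of the two nested splits using commutativity and associativity of environment splitting. Your explicit peeling of \TirName{Sub-Type} is, if anything, more careful than the paper, which leaves that to its canonical-derivation convention and dismisses every other case as ``analogous'' or ``straightforward.''

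The gap is in the case you yourself single out as the crux, and it stems from a misreading of the type system. In \TirName{Sigma-I} the second component is typed in $\DemoteExtend{\PlainExtend{\TEnv_2} x A} z {\Equation[A]xV}$: the binder $x\colon A$ enters by \emph{plain} extension regardless of the kind of $A$; only the equation witness $z$ is added conditionally. The paper says so explicitly: ``If the first component $V$ is linear, then $x$ can be used in $N$.'' So under the context $\DPair{x:A}V{\ECN{}}$ with $A$ linear, a \emph{linear} binder sits above the hole, and your claim that only unrestricted and equality bindings ride along is unjustified. Consequently your reading (b) of the conclusion (allowing $\TEnv_2$ to grow by unrestricted binders) does not close this case as argued, and your fallback (a) fails for a different reason: Lemma~\ref{lem:strengthening} is strengthening for \emph{type formation}, not for expression typing, and in any event a strengthened derivation is no longer a subderivation of $\mathcal D$ located at the position of the hole, which is precisely what the statement demands and what the companion Lemma~\ref{lemma:subderivation-elimination} consumes. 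What is missing is an argument that a linear $x\colon A$ introduced on the spine can never survive down to the hole's environment: since $\FV(M)\subseteq\Dom(\TEnv)$ excludes $x$, and linear hypotheses cannot be silently discarded at the leaves of a derivation, any valid $\mathcal D$ must route $x\colon A$ into the parts of the context surrounding the hole; hence the environment of the subderivation extends a split component of $\TEnv$ by unrestricted bindings only, and only then does your generalized reading (b) — which is indeed a generalization of the lemma as stated, and one the paper's own glossed-over cases tacitly require — go through.
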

\begin{proof}
  By induction on  $\ECN$. (Two illustrative cases.)

  \textbf{Case }$\Hole$.
  In this case $\mathcal D' = \mathcal D$, $B=A$, $\TEnv_1 = \UNR{\TEnv}$, and $\TEnv_2 = \TEnv$.

  \textbf{Case }$\ECN\,N$. In this case, inversion on $\mathcal D$ yields
  $\Decompose\TEnv{\TEnv_1}{\TEnv_2}$, and a derivation $\mathcal D_1$
  of $\TEnv_1 \vdash \Context{M} : \Pi_\Multm (x:A')B'$ with
  $\FV (M) \subseteq\Dom (\TEnv_1)$.  Induction yields $\TEnv_1'$,
  $\TEnv_2'$, and $B$ such that
  $\Decompose{\TEnv_1}{\TEnv_1'}{\TEnv_2'}$ and $\mathcal D_1$ has a
  subderivation $\mathcal D'$ concluding $\TEnv_2' \vdash M : B$, and
  the position of $\mathcal D'$ in $\mathcal D_1$ corresponds to the
  position of the hole in $\ECN$.

  We can reassociate the decomposition to
  $\Decompose\TEnv{\TEnv'}{\TEnv_2'}$ and
  $\Decompose{\TEnv'}{\TEnv_1'}{\TEnv_2}$ as it is commutative and associative.
  The claim follows with the $B$ obtained by induction.

  \textbf{Remaining cases.} They all work analogously if decomposition
  is involved. Otherwise, they are straightforward.
\end{proof}

\begin{restatable}[Subderivation elimination]{lemma}{SubderivationElimination}\label{lemma:subderivation-elimination}
  Suppose that
  \begin{enumerate}
  \item\label{item:3} $\Decompose{\TEnv}{\TEnv_1}{\TEnv_2}$,
  \item\label{item:4} $\mathcal D$ is a derivation of $\TEnv \vdash \Context M : A$
    with $\FV (M) \subseteq \Dom (\TEnv)$,
  \item $\mathcal D'$ is a subderivation of $\mathcal D$ concluding
    $\TEnv_2 \vdash M : B$,
  \item the position of $\mathcal D'$ in $\mathcal D$ corresponds to
    the position of the hole in $\ECN$,
  \item $\TEnv_3 \vdash N : B$,
  \item\label{item:5} $\Decompose{\TEnv'}{\TEnv_1}{\TEnv_3}$,
  \end{enumerate}
  then $\TEnv' \vdash \Context N : A$.  
\end{restatable}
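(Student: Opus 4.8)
The plan is to prove this as the exact companion of the Subderivation introduction lemma (Lemma~\ref{lemma:subderivation-intro}), by induction on the structure of the evaluation context $\ECN$. At each step I invert the typing derivation $\mathcal D$ along the spine of $\ECN$, swap the distinguished subderivation for the new one, and reassemble a derivation of $\Context N : A$ using the same structural rule, with all the environment bookkeeping driven by commutativity and associativity of splitting (Lemma~\ref{lemma:context-split}).

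For the base case $\ECN = \Hole$ we have $\Context M = M$, so the distinguished $\mathcal D'$ is all of $\mathcal D$ and hence $\TEnv_2 = \TEnv$ and $B = A$. Since $\Decompose{\TEnv}{\TEnv_1}{\TEnv}$ sends every linear binding to the right, $\TEnv_1$ is unrestricted; then $\Decompose{\TEnv'}{\TEnv_1}{\TEnv_3}$ with $\TEnv_1$ unrestricted forces $\TEnv' = \TEnv_3$ by Lemma~\ref{lemma:context-split}. Thus $\TEnv' = \TEnv_3 \vdash N : B = A$, which is the goal.

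For the inductive cases I treat one representative former, say $\ECN$ of the form $\mathcal E_0\,N'$ (I rename the argument $N'$ to avoid clash with the lemma's $N$); the remaining formers ($V\,\mathcal E_0$, $\CASE{\mathcal E_0}{\cdots}$, $\DPair {x:A} V {\mathcal E_0}$, $\ELet{\EPair[]xy}{\mathcal E_0}{N'}$, $\SEND{\mathcal E_0}$, $\RECV{\mathcal E_0}$) follow the same shape. Up to a leading \TirName{Sub-Type}, inversion of $\mathcal D$ by \TirName{Pi-E} gives $\Decompose{\TEnv}{\TEnv_a}{\TEnv_b}$, a subderivation $\mathcal D_1$ of $\TEnv_a \vdash \mathcal E_0[M] : \Pi_\Multm(x:A_0)B_0$, a derivation $\TEnv_b \vdash N' : A_0$, and $A = B_0[N'/x]$. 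The distinguished $\mathcal D'$ lies inside $\mathcal D_1$ at the hole of $\mathcal E_0$, so by Lemma~\ref{lemma:subderivation-intro} (and uniqueness of the subderivation at a fixed position) we read off $\Decompose{\TEnv_a}{\TEnv_c}{\TEnv_2}$; combining this with $\Decompose{\TEnv}{\TEnv_a}{\TEnv_b}$ and reassociating shows the hypothesised split factors as $\TEnv_1 = \DecomposeOp{\TEnv_c}{\TEnv_b}$. Setting $\Decompose{\TEnv''}{\TEnv_c}{\TEnv_3}$ and applying the induction hypothesis to $\mathcal D_1$, $\mathcal E_0$, $\TEnv_3 \vdash N : B$ and this split yields $\TEnv'' \vdash \mathcal E_0[N] : \Pi_\Multm(x:A_0)B_0$. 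Re-applying \TirName{Pi-E} with the untouched $\TEnv_b \vdash N' : A_0$ gives $\mathcal E_0[N]\,N' : B_0[N'/x] = A$ in the environment $\DecomposeOp{\TEnv''}{\TEnv_b} = \DecomposeOp{(\DecomposeOp{\TEnv_c}{\TEnv_3})}{\TEnv_b} = \DecomposeOp{(\DecomposeOp{\TEnv_c}{\TEnv_b})}{\TEnv_3} = \DecomposeOp{\TEnv_1}{\TEnv_3} = \TEnv'$, as required; since $N'$ is untouched the type is preserved verbatim.

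The main obstacle is the positions where the hole term is substituted into a type, principally the argument of \TirName{Pi-E} (the former $V\,\mathcal E_0$) and the scrutinee equations of \TirName{Lab-E} (the former $\CASE{\mathcal E_0}{\cdots}$). In the first, the overall type is $B_0[\mathcal E_0[M]/x]$, which replacement would turn into $B_0[\mathcal E_0[N]/x]$, and these need not be convertible since $M$ and $N$ merely share the type $B$. I would dispatch this using the design principle recorded for rule \TirName{Pi-E}, namely that meaningful label- or $\TNat$-dependency is only ever on values: a well-typed application whose codomain genuinely depends on $x$ has a value argument, so no reducible hole ever occupies a dependent argument position, whence $B_0$ is independent of $x$ and $A$ is preserved on the nose. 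For the scrutinee case the common branch type $B$ of \TirName{Lab-E} is fixed and only the pushed equations $\Equation[L]{\mathcal E_0[M]}{\ell}$ change to $\Equation[L]{\mathcal E_0[N]}{\ell}$, which transfer through the properties of the equality type (Lemma~\ref{lemma:equality-type}) because both scrutinees carry the same label type $L$. For the instantiations actually required by Theorem~\ref{thm:ldgv-typing-preservation-processes} the obstacle evaporates entirely, as the replaced redexes ($\NEW$, $\SEND\Chanc\,V$, $\RECV\Chand$) carry $\Sigma$- or session types and hence never sit in a label-dependent position.
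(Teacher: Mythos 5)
Your proposal is correct and is essentially the paper's own proof: the same induction on $\ECN$, the same base case (for $\Hole$, $A=B$, $\TEnv_2=\TEnv$ and $\TEnv_1=\UNR\TEnv$, which forces $\TEnv'=\TEnv_3$), and the same treatment of the application case (invert \TirName{Pi-E}, reassociate the environment splits so that the induction hypothesis applies to the subcontext, then reapply \TirName{Pi-E} with the untouched argument, recovering $\TEnv'$ by commutativity/associativity of splitting). Where you go beyond the paper---which waves off everything else as ``Remaining cases. Similar.''---is in explicitly confronting the dependent-position formers ($V\,\ECN$ and the case scrutinee), and your resolution (well-typed dependency attaches only to value arguments, so a non-value hole term never occupies a dependent position and $A$ is preserved on the nose) is sound for every instantiation the paper actually needs, though it tacitly relies on $M$ being a non-value/redex, a hypothesis the lemma statement itself omits.
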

\begin{proof}
  By induction on $\ECN$. (Two illustrative cases.)

  \textbf{Case }$\Hole$. Here, $A=B$, $\TEnv_1 = \UNR{\TEnv}$, and $\TEnv_2 = \TEnv$.
  Hence, $\TEnv_3 = \TEnv'$ so that $\TEnv' \vdash N : B$ holds trivially.

  \textbf{Case }$\ECN\,N'$. Inversion on $\mathcal D$ yields
  $\Decompose\TEnv{\TEnv_1'}{\TEnv_2'}$, $A = B'[N'/x]$ and a derivation
  $\mathcal D_1$ for
  $\TEnv_1' \vdash \Context{M} : \Pi_\Multm (x:A')B'$ with
  $\FV (M) \subseteq \Dom (\TEnv_1')$ (item~\ref{item:4}).  From
  $\Decompose\TEnv{\TEnv_1}{\TEnv_2}$, we obtain some $\TEnv_0$ with
  $\Decompose{\TEnv_1'}{\TEnv_0}{\TEnv_2}$ for item~\ref{item:3} and from
  $\Decompose{\TEnv'}{\TEnv_1}{\TEnv_3}$ we obtain
  $\Decompose{\TEnv''}{\TEnv_0}{\TEnv_3}$ for item~\ref{item:5}.

  Induction yields $\TEnv'' \vdash \Context{N}: \Pi_\Multm (x:A')B'$
  so that
  $\Decompose{\TEnv'}{\TEnv''}{\TEnv_2'}$ constructed by applying
  rule \TirName{Pi-E}.
  The result is $\TEnv' \vdash \Context{N}\,N' : B'[N'/x]$ as required. 

  \textbf{Remaining cases.} Similar.
\end{proof}



{\TypingPreservationProcesses*}
\begin{proof}
  \textbf{Case }$\Proc{\Context\NEW} \ReducesTo \NUC\Chanc\Chand
  \Proc{\Context{\EPair[\MultLin]cd}}$.
  Suppose that
  \begin{gather}
    \TEnv \vdash \Proc{\Context\NEW}
  \end{gather}
  Inversion yields
  \begin{gather}
    \TEnv \vdash \Context\NEW : \TUnit
  \end{gather}
  Lemma~\ref{lemma:subderivation-intro} yields
  \begin{gather}
    \Decompose{\TEnv}{\TEnv_1}{\TEnv_2} \\
    \label{eq:46}
    \TEnv_2 \vdash \NEW : S \times \DUAL S 
  \end{gather}
  so that $\EnvForm \MultUn{ \TEnv_2}$.

  Setting 
  \begin{align}
    \TEnv_3 &= \TEnv_2, c : S, d : \DUAL S \\
    \TEnv' & = \TEnv, c : S, d : \DUAL S
  \end{align}
  we apply Lemma~\ref{lemma:subderivation-elimination} and the
  \TirName{Proc-Channel} rule to obtain
  \begin{gather}
    \label{eq:47}
    \TEnv_3 \vdash \EPair[\MultLin]cd : S \times \DUAL S
    \\
    \TEnv' \vdash \Proc{\Context{ \EPair[\MultLin]cd}}
    \\
    \TEnv \vdash \NUC\Chanc\Chand~ \Proc{\Context{ \EPair[\MultLin]\Chanc\Chand}}
  \end{gather}

  \textbf{Case }$
  \Proc{\Context{\FORK M}}n
  \ReducesTo
  \Proc{\Context{\EUnit}} \PAR \Proc{M}
  $. Suppose that 
  \begin{gather}
    \TEnv \vdash \Proc{\Context{\FORK M}}
  \end{gather}
  Inversion yields
  \begin{gather}
    \TEnv \vdash {\Context{\FORK M}} : \TUnit
  \end{gather}
  Lemma~\ref{lemma:subderivation-intro} yields
  \begin{gather}
    \Decompose{\TEnv}{\TEnv_1}{\TEnv_2} \\
    \TEnv_2 \vdash \FORK M : \TUnit
  \end{gather}
  and inversion yields
  \begin{gather}
    \TEnv_2 \vdash M : \TUnit
  \end{gather}
  Hence
  \begin{gather}
    \TEnv_1 \vdash \Context{\EUnit} : \TUnit \\
    \TEnv_1 \vdash \Proc{\Context{\EUnit}}
    \intertext{and}
    \TEnv_2 \vdash \Proc M 
    \intertext{result in (by rule \TirName{Proc-Par})}
    \TEnv \vdash \Proc{\Context{\EUnit}} \PAR \Proc M
  \end{gather}

  \todo[inline]{PJT continue here; check this again; still think that
    some subtyping inversions are ignored}
  \textbf{Case }$
  \begin{array}[t]{ll}
    & \NUC\Chanc\Chand~\Proc{\Context{\SEND\Chanc\,V}} \PAR
    \Proc{\Context[F]{\RECV\Chand}}
    \\
    \ReducesTo
    & \NUC\Chanc\Chand~\Proc{\Context{\Chanc}} \PAR
    \Proc{\Context[F]{\EPair[\MultLin]V\Chand}}
    \text.
  \end{array}
  $

  Suppose that
  \begin{gather}
    \TEnv \vdash
    \NUC\Chanc\Chand~\Proc{\Context{\SEND\Chanc\,V}} \PAR
    \Proc{\Context[F]{\RECV\Chand}}
  \end{gather}
  By inversion of \TirName{Proc-Channel}
  \begin{gather}
    \UNR{\TEnv} \vdash S : \Kind \\
    \TEnv, \Chanc: S, \Chand:\DUAL S
    \vdash \Proc{\Context{\SEND\Chanc\,V}} \PAR
    \Proc{\Context[F]{\RECV\Chand}}
  \end{gather}
  By inversion of \TirName{Proc-Par}
  \begin{gather}
    \Decompose
    {(\TEnv, \Chanc: S, \Chand:\DUAL S)}
    {(\TEnv_1, \Chanc: S)}
    {(\TEnv_2, \Chand: \DUAL S)}\\
    \TEnv_1, \Chanc: S \vdash \Proc{\Context{\SEND\Chanc\,V}} \\
    \TEnv_2, \Chand:\DUAL S
    \vdash \Proc{\Context[F]{\RECV\Chand}} 
  \end{gather}
  By subderivation introduction (Lemma~\ref{lemma:subderivation-intro})
  \begin{gather}
    \Decompose{(\TEnv_1, \Chanc: S)}
    {\TEnv_{11}}{(\TEnv_{12}, \Chanc: S)} \\
    \TEnv_{12}, \Chanc: S \vdash \SEND\Chanc\,V : S'
  \end{gather}
  so that by further inversion
  \begin{gather}
    \label{eq:62}
    \UNR{(\TEnv_{12}, \Chanc: S)} \vdash S \Subtype
    {!(x:A)S'} : \KindSession \\
    B' =S'[V/x] \\
    \UNR{ \TEnv_{12}}, \Chanc: S \vdash \Chanc : {!(x:A)S'} \\
    \TEnv_{12} \vdash V : A
  \end{gather}

  Similarly, by subderivation introduction (Lemma~\ref{lemma:subderivation-intro})
  \begin{gather}
    \Decompose{(\TEnv_2, \Chand:\DUAL S)}
    {\TEnv_{21}}{(\TEnv_{22},  \Chand:\DUAL S)} \\
    (\TEnv_{22},  \Chand:\DUAL S) \vdash \RECV\Chand
    : \Sigma (x:A)\Dual{S'}
  \end{gather}
  so that
  \begin{gather}
    \EnvForm \MultUn{ \TEnv_{22}} \\
    \label{eq:61}
    \TEnv_{22} \vdash \DUAL S
    \Subtype {?(x:A)}\Dual{ S'} : \KindSession \\
    (\TEnv_{22},  \Chand: \DUAL S) \vdash \Chand :
    {?(x:A)}\Dual {S'}
  \end{gather}
  By Lemma~\ref{lemma:subtyping-duality} applied to~(\ref{eq:61})
  and~(\ref{eq:62}), it must be that 
  \begin{gather}
    S = {!(x:A)S'}
  \end{gather}
  
  For the reduct of $\SENDK$, we need
  \begin{gather}
    \UNR{ (\DemoteExtend{\TEnv_{12}, \Chanc: S} x A)} \vdash S' :
    \Kind_C
    \\
    \TEnv_{12}, \Chanc : S'[V/x] \vdash \Chanc : S'[V/x]
  \end{gather}

  For the reduct of $\RECVK$, we need
  \begin{gather}
    (\DecomposeOp{\TEnv_{12}}{\TEnv_{22}}),  \Chand:\Dual {S'[V/x]} \vdash
    \EPair[\MultLin] V \Chand
    : \Sigma (x:A)\DUAL C
  \end{gather}

  Unrolling derivation elimination yields
  \begin{gather}
    \TEnv_1,
    \Chanc: S'
    \vdash \Proc{\Context{\Chanc}}
    \\
    \TEnv_2,
    \Chand:\Dual{S'}
    \vdash \Proc{\Context[F]{\EPair[\MultLin] V \Chand}}
    \\
    \Decompose\TEnv{\TEnv_1}{\TEnv_2}
    \\
    \TEnv,
    \Chanc: S',
    \Chand: \Dual{S'}
    \vdash  \Proc{\Context{\Chanc}} \PAR
    \Proc{\Context[F]{\EPair[\MultLin] V \Chand}}
    \\
    \TEnv,
    \vdash \NUC\Chanc\Chand~( \Proc{\Context{\Chanc}} \PAR
    \Proc{\Context[F]{\EPair[\MultLin] V \Chand}})
  \end{gather}
\end{proof}

{\AbsenceRuntimeErrors*}
\begin{proof}
  A simple case analysis on the processes that constitute errors.
\end{proof}


\section{Proofs for algorithmic type checking}
\label{sec:proofs-alg}

{\AlgorithmicWeakening*}
\begin{proof}
  Mutual rule induction on the various hypotheses. We show the cases
  of items~\ref{item:JAlgTypeSynth} and~\ref{item:JAlgTypeCheck}.

\textbf{Case }$\RuleAName$. We have distinguish two cases, which both
use \TirName{A-Name}.

If $\UNR{\TEnv_1} \vdash A : \Un$, then 
  \begin{equation*}
    \JAlgTypeSynth
    {(\DecomposeOp{(\TEnv_1,z: A,\TEnv_2)}{(\TEnv_1',z: A, \TEnv_2')})}
    M A
    {(\DecomposeOp{(\TEnv_1,z: A, \TEnv_2)}{(\TEnv_1',z: A, \TEnv_2')})}
  \end{equation*}

If $\UNR{\TEnv_1} \vdash A : \Lin$, then 
  \begin{equation*}
    \JAlgTypeSynth
    {(\DecomposeOp{(\TEnv_1,z: A,\TEnv_2)}{(\TEnv_1', \TEnv_2')})}
    M A
    {(\DecomposeOp{(\TEnv_1, \TEnv_2)}{(\TEnv_1', \TEnv_2')})}
  \end{equation*}

  \textbf{Case }$\RuleAUnitI$. Immediate.

  \textbf{Case }$\RuleALabI$. Immediate.

  \textbf{Case }$\RuleALabEA$.

  Properties of context split,
  Lemma~\ref{lemma:decomposition-unrestricted}(\ref{item:split-demote}),
  and
  induction.

  \textbf{Case }$\RuleALabEB$.

  By induction we know that
  \begin{equation*}
    \JAlgTypeSynth
    {(\DecomposeOp{(\TEnv_1, y : \Equation[L] x \ell)}{(\TEnv_2, y : \Equation[L] x \ell)})}
    { N_\ell}{ A_\ell}
    {(\DecomposeOp{(\TOut_\ell, y : \Equation[L] x \ell)}{(\TEnv_2, y : \Equation[L] x \ell)})}
  \end{equation*}
  We can easily show that $\TOut = \TOut_\ell$ implies
  $\DecomposeOp\TOut{\TEnv_2} =
  \DecomposeOp{\TOut_\ell}{\TEnv_2}$. Complete with induction on the
  two remaining premises to the rule, followed by rule \TirName{A-Lab-E2}.
  
  \textbf{Case }$\RuleAPiI$.

  Lemma~\ref{lemma:decomposition-unrestricted}(\ref{item:split-demote})
  and induction gives
  $\JAlgKindSynth{\UNR{(\DecomposeOp{\TEnv_1}{\TEnv_3})}} A{ \Multn}$.
  There are two cases.
  
  If $\Multn = \OccInfty$, then we have
  \begin{equation*}
    \JAlgTypeSynth
    {(\DecomposeOp{(\TEnv_1, x : A)}{(\TEnv_3, x : A)})}
    M  B
    {(\DecomposeOp{(\TEnv_2, x : A)}{(\TEnv_3, x : A)})}
  \end{equation*}

  If $\Multn = \OccOne$, then $x$ is only added to $\TEnv_1$:
  \begin{equation*}
    \JAlgTypeSynth
    {(\DecomposeOp{(\TEnv_1, x : A)}{(\TEnv_3)})}
    M  B
    {(\DecomposeOp{(\TEnv_2, x : A)}{(\TEnv_3)})}
  \end{equation*}
  In both cases, appeal to induction and conclude with rule \textsc{A-Pi-I}.

  \textbf{Case }$\RuleAPiE$.

  Properties of context split,
  Lemma~\ref{lemma:decomposition-unrestricted}(\ref{item:split-demote}),
  and  induction.

  \textbf{Case }$\RuleASigmaI$.

  There are two cases depending on $\Kind$.

  If $\Kind=\Un$, then consider that conditional extension is just
  plain extension and let $\TEnv_4 = \TEnv_4', x : A, z :  \Equation[A]x V$. By induction we have
  \begin{equation*}
    \JAlgTypeSynth
    {(\DecomposeOp{(\TEnv_2, x : A, z : \Equation[A]x V)}{\TEnv_4})}
    {N}{ B}
    {(\DecomposeOp{(\TEnv_3, x : A, z : \Equation[A]x V)}{\TEnv_4})}
  \end{equation*}

  If $\Kind=\Lin$, then conditional extension drops the
  binding and the equality and we choose $\TEnv_4 = \TEnv_4'$:
  \begin{equation*}
    \JAlgTypeSynth
    {(\DecomposeOp{(\TEnv_2, x : A)}{\TEnv_4})}
    {N}{ B}
    {(\DecomposeOp{\TEnv_3}{\TEnv_4})}
  \end{equation*}
  that is
  \begin{equation*}
    \JAlgTypeSynth
    {(\DecomposeOp{\TEnv_2}{\TEnv_4'}, x : A)}
    {N}{ B}
    {(\DecomposeOp{\TEnv_3}{\TEnv_4'})}
  \end{equation*}
  Conclude by straightforward induction on the remaining premises and rule \TirName{A-Sigma-I}.

  \textbf{Case }$\RuleASigmaE$.

  As for rule \textsc{A-Pi-E}.

  \textbf{Case }$\RuleASigmaG$.

  As for rules \textsc{A-Lab-E2} and \textsc{A-Pi-I}.

  The remaining for cases in item~\ref{item:JAlgTypeSynth} follow by a
  straightforward induction.

  \textbf{Case }$\RuleASubType$.
  
  Properties of context split,
  Lemma~\ref{lemma:decomposition-unrestricted}(\ref{item:split-demote}),
  induction, and subtyping weakening,
  Lemma~\ref{lem:AlgorithmicWeakening}.
\end{proof}

{\AlgorithmicLinearStrengthening*}
\begin{proof}
  By mutual rule induction on the hypotheses.
\end{proof}

{\AlgorithmicSubtypingSoundness*}
\begin{proof}
  By mutual rule induction on $\JAlgSubtypeSynth\TEnv B A \Kind$
  and $\JAlgSubtypeCheck\TEnv B A \Kind$.

  \textbf{Case }$\RuleASUnit$.

  The claim follows by this derivation using \TirName{Conv-Refl} and \TirName{Sub-Conv}.
  \begin{mathpar}
    \inferrule*
    {\inferrule*
      {\TEnv \vdash \TUnit : \KindUnSession}
      {\TEnv \vdash \Convertible\TUnit \TUnit : \KindUnSession}
    }
    {\TEnv \vdash \TUnit \Subtype \TUnit : \KindUnSession}
  \end{mathpar}

  \textbf{Case }$\RuleASLabel$.

    Immediate by rule \TirName{Sub-Lab}.

    \textbf{Case }$\RuleASPi$.

    Immediate by induction and then rule \TirName{Sub-Pi}.

    \textbf{Case }$\RuleASSigma$.

  Immediate by induction and then rule \TirName{Sub-Sigma}.

  \textbf{Case }\TirName{AS-Send} and \TirName{AS-Recv}. Analogous to
  the previous cases.

  \textbf{Case }$\RuleASCaseLeftA$.

  There are two cases. From inverting
  $\JAlgConv\TEnv V \ell { L'}$ we obtain that either
  $V=\ell$ (rule \TirName{AC-Refl}) or $V=x$ and $\Equation x \ell$ is
  in $\TEnv$ (rule \TirName{AC-Assoc}).
  
  If $V=\ell$, then we apply \TirName{Conv-Beta} to convert the left
  hand side to $A_\ell$. We conclude by transitivity
  \TirName{Sub-Trans} and the inductive hypothesis for
  $ \TEnv \Mapsto A_\ell \Subtype B : \Kind$.

  If $V=x$ and $\Equation x \ell$ in $\TEnv$, then we need to apply
  \TirName{Conv-Subst}, then transitivity and can conclude as in the
  first case.

  \textbf{Case }$\RuleASCaseLeftB$.

  In this case, we apply eta conversion \TirName{Conv-Eta} to
  introduce a \CASEK{} at the top leve of the right hand side. Then
  \TirName{Sub-Case} becomes applicable to the inductive hypotheses
  arising from
  $ \JAlgSubtypeSynth{\TEnv, y:^\OccInfty \Equation[L] x \ell}{ A_\ell}{ B}{ \Kind_\ell} $.

  \textbf{Case }\TirName{AS-Case-Right1} and \TirName{AS-Case-Right2}
  are analogous.

  \textbf{Case }$\RuleASCheck$. Immediate by the IH and subkinding.
\end{proof}

{\AlgorithmicSubtypingCompleteness*}
\begin{proof}
  All proofs are by rule induction on their hypotheses.
  
  \textbf{1. Kinding}

  Most cases are straightforward, but the \CASEK{} rules merit some
  attention.
  Suppose that $A = \CASE {V} {\overline{\ell: A_\ell}^{\ell\in L}}$.
  There are two cases for $V$, either $V=\ell$ or $V=x$.
  If $V=\ell$, then rules \TirName{AS-Case-Left1} and
  \TirName{AS-Case-Right1} get us to the inductive hypothesis for
  $A_\ell$.
  The same applies if $V=x$ and $\Equation x \ell$ is in $\TEnv$.
  Otherwise, \TirName{AS-Case-Left2} and \TirName{AS-Case-Right2} with
  several subsequent uses of \TirName{AS-Case-Right1} get us to the
  inductive hypotheses for all $A_\ell$.

  \textbf{2. Conversion}

  \textbf{Case }$\RuleConvSym$.

  Given that we have an algorithmic derivation
  $\JAlgSubtype\TEnv A B \Kind$, we would have to convert that
  to a derivation for $\JAlgSubtype\TEnv B A \Kind$.
  Instead, we show for each of the following cases that the subtyping
  judgment is derivable in both directions.

  \textbf{Case  }
  $\RuleConvRefl $.

  Follows by part 1 of this lemma.

  \textbf{Case }$\RuleConvSubst$. By canonical forms it must be $V=\ell' \in L$.

  \textbf{Subcase }left-to-right. Inverting rule
  \TirName{AS-Case-Left1} yields
  $\TEnv \Mapsto A_{\ell'} \le     {\CASE{\ell'}{\overline{\ell:A_\ell}}}
  : \Kind$. Further inverting rule \TirName{AS-Case-Right1} yields
  $\TEnv \Mapsto A_{\ell'} \le A_{\ell'} : \Kind$, which holds by reflexivity.

  \textbf{Subcase }right-to-left. Analogous.

  \textbf{Case }$\RuleConvEta$.

  \textbf{Subcase }left-to-right. Suppose that $\TEnv \Mapsto
  \Convertible x\ell : L$. In this case, we can establish the claim
  by \TirName{AS-Case-Right1} and reflexivity.

  If  $\TEnv \not\Mapsto \Convertible x\ell : L$. Then the claim
  follows by \TirName{AS-Case-Right2} and reflexivity.

  \textbf{Subcase }right-to-left. Analogous, but with the respective
  \TirName{AS-Case-Left} rules.

  \textbf{Case }$\RuleConvBeta$.

  \textbf{Subcase }left-to-right. Immediate by the
  \TirName{AS-Case-Left1} rule.

  \textbf{Subcase }right-to-left. By rule \TirName{AS-Case-Right1}.

  \textbf{3. Subtyping}

  \textbf{Case }$\RuleSubConv$. Immediate by part 2.

  \textbf{Case }$\RuleSubLab$. Immediate by \TirName{AS-Label}.

  \textbf{Case }$\RuleSubTrans$.

  By the IH, there are derivations
  $\Delta_1$ for $\JAlgSubtype\TEnv  A B  \Kind$ and
  $\Delta_2$ for $ \JAlgSubtype \TEnv  B C \Kind$.
  We construct a derivation for $\JAlgSubtype\TEnv A C \Kind$ by lexicographic induction on
  $(\Delta_1, \Delta_2)$. The cases are by the final rules applied in $\Delta_1$ and $\Delta_2$.

  \textbf{Subcase} (\TirName{AS-unit}, R). Use $\Delta_2$ ending in $R$.

  \textbf{Subcase} (\TirName{AS-Label} with $L\subseteq L'$, \TirName{AS-Label} with $L'\subseteq
  L''$). Use \TirName{AS-Label} with $L \subseteq L''$.

  \textbf{Subcase}  (\TirName{AS-Label}, \TirName{AS-Case-Right1}),
  $  \inferrule[AS-Case-Right1]
  { \TEnv \Mapsto \Convertible V \ell : L_V \\
    \TEnv \Mapsto B \Subtype C_\ell : \Kind
  }
  { \TEnv \Mapsto
    B 
    \Subtype 
    \CASE V {\overline{\ell: {C_\ell}}^{\ell\in L}}
    : \Kind
  }
  $.

  By induction, we can find a derivation combining $\JAlgSubtype\TEnv AB\Kind$ and
  $\JAlgSubtype\TEnv B{C_\ell}\Kind$. We apply \TirName{AS-Case-Right1} to conclude.

  \textbf{Subcase} (\TirName{AS-Label}, \TirName{AS-Case-Right2}).

  By induction and weakening (Lemma~\ref{lem:AlgorithmicWeakening}), we can find
  derivations combining $\JAlgSubtype{\TEnv'} AB\Kind$ and 
  $\JAlgSubtype{\TEnv'}B{C_\ell}\Kind$ where $\TEnv' = \TEnv, y:^\OccInfty \Equation x \ell$, for
  each $\ell\in L$. We apply \TirName{AS-Case-Right2} to conclude.

  \textbf{Subcase} (\TirName{AS-Pi}, \TirName{AS-Pi}). By induction for domain and range types,
  transitivity of $\MultLE$, and then reapplying \TirName{AS-Pi}.

  \textbf{Subcase} (\TirName{AS-Pi}, \TirName{AS-Case-RightX}). Similar as with \TirName{AS-Label}.

  \textbf{Subcase} (\TirName{AS-Sigma}, R). Analogous to \TirName{AS-Pi}.

  \textbf{Subcase} (\TirName{AS-Send}, R). Analogous to \TirName{AS-Pi}.

  \textbf{Subcase} (\TirName{AS-Recv}, R). Analogous to \TirName{AS-Pi}.

  \textbf{Subcase} (\TirName{AS-Case-LeftX}, R). Analogous to dealing with \TirName{AS-Case-RightX}
  on the right, except for the case analyzed next where \TirName{AS-Case-Left2} and
  \TirName{AS-Case-Right2} interfere through a case distinction on the same variable. 

  \textbf{Subcase} $  \inferrule[AS-Case-Left2]
  { \TEnv \vdash x : L \\
    \TEnv \not\Mapsto \Convertible x \ell : L' \\
    (\forall\ell\in L)~
    \TEnv, y:^\OccInfty \Equation[L] x \ell \Mapsto A_\ell \Subtype B : \Kind_\ell
  }
  { \TEnv \Mapsto
    \CASE x {\overline{\ell: {A_\ell}}^{\ell\in L''}}
    \Subtype 
    B : \bigsqcup_{\ell\in L}\Kind_\ell
  }
  $

  and
  $  \inferrule[AS-Case-Right2]
  { \TEnv \vdash x : L \\
    \TEnv \not\Mapsto \Convertible x \ell : L' \\
    (\forall\ell\in L)~
    \TEnv, y:^\OccInfty \Equation[L] x \ell \Mapsto B \Subtype C_\ell : \Kind_\ell
  }
  { \TEnv \Mapsto
    B
    \Subtype
    \CASE x {\overline{\ell: {C_\ell}}^{\ell\in L''}}
    : \bigsqcup_{\ell\in L}\Kind_\ell
  }
  $.

  By the IH, we obtain a derivation for $\JAlgSubtype{\TEnv, y:^\OccInfty \Equation[L] x \ell}
  {A_\ell}{C_\ell}{\Kind_\ell}$ and we can construct a derivation for $A \Subtype C$ by applying
  \TirName{AS-Case-Right1} (because $\Equation x \ell$ is known) and then
  \TirName{AS-Case-Left2}. (Or equivalently \TirName{AS-Case-Left1} first and then
  \TirName{AS-Case-Right2}.)

  \textbf{Case }$\RuleSubSub$. Immediate by IH and transitivity of subkinding.

  \textbf{Case }$\RuleSubPi$.

  Immediate by induction.

  \textbf{Case }\TirName{Sub-Sigma} analogous.

  \textbf{Case }$\RuleSubSend$.

  By the IH, we obtain
  \begin{enumerate}\item 
    $\JAlgSubtype\TEnv{ A'}{ A}{\Kind_A^{\Multm_A}}$ with $\Kind_A \Subkind \Kind^\Multm $ and
  \item 
    $\JAlgSubtype{\DemoteExtend\TEnv x{A'}} B{ B'}{\Kind_B}$ with
    $\Kind_B \Subkind \KindSession$.
  \end{enumerate}
  If $\Multm_A = \Multm$, then applying \TirName{AS-Check} to IH(2) implies
  $\JAlgSubtypeCheck{\DemoteExtend\TEnv x{A'}} B{ B'}{\KindSession}$ and we conclude by applying
  \TirName{AS-Send}.
  If $\Multm_A \MultLT \Multm$, then $\Multm_A = \MultUn$ and $\Multm = \MultLin$ and we first have
  to apply weakening Lemma~\ref{lem:AlgorithmicWeakening} to conclude in the same way. 

  \textbf{Case }$\RuleSubRecv$. Analogously.

  \textbf{Case }$\RuleSubCase$.

  By IH we have $\JAlgSubtype {\TEnv, y:^\OccInfty \Equation[L\cap L'] x \ell}{ A_\ell}{ A'_\ell} {
    \Kind_\ell}$ with $\Kind_\ell \Subkind \Kind$, for all $\ell \in L \cap L'$.

  Applying \TirName{AS-Case-Right1} yields
  $\JAlgSubtype {\TEnv, y:^\OccInfty \Equation[L\cap L'] x \ell}{ A_\ell}{ \CASE x {\overline{\ell :
        A'_\ell}^{\ell \in L'}}} { \Kind_\ell}$, for all $\ell \in L \cap L'$.

  Applying \TirName{AS-Case-Left1} yields the desired
  $\JAlgSubtype {\TEnv}{\CASE x{\overline{\ell :  A_\ell}^{\ell \in L}}}{ \CASE x {\overline{\ell :
        A'_\ell}^{\ell \in L \cap L'}}} { \bigsqcup_\ell \Kind_\ell}$ where clearly
  $\bigsqcup_\ell\Kind_\ell \Subkind \Kind$.
\end{proof}

{\AlgorithmicUnfoldingSoundness*}
\begin{proof}
  By rule induction on $\JAlgTypeUnfold\TEnv A B$.

  \textbf{Case }$\RuleAUnfoldType$. Immediate by reflexivity.

  \textbf{Case }$\RuleAUnfoldCaseA$.

  By induction, $\TEnv \vdash
  \Convertible{B_\ell}{B} : \Kind$. Conclude by \TirName{Conv-Beta}
  and transitivity.

  \textbf{Case }$\RuleAUnfoldCaseB$.

  By induction $\TEnv \vdash \Convertible {A_\ell}{\Context[P]{B_\ell}}
  : \Kind$.

  By \TirName{SubCase},
  $\TEnv \vdash \Convertible{
    \CASE x {\overline{\ell : A_\ell}^{\ell \in L}}
  }{
    \CASE x {\overline{\ell : \Context[P]{B_\ell}}^{\ell\in L}}
  } : \Kind$.

  By \TirName{ConvEta}
  $ \TEnv \vdash \Convertible   {\Context[P]{\CASE x {\overline{\ell:
          {B_\ell}}^{\ell\in L}}} }
  {\CASE x {\overline{\ell: \Context[P]{\CASE x {\overline{\ell:
              {B_\ell}}^{\ell\in L}}}}^{\ell\in L}}} : \Kind$.

  Repeated use (i.e., transitivity) of \TirName{ConvSubst} and \TirName{ConvBeta} on the
  inner cases yields

 $    \CASE x {\overline{\ell : \Context[P]{B_\ell}}^{\ell\in L}}$.
  Conclude by symmetry and transitivity.
\end{proof}

{\AlgorithmicUnfoldingCompleteness*}
\begin{proof}
  We consider the case where $\ECN[P] = {?(y:A)}\Hole$n.

  By completeness of subtyping
  (Lemma~\ref{lem:AlgorithmicSubtypingCompleteness}), we know that  
  \begin{enumerate}\item\label{item:1} 
    $\JAlgSubtypeSynth \TEnv A {\Context[P]B} \Kind'$ and
  \item\label{item:2} 
    $\JAlgSubtypeSynth \TEnv {\Context[P]B} A \Kind''$.
  \end{enumerate}
  We argue by rule induction on the derivations of these judgments.
  
  For the judgment~\ref{item:1}, the only applicable rules are
  \TirName{AS-Recv}, \TirName{AS-Case-Left1}, or
  \TirName{AS-Case-Left2}.

  In case of \TirName{AS-Recv}, (only) the same rule is applicable to
  judgment~\ref{item:2}, so we obtain $\TEnv, y:^\Occp A \vdash
  \Convertible{B'} B : \Kind'''$, for some $B'$, which proves the claim.

  In case of \TirName{AS-Case-Left1}, (only) the dual rule
  \TirName{AS-Case-Right1} is applicable to judgment~\ref{item:2}, so
  the claim holds by induction using the selected case branch and the
  rule \TirName{A-Unfold-Case1}.

  In case of \TirName{AS-Case-Left2}, (only) the dual rule
  \TirName{AS-Case-Right2} is applicable to judgment~\ref{item:2}, so
  the induction hypotheses for all labels support rule
  \TirName{A-Unfold-Case2}, which establishes the claim.
\end{proof}

{\AlgorithmicKindingSoundness*}
\begin{proof}
  By mutual induction.

  \textbf{Case }$\RuleAUnitF$. Immediate.

  \textbf{Case }$\RuleALabF$. Immediate.

  \textbf{Case }$\RuleALabET$.

  By the IH for algorithmic typing soundness, we obtain that $\TEnv \vdash V : L$.
  By the IH for algorithmic kinding synthesis, we obtain, for each $\ell\in L$, that
  $\TEnv, x :^\OccInfty \Equation[L] V \ell \vdash A_\ell : K_\ell$ and by \TirName{Sub-Kind} that
  $\TEnv, x :^\OccInfty \Equation[L] V \ell \vdash A_\ell \shortrightarrow \bigsqcup K_\ell$.
  We conclude by \TirName{Lab-E'}.

  \textbf{Case }$\RuleAPiF$.

  By the IH for algorithmic kinding synthesis, we obtain
  $\TEnv \vdash A : \Kind^\Multn $ and
  $\DemoteExtend\TEnv x A \vdash B : \Kind'$.
  We conclude by \TirName{Pi-F}.

  \textbf{Case }$\RuleASigmaF$.

  By the IH for algorithmic kinding synthesis, we obtain
  $\TEnv \vdash A : \Kind^\Multn $ and
  $\DemoteExtend\TEnv x A \vdash B : {\Kind'}^{\Multn'}$.
  We conclude by \TirName{Sigma-F}.

  \textbf{Case }$\RuleASsnOutF$.

  By the IH for algorithmic kinding synthesis and checking, we obtain
  $    \TEnv \vdash A : \Kind^\Multn $ and
  $\DemoteExtend\TEnv x A \vdash B : \KindSession$. We conclude by \TirName{Ssn-Out-F}.

  \textbf{Case }$\RuleASsnInF$.

  By the IH for algorithmic kinding synthesis and checking, we obtain
  $    \TEnv \vdash A : \Kind^\Multn $ and
  $    \DemoteExtend\TEnv x A \vdash B : \KindSession$.
  We conclude by \TirName{Ssn-In-F}.

  \textbf{Case (Checking) }$\RuleASubKind$.

  By the IH for algorithmic kinding synthesis, we obtain
  $  {\TEnv \vdash A :  \Kind }$.
  We conclude by \TirName{Sub-Kind} using $ \Kind \Subkind \Kind'$.
\end{proof}

{\AlgorithmicKindingCompleteness*}
\begin{proof}
  By rule induction on the derivation of  $\TEnv \vdash A : \Kind$. In each case, we just consider
  synthesis; the claim for
  checking follows from the result for synthesis by application of the \TirName{A-Sub-Kind} rule.

  \textbf{Case }$\RuleUnitF$.  Immediate.

  \textbf{Case }$\RuleLabF$. Immediate.

  \textbf{Case }$\RuleLabET$.
  
  By induction, $(\forall\ell\in L)~\TEnv, x :^\OccInfty \Equation[L] V \ell \shortrightarrow A_\ell
  : K_\ell$ with $K_\ell \Subkind K$. Hence $\bigsqcup K_\ell \Subkind K$, so the claim holds by
  \TirName{A-Lab-E'}.

  \textbf{Case }$\RulePiF$. 

  By the IH for the first premise, we obtain $\JAlgKindSynth\TEnv A{\Kind_A}$ with $\Kind_A^{\Multn_A}
  \Subkind \Kind^\Multn$.

  If $\Multn_A=\Multn$, then we obtain 
  $\JAlgKindSynth{\DemoteExtend\TEnv  x A} B {\Kind_B^{\Multn_B}}$
  with $\Kind_B \Subkind  {\Kind'}^{\Multn'}$ from the second premise and conclude by the synthesis rule
  \TirName{A-Pi-F}. The checking part holds by reflexivity of subkinding.
  
  Otherwise, if $\Multn_A \MultLT \Multn$, then it must be that $\Multn_A = \MultUn$ and
  $\Multn = \MultLin$.  In this case, conditional extension ensures that $x\notin \FV (B)$ cannot be
  used in the derivation of $B$. Hence, by ``weakening'', there is also a derivation for
  $\TEnv, x: A \vdash B : {\Kind'}$, for which induction yields a corresponding
  algorithmic synthesis, with which we can conclude again with \TirName{A-Pi-F} and reflexivity
  of subkinding.

  \textbf{Case }$\RuleSigmaF$.

  By the IH for the first premise, we obtain $\JAlgKindSynth\TEnv A{\Kind_A^{\Multn_A}}$ with $\Kind_A
  \Subkind \Kind^\Multn$.

  By the IH for the second premise, we obtain
  $\JAlgKindSynth{\DemoteExtend\TEnv  x A} B {\Kind_B^{\Multn_B}}$
  with $\Kind_B \Subkind  {\Kind'}^{\Multn'}$.

  If $\Multn_A = \Multn$, then we conclude immediately with \TirName{A-Sigma-F} where transitivity
  of subkinding guarantees $\Multn_A\MultLE \Multn \MultLE \Multm$ and
  $\Multn_B \MultLE \Multn' \MultLE \Multm$.

  If $\Multn_A \MultLT \Multn$, we reason analogously to the case for \TirName{Pi-F}.

  \textbf{Case }$\RuleSsnOutF$.

  By the IH, we have that
  $   \JAlgKindSynth \TEnv A \Kind_A$ where $\Kind_A^{\Multn_A} \Subkind \Kind^\Multn $ and
  $    \JAlgKindCheck{\DemoteExtend\TEnv x A} B \KindSession $.
  The remaining reasoning is analogous to the case for \TirName{Pi-F}.
  
  \textbf{Case }$\RuleSsnInF$. Analogous to the case for \TirName{Ssn-Out-F}.

  \textbf{Case }$\RuleSubKind$.

  By the IH for synthesis, we have $\JAlgKindSynth\TEnv A {\Kind_A}$ with $\Kind_A \Subkind
  \Kind$. By transitivity of subkinding, $\Kind_A \Subkind \Kind'$, which proves the claim for synthesis.
\end{proof}

{\AlgorithmicSoundness*}
\begin{proof}
  To get a viable induction hypothesis we need to generalize the two
  statements to
  \begin{enumerate}
  \item For all $\TOut$,
    if $\JAlgTypeSynth {\DecomposeOp\TEnv\TOut} M A {\TOut}$, then $\TEnv \vdash M : A$.
  \item For all $\TOut$,
    if $\JAlgTypeCheck {\DecomposeOp\TEnv\TOut} M A \TOut$, then $\TEnv \vdash M : A$. 
  \end{enumerate}
  
  The proof of the two parts is by mutual rule induction on the
  sequents in the hypotheses. Most cases are straightforward.
  When the derivation ends with rule \textsc{A-Lab-E} we use the
  Agreement properties (lemma~\ref{lem:agreement}).
  When the derivation ends with rule \textsc{A-Pi-E} or with rule
  \textsc{A-Sigma-I} we use Algorithmic Kinding Soundness and
  Algorithmic Linear Strengthening
  (lemmas~\ref{lem:AlgorithmicKindingSoundness}
  and~\ref{lem:AlgorithmicLinearStrengthening}).

  We detail the case for rule \TirName{A-Sigma-E}.
  \begin{gather}
    \RuleASigmaE
  \end{gather}
  We need to show that for a splitting $\TEnv_1 =
  \DecomposeOp{\TEnv}{\TEnv_3} $ we finally obtain $ {\TEnv} \vdash {
    \ELet{\EPair[]xy}{M}{N}} : C$.
  For the first premise, induction yields that $\TEnv_1 =
  \DecomposeOp{\TEnv_1'}{\TEnv_2}$ and $\TEnv_1' \vdash M : D$.
  The soundness theorem for unfolding yields that $\UNR{\TEnv_2}
  \vdash \Convertible D {{\Sigma(x: A)}B}$. By
  Lemma~\ref{lemma:decomposition-preserves-wellformedness} we find
  that $\UNR{\TEnv_1'} = \UNR{\TEnv_2}$, so that $\UNR{\TEnv_1'}
  \vdash \Convertible D {{\Sigma(x: A)}B}$. Thus by conversion, 
  $\TEnv_1' \vdash M : {{\Sigma(x: A)}B}$.

  To pick one of four possible cases as an example, assume that
  $A:\Un$ and $B:\Lin$, which means that the binding for $y$ is used
  up in $N$. Hence, considering the premise for $N$ yields $\TEnv_2 ,
  x:A, y:B = \DecomposeOp{\TEnv_2'}{\TEnv_3}, x:A, y:B =
  \DecomposeOp{(\TEnv_2', x:A, y:B)}{(\TEnv_3, x:A)}$.
  The IH now yields $\TEnv_2', x:A, y:B \vdash N : C$.

  Putting the splittings together, we obtain that $\TEnv_1 =
  \DecomposeOp{(\DecomposeOp{\TEnv_1'}{\TEnv_2'})}{\TEnv_3}$ so that
  we can assume  $\TEnv = \DecomposeOp{\TEnv_1'}{\TEnv_2'}$.

  It remains to apply rule \TirName{Sigma-E} using $\TEnv_1'$ and
  $\TEnv_2'$ in place of $\TEnv_1$ and $\TEnv_2$ in the rule:
  \begin{gather}
    \label{eq:20}
    \RuleSigmaE
  \end{gather}

  When the derivation ends with rule \textsc{A-Sub-Type} use
  Algorithmic Subtyping Soundness
  (Lemma~\ref{lem:AlgorithmicSubtypingSoundness}), part (1) of this
  theorem, and rule \textsc{Sub-Type}.
\end{proof}

{\AlgorithmicCompleteness*}
\begin{proof}
  By mutual rule induction on the hypothesis. In each case, the second claim follows from the first
  by rule \TirName{A-Sub-Type}.

  \textbf{Case }$\RuleSubType$.

  Immediate by induction and transitivity of subtyping.

  \textbf{Case }$\RuleName$. 
  Immediate.

  \textbf{Case }$\RuleUnitI$.
  Immediate.

  \textbf{Case }$\RuleLabI$. Immediate.

  \textbf{Case }$\RuleLabE$.

  Induction yields
  \begin{enumerate}
  \item 
    $ \JAlgTypeCheck{\UNR\TEnv}{V}{L}{\UNR\TEnv}$
  \item 
    $\JAlgTypeSynth{\TEnv, y :^\OccInfty \Equation[L]{V}{\ell}}{N_\ell}{B_\ell}{\UNR\TEnv, y :^\OccInfty \Equation[L]{V}{\ell}}$ with
    $\UNR\TEnv, y :^\OccInfty \Equation[L]{V}{\ell} \vdash B_\ell \Subtype B : \Kind$
  \end{enumerate}

  If $\JAlgConv{\UNR\TEnv} V \ell{ L'}$, then 
  $\JAlgTypeSynth{\TEnv}{N_\ell}{B_\ell}{\UNR\TEnv}$ with
  $\UNR\TEnv \vdash B_\ell \Subtype B : \Kind$ is also derivable.
  Applying \TirName{A-Lab-E1} yields
  $\JAlgTypeSynth{\TEnv}{\CASE V {\overline{\ell:N_\ell}}}{B_\ell}{\UNR\TEnv}$, with establishes the
  claim.

  Otherwise, if $\not\exists\ell$ such that  $\JAlgConv{\UNR\TEnv} V \ell{ L'}$,
  then $V=x$ and \TirName{A-Lab-E2} is applicable, but it derives the judgment
  $$\JAlgTypeSynth{\TEnv}{\CASE x {\overline{\ell:N_\ell}^{\ell\in L}}}{\CASE x
    {\overline{\ell:B_\ell}^{\ell \in L}}}{\UNR\TEnv}$$
  The computed type is a subtype of $B$, by \TirName{Conv-Eta} and \TirName{Sub-Case} and the
  returned type environment is $\UNR\TEnv$, as required.

  \textbf{Case }$\RulePiI \quad \Big(\RuleAPiI\Big)$.

  By induction $\JAlgTypeSynth{\TEnv, x : A} M {B'}{\UNR {(\TEnv, x:
      A)}}$ with 
  $\UNR {(\TEnv, x : A)} \vdash B' \Subtype B : \Kind_B$.
  We have that $\UNR{(\TEnv, x:A)} = \DemoteExtend{\UNR\TEnv} x A$.
  
  If $\Kind=\Un$, then $\UNR\TEnv = \TEnv$. Hence, we can apply \TirName{A-Pi-I} to obtain
  $\JAlgTypeSynth{\TEnv}{\lambda_\Multm (x: A).M}{ \Pi_\Multm
    (x:  A)B'}{\UNR \TEnv}$ where $\UNR\TEnv \vdash \Pi_\Multm
  (x:  A)B'  \Subtype \Pi_\Multm    (x:  A)B : \Kind$.

  \textbf{Case }$\RulePiE$.

  By induction $\JAlgTypeSynth{\TEnv_1} MC{\UNR\TEnv_1}$
  and $\UNR\TEnv_1 \vdash C \Subtype \Pi_\Multm (x: A)B : \Kind$.

  By induction $\JAlgTypeCheck{\TEnv_2} N {A}{\UNR\TEnv_2}$.

  To conclude, we use Algorithmic Weakening (Lemma~\ref{lem:AlgorithmicWeakening}), and properties
  \ref{item:split-omega} and \ref{item:split-commutative} of context
  split (Lemma~\ref{lemma:decomposition-unrestricted}) to apply
  \TirName{A-Pi-E}.


  \textbf{Case }$\RuleSigmaI$.

  By induction $\JAlgTypeSynth{\TEnv_1}{  V}{ A'}{ \UNR{\TEnv_1}} $
  and $\UNR{\TEnv_1} \vdash A' \Subtype A : \Kind$.

  By induction $\JAlgTypeSynth{\DemoteExtend{\PlainExtend{\TEnv_2} x
      {A'}} z {\Equation[A]x V}} N {B'} {
    \DemoteExtend{\DemoteExtend{\TEnv_3} x {A'}} z { \Equation[A]x V}
  }$
  and $\UNR{(\DemoteExtend{\DemoteExtend{\TEnv_3} x {A'}} z {
      \Equation[A]x V})}
  = \DemoteExtend{\DemoteExtend{\UNR{\TEnv_3}} x{A'}} z{\Equation[A]xV}
  \vdash B' \Subtype B : \Kindn$.

  Again using Algorithmic Weakening and properties of context split,
  we can apply \TirName{A-Sigma-I}.

  
  \textbf{Case }$\RuleSigmaE$.

  Follows by reasoning analogous to \TirName{Pi-E}.


  The remaining cases are straightforward variations of the
  demonstrated techniques.
\end{proof}

The following lemmas document the assumptions on the inputs of 
various algorithmic judgments. For algorithmic subtyping
$\JAlgSubtypeSynth \TEnv A B \Kind$, we assume 
that $\EnvForm\Un\TEnv$, $\TEnv \vdash A : \Kind_A$, and $\TEnv \vdash B
: \Kind_B$. 
\begin{lemma}
  If $\TEnv \vdash A : \Kind$
  and $\JAlgTypeUnfold \TEnv A B$,
  then $\TEnv \vdash B : \Kind$.
\end{lemma}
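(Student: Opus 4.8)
The plan is to obtain this as an immediate corollary of two results already in hand: Soundness of Unfolding (Lemma~\ref{lem: AlgorithmicUnfoldingSoundness}) and the Agreement lemma (Lemma~\ref{lem:agreement}). Soundness of Unfolding takes precisely the two hypotheses of the present statement, namely $\TEnv \vdash A : \Kind$ and $\JAlgTypeUnfold \TEnv A B$, and concludes $\TEnv \vdash \Convertible A B : \Kind$ (together with the fact that $B$ is not a case, which we do not need here). Agreement part~(3) states that a convertibility judgment at kind $\Kind$ entails well-formedness of both of its sides at $\Kind$; applied to $\TEnv \vdash \Convertible A B : \Kind$ it yields in particular $\TEnv \vdash B : \Kind$, which is exactly the goal. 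The proof is therefore a two-line composition and leaves no residual work.

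For completeness I note that a direct argument by rule induction on $\JAlgTypeUnfold \TEnv A B$ is possible but considerably more delicate. The reflexive case \TirName{A-Unfold-Refl} is trivial, since there $B = A$. The difficulty concentrates in the case rules \TirName{A-Unfold-Case}, \TirName{A-Unfold-Case1}, and \TirName{A-Unfold-Case2}, where $A$ has the form $\CASE V {\overline{\ell: A_\ell}}$: to invoke the induction hypothesis on a branch $A_\ell$ one must first invert the formation rule \TirName{Lab-E'} to recover its well-formedness, but \TirName{Lab-E'} only supplies $\DemoteExtend\TEnv x {\Equation[L] V \ell} \vdash A_\ell : \Kind$, i.e.\ under the conditionally extended environment carrying the branch equation. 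One would then have to strengthen away the added binding (using Lemma~\ref{lem:strengthening} together with \TirName{Conv-Beta} and \TirName{Conv-Subst}) and, in \TirName{A-Unfold-Case2}, reassemble the commuted prefix $\ECN[P]$ around a freshly formed case via \TirName{Lab-E'} and \TirName{Conv-Eta}.

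The main obstacle, were one to follow this direct route, is exactly this bookkeeping around conditional extension and the equation assumptions, which is precisely the material that Soundness of Unfolding has already distilled into the single convertibility judgment $\TEnv \vdash \Convertible A B : \Kind$. For this reason I would present the short composed proof via Lemmas~\ref{lem: AlgorithmicUnfoldingSoundness} and~\ref{lem:agreement} and avoid redoing the induction.
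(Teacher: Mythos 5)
Your proof is correct, but it takes a genuinely different route from the paper's. The paper proves this lemma directly, by rule induction on the unfolding judgment $\JAlgTypeUnfold\TEnv A B$ (and gives no further detail), whereas you derive it as a two-step corollary: Soundness of Unfolding yields $\TEnv \vdash \Convertible A B : \Kind$ from exactly the two hypotheses at hand, and Agreement, part~(3) (Lemma~\ref{lem:agreement}) then gives $\TEnv \vdash B : \Kind$. The composition is legitimate and non-circular: both prerequisite lemmas are established earlier and independently of the present statement (Soundness of Unfolding is itself proved by rule induction on unfolding, Agreement by mutual induction on the declarative judgments), so the dependency order is respected. What your route buys is economy and robustness: the bookkeeping a direct induction must redo in the case rules --- inverting \TirName{Lab-E'} (modulo \TirName{Sub-Kind}), discharging the equation binding introduced by conditional extension, and reassembling the commuted prefix in \TirName{A-Unfold-Case2} --- is already packaged inside Soundness of Unfolding, and Agreement converts the resulting convertibility judgment into formation of both sides. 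What the paper's direct induction buys is self-containment: it establishes formation of $B$ without routing through the conversion relation or the Agreement lemma and its supporting machinery, so this well-formedness fact stays available even in developments where those results are reorganized. Your accompanying sketch of the direct argument and its difficulties matches what the paper's one-line proof would have to carry out, so presenting the short composed proof is a reasonable, arguably cleaner, choice.
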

\begin{proof}
  By rule induction on the unfolding judgment.
\end{proof}
\begin{lemma} ~
  \begin{enumerate}\item 
    If $\EnvForm\Lin\TEnv$ and $\JAlgTypeSynth \TEnv M A \TOut$, then
    $\UNR\TEnv \vdash A : \Kind$ and $\EnvForm\Lin\TOut$.
  \item 
    If $\EnvForm\Lin\TEnv$
    and $\UNR\TEnv \vdash A : \Kind$
    and $\JAlgTypeCheck \TEnv M A \TOut$, then
    $\EnvForm\Lin\TOut$.
  \end{enumerate}
\end{lemma}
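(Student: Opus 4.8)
The plan is to prove both parts simultaneously by mutual rule induction on the derivations of $\JAlgTypeSynth\TEnv M A \TOut$ and $\JAlgTypeCheck\TEnv M A \TOut$, strengthening the statement of part~1 with the auxiliary invariant $\UNR\TEnv = \UNR\TOut$. This invariant holds because every algorithmic rule either leaves the environment untouched or only drops \emph{linear} bindings (through the conditional extension $\lhd$), and linear bindings do not survive the $\UNR$ projection. Part~2 is then essentially a corollary: the only checking rule is \TirName{A-Sub-Type}, whose premise $\JAlgTypeSynth\TEnv M B \TOut$ carries the very same output environment $\TOut$, so $\EnvForm\Lin\TOut$ follows from the synthesis part of the induction hypothesis (the $\UNR\TEnv\vdash A:\Kind$ hypothesis of part~2 is not even needed for its conclusion). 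The proof relies on Agreement (Lemma~\ref{lem:agreement}), Strengthening and Weakening for type formation (Lemmas~\ref{lem:strengthening} and~\ref{lemma:weakening}), Algorithmic Kinding Soundness (Lemma~\ref{lem:AlgorithmicKindingSoundness}), and Soundness of Unfolding (Lemma~\ref{lem: AlgorithmicUnfoldingSoundness}).

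For the environment claim $\EnvForm\Lin\TOut$, I would observe that in each rule $\TOut$ arises from $\TEnv$ by removing zero or more linear bindings. In the leaf rules (\TirName{A-Unit-I}, \TirName{A-Lab-I}, \TirName{A-Ssn-I}) the output equals the input; in \TirName{A-Name} the binding $z:A$ is dropped exactly when $A$ is linear, and since types depend only on unrestricted values no later binding mentions $z$, so well-formedness survives by applying Strengthening (Lemma~\ref{lem:strengthening}) pointwise to the remaining bindings. The compound rules propagate well-formedness through the induction hypothesis: e.g.\ in \TirName{A-Lab-E2} each branch yields $\EnvForm\Lin{(\TOut, y\colon\Equation[L]x\ell)}$, and inverting environment formation (dropping the trailing $y$) gives $\EnvForm\Lin\TOut$; in \TirName{A-Pi-I} the body returns $\DemoteExtend\TOut x A$, from which $\EnvForm\Lin\TOut$ follows by dropping the trailing $x$-binding. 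The eliminations thread the hypothesis through their premises, using Soundness of Unfolding to confirm that the intermediate $\Pi$- or $\Sigma$-components are well-formed so that the extended environments in the premises are legal.

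For the type claim $\UNR\TEnv \vdash A : \Kind$, the leaf and introduction rules are routine. \TirName{A-Unit-I} and \TirName{A-Lab-I} invoke \TirName{Unit-F}/\TirName{Lab-F} over $\UNR\TEnv$ (which is well formed at $\Un$ by construction of $\UNR$); \TirName{A-Name} extracts well-formedness of the declared type from environment formation and weakens it to $\UNR\TEnv$ via Lemma~\ref{lemma:weakening}; \TirName{A-Pi-I} and \TirName{A-Sigma-I} reassemble the synthesized $\Pi$- or $\Sigma$-type with \TirName{Pi-F}/\TirName{Sigma-F} from the kinding premise for the domain and the induction hypothesis for the codomain. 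The substantive cases are the eliminations. In \TirName{A-Pi-E} the synthesized type is $B[N/x]$ and the rule carries an \emph{explicit} kinding premise $\JAlgKindSynth{\UNR\TEnv_1}{B[N/x]}\Kindn$, so Algorithmic Kinding Soundness delivers $\UNR\TEnv_1 \vdash B[N/x] : \Kindn$ immediately. In \TirName{A-Lab-E2} and \TirName{A-Sigma-G} the synthesized type is a case $\CASE x {\overline{\ell:A_\ell}}$; the induction hypothesis gives $\UNR\TEnv, y\colon\Equation[L]x\ell \vdash A_\ell : \Kind$ for each branch, the unfolding premise together with Soundness of Unfolding yields $\UNR\TEnv \vdash x : L$, and \TirName{Lab-E'} then forms the case type over $\UNR\TEnv$.

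The main obstacle is the $\Sigma$-elimination rule \TirName{A-Sigma-E}: the synthesized type $C$ is produced under the extended environment $\TEnv_2, x\colon A, y\colon B$, whereas the conclusion demands well-formedness under $\UNR\TEnv_1$. Bridging this gap requires three ingredients: (i) the induction hypothesis to obtain $\UNR{(\TEnv_2, x\colon A, y\colon B)} \vdash C : \Kind$; (ii) Strengthening (Lemma~\ref{lem:strengthening}) to discard the bindings for $x$ and $y$, licensed precisely by the side condition $x,y\notin\FV(C)$; and (iii) the carried invariant $\UNR\TEnv_1 = \UNR\TEnv_2$ (an instance of $\UNR\TEnv = \UNR\TOut$) to transport the judgment back to the input context. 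Exactly this triad of strengthening and $\UNR$-alignment recurs wherever a synthesized type must be re-exported across a binder, which is why keeping the $\UNR$ invariant explicit in the induction hypothesis is what makes these steps go through without circularity.
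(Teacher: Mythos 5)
Your proposal cannot be compared against the paper's own argument for a simple reason: there is none. This lemma is the last of the ``documentation'' lemmas closing the appendix on algorithmic type checking (introduced by the sentence ``The following lemmas document the assumptions on the inputs of various algorithmic judgments'') and is stated without any proof at all. Judged on its own merits, your proof is correct and follows the route the paper would presumably take: neighbouring results of the same flavour (e.g.\ Algorithmic Linear Strengthening) are discharged with exactly the phrase ``by mutual rule induction on the hypotheses,'' which is your skeleton. The genuine content you add is the strengthened invariant $\UNR\TEnv = \UNR\TOut$, and it is necessary rather than cosmetic: in the elimination rules (\TirName{A-Pi-E}, \TirName{A-Sigma-E}, \TirName{A-Ssn-Send-E}, \TirName{A-Ssn-Recv-E}) the unfolding premise lives over $\UNR{\TEnv_2}$ while the induction hypothesis delivers kinding only over $\UNR{\TEnv_1}$, so without the invariant the judgment cannot be transported; the invariant is also consistent with the paper's informal reading of the synthesis judgment ($\TOut$ is $\TEnv$ minus the consumed linear bindings, ``no other changes are possible''). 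Two small refinements would tighten the sketch. First, for the elimination cases you can invoke the companion lemma stated immediately before this one in the paper (if $\TEnv \vdash A : \Kind$ and $\JAlgTypeUnfold\TEnv A B$, then $\TEnv \vdash B : \Kind$) instead of re-deriving that fact from Soundness of Unfolding plus Agreement. Second, in \TirName{A-Lab-E2} and \TirName{A-Sigma-G} the branch kinds $\Kind_\ell$ returned by the induction hypothesis need not coincide, so you must pass to a common kind via \TirName{Sub-Kind} (or a least upper bound) before assembling the case type with \TirName{Lab-E'}. With those touches your sketch is a complete and correct proof of a statement the paper leaves unproved.
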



\section{Proofs for Embedding {\LSST} into {\LDGV}}
\label{sec:proofs-embedding-gv}

\begin{restatable}[Subtyping Preservation]{lemma}{SimulationSubtypingPreservation}
  If $\vdash_\LSST A \le B$ and $\EnvForm \Multm {_{GV}A}$, then
  $\vdash_\LDGV \Embed A \le \Embed B : \Kind$.
\end{restatable}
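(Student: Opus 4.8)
The plan is to prove the statement by rule induction on the derivation of $\vdash_\LSST A \Subtype B$, walking through the \LSST{} subtyping rules of Figure~\ref{fig:gv-types}. Since the continuation components of send, receive, and choice types introduce a fresh binding ($x:\Embed{A'}$ or $x:L$) into the \LDGV{} subtyping environment, I would first strengthen the statement to an arbitrary unrestricted context: for every $\TEnv$ with $\EnvForm\MultUn\TEnv$, if $\vdash_\LSST A \Subtype B$ then $\TEnv \vdash_\LDGV \Embed A \Subtype \Embed B : \Kind$. The desired conclusion is then the instance $\TEnv = \EmptyEnv$. The extensions that occur in the inductive steps are all unrestricted (label types, equality types, and the conditionally-extended payload, which is either unrestricted or dropped), so the generalized hypothesis $\EnvForm\MultUn\TEnv$ is maintained. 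Throughout, the kind annotation is read off from the \LSST{} kinding hypothesis $\EnvForm\Multm{_{GV}A}$ and adjusted with \TirName{Sub-Sub} where the \LDGV{} rule demands it; all session types are compared at $\KindSession$.

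The base and structural cases are routine. For $\TUnit \Subtype \TUnit$ I would conclude by \TirName{Sub-Conv} over \TirName{Conv-Refl}. For $\ENDS \Subtype \ENDS$ and $\ENDR \Subtype \ENDR$, whose images are ${!(x:\Setof\EOFLabel)}\TEnd$ and ${?(x:\Setof\EOFLabel)}\TEnd$, I would apply \TirName{Sub-Send}/\TirName{Sub-Recv} using the reflexive label comparison $\Setof\EOFLabel \subseteq \Setof\EOFLabel$ (via \TirName{Sub-Lab}) and reflexivity on $\TEnd$. The function, product, send, and receive cases map one-to-one onto \TirName{Sub-Pi}, \TirName{Sub-Sigma}, \TirName{Sub-Send}, \TirName{Sub-Recv}: since $\Embed{A\to_\Multm B} = \Pi_\Multm(x:\Embed A)\Embed B$ and $\Embed{A\times B} = \Sigma(x:\Embed A)\Embed B$ homomorphically, I would invoke the induction hypotheses on domain and codomain (the contravariant flip on the payload is already built into both the \LSST{} and the \LDGV{} rules), discharge the side condition $\Multm\MultLE\Multn$ directly, and handle the codomain binding by the generalized hypothesis, noting that \LSST{} codomains are non-dependent so $\Embed B$ and $\Embed S$ do not mention $x$.

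The interesting cases are the two choices, where the economy of the embedding surfaces. Internal choice $\oplus\{\overline{\ell:S_\ell}^{\ell\in L}\} \Subtype \oplus\{\overline{\ell:S'_\ell}^{\ell\in L'}\}$ (with $L'\subseteq L$) translates to a comparison of ${!(x:L)}\CASE x{\overline{\ell:\Embed{S_\ell}}^{\ell\in L}}$ against ${!(x:L')}\CASE x{\overline{\ell:\Embed{S'_\ell}}^{\ell\in L'}}$. I would apply \TirName{Sub-Send}: its payload premise $L'\Subtype L$ is exactly $L'\subseteq L$ via \TirName{Sub-Lab}, and its continuation premise, under $x:L'$, is discharged by \TirName{Sub-Case}. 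Here $L\cap L' = L'$ and $L'\setminus L=\emptyset$, so the common-label family $\Embed{S_\ell}\Subtype\Embed{S'_\ell}$ for $\ell\in L'$ comes from the induction hypothesis (under the added equation $y:\Equation[L'] x \ell$), the premise $\UNR\TEnv\vdash x:L'$ holds since $x:L'$ was just introduced, and the surplus premises for $\ell\in L\setminus L'$ reduce to well-formedness of $\Embed{S_\ell}$. External choice is symmetric: its image uses the receive type ${?(x:L)}\CASE x{\dots}$, compared via \TirName{Sub-Recv} (covariant on the label set, $L\subseteq L'$) over \TirName{Sub-Case} with $x:L$ and $L\cap L'=L$, the surplus now sitting on the right-hand ($L'\setminus L$) side.

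The main obstacle is exactly these choice cases: aligning \LDGV{}'s \TirName{Sub-Case} bookkeeping with \LSST{}'s width subtyping, i.e.\ correctly identifying $L\cap L'$ and the surplus set, and tracking the direction of label inclusion (contravariant under \TirName{Sub-Send}, covariant under \TirName{Sub-Recv}). The subtyping derivation alone does not supply the well-formedness derivations for the surplus branches required by \TirName{Sub-Case}; I would therefore rely on a companion kinding-preservation property of the embedding, namely that $\EnvForm\Multm{_{GV}A}$ implies $\Embed A$ is well-kinded in \LDGV{}, established by a separate straightforward induction (an instance of agreement, Lemma~\ref{lem:agreement}, specialized to images of the translation). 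The remaining work is purely administrative: keeping the synthesized kinds consistent with \TirName{Sub-Sub} and confirming the environment stays unrestricted as the bindings $x$ and $y$ are added.
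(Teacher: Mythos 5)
Your proposal follows essentially the same route as the paper's proof: induction on the \LSST{} subtyping derivation, with the choice and branch cases discharged by \TirName{Sub-Send}/\TirName{Sub-Recv} on the label sets (contravariant via $L'\subseteq L$, covariant via $L\subseteq L'$, both through \TirName{Sub-Lab}) and \TirName{Sub-Case} on the continuations using the induction hypotheses for the common labels. The differences are purely administrative---the paper invokes weakening after applying the IH at the empty environment where you generalize the statement over an unrestricted environment, and the paper leaves implicit the \TirName{Sub-Case} well-formedness premises for surplus branches that you discharge explicitly via a companion kinding-preservation lemma---so your argument is correct and matches the paper's.
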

\begin{proof}
  The proof is by mutual induction on the derivations of {\LSST} subtyping judgments for
  types $T$ and for session types $S$. The main cases to check are the ones for branch and
  choice types for session types.
  
  \textbf{Case} choice types: Suppose that $\vdash_\LSST \oplus \{\overline{\ell: S_\ell}^{\ell\in L}\} \Subtype
  \oplus \{\overline{\ell: S'_\ell}^{\ell\in L'}\}
  $. By inversion we obtain
  $L' \subseteq L$ and
  $(\forall\ell\in L')~ \vdash_\LSST S_\ell \Subtype S'_\ell$.
  Induction yields $\EmptyEnv \vdash_\LDGV \Embed{S_\ell} \Subtype \Embed{S'_\ell} : \KindSession$. 

  To obtain the goal
  $\EmptyEnv \vdash_\LDGV
  {!(x : \LabelSet)}. \CASE x {\overline{\ell: \Embed{S_\ell}}
  }
  \Subtype
  {!(x : \LabelSet')}. \CASE x {\overline{\ell: \Embed{S'_\ell}}
  }
  $ we first need   to obtain the communication part, that is, we need that $\EmptyEnv
  \vdash_\LDGV \LabelSet'\Subtype \LabelSet : \Kind^\Multl$ which holds because $L'
  \subseteq L$.
  It remains to establish that
  $x:^{\OccInfty} \LabelSet'\vdash_\LDGV
  \CASE x {\overline{\ell: \Embed{S_\ell}}
  }
  \Subtype
  \CASE x {\overline{\ell: \Embed{S'_\ell}}
  } : \KindSession
  $ which holds by rule \TirName{Sub-Case}.

  \textbf{Case} branch types: suppose that
  $\vdash_\LSST \& \{\overline{\ell: S_\ell}^{\ell\in L}\} \Subtype
  \& \{\overline{\ell: S'_\ell}^{\ell\in L'}\}
  $. By inversion we obtain $L\subseteq L'$ and $(\forall\ell\in L)~
  \vdash_\LSST S_\ell \Subtype S'_\ell$. Induction yields $\EmptyEnv \vdash_\DGV \Embed{S_\ell} \Subtype \Embed{S'_\ell} : \KindSession$. 
  By weakening, we obtain $x:^{\OccInfty} \LabelSet, z:^{\OccInfty} \Equation x \ell \vdash_\LDGV \Embed{S_\ell} \Subtype \Embed{S'_\ell} : \KindSession$, for each $\ell\in L$, and hence by rule \TirName{Sub-Case}
  $x:^{\OccInfty} \LabelSet \vdash_\DGV
  \CASE x {\overline{\ell: \Embed{S_\ell}}
  }
  \Subtype
  \CASE x {\overline{\ell: \Embed{S'_\ell}}
  } : \KindSession
  $. From $L \subseteq L'$ we obtain
  $\EmptyEnv \vdash_\LDGV \LabelSet \Subtype \LabelSet': \Kind^\Multl$ and then by
  \TirName{Sub-Recv} that 
  $\EmptyEnv \vdash_\LDGV
  {?(x:\LabelSet).}
  \CASE x {\overline{\ell: \Embed{S_\ell}}
  }
  \Subtype
  {?(x:\LabelSet).}
  \CASE x {\overline{\ell: \Embed{S'_\ell}}
  } : \KindSession$, which is the desired result.
\end{proof}

\begin{lemma}
  If $\TEnv \vdash_\LSST M : S$, then $\UNR{\Embed\Gamma} \vdash_\DGV
  \Embed S : \KindSession$.
\end{lemma}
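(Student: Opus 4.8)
The plan is to observe that the hypothesis $\TEnv \vdash_\LSST M : S$ serves only to guarantee that $\TEnv$ is a well-formed \LSST{} environment and that $S$ is a genuine session type; the expression $M$ plays no role. I therefore reduce the lemma to a purely structural statement about the type translation, proved by mutual induction on the syntax of \LSST{} types $A$ and session types $S$.

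First I would prove the auxiliary kind-preservation lemma: for every context $\SEnv$ with $\EnvForm{\Un}{\SEnv}$ in \LDGV{}, (a) $\GVKind{\Multm}{A}$ implies $\SEnv \vdash_\DGV \Embed A : \Multm$, and (b) for every session type $S$, $\SEnv \vdash_\DGV \Embed S : \KindSession$. Quantifying over all unrestricted $\SEnv$ lets the induction hypotheses apply directly under the conditionally extended contexts that arise, so no separate appeal to Weakening is needed. The unit, function, and product cases are discharged by \TirName{Unit-F}, \TirName{Pi-F}, and \TirName{Sigma-F}; since $\Embed{A \to_\Multm B} = \Pi_\Multm(x:\Embed A)\Embed B$ and $\Embed{A \times B} = \Sigma(x:\Embed A)\Embed B$ are non-dependent, both immediate subcomponents carry the matching kind from the induction hypotheses, and \LSST{} subkinding is mirrored by \TirName{Sub-Kind}.

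The session cases carry the real content. For $\Embed{{!A}.S} = {!(x:\Embed A)}\Embed S$ I apply \TirName{Ssn-Out-F}: induction hypothesis (a) shows $\Embed A$ is well-kinded, so $\DemoteExtend{\SEnv} x {\Embed A}$ is a well-formed unrestricted context (the binding is dropped when $\Embed A$ is linear and retained when it is unrestricted), and induction hypothesis (b) supplies $\Embed S : \KindSession$ there; the case $\Embed{{?A}.S}$ is symmetric via \TirName{Ssn-In-F}. For the choice type $\Embed{\oplus\{\overline{\ell : S_\ell}\}} = {!(x:L)}\CASE x {\overline{\ell : \Embed{S_\ell}}}$ (and dually the branch type) I peel off the communication prefix with \TirName{Ssn-Out-F}; because $L$ is a label type of kind $\Un$ the bound variable survives as $x:L$, and I then form the type-level case with \TirName{Lab-E'}, discharging the header premise by \TirName{Name} and each branch premise from induction hypothesis (b) under the extra unrestricted equation $\Equation[L] x \ell$. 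The two end markers $\Embed{\ENDS}$ and $\Embed{\ENDR}$ reduce to forming $\TEnd$, which has kind $\KindUnSession$ by \TirName{End-F} and is promoted to $\KindSession$ by \TirName{Sub-Kind}.

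Finally, to match the exact statement I instantiate the auxiliary lemma at $\SEnv = \UNR{\Embed\TEnv}$, which requires $\EnvForm{\Un}{\UNR{\Embed\TEnv}}$. This follows by a short induction on the well-formed $\TEnv$: the operator $\UNR{\cdot}$ retains precisely the unrestricted bindings, and part (a) of the auxiliary lemma shows that each retained $\Embed A$ indeed has kind $\Un$, so the context stays unrestricted. I expect the main obstacle to be the bookkeeping around conditional extension in the session and choice cases---determining exactly when the bound variable $x$ (and the equation witness in the case branches) is kept or dropped, and checking that every resulting context remains unrestricted so the induction hypotheses stay applicable---together with the mild mutual reliance between environment formation and kind preservation, which the layered induction above resolves.
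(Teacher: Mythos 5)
Your proof is correct, but there is nothing in the paper to compare it against: the paper states this lemma bare in the appendix (between Subtyping Preservation and Typing Preservation) and never proves it, so your argument fills a genuine gap rather than reproducing a hidden proof. Your reduction is the right one --- $M$ indeed plays no role, and since \LSST{} session types are a first-order syntactic category whose translations contain no free term variables, the statement is purely structural. The generalization to an arbitrary unrestricted context $\SEnv$ is exactly what makes the induction go through under \TirName{Ssn-Out-F}/\TirName{Ssn-In-F} and \TirName{Lab-E'}, where the context grows by $x:L$ and by the (unrestricted, per \TirName{Equality-F}) equation witnesses; quantifying over $\SEnv$ avoids a separate weakening appeal, which is a cleaner layering than the paper's own style elsewhere. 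You also handle the two points where a careless proof would stall: the end markers need \TirName{Sub-Kind} to promote $\TEnd : \KindUnSession$ to $\KindSession$ under the $!(x:\Setof{\EOFLabel})$ prefix, and the choice/branch cases need the label type to have kind $\MultUn$ so that conditional extension \emph{retains} $x:L$ --- otherwise the header premise of \TirName{Lab-E'} could not be discharged by \TirName{Name}. (Implicitly you are reading $\DemoteExtend\TEnv x A$ as keyed on the principal kind of $A$, which is the paper's intent.) Finally, your closing induction showing $\EnvForm{\MultUn}{\UNR{\Embed\TEnv}}$, interleaved with part (a) so that each retained binding is kinded, resolves the only real circularity in the statement. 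This matches in spirit the mutual-induction proof the paper does give for the neighboring Subtyping Preservation lemma, so your proof is both correct and stylistically consistent with the development.
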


{\SimulationTypingPreservation*}

\begin{proof}
  The proof is by induction on the respective {\LSST} typing judgments. The interesting cases are the ones for $\SELECTK$, $\RCASEK$, $\CLOSEK$, and $\WAITK$.

  \textbf{Case \SELECTK}. Suppose that $\TEnv \vdash_\LSST \SELECT \ell : T$.
  Inversion yields that $\EnvForm{ \MultUn}{_\LSST \TEnv}$ and $T =\oplus \{ \overline{\ell: S_\ell}
  \} \to_\MultUn S_\ell$ so that

  $\Embed T = {!(x : \LabelSet)}.
  \CASE x {\overline{\ell: \Embed{S_\ell}}
  } \to_\MultUn \Embed{S_\ell} $.
  For the embedding, we consider $\Embed\TEnv \vdash_\LDGV \LAM z \APP{\SEND z} \ell : A$ and find by inversion the requirement
  \begin{gather}
    \begin{array}[t]{l}
    \Embed\TEnv, z : {!(x : \LabelSet)}.
  \CASE x {\overline{\ell: \Embed{S_\ell}}
      }
      \\
      \vdash_\LDGV \APP{\SEND z} \ell : \Embed{S_\ell}
    \end{array}
  \end{gather}
  By conversion, we obtain
  \begin{gather}
    \begin{array}[t]{l}
    \Embed\TEnv, z : {!(x : \LabelSet)}.
  \CASE x {\overline{\ell: \Embed{S_\ell}}
      }
      \\
      \vdash_\LDGV \APP{\SEND z} \ell :
      \CASE \ell {\overline{\ell: \Embed{S_\ell}}
      }
    \end{array}
  \end{gather}
  Further inversion yields
  \begin{gather}
    \begin{array}[t]{l}
      \Embed\TEnv, z : {!(x : \LabelSet)}.
      \CASE x {\overline{\ell: \Embed{S_\ell}}
      }
      \\
      \vdash_\LDGV {\SEND z} : \Pi_\MultLin (x : \LabelSet).
      \CASE x {\overline{\ell: \Embed{S_\ell}}
      }
    \end{array}
  \end{gather}
  This concludes the typing derivation.

  \textbf{Case \RCASEK}.
  Suppose $\TEnv \vdash_\LSST \RCASE e {\overline{\ell : y.f_\ell}^{\ell\in L}} : T$.
  Inversion yields
  \begin{gather}
    \Decompose\TEnv{\TEnv_1}{\TEnv_2}
    \\
    \TEnv_1 \vdash_\LSST e :  \&\{\overline{\ell: S_\ell}^{\ell\in L}\}
    \\
    (\forall \ell\in L)~ \TEnv_2, y:S_\ell \vdash_\LSST f_\ell : T
  \end{gather}
  By induction
  \begin{gather}
    \Embed{\TEnv_1} \vdash_\LDGV \Embed e :  \Embed{ \&\{\overline{\ell: S_\ell}^{\ell\in L}\} }
    \\
    \Embed{\TEnv_2}, y:\Embed{S_\ell} \vdash_\LDGV \Embed{f_\ell} : \Embed T
  \end{gather}
  Hence
  \begin{gather}
    \Embed{\TEnv_1} \vdash_\LDGV \Embed e :  {?(x : \LabelSet)}. \CASE x {\overline{\ell: \Embed{S_\ell}}^{\ell\in L}
    }
    \\
    \begin{array}[t]{l}
    \Embed{\TEnv_2}, x:\LabelSet, y:\CASE x {\overline{\ell: \Embed{S_\ell}}^{\ell\in L}
      }, z : \Equation x \ell
      \vdash_\LDGV \Embed{f_\ell} : \Embed T
    \end{array}
  \end{gather}
  by the subderivation using the conversion rule
  \begin{mathpar}
    \inferrule
    {
      \Embed{\TEnv_2}, x:\LabelSet, y:\CASE x {\overline{\ell: \Embed{S_\ell}}
      }, z : \Equation x \ell
      \vdash_\LDGV y : \CASE x {\overline{\ell: \Embed{S_\ell}}^{\ell\in L}
      }
    }
    {
      \Embed{\TEnv_2}, x:\LabelSet, y:\CASE x {\overline{\ell: \Embed{S_\ell}}^{\ell\in L}
      }, z : \Equation x \ell
      \vdash_\LDGV y : \Embed{S_\ell}
    }
  \end{mathpar}
  Hence
  \begin{gather}
    \begin{array}[t]{l}
      \Embed{\TEnv_1}
      \vdash_\LDGV
      \RECV{ \Embed e} :  {\Sigma(x : \LabelSet)}. \CASE x {\overline{\ell: \Embed{S_\ell}}
    }
    \end{array}
    \\
    \begin{array}[t]{l}
      \Embed{\TEnv_2}, x:\LabelSet, y:\CASE x {\overline{\ell: \Embed{S_\ell}}^{\ell\in L}
      }, z:\Equation x \ell 
      \vdash_\LDGV
       \Embed{f_\ell} :
       \Embed T
    \end{array}
    \\
    \begin{array}[t]{l}
      \Embed{\TEnv_2}, x:\LabelSet, y:\CASE x {\overline{\ell: \Embed{S_\ell}}^{\ell\in L}
      }
      \vdash_\LDGV
      \CASE x { \overline{ \ell: \Embed{f_\ell}}} :
      \Embed T
    \end{array}
    \\
    \Embed\TEnv \vdash_\LDGV
    \begin{array}[t]{@{}l}
      \ELet{\EPair[]{x}{y}} {\RECV{\Embed{e}}}
      {\CASE x {\overline{\ell:
      \Embed{f_\ell}}}}
      : \Embed T
    \end{array}
  \end{gather}
  which is the desired embedding.
\end{proof}

{\CoSimulation*}

\begin{proof}
  First, we observe that evaluation contexts are compatible with the translation. That is, if $\ECN$ is an evaluation context in {\LSST}, then $\Embed\ECN$ is an evaluation context in {\LDGV}. The proof is by induction on evaluation contexts.

  Hence, it is sufficient to consider the images of the redexes under the translation. Of these, only the image of $\RCASEK$ communicating with $\SELECTK$ is interesting. 

  \begin{gather*}
    \NUC\Chanc\Chand(
    \Context{
      \Embed{\RCASE \Chanc {\overline{\ell : y.f_\ell}}}}
    \PAR
    \Context[F]{
      \Embed{\SELECT\ell d}})
    \\
    =
    \\
    \NUC\Chanc\Chand(
    \Context{
      \begin{array}[t]{l}
        \ELet{\EPair[]{x}{y}} {\RECV{\Chanc}}
        {\CASE x {\overline{\ell:
        \Embed{f_\ell}}}}
      \end{array}
    }
    \PAR
    \Context[F]{
      {\SEND\ell d}})
    \\
    \ReducesTo
    \\
    \NUC\Chanc\Chand(
    \Context{
      \begin{array}[t]{l}
        \ELet{\EPair[]{x}{y}} {\EPair[\MultLin]\ell{\Chanc}}
        {\CASE x {\overline{\ell:
        \Embed{f_\ell}}}}
      \end{array}
    }
    \PAR
    \Context[F]{d})
    \\
    \ReducesTo
    \\
    \NUC\Chanc\Chand(
    \Context{
        {\CASE \ell {\overline{\ell:
              \Embed{f_\ell[\Chanc/y]}}}}
    }
    \PAR
    \Context[F]{d})
    \\
    \ReducesTo
    \\
    \NUC\Chanc\Chand(
    \Context{
      \Embed{f_\ell[\Chanc/y]}
    }
    \PAR
    \Context[F]{d})
  \end{gather*}
  The last line corresponds to the reduct of the {\RCASEK} redex in \LSST.

  The communication between $\CLOSEK$ and $\WAITK$ translates in a similar way. 
\end{proof}



\fi

\end{document}
